\newtheorem{fatto}{Fact}
\newtheorem{asz}{Simplifying Assumption}
\renewenvironment{proof}{\begin{trivlist} \item[\hspace{\labelsep}\bf Proof:]}{\hfill $\Box$\end{trivlist}}
\newenvironment{assumption}{\begin{asz} \rm  }{\end{asz}}
\newenvironment{apx-proof}[1] 
        {\noindent \textbf{Proof of #1.}} 
        {\qed}
\newcommand{\trans}[1][]{\xrightarrow{\, {#1} \, }}
\newcommand{\ntrans}[1][]{\mathrel{{\trans[#1]}\makebox[0em][r]{$\not$\hspace{2ex}}}{\!}}
\newcommand{\SOSrule}[2]{\frac{\displaystyle #1}{\displaystyle #2}}
\newcommand{\Act}{\mathcal A}
\newcommand{\Var}{\mathcal{V}}
\newcommand{\DVar}{\mathcal{V}_{\mathrm{d}}}
\newcommand{\rel}{{\mathcal R}}
\newcommand{\init}[1]{\mathrm{init}(#1)}
\newcommand{\trt}[2]{\triangleleft_{#1}^{#2}}
\newcommand{\leftmerge}{\mathbin{
                  \setlength{\unitlength}{1ex}
                  \begin{picture}(1,1.75)
                  \put(0,0){\line(1,0){1}}
                  \put(0,0){\line(0,1){1.75}}
                  \put(0.45,0){\line(0,1){1.75}}
                  \end{picture}
                 }}
\newcommand{\rightmerge}{\mathbin{
                  \setlength{\unitlength}{1ex}
                  \begin{picture}(1,1.75)
                  \put(0,0){\line(1,0){1}}
                  \put(1,0){\line(0,1){1.75}}
                  \put(0.55,0){\line(0,1){1.75}}
                  \end{picture}
                 }}
\newcommand{\cmerge}{~|~}
\newcommand{\rx}{{x}}
\newcommand{\ry}{{y}}
\newcommand{\chiudi}{\operatorname{cl}}
\newcommand{\bfmid}{~\mathbf{|}~}
\renewcommand{\em}{\sl}
\newcommand{\preox}{\approx}
\newcommand{\depth}{{\mathit{depth}}}
\newcommand{\norm}{{\mathit{norm}}}
\newcommand{\var}{\mathit{var}}
\newcommand{\mv}[1]{\xrightarrow{\, {#1} \, }}
\newcommand{\nv}[1]{\mathrel{\stackrel{#1}{\nrightarrow}}}
\newcommand{\nil}{\mathbf{0}}
\newcommand{\hmerge}{\mathrel{|^{\negmedspace\scriptstyle /}}}
\newcommand{\Rule}[2]{
\begin{displaymath}
\begin{array}{c}
#1 \\\hline 
#2
\end{array}
\end{displaymath}
}
\newcommand{\RuleLab}[3]{
\begin{equation}
\label{#3}
\begin{array}{c}
#1 \\\hline 
#2
\end{array}
\end{equation}
}
\newcommand{\ThreeRules}[6]{
\begin{displaymath}
\begin{array}{c}
#1 \\\hline 
#2
\end{array}
\qquad
\begin{array}{c}
#3 \\\hline 
#4
\end{array}
\qquad
\begin{array}{c}
#5 \\\hline 
#6
\end{array}
\end{displaymath}
}
\newcommand{\bistext}{{\underline{\leftrightarrow}}}
\newcommand{\bis}{~{\underline{\leftrightarrow}}~}
\newcommand{\nbis}{~{\underline{\leftrightarrow}}\!\!\!\!/~~}
\newcommand{\FCCS}{\text{CCS}_{\scalebox{0.6}{$f$}}}
\newcommand{\ff}{\scalebox{0.6}{$f$}}
\newcommand{\bb}{\mathrm{b}}
\newcommand{\dd}{\mathrm{d}}
\newcommand{\E}{\mathcal E}
\newcommand{\sx}{\mathrm{l}}
\newcommand{\rr}{\mathrm{r}}
\newcommand{\ww}{\mathrm{w}}
\begin{document}

\title{Are Two Binary Operators Necessary to Obtain a Finite Axiomatisation of Parallel Composition?
\thanks{A preliminary version of this work appeared as~\cite{ACFIL21}.}}

\author{Luca Aceto\inst{1,2}
\and
Valentina Castiglioni\inst{1}
\and
Wan Fokkink\inst{3}
\and
Anna Ing{\'o}lfsd{\'o}ttir\inst{1}
\and
Bas Luttik\inst{4}
}

\institute{
Reykjavik University, Reykjavik, Iceland
\and
Gran Sasso Science Institute, L'Aquila, Italy
\and
Vrije Universiteit Amsterdam, Amsterdam, The Netherlands
\and
Eindhoven University of Technology, Eindhoven, The Netherlands
}

\maketitle

\begin{abstract}
Bergstra and Klop have shown that \emph{bisimilarity} has a \emph{finite} equational axiomatisation over ACP/CCS extended with the binary \emph{left} and \emph{communication merge} operators. 
Moller proved that auxiliary operators are \emph{necessary} to obtain a finite axiomatisation of bisimilarity over CCS, and Aceto et al. showed that this remains true when \emph{Hennessy's merge} is added to that language. 
These results raise the question of whether there is \emph{one} auxiliary \emph{binary} operator whose addition to CCS leads to a finite axiomatisation of bisimilarity. 
We contribute to answering this question in the simplified setting of the recursion-, relabelling-, and restriction-free fragment of CCS. 
We formulate three natural assumptions pertaining to the operational semantics of auxiliary operators and their relationship to parallel composition, and prove that an auxiliary binary operator facilitating a finite axiomatisation of bisimilarity in the simplified setting cannot satisfy all three assumptions.

\medskip 

\noindent
{\sc CCS concepts:} Theory of computation -- Equational logic and rewriting, Process Calculi, Operational semantics. 

\noindent
{\sc Keywords and Phrases:} Equational logic, CCS, bisimulation, parallel composition, non-finitely based algebras.
\end{abstract}

%=====================================
% sec - Intro
%=====================================

\section{Introduction}

The purpose of this paper is to provide an answer to the following problem (see~\cite[Problem 8]{Aceto2003}):
\emph{Are the left merge and the communication merge operators} 
necessary 
\emph{to obtain a finite equational axiomatisation of bisimilarity over the language CCS?}
The interest in this problem is threefold, as an answer to it would:
\begin{enumerate}
\item provide the first study on the finite axiomatisability of operators whose operational semantics is not determined a priori,
\item clarify the status of the auxiliary operators \emph{left merge} and \emph{communication merge}, proposed in~\cite{BK84b}, in the finite axiomatisation of parallel composition, and
\item give further insight into properties that auxiliary operators used in the finite equational characterisation of parallel composition ought to afford.
\end{enumerate}
We prove that, under some simplifying assumptions, whose role in our technical developments we discuss below, there is no auxiliary binary operator that can be added to CCS to yield a finite equational axiomatisation of bisimilarity. 
Despite falling short of solving the above-mentioned problem in full generality, our negative result is a substantial generalisation of previous non-finite axiomatisability theorems by Moller~\cite{Mo89,Mo90} and Aceto et al.~\cite{AFIL05}.

In order to put our contribution in context, we first describe the history of the problem we tackle and then give a bird's eye view of our results.

%=============================================

\subsubsection*{The story so far}

In the late 1970s, Milner developed the \emph{Calculus of Communicating Systems} (CCS)~\cite{Mi80}, a formal language based on a message-passing paradigm and aimed at describing communicating processes from an operational point of view.
In detail, a \emph{labelled transition system} (LTS)~\cite{Ke76} was used to equip language expressions with an \emph{operational semantics}~\cite{Pl81} and was defined using a collection of syntax-driven rules.
The analysis of process behaviour was carried out via an observational \emph{bisimulation}-based theory~\cite{Pa81} that defines when two states in an LTS describe the same behaviour.
In particular, CCS included a \emph{parallel composition operator} $\|$ to model the interactions among processes.
Such an operator, also known as \emph{merge}~\cite{BK84b,BK85}, allows one both to \emph{interleave} the behaviours of its argument processes (modelling concurrent computations) and to enable some form of \emph{synchronisation} between them (modelling interactions).
Later on, in collaboration with Hennessy, Milner studied the \emph{equational  theory} of (recursion-free) CCS and proposed a \emph{ground-complete axiomatisation} for it modulo bisimilarity~\cite{HM85}. 
More precisely, Hennessy and Milner presented a set $\E$ of \emph{equational axioms} from which all equations over closed CCS terms (namely those with no occurrences of variables) that are \emph{valid modulo bisimilarity} can be derived using the rules of \emph{equational logic}~\cite{T77}.
Notably, the set $\E$ included infinitely many axioms, which were instances of the \emph{expansion law} that was used to `simulate equationally' the operational semantics of the parallel composition operator.

The ground-completeness result by Hennessy and Milner started the quest for a finite axiomatisation of CCS's parallel composition operator modulo bisimilarity.

Bergstra and Klop showed in~\cite{BK84b} that a finite ground-complete axiomatisation modulo bisimilarity can be obtained by enriching CCS with two auxiliary operators, namely the \emph{left merge} $\leftmerge$ and the \emph{communication merge} $\!\!\!\cmerge\!\!$, expressing respectively one step in the asymmetric pure interleaving and the synchronous behaviour of $\|$.
Their result was then strengthened by Aceto et al.\ in~\cite{AFIL09}, where it is proved that, over the fragment of CCS without recursion, restriction and relabelling, the auxiliary operators $\leftmerge$ and $\!\!\cmerge\!\!$ allow for finitely axiomatising $\|$ modulo bisimilarity also when CCS terms with variables are considered.
Moreover, in~\cite{AILT08} that result is extended to the fragment of CCS with relabelling and restriction, but without communication.
From those studies, we can infer that the left merge and communication merge operators are \emph{sufficient} to finitely axiomatise parallel composition modulo bisimilarity. 
But is the addition of auxiliary operators \emph{necessary} to obtain a finite equational axiomatisation, or can the use of the expansion law in the original axiomatisation of bisimilarity by Hennessy and Milner be replaced by a finite set of sound CCS equations?

To address that question, in~\cite{Mo89,Mo90} Moller considered a minimal fragment of CCS, including only action prefixing, nondeterministic choice and interleaving, and proved that, even in the presence of a single action, bisimilarity does not afford a finite ground-complete axiomatisation over the closed terms in that language. This showed that auxiliary operators are indeed necessary to obtain a finite equational axiomatisation of bisimilarity.
Adapting Moller's proof technique, Aceto et al.\ proved, in~\cite{AFIL05}, that if we replace $\leftmerge$ and $\!\!\cmerge\!$ with the so called \emph{Hennessy's merge} $\hmerge$~\cite{He88}, which denotes an asymmetric interleaving with communication, then the collection of equations that hold modulo bisimilarity over the recursion-, restriction-, and relabelling-free fragment of CCS enriched with $\hmerge$ is not finitely based (in the presence of at least two distinct complementary actions).

A natural question that arises from those \emph{negative} results is the following:
\begin{equation}
\tag{P}\label{eq:problem}
\parbox{\dimexpr\linewidth-4em}{
\strut
\emph{Can one obtain a finite axiomatisation of the parallel composition operator in bisimulation semantics by adding} 
only one binary operator
\emph{to the signature of (recursion-, restriction-, and relabelling-free) CCS?}
\strut
}
\end{equation}

In this paper, we provide a partial \emph{negative answer} to that question.
\emph{Henceforth, we consider the recursion-, restriction-, and relabelling-free fragment of CCS which, for simplicity, we still denote as CCS.}
(Note that, in (\ref{eq:problem}), we focus on binary operators, like all the variations on parallel composition mentioned above, since using a ternary operator one can express the left and communication merge operators and, in fact, an arbitrary number of binary operators.)

%===============================================

\subsubsection*{Our contribution}

We analyse the axiomatisability of parallel composition over the language $\FCCS$, namely CCS enriched with a binary operator $f$ that we use to express $\|$ as a derived operator.
We prove that, under three simplifying assumptions, an auxiliary operator $f$ alone does not allow us to obtain a finite ground-complete axiomatisation of $\FCCS$ modulo bisimilarity.
We remark that the non-existence of a finite \emph{ground-complete} axiomatisation implies the non-existence of a finite complete one.
Hence, we actually provide a partial answer to a stronger version of \eqref{eq:problem} concerning the existence of a finite ground-complete axiomatisation over CCS.

To this end, the only knowledge we assume on the operational semantics of $f$ is that it is formally defined by rules in the de Simone format~\cite{dS85} (Assumption~\ref{Ass:deSimone}) and that the behaviour of the parallel composition operator is expressed equationally by a law that is akin to the one used by Bergstra and Klop to define $\|$ in terms of $\,\leftmerge$ and $\!\!\cmerge\!$ (Assumption~\ref{Ass:equation}). 
We then argue that the latter assumption yields that the equation
\begin{equation}
\label{eq:intro}
x \| y \approx f(x,y) + f(y,x) 
\tag{A}
\end{equation}
is valid modulo bisimilarity.
Next we proceed by a case analysis over the possible sets of de Simone rules defining the behaviour of $f$, in such a way that the validity of Equation~\eqref{eq:intro} modulo bisimilarity is guaranteed.
To fully characterise the sets of rules that may define $f$, we introduce a third simplifying assumption: the target of each rule for $f$ is either a variable or a term obtained by applying a single $\FCCS$ operator to the variables of the rule, according to the constraints of the de Simone format (Assumption~\ref{assumption:targets}).
Then, for each of the resulting cases, we show the desired negative result using proof-theoretic techniques that have their roots in Moller's classic results in~\cite{Mo89,Mo90}.
This means that we identify a (case-specific) property of terms denoted by $W_n$ for $n \ge 0$.
The idea is that, when $n$ is \emph{large enough}, $W_n$ is preserved by provability from finite, sound axiom systems.
Hence, whenever $\E$ is a finite, sound axiom system and an equation $p \approx q$ is derivable from $\E$, then either both terms $p$ and $q$ satisfy $W_n$, or none of them does.
The negative result is then obtained by exhibiting a (case-specific) infinite family of valid equations $\{e_n \mid n \ge 0\}$ in which $W_n$ is not preserved, that is, for each $n \ge 0$, $W_n$ is satisfied only by one side of $e_n$.
Due to the choice of $W_n$, this means that the equations in the family cannot all be derived from a finite set of valid axioms and therefore no finite, sound axiom system can be complete.

To the best of our knowledge, in this paper we propose the first non-finite axiomatisability result for a process algebra in which one of the operators, namely the auxiliary operator $f$, does not have a fixed semantics. 
However, for our technical developments, it has been necessary to restrict the search space for $f$ by means of the aforementioned simplifying assumptions.
We proceed to give some justifications for these assumptions.
There are three main reasons behind Assumption~\ref{Ass:deSimone}:
\begin{itemize}
\item The de Simone format is the simplest congruence format for bisimilarity.
Hence we must be able to deal with this case before proceeding to any generalisation.
\item The specification of parallel composition, left merge and communication merge operators (and of the vast majority of process algebraic operators) is in de Simone format.
Hence, that format was a natural choice also for operator $f$.
\item The simplicity of the de Simone rules allows us to reduce considerably the complexity of our case analysis over the sets of available rules for the operator $f$.
However, as witnessed by the developments in this article, even with this simplification, the proof of the desired negative result requires a large amount of delicate, technical work.
\end{itemize}
Assumptions~\ref{Ass:equation} and~\ref{assumption:targets} still allow us to obtain a significant generalisation of related works, such as~\cite{AFIL05}, as we can see them as an attempt to identify the requirements needed to apply Moller's proof technique to Hennessy's merge like operators.
We stress that the reason for adding Assumption~\ref{assumption:targets} is purely technical: it plays a role in the proof of \emph{one} of the claims in our combinatorial analysis of the rules that $f$ may have (see Lemma~\ref{Lem:asyncrules}).
Although we conjecture that the assumption is not actually necessary to obtain that claim, we were unable to prove it without the assumption.

Even though the vast literature on process algebras offers a plethora of non-finite axiomatisability results for a variety of languages and semantics (see, for instance, the survey~\cite{AFIL05a} from 2005), we are not aware of any previous attempt at proving a result akin to the one we present here. 
We have already addressed at length how our contribution fits within the study of the equational logic of processes and how it generalises previous results in that field. 
The proof-theoretic tools and the approach we adopt in proving our main theorem, which links equational logic with structural operational
semantics and builds on a number of previous achievements (such as those in~\cite{ABV94}), may have independent interest for researchers in logic in computer science. 
To our mind, achieving an answer to question (\ref{eq:problem}) in full generality would be very pleasing for the concurrency-theory community, as it would finally clarify the canonical role of Bergstra and Klop's auxiliary operators in the finite axiomatisation of parallel composition modulo bisimilarity. 

%=============================================

\subsubsection*{Organisation of contents}

In Section~\ref{sec:background} we review basic notions on process semantics, CCS and equational logic.
In Section~\ref{Sect:Problem} we present the simplifying assumptions under which we tackle the problem~\eqref{eq:problem}.
In Section~\ref{sec:rules} we study the operational semantics of auxiliary operators $f$ meeting our assumptions.
In Section~\ref{sec:proof_method} we give a detailed presentation of the proof strategy we will follow to address~\eqref{eq:problem}.
Sections~\ref{sec:axioms} and~\ref{sec:prelim} are then devoted to the technical development of our negative results, which are presented in Sections~\ref{sec:Labat}--\ref{sec:Lt}.
We conclude by discussing future work in Section~\ref{sec:conclusion}.

%=====================================
% sec - background
%=====================================

\section{Background}
\label{sec:background}

%===============================================

\subsubsection*{LTSs and bisimilarity}
\label{Sect:lts+bis}

As semantic model we consider classic \emph{labelled transition systems}~\cite{Ke76}.

\begin{definition}
\label{Def:lts}
A {\sl labelled transition system} (LTS) is a triple $(S,\Act,\trans[])$, where $S$ is a set of {\sl states} (or \emph{processes}), $\Act$ is a set of {\sl actions}, and ${\trans[]} \subseteq S \times \Act \times S$ is a ({\sl labelled}) {\sl transition relation}. 
\end{definition}

As usual, we use $p \trans[\mu] p'$ in lieu of $(p,\mu,p') \in {\trans[]}$. 
For each $p \in S$ and $\mu \in \Act$, we write $p \trans[\mu]$ if $p \trans[\mu] p'$ holds for some $p'$, and $p \ntrans[\mu]$ otherwise. 
The \emph{initials} of $p$ are the actions that label the outgoing transitions of $p$, that is, $\init{p} = \{\mu \mid p \trans[\mu] \}$. 
For a sequence of actions $\varphi = \mu_1 \cdots \mu_k$ ($k\geq 0$), and states $p,p'$, we write $p \trans[\varphi] p'$ if and only if there exists a sequence of transitions $p = p_0 \trans[\mu_1] p_1 \trans[\mu_2] \cdots \trans[\mu_k] p_k = p'$. 
If $p \trans[\varphi] p'$ holds for some state $p'$, then $\varphi$ is a {\em trace} of $p$. 
Moreover, we say that $\varphi$ is a maximal trace of $p$ if $\init{p'} = \emptyset$.
By means of traces, we associate two classic notions with a process $p$: its \emph{depth}, denoted by $\depth(p)$, and its \emph{norm}, denoted by $\norm(p)$.
For a process $p$ whose set of traces is finite, they express, respectively, the length of a \emph{longest} trace and that of a \emph{shortest} maximal trace of $p$.
Formally, $\depth(p) = \sup \{ k \mid p \text{ has a trace of length } k\}$ and $\norm(p) = \inf \{ k \mid p \text{ has a maximal trace of length } k\}$.

In this paper, we shall consider the states in a labelled transition system modulo bisimilarity~\cite{Mi89,Pa81}, allowing us to establish whether two processes have the same behaviour.

\begin{definition}
\label{Def:bisimulation}
Let $(S,\Act,\mv{})$ be a labelled transition system. 
{\sl Bisimilarity}, denoted by $\bistext$, is the largest binary symmetric relation over $S$ such that whenever $p \bis q$ and $p \mv{\mu} p'$, then there is a transition $q \mv{\mu} q'$ with $p' \bis q'$.
If $p \bis q$, then we say that $p$ and $q$ are {\sl bisimilar}. 
\end{definition}

It is well-known that bisimilarity is an equivalence relation (see, e.g.,~\cite{Mi89,Pa81}). 
Moreover, two bisimilar processes have the same depth and norm.

%============================================================

\subsubsection*{The Language $\FCCS$}
\label{Sect:HCCS}

The language we consider in this paper is obtained by adding a single binary operator $f$ to the recursion-, restriction-, and relabelling-free subset of Milner's CCS~\cite{Mi89}, henceforth referred to as $\FCCS$, and is given by the following grammar:
\[
t ::=\; \nil \,\bfmid\, x \,\bfmid\, a.t \,\bfmid\, \bar{a}.t \,\bfmid\, \tau. t \,\bfmid\, t+t \,\bfmid\, t \mathbin{\|} t
\,\bfmid\, f(t,t) \enspace ,
\]
where $x$ is a variable drawn from a countably infinite set $\Var$, $a$ is an action, and $\bar{a}$ is its complement. 
We assume that the actions $a$ and $\bar{a}$ are distinct. 
Following~\cite{Mi89}, the action symbol $\tau$ will result from the synchronised occurrence of the complementary actions $a$ and $\bar{a}$. 

To obtain the desired negative results, it will actually be sufficient to consider the proposed three unary prefixing operators; so there is only one action $a$ with its corresponding complementary action $\bar{a}$, so that $\Act = \{ a, \bar{a}, \tau \}$.
Our results carry over unchanged to a setting with an arbitrary number of actions, and corresponding prefixing operators.
Henceforth, we let $\mu\in \Act$ and $\alpha\in\{a,\bar{a}\}$. 
As usual, we postulate that $\bar{\bar{a}}=a$.  
We shall use the meta-variables $t,u,v,w$ to range over process terms, and write ${\it var}(t)$ for the collection of variables occurring in the term $t$.  
The {\sl size} of a term is the number of operator symbols in it. 
A term is {\em closed} if it does not contain any variables.  
Closed terms, or {\sl processes}, will be denoted by $p,q,r$. 
Moreover, trailing {\bf 0}'s will be omitted from terms.

A {\sl (closed) substitution} is a mapping from process variables to (closed) $\FCCS$ terms. 
For every term $t$ and substitution $\sigma$, the term obtained by replacing every occurrence of a variable $x$ in $t$ with the term $\sigma(x)$ will be written $\sigma(t)$. 
Note that $\sigma(t)$ is closed, if so is $\sigma$. 
We let $\sigma[x\mapsto p]$ denote the substitution that maps the variable $x$ into process $p$ and behaves like $\sigma$ on all other variables.

In the remainder of this paper, we exploit the associativity and commutativity of $+$ modulo bisimilarity and we consider process terms modulo them, namely we do not distinguish $t+u$ and $u+t$, nor $(t+u)+v$ and $t+(u+v)$.  
In what follows, the symbol $=$ will denote equality modulo the above identifications. 
We use a {\em summation} $\sum_{i\in\{1,\ldots,k\}}t_i$ to denote the term $t= t_1+\cdots+t_k$, where the empty sum represents {\bf 0}.
We can also assume that the terms $t_i$, for $i \in \{1,\dots,k\}$, do not have $+$ as head operator, and refer to them as the \emph{summands} of $t$.

Henceforth, for each action $\mu$ and $m \ge 0$, we let $\mu^0$ denote {\bf 0} and $\mu^{m+1}$ denote $\mu(\mu^m)$.  
For each action $\mu$ and positive integer $i \ge 0$, we also define
\[
\mu^{\scriptstyle \leq i} = \mu + \mu^2 + \cdots + \mu^i \enspace .
\]

%==================================================

\subsubsection*{Equational Logic}
\label{Sect:logic}

An \emph{axiom system} $\E$ is a collection of (\emph{process}) \emph{equations} $t \approx u$ over $\FCCS$.
An equation $t \approx u$ is \emph{derivable} from an axiom system $\E$, notation $\E \vdash t \approx u$, if there is an \emph{equational proof} for it from $\E$, namely if $t \approx u$ can be inferred from the axioms in $\E$ using the \emph{rules} of \emph{equational logic}.
The ones over $\FCCS$ are reported in Table~\ref{tab:equational_logic}. 
Rules ($e_1$)-($e_4$) are common for all process languages and they ensure that $\E$ is closed with respect to reflexivity, symmetry, transitivity and substitution, respectively.
Rules ($e_5$)-($e_8$) are tailored for $\FCCS$ and they ensure the closure of $\E$ under $\FCCS$ contexts.
They are therefore referred to as the \emph{congruence rules}.

\begin{table}[t]
\begin{gather*}
\qquad\qquad
\scalebox{0.9}{($e_1$)}\; t \preox t 
\qquad
\scalebox{0.9}{($e_2$)}\; \frac{t \preox u}{u \preox t} 
\qquad
\scalebox{0.9}{($e_3$)}\; \frac{{t \preox u ~~ u \preox v}}{{t \preox v}} 
\qquad
\scalebox{0.9}{($e_4$)}\; \frac{{t \preox u}}{{\sigma(t) \preox \sigma(u)}} 
\\[.2cm]
\scalebox{0.9}{($e_5$)}\; \frac{t \preox u}{\mu. t \preox \mu. u}
\qquad
\scalebox{0.9}{($e_6$)}\; \frac{t \preox  u~~ t' \approx u'}{t+t' \preox u+u'}
\qquad
\scalebox{0.9}{($e_7$)}\; \frac{t \preox  u~~ t' \approx u'}{f(t,t') \preox f(u,u')}
\qquad
\scalebox{0.9}{($e_8$)}\; \frac{t \preox  u~~ t' \approx u'}{t\mathbin{\|} t' \preox u\mathbin{\|} u'}
\enspace .
\end{gather*}
\caption{\label{tab:equational_logic} The rules of equational logic} 
\end{table}

Without loss of generality one may assume that substitutions happen first in equational proofs, i.e., that rule ($e_4$) may only be used when $(t \preox u) \in \E$.  
In this case $\sigma(t) \preox \sigma(u)$ is called a {\em substitution instance} of an axiom in $\E$.
Moreover, by postulating that for each axiom in $\E$ also its symmetric counterpart is present in $\E$, one may assume that applications of symmetry happen first in equational proofs, i.e., that rule ($e_2$) is never used in equational proofs.  
In the remainder of the paper, we shall always tacitly assume that axiom systems are closed with respect to symmetry.

We are interested in equations that are valid modulo some congruence relation $\rel$ over closed terms.
The equation $t \approx u$ is said to be \emph{sound} modulo $\rel$ if $\sigma(t) \,\rel\, \sigma(u)$ for all closed substitutions $\sigma$.
For simplicity, if $t \approx u$ is sound, then we write $t \,\rel\, u$.
An axiom system is \emph{sound} modulo $\rel$ if, and only if, all of its equations are sound modulo $\rel$. 
Conversely, we say that $\E$ is \emph{ground-complete} modulo $\rel$ if $p \,\rel\, q$ implies $\E \vdash p \approx q$ for all closed terms $p,q$.
We say that $\rel$ has a \emph{finite}, ground-complete, axiomatisation, if there is a \emph{finite} axiom system $\E$ that is sound and ground-complete for $\rel$.

%=========================================
% sec - problem
%=========================================

\section{The simplifying assumptions}
\label{Sect:Problem}

The aim of this paper is to investigate whether bisimilarity admits a finite equational axiomatisation over $\FCCS$, for some binary operator $f$. 
Of course, this question only makes sense if $f$ is an operator that preserves bisimilarity. 
In this section we discuss two assumptions we shall make on the auxiliary operator $f$ in order to meet such requirement and to tackle problem (\ref{eq:problem}) in a simplified technical setting.

%=================================================

\subsection{The de Simone format}

One way to guarantee that $f$ preserves bisimilarity is to postulate that the behaviour of $f$ is described using Plotkin-style rules that fit a rule format that is known to preserve bisimilarity, see, e.g.,~\cite{AcetoFV2001} for a survey of such rule formats. 
The simplest format satisfying this criterion is the format proposed by de Simone in~\cite{dS85}. 
We believe that if we can't deal with operations specified in that format, then there is little hope to generalise our results. 
Therefore, we make the following

\begin{assumption}
\label{Ass:deSimone}
The behaviour of $f$ is described by rules in de Simone format. 
\end{assumption}

\begin{definition}
\label{def:de_Simone}
An SOS rule $\rho$ for $f$ is in \emph{de Simone format} if it has the form
\begin{equation}
\label{deSimone}
\rho = \SOSrule{\{ \rx_i \mv{\mu_i} \ry_i \mid i \in I\}}{f(\rx_1,\rx_2) \mv{\mu} t}
\end{equation}
where $I \subseteq \{1,2\}$, $\mu,\mu_i \in \Act$ ($i\in I$), and moreover
\begin{itemize}
\item the variables $x_1$, $x_2$ and $y_i$ ($i\in I$) are all dif{f}erent and are called the \emph{variables of the rule},
\item $t$ is a $\FCCS$ term over variables $\{\rx_1,\rx_2,\ry_i \mid i\in I\}$, called the {\sl target of the rule}, such that
\begin{itemize}
\item each variable occurs at most once in $t$, and
\item if $i\in I$, then $\rx_i$ does not occur in $t$. 
\end{itemize}
\end{itemize}
\end{definition}

Henceforth, we shall assume, without loss of generality, that the variables $x_1$, $x_2$, $y_1$ and $y_2$ are the only ones used in operational rules for $f$.
Moreover, if $\mu$ is the label of the transition in the conclusion of a de Simone rule $\rho$, we shall say that $\rho$ has $\mu$ as {\sl label}. 

The SOS rules for all of the classic CCS operators, reported below, are in de Simone format, and so are those for Hennessy's $\hmerge$ operator from~\cite{He88} and for Bergstra and Klop's left and communication merge operators~\cite{BK84}, at least if we disregard issues related to the treatment of successful termination. 
Thus restricting ourselves to operators whose operational behaviour is described by de Simone rules leaves us with a good degree of generality. 
\begin{align*}
& \SOSrule{}{\mu. x \mv{\mu} x}
\qquad
\SOSrule{x\mv{\mu} x'}{x+y\mv{\mu} x'}
\qquad
\SOSrule{y\mv{\mu} y'}{x+y\mv{\mu} y'}
\\[.2cm]
& \SOSrule{x\mv{\mu} x'}{x \mathbin{\|}y\mv{\mu} x'\mathbin{\|}y}
\qquad
\SOSrule{y\mv{\mu} y'}{x \mathbin{\|}y\mv{\mu} x\mathbin{\|}y'}
\qquad
\SOSrule{x\mv{\alpha} x',~y\mv{\bar{\alpha}} y'}{x \mathbin{\|}y\mv{\tau} x'\mathbin{\|}y'}
\end{align*}

The transition rules for the classic CCS operators above and those for the operator $f$ give rise to transitions between $\FCCS$ terms.  
The operational semantics for $\FCCS$ is thus given by the LTS whose states are $\FCCS$ terms, and whose transitions are those that are provable using the rules.

In what follows, we shall consider the collection of {\sl closed $\FCCS$ terms} modulo bisimilarity.
Since the SOS rules defining the operational semantics of $\FCCS$ are in de Simone's format, we have that bisimilarity is a congruence with respect to $\FCCS$ operators, that is, $\mu p \bis \mu q$, $p + p' \bis q + q'$, $p \| p' \bis q \| q'$ and $f(p,p') \bis f(q,q')$ hold whenever $p \bis q$, $p' \bis q'$, for processes $p,p',q,q'$.

Bisimilarity is extended to arbitrary $\FCCS$ terms thus:

\begin{definition}
\label{Def:sound}
Let $t,u$ be $\FCCS$ terms. 
%Then
We write $t\bis u$ if and only if $\sigma(t) \bis \sigma(u)$ for every closed substitution $\sigma$.
\end{definition}

%=======================================

\subsection{Axiomatising $\|$ with $f$}

Our second simplifying assumption concerns how the operator $f$ can be used to axiomatise parallel composition.
To this end, a fairly natural assumption on an axiom system over $\FCCS$ is that it includes an equation of the form 
\begin{equation}
\label{eq:general}
x \| y \approx t(x,y)
\end{equation}
where $t$ is a $\FCCS$ term that does not contain occurrences of $\|$ with $\var(t) \subseteq \{x,y\}$.
More precisely, the term will be in the general form $t(x,y) = \sum_{i \in I} t_i(x,y)$, where $I$ is a finite index set and, for each $i \in I$, $t_i(x,y)$ does not have $+$ as head operator.
Equation (\ref{eq:general}) essentially states that $\|$ is a derived operator in $\FCCS$ modulo bisimilarity. 
To our mind, this is a natural, initial assumption to make in studying the problem we tackle in the paper.

We now proceed to refine the form of the term $t(x,y)$, in order to guarantee the soundness, modulo bisimilarity, of Equation~\eqref{eq:general}.
Intuitively, no term $t_i(x,y)$ can have prefixing as head operator.
In fact, if $t(x,y)$ had a summand $\mu.t'(x,y)$, for some $\mu \in \Act$, then one could easily show that $\nil \| \nil \nbis  t(\nil,\nil)$, since $t(\nil,\nil)$ could perform a $\mu$-transition, unlike $\nil\|\nil$.
Similarly, $t(x,y)$ cannot have a variable as a summand, for otherwise we would have $a \| \tau \nbis  t(a,\tau)$.
Indeed, assume, without loss of generality, that $t(x,y)$ has a summand $x$.
Then, $t(a,\tau) \trans[a] \nil$, 
whereas $a \| \tau$ cannot terminate in one step.
We can therefore assume that, for each $i \in I$, $t_i(x,y) = f(t_i^1(x,y), t_i^2(x,y))$ for some $\FCCS$ terms $t_i^k(x,y)$, with $k \in \{1,2\}$.
To further narrow down the options on the form that the subterms $t_i^k(x,y)$ might have, we would need to make some assumptions on the behaviour of the operator $f$.
For the sake of generality, we assume that the terms $t_i^k(x,y)$ are in the simplest form, namely they are variables in $\{x,y\}$.
Notice that to allow prefixing and/or nested occurrences of $f$-terms in the scope of the terms $t_i(x,y)$ we would need to define (at least partially) the operational semantics of $f$, thus making our results less general as, roughly speaking, we would need to study one possible auxiliary operator at a time (the one identified by the considered set of de Simone rules).
Moreover, if we look at how parallel composition is expressed equationally as a derived operator in terms of Hennessy's merge (H), or Bergstra and Klop's left and communication merge (LC), or as in~\cite{ABV94} (LRC), viz.~via the equations
\begin{center}
(H) $x \mathbin{\|}y \approx (x \hmerge y) + (y \hmerge x)$ 
\\[.1cm]
(LC) $x \mathbin{\|}y \approx (x \leftmerge y) + (y \leftmerge x) + (x \cmerge y)$
\qquad
(LRC) $x \mathbin{\|}y \approx (x \leftmerge y) + (x\rightmerge y) + (x \cmerge y)\enspace ,$  
\end{center}
we see the emergence of a pattern: the parallel composition operator is always expressed in terms of sums of terms built from the auxiliary operators and variables. 

Therefore, from now on we will make the following:

\begin{assumption}
\label{Ass:equation}
For some $J \subseteq \{x,y\}^2$, the equation 
\begin{equation}
\label{Eqn:parf}
x \mathbin{\|}y \approx \sum \{ f(z_1,z_2) \mid (z_1,z_2)\in J \}
\end{equation}
holds modulo bisimilarity. 
We shall use $t_J$ to denote the right-hand side of the above equation and use $t_J(p,q)$ to stand for the process $\sigma[x\mapsto p, y \mapsto q](t_J)$, for any closed substitution $\sigma$.
\end{assumption}

Using our assumptions, we further investigate the relation between operator $f$ and parallel composition, obtaining a refined form for Equation~\eqref{Eqn:parf} (Proposition~\ref{assumption:f_vs_par} below).

\begin{lemma}
\label{lem:basics}
Assume that Assumptions~\ref{Ass:deSimone} and~\ref{Ass:equation} hold.
Then:
\begin{enumerate}
\item \label{Jnonempty}
The index set $J$ on the right-hand side of Equation (\ref{Eqn:parf}) is non-empty.
\item \label{frulesnonempty} 
The set of transition rules for $f$ is non-empty.
\item \label{positive-premises} 
Each transition rule for $f$ has some premise.
\item \label{lem:no-eq-vars}
The terms $f(x,x)$ and $f(y,y)$ are not summands of $t_J$.
\end{enumerate}
\end{lemma}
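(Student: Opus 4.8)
The plan is to derive all four statements from the soundness modulo bisimilarity of Equation~\eqref{Eqn:parf} (Assumption~\ref{Ass:equation}), together with a small number of concrete closed instantiations of $x$ and $y$. First I would record three elementary facts about parallel composition that drive everything: $\nil \mathbin{\|} \nil$ has no outgoing transitions, hence $\nil\mathbin{\|}\nil \bis \nil$; the process $a \mathbin{\|} \nil$ can only fire the $a$ of its left component (no synchronisation against $\nil$ is possible), so $a \mathbin{\|} \nil \bis a$ has depth $1$; and $a \mathbin{\|} a$ has depth $2$, since its two arguments cannot synchronise and their $a$-steps must be interleaved. I will freely use the facts recalled in Section~\ref{sec:background} that bisimilar processes have equal depth, that every closed $\FCCS$ term has finite depth, and that the depth of a sum is the maximum of the depths of its summands (so each summand's depth lower-bounds the depth of the whole sum).

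For parts~\ref{Jnonempty}--\ref{positive-premises} I would argue by contradiction, playing the non-triviality of $a\mathbin{\|}\nil$ against a collapse of $t_J$, and the deadlock of $\nil\mathbin{\|}\nil$ against a spurious transition of $t_J$. If $J=\emptyset$ then $t_J=\nil$, whence $t_J(a,\nil)=\nil \nbis a \bis a\mathbin{\|}\nil$, contradicting soundness; this gives part~\ref{Jnonempty}. If $f$ had no transition rules, then no transition $f(p,q)\trans[\mu]$ would ever be derivable, so $f(p,q)\bis\nil$ for all closed $p,q$; since $J\neq\emptyset$ by part~\ref{Jnonempty}, every summand of $t_J$ would be bisimilar to $\nil$ under any substitution, forcing $t_J(a,\nil)\bis\nil$ and again contradicting $t_J(a,\nil)\bis a$; this gives part~\ref{frulesnonempty}. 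Finally, a premise-free de Simone rule (one with $I=\emptyset$) has the shape $\frac{}{f(x_1,x_2)\trans[\mu]t}$, and instantiating it at $x_1,x_2\mapsto\nil$ yields a transition $f(\nil,\nil)\trans[\mu]$; as $J\neq\emptyset$, the summand $f(\nil,\nil)$ would give $t_J(\nil,\nil)$ an outgoing transition, contradicting $t_J(\nil,\nil)\bis\nil\mathbin{\|}\nil$, which is deadlocked. This yields part~\ref{positive-premises}.

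For part~\ref{lem:no-eq-vars} the key intermediate step is to pin down the behaviour of $f(a,a)$. Instantiating Equation~\eqref{Eqn:parf} at $x,y\mapsto a$ collapses every summand $f(z_1,z_2)$ with $(z_1,z_2)\in J$ to the single term $f(a,a)$; by idempotence of $+$ modulo bisimilarity the right-hand side reduces to $f(a,a)$, so $f(a,a)\bis a\mathbin{\|}a$ and therefore $\depth(f(a,a))=2$. Now suppose, towards a contradiction, that $f(x,x)$ is a summand of $t_J$, i.e.\ $(x,x)\in J$. Instantiating at $x\mapsto a$, $y\mapsto\nil$ makes $f(a,a)$ a summand of $t_J(a,\nil)$, so $\depth(t_J(a,\nil))\ge\depth(f(a,a))=2$; but $t_J(a,\nil)\bis a\mathbin{\|}\nil\bis a$ has depth $1$, a contradiction. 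Hence $(x,x)\notin J$. The symmetric instantiation $x\mapsto\nil$, $y\mapsto a$, using $t_J(\nil,a)\bis\nil\mathbin{\|}a\bis a$, rules out $(y,y)\in J$ in exactly the same way, so $f(y,y)$ is not a summand of $t_J$ either.

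The steps are individually routine once the right instantiations are chosen, and I expect no serious obstacle; the only point requiring care is the isolation of $f(a,a)$ in part~\ref{lem:no-eq-vars}, where I must use that $J$ is non-empty (part~\ref{Jnonempty}) so that the idempotence reduction is legitimate, and that depth is a bisimulation invariant which is both finite and monotone under $+$. Notably, none of the arguments needs any information about the specific de Simone rules of $f$ beyond what parts~\ref{frulesnonempty} and~\ref{positive-premises} already establish, which is what keeps the proof uniform across all admissible operators $f$.
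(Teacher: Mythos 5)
Your proposal is correct, and parts~\ref{Jnonempty}--\ref{positive-premises} follow the paper's own argument (the paper dismisses the first two as trivial consequences of the unsoundness of $x\|y\approx\nil$, and proves part~\ref{positive-premises} exactly as you do, by instantiating a premise-free rule at $\nil,\nil$ and contradicting the deadlock of $\nil\|\nil$). For part~\ref{lem:no-eq-vars}, however, you take a genuinely different and noticeably lighter route. The paper starts from the transition $a\|\nil\mv{a}\nil\|\nil$, pulls back the matching transition of $t_J(a,\nil)$ to a specific summand $f(p_1,p_2)$, and then performs a case analysis over the three possible shapes of the de Simone rule deriving that transition (left premise, right premise, or both), refuting each shape against the soundness of Equation~\eqref{Eqn:parf}; this argument explicitly relies on part~\ref{positive-premises} and on the structure of the rule format. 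You instead first pin down $f(a,a)\bis a\|a$ by instantiating Equation~\eqref{Eqn:parf} at $x,y\mapsto a$ and invoking idempotence of $+$ (legitimate since $J\neq\emptyset$), and then kill both $(x,x)\in J$ and $(y,y)\in J$ with a single depth-monotonicity argument: a summand of depth $2$ cannot sit inside a term bisimilar to $a\|\nil$, which has depth $1$. This needs no information about the rules for $f$ at all, only soundness, idempotence, and the bisimulation-invariance of depth, so it is both shorter and more uniform. What the paper's heavier proof buys in exchange is a warm-up in exactly the style of rule-by-rule analysis that drives the subsequent results (Lemmas~\ref{Lem:frules}--\ref{Lem:asyncrules}), whereas your argument, while cleaner for this lemma in isolation, does not exercise that machinery.
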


%---------------------------------------------------------
\begin{proof}
Statements~\ref{Jnonempty} and~\ref{frulesnonempty} are trivial because the equation
$
x  \| y \approx \nil 
$
is not sound modulo bisimilarity.  

Let us focus now on the proof for statement~\ref{positive-premises}. 
To this end, assume, towards a contradiction, that $f$ has a rule of the form
\[
f(x_1,x_2) \mv{\mu} t(x_1,x_2) \enspace , 
\]
for some action $\mu$ and term $t$. 
This rule can be used to derive that $f(\nil,\nil) \mv{\mu} t(\nil,\nil)$.
Since the set $J$ on the right-hand side of Equation (\ref{Eqn:parf}) is non-empty by statement~\ref{Jnonempty}, the term $f(\nil,\nil)$ occurs as a summand of $t_J(\nil,\nil)$. 
It follows that $t_J(\nil,\nil) \mv{\mu} t(\nil,\nil)$.
Therefore, 
\[
\nil \mathbin{\|} \nil \bis \nil \nbis t_J(\nil,\nil) \enspace ,
\]
contradicting our Assumption~\ref{Ass:equation}. 

Finally, we deal with statement~\ref{lem:no-eq-vars}.
Assume, towards a contradiction, that $f(x,x)$ is a summand of $t_J$. 
As $a\mathbin{\|} \nil \mv{a} \nil \mathbin{\|} \nil \bis \nil $ and Equation (\ref{Eqn:parf}) holds modulo bisimilarity, there is a process $p$ such that
\[
t_J(a,\nil) \mv{a} p ~\text{and}~p\bis \nil 
\enspace .
\]
This means that there is a summand $f(z_1,z_2)$ of  $t_J$ such that $f(p_1,p_2) \mv{a} p$, where, for $i\in\{1,2\}$,  
\[
p_i = 
\begin{cases}
a & \text{if $z_i = x$}\enspace ,\\
\nil & \text{if $z_i = y$}\enspace . 
\end{cases} 
\]
The transition $f(p_1,p_2) \mv{a} p$ must be provable using some de Simone rule $\rho$ for $f$ (see Equation~\eqref{deSimone} in Definition~\ref{def:de_Simone}). 
Such a rule has some premise by Lemma~\ref{lem:basics}(\ref{positive-premises}), and each such premise must have the form $x_1 \mv{\mu} y_1$ or $x_2  \mv{\mu} y_2$, for some action $\mu$. 
If both $z_1$ and $z_2$ are $y$ then $p_1=p_2=\nil$, and none of those premises can be met.
Therefore at least one of $z_1$ and $z_2$ in the summand $f(z_1,z_2)$ is $x$.  
Moreover, if $x_i \mv{\mu} y_i$ ($i\in\{1,2\}$) is a premise of $\rho$, then $z_i=x$ and $\mu=a$ (or else the premise could not be met).
So the rule $\rho$ can have one of the following three forms:
\ThreeRules{x_1 \mv{a} y_1}{f(x_1,x_2) \mv{a} t_1(y_1,x_2)} 
{x_2\mv{a} y_2}{f(x_1,x_2) \mv{a} t_2(x_1,y_2)} 
{x_1 \mv{a} y_1 \quad x_2 \mv{a} y_2}{f(x_1,x_2) \mv{a} t_3(y_1,y_2)} 
for some terms $t_1$, $t_2$ and $t_3$.  
We now proceed to argue that the existence of each of these rules contradicts the soundness of Equation~\eqref{Eqn:parf} modulo bisimilarity.
  
If $\rho$ has the form 
\Rule{x_1 \mv{a} y_1 \quad x_2 \mv{a} y_2}{f(x_1,x_2) \mv{a} t_3(y_1,y_2)}
then $z_1=z_2=x$ and $f(a,a) \mv{a} p$.
Since the term $f(a,a)$ is a summand of $t_J(a,a)$, it follows that  $t_J(a,a) \mv{a} p$ also holds. 
However, this contradicts the soundness of Equation (\ref{Eqn:parf}) because, for each transition $a\mathbin{\|} a \mv{a} q$, we have that $q \bis a \nbis \nil \bis p$.

Assume now, without loss of generality, that $\rho$ has the form 
\Rule{x_1 \mv{a} y_1}{f(x_1,x_2) \mv{a} t_1(y_1,x_2)} 
Using this rule, we can infer that $f(a,a) \mv{a} t_1(\nil,a)$.
Since $f(x,x)$ is a summand of $t_J$ by our assumption, the term $f(a,a)$ is a summand of $t_J(a,\nil)$. 
Hence, $t_J(a,\nil) \mv{a}  t_1(\nil,a)$ also holds. 
As Equation (\ref{Eqn:parf}) holds modulo bisimilarity, we have that 
\[
a \mathbin{\|} \nil \bis t_J(a,\nil) \enspace .
\]
Therefore $t_1(\nil,a) \bis \nil$, because $a \mathbin{\|} \nil \mv{a} \nil \mathbin{\|} \nil$ is the only transition afforded by the term $a \mathbin{\|} \nil$.
Observe now that
\[
t_J(a,a) \mv{a}  t_1(\nil,a)  \bis \nil \enspace . 
\]
also holds. 
However, this contradicts the soundness of Equation (\ref{Eqn:parf}) as above because, for each transition $a\mathbin{\|} a \mv{a} q$, we have that $q \bis a \nbis \nil \bis t_1(\nil,a)$.
  
This proves that $f(x,x)$ is not a summand of $t_J$, which was to be shown.
\end{proof}
%--------------------------------------------------------

As a consequence, we may infer that the index set $J$ in the term $t_J$ is either one of the singletons $\{(x,y)\}$ or $\{(y,x)\}$, or it is the set $\{(x,y),(y,x)\}$.
Due to Moller's results to the effect that bisimilarity has no finite ground-complete axiomatisation over CCS~\cite{Mo89,Mo90a}, the former option can be discarded, as shown in the following:

\begin{proposition}
\label{Propn:singleton-nonfin}
If $J$ is a singleton, then $\FCCS$ admits no finite equational axiomatisation modulo bisimilarity.
\end{proposition}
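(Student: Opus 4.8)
The plan is to exploit the fact that, when $J$ is a singleton, the operator $f$ is nothing but a renamed copy of $\|$ modulo bisimilarity, and then to transfer Moller's non-finite-axiomatisability result from CCS to $\FCCS$ by a syntactic translation that collapses $f$ into $\|$. First I would use Lemma~\ref{lem:basics} to note that a singleton $J$ must be either $\{(x,y)\}$ or $\{(y,x)\}$, so that Equation~\eqref{Eqn:parf} reads either $x\|y\approx f(x,y)$ or $x\|y\approx f(y,x)$. Since $\|$ is commutative modulo bisimilarity (its de Simone rules are symmetric in the two arguments), both cases yield that the equation $f(x,y)\approx x\|y$ is sound modulo bisimilarity, that is, $f(p,q)\bis p\|q$ for all closed processes $p,q$.

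Next I would introduce the translation $\theta$ mapping each $\FCCS$ term to a CCS term (a term without $f$): $\theta$ is the identity on $\nil$, on variables and on prefixes, it is homomorphic on $+$ and $\|$, and it sends $f(s_1,s_2)$ to $\theta(s_1)\|\theta(s_2)$. A routine structural induction, using the soundness of $f(x,y)\approx x\|y$ together with the fact that $\bis$ is a congruence over $\FCCS$, shows that $t\bis\theta(t)$ for every $\FCCS$ term $t$. As an immediate consequence, whenever an equation $s\approx u$ is sound over $\FCCS$, the CCS equation $\theta(s)\approx\theta(u)$ is sound as well, since $\theta(s)\bis s\bis u\bis\theta(u)$ and every closed CCS substitution is in particular a closed $\FCCS$ substitution.

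The heart of the argument is the claim that $\theta$ is a homomorphism on equational proofs: if $\E\vdash s\approx u$, then $\theta(\E)\vdash\theta(s)\approx\theta(u)$, where $\theta(\E)=\{\theta(s)\approx\theta(u)\mid (s\approx u)\in\E\}$ and the latter derivation uses only the equational logic of CCS. I would prove this by induction on the derivation of $s\approx u$, inspecting each rule of Table~\ref{tab:equational_logic}. The cases of reflexivity, symmetry and transitivity are immediate; the congruence rules for prefixing and $+$ are preserved because $\theta$ commutes with those operators; the two delicate points are that the $f$-congruence rule~($e_7$) is mapped to the $\|$-congruence rule~($e_8$), since $\theta(f(s,s'))=\theta(s)\|\theta(s')$, and that the substitution rule~($e_4$) is preserved because $\theta$ commutes with substitution, in the sense that $\theta(\rho(s))=\rho^{\theta}(\theta(s))$ for the induced CCS substitution $\rho^{\theta}(x)=\theta(\rho(x))$.

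With this in place the contradiction follows. Suppose $\FCCS$ had a finite, sound, ground-complete axiom system $\E$ modulo bisimilarity. Then $\theta(\E)$ is a finite set of CCS equations, sound modulo bisimilarity by the second paragraph. For ground-completeness over CCS, take closed CCS terms $p,q$ with $p\bis q$: ground-completeness of $\E$ gives $\E\vdash p\approx q$, hence $\theta(\E)\vdash\theta(p)\approx\theta(q)$ by the proof-homomorphism claim, and since $p$ and $q$ contain no $f$ we have $\theta(p)=p$ and $\theta(q)=q$, so $\theta(\E)\vdash p\approx q$. Thus $\theta(\E)$ would be a finite, sound, ground-complete axiomatisation of bisimilarity over CCS, contradicting Moller's results~\cite{Mo89,Mo90a}. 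The step I expect to require the most care is the verification in the third paragraph that $\theta$ commutes with substitution and sends $f$-congruence to $\|$-congruence, so that an $\FCCS$ proof really does project onto a valid CCS proof over $\theta(\E)$; once the soundness of $f(x,y)\approx x\|y$ has been secured, everything else is routine.
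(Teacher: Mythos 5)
Your proof is correct and takes essentially the same route as the paper: the paper's own argument simply observes that commutativity of $\|$ makes $f(x,y) \approx x \mathbin{\|} y$ sound modulo bisimilarity and then appeals directly to Moller's theorem, leaving the transfer of the negative result from CCS to $\FCCS$ implicit. Your translation $\theta$ and the proof-homomorphism lemma spell out exactly the reduction that the paper's two-sentence proof takes for granted (and which the paper does carry out explicitly, in the opposite direction of eliminating $\|$ in favour of $f$, in Proposition~\ref{Prop:no-par}).
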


%-------------------------------------------------------
\begin{proof}
If $J$ is a singleton, then, since $\mathbin{\|}$ is commutative modulo bisimilarity, the equation 
\[
x\mathbin{\|} y \approx f(x,y) 
\]
holds modulo bisimilarity. 
Therefore the result follows from the nonexistence of a finite equational axiomatisation for CCS proven by Moller in~\cite{Mo89,Mo90a}.
\end{proof}
%-------------------------------------------------

As a consequence, we can restate our Assumption~\ref{Ass:equation} in the following simplified form:

\begin{proposition}
\label{assumption:f_vs_par}
Equation~\eqref{Eqn:parf} can be refined to the form:
\begin{equation}
\label{Eqn:parf2}
x\mathbin{\|} y \approx f(x,y) + f(y,x) \enspace .
\end{equation}
\end{proposition}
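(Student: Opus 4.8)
The plan is to assemble the facts already established about the index set $J$ and then eliminate every possibility except $J = \{(x,y),(y,x)\}$. By Assumption~\ref{Ass:equation} we have $J \subseteq \{x,y\}^2$, so the only candidate pairs are $(x,x)$, $(x,y)$, $(y,x)$ and $(y,y)$. First I would invoke Lemma~\ref{lem:basics}(\ref{lem:no-eq-vars}), which tells us that neither $f(x,x)$ nor $f(y,y)$ is a summand of $t_J$; hence $(x,x)\notin J$ and $(y,y)\notin J$, leaving $J \subseteq \{(x,y),(y,x)\}$. Combining this with Lemma~\ref{lem:basics}(\ref{Jnonempty}), which guarantees that $J$ is non-empty, the only surviving possibilities are the two singletons $\{(x,y)\}$ and $\{(y,x)\}$, and the full set $\{(x,y),(y,x)\}$.

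Next I would dispose of the singleton cases by appeal to Proposition~\ref{Propn:singleton-nonfin}: if $J$ is either $\{(x,y)\}$ or $\{(y,x)\}$, then $\FCCS$ admits no finite equational axiomatisation modulo bisimilarity at all, so the negative result we are ultimately after is already established in those cases and no further analysis is required. Consequently, the only configuration that could still permit a finite ground-complete axiomatisation is $J = \{(x,y),(y,x)\}$, for which Equation~\eqref{Eqn:parf} reads precisely
\[
x\mathbin{\|} y \approx f(x,y) + f(y,x),
\]
that is, Equation~\eqref{Eqn:parf2}. This justifies restricting attention to that form henceforth.

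Since all the genuine work has already been carried out in Lemma~\ref{lem:basics} and Proposition~\ref{Propn:singleton-nonfin}, I expect no real obstacle here: the argument is purely a combinatorial case split that collects the preceding results. The only point requiring a little care is the logical reading of the word \emph{refined}: the claim is not that $J$ must equal $\{(x,y),(y,x)\}$ unconditionally, but rather that in every other admissible case the desired non-finite-axiomatisability conclusion already follows, so we lose no generality in assuming Equation~\eqref{Eqn:parf2} when pursuing the main theorem.
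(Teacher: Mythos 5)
Your proposal is correct and follows essentially the same route as the paper: the paper likewise combines Lemma~\ref{lem:basics}(\ref{Jnonempty}) and Lemma~\ref{lem:basics}(\ref{lem:no-eq-vars}) to reduce $J$ to the two singletons or $\{(x,y),(y,x)\}$, and then discards the singleton cases via Proposition~\ref{Propn:singleton-nonfin}, since there the non-finite-axiomatisability conclusion already holds by Moller's results. Your closing remark on the meaning of \emph{refined} — that the singleton cases are not impossible but simply already settled, so no generality is lost — is exactly the intended reading.
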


Moreover, in the light of Moller's results in~\cite{Mo89,Mo90a}, we can restrict ourselves to considering only operators $f$ such that $x\mathbin{\|} y \approx f(x,y)$ does not hold modulo bisimilarity.

For later use, we note a useful consequence of the soundness of Equation~\eqref{Eqn:parf2} modulo bisimilarity.

\begin{lemma}
\label{Lem:finite-depth}
Assume that Equation (\ref{Eqn:parf2}) holds modulo $\bistext$. 
Then $\depth(p)$ is finite for each closed $\FCCS$ term $p$.
\end{lemma}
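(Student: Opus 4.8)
The plan is to prove the statement by structural induction on the closed term $p$, establishing along the way the stronger quantitative bound that controls $\depth(p)$ in terms of the depths of the subterms of $p$. The only external ingredient I would use repeatedly is the fact, recorded just after Definition~\ref{Def:bisimulation}, that bisimilar processes have the same depth; the treatment of the operator $f$ will rest entirely on this, since the operational semantics of $f$ is not known. The easy cases are dispatched directly: $\depth(\nil)=0$; for $p=\mu.q$ the first transition is forced and $\depth(p)=1+\depth(q)$; and for $p=q+r$ every trace is a trace of exactly one summand, so $\depth(p)=\max(\depth(q),\depth(r))$. In each case finiteness follows from the induction hypothesis.

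The case $p=q\mathbin{\|}r$ is the first that needs care. Here I would argue that $\depth(q\mathbin{\|}r)\le\depth(q)+\depth(r)$ by projecting an arbitrary trace of $q\mathbin{\|}r$ onto its two components. Along such a trace, each transition is either a move of $q$ alone, a move of $r$ alone, or a synchronisation consuming one step of each. If these occur $\ell$, $m$ and $s$ times respectively, then the trace has length $\ell+m+s$, whereas the induced traces of $q$ and $r$ have lengths $\ell+s\le\depth(q)$ and $m+s\le\depth(r)$; adding these yields $\ell+m+s\le(\ell+s)+(m+s)\le\depth(q)+\depth(r)$, which is finite by the induction hypothesis.

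The crucial case is $p=f(q,r)$, where no information about the rules of $f$ is available. Here I would exploit Equation~\eqref{Eqn:parf2}: its soundness modulo $\bistext$, instantiated with $x\mapsto q$ and $y\mapsto r$, gives $f(q,r)+f(r,q)\bis q\mathbin{\|}r$. Since $f(q,r)$ is a summand of $f(q,r)+f(r,q)$, we have $\depth(f(q,r))\le\depth(f(q,r)+f(r,q))$, and because bisimilar processes have equal depth this last quantity equals $\depth(q\mathbin{\|}r)$. Combining with the bound from the parallel case yields $\depth(f(q,r))\le\depth(q)+\depth(r)$, which is finite by the induction hypothesis.

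The main obstacle is precisely this final case: a direct operational analysis of $f$ is impossible because its rules are unspecified, so the entire weight of the argument is carried by the soundness of Equation~\eqref{Eqn:parf2} together with the invariance of depth under $\bistext$. A secondary point requiring attention is the bookkeeping for synchronisations in the parallel case, where one must remember that a $\tau$-synchronisation advances both components simultaneously and so is counted in both projected traces; this is exactly why the naive estimate $\ell+m+s$ is bounded by $\depth(q)+\depth(r)$ rather than exceeding it.
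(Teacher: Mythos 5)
Your proposal is correct and follows essentially the same route as the paper's proof: structural induction on closed terms, with the standard CCS cases handled by the usual depth identities and the crucial case $f(q,r)$ resolved by observing that $\depth(f(q,r))\le\depth(f(q,r)+f(r,q))=\depth(q\mathbin{\|}r)$, using the soundness of Equation~(\ref{Eqn:parf2}) and the invariance of depth under $\bistext$. The only difference is cosmetic: you prove the bound $\depth(q\mathbin{\|}r)\le\depth(q)+\depth(r)$ by explicit trace projection, whereas the paper simply cites the well-known inductive characterisation of depth for the standard operators.
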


%-----------------------------------------------------
\begin{proof}
By structural induction on closed terms. 
For all of the standard CCS operators, it is well known that the depth of closed terms can be characterized inductively thus:
\[
\begin{array}{rcl}
\depth(\nil) & = & 0 \\
\depth(\mu p) & = & 1+\depth(p) \\
\depth(p + q) & = & \max \{\depth(p), \depth(q)\} \\
\depth(p \| q) & = & \depth(p)+ \depth(q) \enspace .
\end{array}
\]
So the depth of a closed term of the form $\mu p$, $p + q$ or $p \| q$ is finite, if so are the depths of $p$ and $q$. 

Consider now a closed term of the form $f(p,q)$. 
Since bisimilar terms have the same depth and, by the proviso of the lemma, Equation~\eqref{Eqn:parf2} holds modulo bisimilarity, we have that
\[
\depth(f(p,q)) \leq \depth(f(p,q)+f(q,p)) = \depth(p \| q) \enspace . 
\]
It follows that $\depth(f(p,q))$ is finite, if so are the depths of $p$ and $q$.
\end{proof}
%-----------------------------------------------------------

%======================================================
% sec - rules
%========================================================

\section{The operational semantics of $f$}
\label{sec:rules}

In order to obtain the desired results, we shall, first of all, understand what rules $f$ may and must have in order for Equation (\ref{Eqn:parf2}) to hold modulo bisimilarity (Proposition~\ref{prop:new} below).
We begin this analysis by restricting the possible forms the SOS rules for $f$ \emph{may} take.

\begin{lemma}
\label{Lem:frules}
Suppose that $f$ meets Assumption~\ref{Ass:deSimone}, and that Equation~\eqref{Eqn:parf2} is sound modulo bisimilarity. 
Let $\rho$ be a de Simone rule for $f$ with $\mu$ as label.
Then:
\begin{enumerate}
\item \label{tauact} 
If $\mu= \tau$ then the set of premises $\{ \rx_i \mv{\mu_i} \ry_i \mid i \in I\}$ of $\rho$ can only have one of the following possible forms:
\begin{itemize}
\item $\{\rx_i \mv{\tau} \ry_i\}$ for some $i\in \{1,2\}$, or
\item $\{\rx_1 \mv{\alpha} \ry_1, \rx_2 \mv{\bar{\alpha}} \ry_2\}$ for some $\alpha\in \{a,\bar{a}\}$.
\end{itemize}
\item \label{alphaact} 
If $\mu= \alpha$ for some $\alpha\in \{a,\bar{a}\}$, then the set of premises $\{ \rx_i \mv{\mu_i} \ry_i \mid i \in I\}$ can only have the form $\{\rx_i \mv{\alpha} \ry_i\}$ for some $i\in \{1,2\}$.
\end{enumerate}
\end{lemma}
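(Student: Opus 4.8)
The plan is to read off the admissible premise sets by \emph{firing} the rule $\rho$ on suitably chosen closed processes and confronting the resulting transition of the $f$-term with the transitions available to the corresponding parallel composition, exploiting the soundness of Equation~\eqref{Eqn:parf2}. By Lemma~\ref{lem:basics}(\ref{positive-premises}) every rule of $f$ has at least one premise, so the index set $I\subseteq\{1,2\}$ of $\rho$ is one of $\{1\}$, $\{2\}$, $\{1,2\}$, and I would organise the argument around these three shapes.

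First I would dispose of the single-premise rules. If $I=\{i\}$ with premise $x_i\mv{\mu_i}y_i$, instantiate with $p_i=\mu_i$ and $p_{3-i}=\nil$. The premise is then satisfied, so $\rho$ yields $f(p_1,p_2)\mv{\mu}p'$ for some $p'$; since $p_1\|p_2\bis f(p_1,p_2)+f(p_2,p_1)$ by Equation~\eqref{Eqn:parf2}, the process $p_1\|p_2$ (which is $\mu_i\|\nil$ up to the order of its arguments) must answer with a $\mu$-transition. But its only transition is labelled $\mu_i$, whence $\mu=\mu_i$. This is exactly the single-premise shape permitted in both items of the statement.

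The core of the proof is the two-premise case $I=\{1,2\}$, with premises $x_1\mv{\mu_1}y_1$, $x_2\mv{\mu_2}y_2$, conclusion label $\mu$ and target $t$ (a term over $y_1,y_2$ in which, by the de Simone constraints, $x_1,x_2$ do not occur and each variable occurs at most once). I would split it into two steps. Step one is the key claim that $\mu\notin\{\mu_1,\mu_2\}$. Granting this, step two instantiates $p_1=\mu_1$, $p_2=\mu_2$, fires $\rho$, and forces $\mu_1\|\mu_2$ to exhibit a $\mu$-transition; the only labels it affords are $\mu_1$, $\mu_2$ and --- precisely when $\mu_1$ and $\mu_2$ are complementary --- $\tau$. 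As $\mu\notin\{\mu_1,\mu_2\}$, this pins down $\mu=\tau$ together with $\mu_1=\bar{\mu_2}\in\{a,\bar a\}$, which is the complementary two-premise shape. Combining the two cases then yields both items: a non-$\tau$ label can only come from a single-premise rule (item~2), while a $\tau$ label admits exactly the two listed shapes (item~1).

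It remains to establish the key claim, and I expect this to be the main obstacle. Assume, for contradiction, a two-premise rule with $\mu=\mu_1$ (the case $\mu=\mu_2$ being symmetric), and probe it on $p_1=\mu_1.r_1$, $p_2=\mu_2.r_2$, obtaining $f(p_1,p_2)\mv{\mu_1}t[y_1:=r_1,y_2:=r_2]$. By soundness of Equation~\eqref{Eqn:parf2}, $p_1\|p_2$ must match this with a $\mu_1$-transition; since a synchronisation step would require $p_1$ and $p_2$ to offer complementary initial actions (and is in any case labelled $\tau$), the only candidates are the \emph{interleaving} derivatives $r_1\|(\mu_2.r_2)$ and, if $\mu_2=\mu_1$, $(\mu_1.r_1)\|r_2$, each of which carries one of the arguments along \emph{unchanged}. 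The difficulty is that a cleverly chosen target may \emph{reconstruct} the carried argument from its derivative, so that the two-premise transition accidentally coincides with an interleaving step; indeed, single-prefix test processes are matched by targets such as $y_1\|(\mu_2.y_2)$. To defeat every admissible target I would choose the continuations so as to expose the mismatch structurally: if a variable $y_i$ is absent from $t$ the corresponding derivative is dropped, so taking the continuation on the dropped side of large depth (and the other trivial) makes the target shallow while every interleaving derivative stays deep, contradicting the invariance of $\depth$ under $\bistext$ (depths are finite by Lemma~\ref{Lem:finite-depth}); and if both variables occur, a fixed de Simone target can reconstruct at most a bounded, deterministic prefix of the carried argument, so choosing that argument with top-level nondeterministic branching produces a derivative that no interleaving step can match. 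Carrying out this confrontation uniformly over the possible shapes of $t$ is the delicate, technical heart of the lemma.
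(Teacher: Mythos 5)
Your overall architecture is correct and genuinely different from the paper's: the paper proves item~1 by cases on the \emph{disallowed} premise shapes, running a bespoke chain of deductions for two-premise $\tau$-rules (first pinning down $t(\nil,\nil)\bis\mu_2$ via the test $\mu_1\|\mu_2$, then forcing $\mu_2=\tau$ with the test $\tau\|(\mu_2+\alpha)$, then contradicting with $(\tau+a)\|(\tau+a)$), and dismisses item~2 as ``similar''; you instead reduce both items to one uniform key claim ($\mu\notin\{\mu_1,\mu_2\}$ for two-premise rules), which is cleaner. Your single-premise argument, your step~two, and your depth argument for targets missing a variable are all correct.

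The gap is exactly where you flag it, and your prescribed tactic for that case is not merely unfinished but, as literally stated, fails: putting top-level branching in \emph{one} argument (``the carried argument'') does not suffice when $\mu_1=\mu_2$, because then there are two families of interleaving matches and a target can mimic the family whose carried argument you left deterministic. Concretely, take the rule with premises $x_1\mv{a}y_1$, $x_2\mv{a}y_2$, label $a$ and target $y_1\| a.y_2$, and your test $p_1=a.r_1$ (deterministic), $p_2=a.v+a.v'$ (branching). The firing with $y_2:=v$ yields $r_1\| a.v$, which with $r_1=\nil$, $v=\nil$ is matched by $p_1\| v$ (both are $\bis a$); and the firing with $y_2:=v'$, where $v'=a^{k}$, is matched by $p_1\| v'$ because $a\| a^{k}\bis a^{k+1}$. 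So every rule-generated move is answered and no contradiction arises from this family of tests. The repair is to merge your two cases instead of separating them: take $p_i=\mu_i.\nil+\mu_i.a^{D+1}$ for \emph{both} $i$, where $D$ is the (finite, by Lemma~\ref{Lem:finite-depth}) depth of $t(\nil,\nil)$. The rule then fires with both derivatives $\nil$, producing the fixed shallow process $t(\nil,\nil)$ of depth at most $D$, while every $\mu_1$-labelled derivative of $p_1\|p_2$ keeps one entire $p_i$ (of depth at least $D+2$) as a parallel component---synchronisation contributes nothing, since it is $\tau$-labelled and $p_1$ offers only $\mu_1$-moves, which have no complement when $\mu_1=\tau$. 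Depth invariance of $\bistext$ then rules out every match at once, for \emph{every} target $t$; the case analysis on the shape of $t$ that you anticipated as the ``delicate, technical heart'' is not needed at all.
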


%------------------------------------------------------
\begin{proof}
We only detail the proof for statement~\ref{tauact}. 
The proof for statement~\ref{alphaact} follows similar lines.  

Assume, towards a contradiction, that $\mu= \tau$ and the set of premises $\{ \rx_i \mv{\mu_i} \ry_i \mid i \in I\}$ of $\rho$ has some form that differs from those in the statement. 
Then the set of premises of $\rho$ has one of the following two forms:
\begin{itemize}
\item $\{\rx_i \mv{\alpha} \ry_i\}$ for some $i\in \{1,2\}$ and $\alpha\in \{a,\bar{a}\}$, or
\item $\{\rx_1 \mv{\mu_1} \ry_1, \rx_2 \mv{\mu_2} \ry_2\}$ for some $\mu_1,\mu_2\in \Act$ such that 
\begin{itemize}
\item either $\mu_1=\tau$ or $\mu_2=\tau$, or
\item $\mu_1 = \mu_2 =\alpha$ for some $\alpha\in\{a,\bar{a}\}$.
\end{itemize}
\end{itemize}
We now proceed to argue that the existence of either of these rules for $f$ contradicts the soundness of Equation~\eqref{Eqn:parf2}.  
\begin{itemize}
\item Assume that the set of premises of $\rho$ has the form $\{\rx_i \mv{\alpha} \ry_i\}$ for some $i\in \{1,2\}$ and $\alpha\in \{a,\bar{a}\}$.  
In this case, we can use that rule to prove the existence of the transition
$
f(\alpha,\nil) \mv{\tau} t(\nil,\nil)$, or of $f(\nil,\alpha) \mv{\tau} t(\nil,\nil), 
$
depending on whether $i=1$ or $i=2$. 
Therefore 
$
f(\alpha,\nil)+ f(\nil,\alpha) \mv{\tau} t(\nil,\nil) 
$   
also holds. 
However, the existence of this transition immediately contradicts the soundness of Equation~\eqref{Eqn:parf2} modulo bisimilarity because $\alpha \mathbin{\|} \nil$ affords no $\tau$-transition.
\item Assume that the set of premises of $\rho$ has the form $\{\rx_1 \mv{\mu_1} \ry_1, \rx_2 \mv{\mu_2} \ry_2\}$ for some $\mu_1,\mu_2\in \Act$ such that
\begin{itemize}
\item either $\mu_1=\tau$ or $\mu_2=\tau$, or
\item $\mu_1 = \mu_2 =\alpha$ for some $\alpha\in\{a,\bar{a}\}$.
\end{itemize}
In this case, we can use that rule to prove the existence of the transition $f(\mu_1,\mu_2) \mv{\tau} t(\nil,\nil)$.
Therefore $f(\mu_1,\mu_2)+ f(\mu_2,\mu_1) \mv{\tau} t(\nil,\nil)$ also holds.  
By the soundness of Equation~\eqref{Eqn:parf2}, we have that 
\[
\mu_1 \| \mu_2 \bis f(\mu_1,\mu_2)+ f(\mu_2,\mu_1) \enspace .
\]
Hence $\mu_1 \| \mu_2 \mv{\tau} p$ for some $p$ such that $p\bis t(\nil,\nil)$.  
If $\mu_1 = \mu_2 =\alpha$ for some $\alpha\in\{a,\bar{a}\}$, then the above transition cannot exist, because $\alpha \mathbin{\|} \alpha$ affords no $\tau$-transition.
This immediately contradicts the soundness of Equation~\eqref{Eqn:parf2} modulo bisimilarity. 
We therefore proceed with the proof by assuming that at least one of $\mu_1$ and $\mu_2$ is $\tau$. 
In this case, we have that $\mu_1 \mathbin{\|} \mu_2 \mv{\tau} p$ implies that $p \bis \mu_1$ and $\mu_2=\tau$, or $p\bis \mu_2$ and $\mu_1=\tau$.
Assume, without loss of generality, that $\mu_1=\tau$ and 
\begin{equation}
\label{Eqn:mu2-bis-t}
t(\nil,\nil) \bis \mu_2 \enspace .
\end{equation}
Pick now an action $\alpha\neq \mu_2$. 
(Such an action exists as we have three actions in our language.) 
The soundness of Equation~\eqref{Eqn:parf2} yields that 
\[
\tau \mathbin{\|} (\mu_2 + \alpha) \bis f(\tau,\mu_2 + \alpha)+ f(\mu_2 + \alpha,\tau) \enspace .
\]
Using the rule for $f$ we assumed we had and the rules for $+$, we can prove the existence of the transition $f(\tau,\mu_2 + \alpha)+ f(\mu_2 + \alpha,\tau) \mv{\tau} t(\nil,\nil)$.
Since the source of the above transition is bisimilar to $\tau \mathbin{\|} (\mu_2 + \alpha)$, there must be a term $p$ such that $\tau \mathbin{\|} (\mu_2 + \alpha) \mv{\tau} p$ and $p \bis t(\nil,\nil)$. 
We can distinguish two cases, according to the semantics of $\mathbin{\|}$:
\begin{itemize}
\item $\tau \trans[\tau] \nil$, so that $\tau \mathbin{\|} (\mu_2+\alpha) \trans[\tau] \nil \mathbin{\|} (\mu_2+\alpha)$.
Since $\alpha \neq \mu_2$, we have that $\mu_2+\alpha \nbis \mu_2$ and thus $p = \nil \mathbin{\|} (\mu_2+\alpha) \bis \mu_2+\alpha \nbis t(\nil,\nil)$.
Hence, we need to discard this case.
\item $\mu_2 = \tau$, so that $\mu_2+\alpha \trans[\tau] \nil$ and $t \mathbin{\|} (\mu_2+\alpha) \trans[\tau] \tau \mathbin{\|} \nil$.
Hence, $p = \tau \mathbin{\|} = \mu_2 \mathbin{\|} \nil \bis \mu_2 \bis t(\nil,\nil)$, and $p$ is the term we are looking for.
\end{itemize}
We have therefore obtained that $\mu_1=\mu_2=\tau$. 
    
We are now ready to reach the promised contradiction to the soundness of Equation~\eqref{Eqn:parf2}. 
In fact, consider the term $ f(\tau + a,\tau + a)$. 
Using the rule for $f$ we assumed we had, we can again prove the existence of the transition $f(\tau + a,\tau + a) \mv{\tau} t(\nil,\nil)$.
By Equation~\eqref{Eqn:mu2-bis-t} and our observation that $\mu_2=\tau$, the term $t(\nil,\nil)$ is bisimilar to $\tau$.  
On the other hand, $(\tau + a) \mathbin{\|} (\tau + a) \mv{\tau} p$ implies that $p \bis (\tau + a) \nbis \tau$, contradicting the soundness of Equation~\eqref{Eqn:parf2} modulo bisimilarity.
\end{itemize}
\end{proof}
%-----------------------------------------------------------

The previous lemma limits the form of the premises that rules for $f$ \emph{may} have in order for Equation (\ref{Eqn:parf2}) to hold modulo bisimilarity. 
We now characterise the rules that $f$ \emph{must} have in order for it to satisfy that equation.
Firstly, we deal with {\sl synchronisation}.

\begin{lemma}
\label{Lem:syncrules}
Assume that Equation (\ref{Eqn:parf2}) holds modulo bisimilarity.  
Then the operator $f$ must have a rule of the form
\RuleLab{\rx_1 \mv{\alpha} \ry_1 \quad \rx_2 \mv{\bar{\alpha}}y_2}{f(x_1,x_2) \mv{\tau} t(y_1,y_2)}{syncrule}
for some $\alpha\in \{a,\bar{a}\}$ and term $t$.  
Moreover, for each rule for $f$ of the above form the term $t(x,y)$ is bisimilar to $x\mathbin{\|}y$.
\end{lemma}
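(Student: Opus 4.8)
The plan is to prove the lemma in two stages: first the \emph{existence} of a synchronisation rule, and then the claimed \emph{bisimilarity} of its target with $x \mathbin{\|} y$. Throughout I would exploit the soundness of Equation~\eqref{Eqn:parf2} together with the structural restrictions on $\tau$-labelled rules already established in Lemma~\ref{Lem:frules}(\ref{tauact}), and the standard ``freezing'' device of guarding a process behind a prefix so that the only available synchronisation is the one I want to analyse.

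For existence, I would start from the observation that $\alpha \mathbin{\|} \bar{\alpha} \mv{\tau} \nil \mathbin{\|} \nil$ is a genuine $\tau$-transition of a parallel composition. By soundness of Equation~\eqref{Eqn:parf2}, the term $f(\alpha,\bar{\alpha}) + f(\bar{\alpha},\alpha)$ is bisimilar to $\alpha \mathbin{\|} \bar{\alpha}$ and must therefore afford a matching $\tau$-transition; this transition originates from one of the two summands, say $f(\alpha,\bar{\alpha})$ (the other case being symmetric), and so is provable by some de Simone rule $\rho$ for $f$ with label $\tau$. By Lemma~\ref{Lem:frules}(\ref{tauact}), the premises of $\rho$ are either a single $\tau$-premise or a synchronising pair $\{x_1 \mv{\beta} y_1, x_2 \mv{\bar{\beta}} y_2\}$. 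Since neither argument $\alpha$ nor $\bar{\alpha}$ can perform a $\tau$-move, the first option is impossible, and for the synchronising pair to fire on $f(\alpha,\bar{\alpha})$ we are forced to have $\beta = \alpha$. Hence $\rho$ is exactly a rule of the form~\eqref{syncrule}, which proves the first claim.

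For the target property, I would fix an arbitrary rule $\rho$ of the form~\eqref{syncrule} and arbitrary closed terms $p,q$, and argue that $t(p,q) \bis p \mathbin{\|} q$; by Definition~\ref{Def:sound} this yields $t(x,y) \bis x \mathbin{\|} y$. The key step is to guard $p$ and $q$ behind the complementary prefixes $\alpha$ and $\bar{\alpha}$: applying $\rho$ to $\alpha.p$ and $\bar{\alpha}.q$ (whose only moves are $\alpha.p \mv{\alpha} p$ and $\bar{\alpha}.q \mv{\bar{\alpha}} q$) yields $f(\alpha.p, \bar{\alpha}.q) \mv{\tau} t(p,q)$, and hence $f(\alpha.p, \bar{\alpha}.q) + f(\bar{\alpha}.q, \alpha.p) \mv{\tau} t(p,q)$. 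On the other side, the only $\tau$-transition of $(\alpha.p) \mathbin{\|} (\bar{\alpha}.q)$ is the synchronisation $(\alpha.p) \mathbin{\|} (\bar{\alpha}.q) \mv{\tau} p \mathbin{\|} q$, precisely because each guarded argument offers a single, non-$\tau$ initial action. Soundness of Equation~\eqref{Eqn:parf2} gives $(\alpha.p) \mathbin{\|} (\bar{\alpha}.q) \bis f(\alpha.p, \bar{\alpha}.q) + f(\bar{\alpha}.q, \alpha.p)$, so the bisimulation game forces the $\tau$-transition landing in $t(p,q)$ to be matched by the unique $\tau$-transition on the left, yielding $t(p,q) \bis p \mathbin{\|} q$.

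I do not expect a serious obstacle here; the argument is a clean application of the bisimulation game combined with the prefix-guarding idea. The only point requiring care is the uniqueness of the $\tau$-transition of $(\alpha.p) \mathbin{\|} (\bar{\alpha}.q)$: one must check that neither interleaving step contributes a $\tau$ and that the synchronisation necessarily lands in $p \mathbin{\|} q$, which is exactly where guarding $p$ and $q$ behind single complementary prefixes is essential, since it rules out spurious synchronisations and pins down the target state. A secondary, purely bookkeeping point is that, because $\rho$ is in de Simone format with both premises present, its target $t$ contains neither $x_1$ nor $x_2$, so it is genuinely a term over $\{y_1, y_2\}$ and the notation $t(p,q)$ is well defined.
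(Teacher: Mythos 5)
Your proposal is correct and follows essentially the same route as the paper's proof: existence is obtained from the $\tau$-transition of $\alpha \mathbin{\|} \bar{\alpha}$ together with Lemma~\ref{Lem:frules}(\ref{tauact}) (the paper phrases this as a contradiction, you argue it directly, but the content is identical), and the target property is established exactly as in the paper, by firing the rule on $\alpha.p$ and $\bar{\alpha}.q$ and using that $(\alpha.p) \mathbin{\|} (\bar{\alpha}.q)$ has the unique $\tau$-transition to $p \mathbin{\|} q$.
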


%--------------------------------------------------------------
\begin{proof}
We first argue that $f$ must have a rule of the form (\ref{syncrule}) for some $\alpha\in \{a,\bar{a}\}$ and term $t$. 
To this end, assume, towards a contradiction, that $f$ has no such rule. 
Observe that the term $a \mathbin{\|} \bar{a}$ affords the transition $a \mathbin{\|} \bar{a} \mv{\tau} \nil \mathbin{\|} \nil$.
However, neither the term $f(a,\bar{a})$ nor the term $f(\bar{a}, a)$ affords a $\tau$-transition. 
In fact, using our assumption that $f$ has no rule of the form (\ref{syncrule}) and Lemma~\ref{Lem:frules}(\ref{tauact}), each rule for $f$ with a $\tau$-transition as a consequent must have the form \Rule{\rx_i \mv{\tau} \ry_i}{f(x_1,x_2) \mv{\tau} t} for some $i\in\{1,2\}$ and term $t$. 
Such a rule cannot be used to infer a transition from $f(a,\bar{a})$ or $f(\bar{a}, a)$. 
It follows that
\[
a \mathbin{\|} \bar{a} \nbis f(a,\bar{a}) + f(\bar{a}, a) \enspace ,
\]
contradicting the soundness of Equation~\eqref{Eqn:parf2}.
Therefore $f$ must have a rule of the form (\ref{syncrule}).
  
We now proceed to argue that $t(x,y)$ is bisimilar to $x\mathbin{\|}y$, for each rule of the form (\ref{syncrule}) for $f$. 
Pick a rule for $f$ of the form (\ref{syncrule}). 
We shall argue that
\[
p \mathbin{\|} q  \bis t(p,q) \enspace ,
\]
for all closed $\FCCS$ terms $p$ and $q$.
To this end, consider  the terms $\alpha.p \mathbin{\|} \bar{\alpha}.q$ and $f(\alpha.p,\bar{\alpha}.q)+f(\bar{\alpha}.q,\alpha.p)$. 
Using rule (\ref{syncrule}) and the rules for $+$, we have that $f(\alpha.p,\bar{\alpha}.q)+f(\bar{\alpha}.q,\alpha.p) \mv{\tau} t(p,q)$.
By the soundness of Equation~\eqref{Eqn:parf2}, we have that 
\[
\alpha.p \mathbin{\|} \bar{\alpha}.q \bis f(\alpha.p,\bar{\alpha}.q)+f(\bar{\alpha}.q,\alpha.p) \enspace . 
\]
Therefore there is a closed term $r$ such that $\alpha.p \mathbin{\|} \bar{\alpha}.q \mv{\tau} r$ and $r\bis t(p,q)$. 
Note that the only $\tau$-transition by $\alpha.p \mathbin{\|} \bar{\alpha}.q$ is $\alpha.p \mathbin{\|} \bar{\alpha}.q \mv{\tau} p \mathbin{\|} q$.
Hence $r = p \mathbin{\|} q  \bis t(p,q)$, which was to be shown. 
\end{proof}
%----------------------------------------------------------

Henceforth we assume, without loss of generality that the target of a rule of the form (\ref{syncrule}) is $y_1 \| y_2$.
We introduce the unary predicates $S_{a,\bar{a}}^f$ and $S_{\bar{a},a}^f$ to identify which rules of type (\ref{syncrule}) are available for $f$.
In detail, $S_{a,\bar{a}}^f$ holds if $f$ has a rule of type (\ref{syncrule}) with premises $x_1 \trans[a] y_1$ and $x_2 \trans[\bar{a}] y_2$.
$S_{\bar{a},a}^f$ holds in the symmetric case.

We consider now the {\sl interleaving} behaviour in the rules for $f$.
In order to properly characterise the rules for $f$ as done in the previous Lemma~\ref{Lem:syncrules}, we consider an additional simplifying assumption on the form that the targets of the rules for $f$ might have.

\begin{assumption}
\label{assumption:targets}
If $t$ is the target of a rule for $f$, then $t$ is either a variable or a term obtained by applying a single $\FCCS$ operator to the variables of the rule, according to the constraints of the de Simone format.
\end{assumption}

\begin{remark}
We remark that Assumption~\ref{assumption:targets} is used \emph{only} in the proof of the following lemma.
Although we conjecture that the assumption is not necessary to obtain this result, we were unable to prove it without the assumption.
\end{remark}

\begin{lemma}
\label{Lem:asyncrules}
Let $\mu\in\Act$. 
Then the operator $f$ must have a rule of the form 
\RuleLab{\rx_1 \mv{\mu} \ry_1}{f(x_1,x_2) \mv{\mu} t(y_1,x_2)}{asyncruleleft} 
or a rule of the form 
\RuleLab{\rx_2 \mv{\mu}\ry_2} {f(x_1,x_2) \mv{\mu} t(x_1,y_2)}{asyncruleright} 
for some term $t$.
Moreover, under Assumption~\ref{assumption:targets}, for each rule for $f$ of the above forms the term $t(x,y)$ is bisimilar to $x\mathbin{\|}y$.
\end{lemma}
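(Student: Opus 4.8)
The plan is to follow the template of the proof of Lemma~\ref{Lem:syncrules}, splitting the argument into an \emph{existence} part and a \emph{characterisation} part, the latter being where Assumption~\ref{assumption:targets} enters.

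\textbf{Existence.} First I would exhibit a transition of $\mathbin{\|}$ that only an interleaving rule can match. Since $\mu.\nil \mathbin{\|} \nil \mv{\mu} \nil \mathbin{\|} \nil \bis \nil$, soundness of Equation~\eqref{Eqn:parf2} forces $f(\mu.\nil,\nil) + f(\nil,\mu.\nil) \mv{\mu} p$ for some $p \bis \nil$; hence one of the two summands affords a $\mu$-transition. In $f(\mu.\nil,\nil)$ only the first argument can move (and only via $\mu$), in $f(\nil,\mu.\nil)$ only the second, and the synchronisation rule~\eqref{syncrule} cannot fire since it needs both arguments to move. By Lemma~\ref{Lem:frules} the rule producing this $\mu$-transition has a single premise, namely $\rx_1 \mv{\mu} \ry_1$ or $\rx_2 \mv{\mu} \ry_2$, i.e.\ it is of the form~\eqref{asyncruleleft} or~\eqref{asyncruleright}.

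\textbf{Characterisation, generic targets.} Fix a rule of the form~\eqref{asyncruleleft} (the case~\eqref{asyncruleright} is symmetric). The key is to isolate the interleaving transition of $\mathbin{\|}$ by \emph{guarding}: for any closed $p$ and any closed $q$ with $\mu \notin \init{q}$, the term $\mu.p \mathbin{\|} q$ has a \emph{unique} $\mu$-labelled transition, namely $\mu.p \mathbin{\|} q \mv{\mu} p \mathbin{\|} q$ (the right component cannot perform $\mu$, and the guard $\mu.p$ offers only $\mu$, so no synchronisation produces a $\mu$-label). Applying the fixed rule gives $f(\mu.p,q) \mv{\mu} t(p,q)$, hence $f(\mu.p,q)+f(q,\mu.p) \mv{\mu} t(p,q)$; by soundness of Equation~\eqref{Eqn:parf2} this transition is matched by the unique one above, whence
\[
t(p,q) \bis p \mathbin{\|} q \qquad \text{for all closed } p \text{ and all closed } q \text{ with } \mu \notin \init{q}.
\]
Now I would invoke Assumption~\ref{assumption:targets}: the target $t(\ry_1,\rx_2)$ is a variable, or a single operator applied to distinct occurrences of the rule's variables, so it is one of $\ry_1,\ \rx_2,\ \nil,\ \nu.\ry_1,\ \nu.\rx_2,\ \ry_1+\rx_2,\ \ry_1\mathbin{\|}\rx_2,\ \rx_2\mathbin{\|}\ry_1,\ f(\ry_1,\rx_2),\ f(\rx_2,\ry_1)$. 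Instantiating the displayed bisimilarity at small $p,q$ (e.g.\ $p=\nil$ with $q=\gamma.\nil$ for some action $\gamma\neq\mu$, which exists) eliminates the non-parallel shapes: each of $\ry_1,\rx_2,\nil,\nu.\ry_1,\nu.\rx_2,\ry_1+\rx_2$ disagrees with $p\mathbin{\|}q$ in its set of initial actions or its $\depth$. The two surviving shapes $\ry_1\mathbin{\|}\rx_2$ and $\rx_2\mathbin{\|}\ry_1$ give $t(x,y)=x\mathbin{\|}y$ and $t(x,y)=y\mathbin{\|}x$, both bisimilar to $x\mathbin{\|}y$, as required.

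\textbf{The crux.} I expect the main obstacle to be the self-referential shapes $f(\ry_1,\rx_2)$ and $f(\rx_2,\ry_1)$, where the displayed fact only yields $f(p,q)\bis p\mathbin{\|}q$ for $q$ avoiding $\mu$; this is consistent in isolation (it is the behaviour of a left-merge-like operator), so it contradicts nothing directly. The plan is to rule these shapes out using the standing restriction, recorded after Proposition~\ref{assumption:f_vs_par}, that $x\mathbin{\|}y\not\bis f(x,y)$. Concretely, for the target $f(\ry_1,\rx_2)$ one first notes, via idempotence of $+$ and soundness, that $f(\mu.\nil,\mu.\nil)\bis \mu.\nil\mathbin{\|}\mu.\nil\bis \mu.\mu.\nil$, and that the $\mu$-transition supplied by the self-referential rule reaches $f(\nil,\mu.\nil)$, which must therefore be bisimilar to $\mu.\nil$; since neither the rule nor~\eqref{syncrule} can move $f(\nil,\mu.\nil)$, the operator $f$ is \emph{forced} to carry a matching right-hand rule. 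One then argues that the recursion on the $f$-shaped target propagates the identity $f(p,q)\bis p\mathbin{\|}q$ from the $\mu$-avoiding arguments to \emph{all} closed $p,q$, yielding $x\mathbin{\|}y\bis f(x,y)$ and contradicting the restriction. This coinductive propagation is the delicate step, and is precisely where the combinatorial interplay between the two admissible rule shapes must be controlled.
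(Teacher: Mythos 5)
Your existence argument and your elimination of the non-parallel target shapes are correct and essentially retrace the paper's proof: the paper likewise derives existence from the fact that $\mu\mathbin{\|}\nil$ moves while, absent interleaving rules, neither $f(\mu,\nil)$ nor $f(\nil,\mu)$ can, and it disposes of the shapes $y_1$, $x_2$, $\nu.y_1$, $\nu.x_2$, $y_1+x_2$ by ad hoc instantiations at small processes; your uniform guarding fact, $t(p,q)\bis p\mathbin{\|}q$ for all closed $p$ and all closed $q$ with $\mu\notin\init{q}$, is a tidier way to organise exactly those counterexamples. The problem lies in the case you yourself identify as the crux, the targets $f(y_1,x_2)$ and $f(x_2,y_1)$: the route you propose there contains a step that is not merely delicate but provably unworkable.

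You plan to propagate $f(p,q)\bis p\mathbin{\|}q$ from $\mu$-avoiding second arguments to all closed arguments and then contradict the standing restriction $x\mathbin{\|}y\not\bis f(x,y)$. Consider, however, the sub-case in which the self-referential rule has label $\alpha$ and $f$ has only the synchronisation rule $S_{\alpha,\bar{\alpha}}^f$ (Lemma~\ref{Lem:syncrules} guarantees one such rule, not both). Your steps do force left and right interleaving rules for every action, but nothing forces the second synchronisation rule: every $\tau$-move of $p\mathbin{\|}q$ arising from $p\mv{\bar{\alpha}}p'$ and $q\mv{\alpha}q'$ is matched, under soundness of Equation~\eqref{Eqn:parf2}, by the \emph{other} summand $f(q,p)$, whose arguments are in the order required by $S_{\alpha,\bar{\alpha}}^f$. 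Indeed, the rule set consisting of left and right rules for all actions (with, say, $\mathbin{\|}$-shaped targets except the left $\alpha$-rule, whose target is $f(y_1,x_2)$) together with the single rule $S_{\alpha,\bar{\alpha}}^f$ satisfies your restricted identity, yet in it $f(\bar{\alpha},\alpha)\nbis\bar{\alpha}\mathbin{\|}\alpha$, since the left-hand side affords no $\tau$-transition; so $f(x,y)\bis x\mathbin{\|}y$ is false there and no coinductive propagation can establish it. The contradiction in this sub-case must instead be a direct violation of soundness that exploits the self-referential continuation with a second argument that \emph{does} perform $\alpha$, i.e.\ outside your guarded regime. This is exactly how the paper argues: for target $f(y_1,x_2)$ it observes $f(\alpha.\bar{\alpha},\alpha)\mv{\alpha}f(\bar{\alpha},\alpha)$, and $f(\bar{\alpha},\alpha)$ (initials $\{\alpha,\bar{\alpha}\}$, no $\tau$) is bisimilar to neither of the two $\alpha$-derivatives $\bar{\alpha}\mathbin{\|}\alpha$ and $\alpha.\bar{\alpha}\mathbin{\|}\nil$ of $\alpha.\bar{\alpha}\mathbin{\|}\alpha$; a companion counterexample $f(\alpha.\alpha,\bar{\alpha}.\alpha)$ handles the swapped target $f(x_2,y_1)$, and separate arguments cover the actions admitting only one interleaving rule. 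The restriction $x\mathbin{\|}y\not\approx f(x,y)$ is invoked by the paper only in the fully symmetric sub-case (both synchronisation rules and both interleaving rules for all actions), where $f$ genuinely is a renaming of $\mathbin{\|}$; your proposal in effect tries to stretch that one sub-case over all of them, and that is the gap.
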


%----------------------------------------------------------
\begin{proof}
Let $\mu\in\Act$. 
We first argue that $f$ must have a rule of the form (\ref{asyncruleleft}) or (\ref{asyncruleright}) for some term $t$. 
To this end, assume, towards a contradiction, that $f$ has no such rules. 
Observe that the term $\mu \mathbin{\|} \nil$ affords the transition $\mu \mathbin{\|} \nil \mv{\mu} \nil \mathbin{\|} \nil$.
However, neither the term $f(\mu,\nil)$ nor the term $f(\nil, \mu)$ affords a $\mu$-transition. 
In fact, if $f$ has no rule of the form (\ref{asyncruleleft}) or (\ref{asyncruleright}), Lemma~\ref{Lem:frules} yields that 
\begin{itemize}
\item either $f$ has no rule with a $\mu$-transition as a consequent, 
\item or $\mu=\tau$, and each rule for $f$ with a $\tau$-transition as a consequent has the form 
\[
\SOSrule{x_1 \trans[\alpha] y_1 \quad x_2 \trans[\bar{\alpha}] y_2}{f(x_1,x_2) \trans[\tau] t(y_1,y_2)}
\quad
\text{ for some } \alpha\in\{a,\bar{a}\}.
\]
\end{itemize}
In the latter case, such a rule cannot be used to infer a transition from $f(\mu,\nil)$ or $f(\nil, \mu)$. 
Hence $\mu \mathbin{\|} \nil \nbis f(\mu,\nil) + f(\nil, \mu)$, contradicting the soundness of Equation (\ref{Eqn:parf2}).
Therefore $f$ must have a rule of the form (\ref{asyncruleleft}) or (\ref{asyncruleright}) for each action $\mu$.

To conclude the proof we need to show that for each rule of the form \eqref{asyncruleleft} or \eqref{asyncruleright} the target term $t (x,y)$ is bisimilar to $x \mathbin{\|} y$.
For simplicity, we expand the proof only for the case of rules of the form \eqref{asyncruleleft}.
The proof for rules of the form \eqref{asyncruleright} follows by the same reasoning.

We proceed by a case analysis over the structure of $t(y_1,x_2)$, which, we recall, under Assumption~\ref{assumption:targets} can be either a variable in $\{y_1,x_2\}$ or a term of the form $g(y_1,x_2)$ for some $\FCCS$ operator $g$.
Our aim is to show that the only possibility is to have $t(y_1,x_2) = y_1 \mathbin{\|} x_2$, as any other process term would invalidate one of our simplifying assumptions.
\begin{itemize}
\item {\sc Case $t$ is a variable in $\{y_1,x_2\}$.}
We can distinguish two cases, according to which variable is considered:
\begin{itemize}
\item $t = y_1$.
Consider process $p = \mu.\nil$.
Since $p \trans[\mu] \nil$, from an application of rule \eqref{asyncruleleft} we can infer that $f(p,p) \trans[\mu] \nil$, and thus $f(p,p) + f(p,p) \trans[\mu] \nil$.
However, there is no $\mu$-transition from $p \mathbin{\|} p$ to a process bisimilar to $\nil$, as whenever $p \mathbin{\|} p \trans[\mu] q$, then $q$ is a process that will always be able to perform a second $\mu$-transition.
Hence, we would have $p \mathbin{\|} p \nbis f(p,p) + f(p,p)$, thus contradicting the soundness of Equation~\eqref{Eqn:parf2}.
\item $t = x_2$.
Consider process $p = \mu.\mu.\nil$.
Since $p \trans[\mu] \mu.\nil$, from an application of rule \eqref{asyncruleleft} we can infer that $f(p,\nil) \trans[\mu] \nil$ and thus $f(p,\nil) + f(\nil,p) \trans[\mu] \nil$.
However, there is no $\mu$-transition from $p \mathbin{\|} \nil$ to a process bisimilar to $\nil$, as whenever $p \mathbin{\|} \nil \trans[\mu] q$, then $q$ is a process that will always be able to perform a second $\mu$-transition.
Hence we would have $p \mathbin{\|} \nil \nbis f(p,\nil) + f(\nil,p)$, thus contradicting the soundness of Equation~\eqref{Eqn:parf2}.
\end{itemize}

\item {\sc Case $t$ is a term of the form $g(y_1,x_2)$} for some $\FCCS$ operator $g$.
We can distinguish three cases, according to which operator is used:
\begin{itemize}
\item {\sc $g$ is the prefix operator.}
We can distinguish two cases, according to which variable of the rule occurs in $t$:
\begin{itemize}
\item $t = \nu.y_1$.
Consider process $p = \mu.\nil$.
Since $p \trans[\mu] \nil$, from an application of rule \eqref{asyncruleleft} we can infer that $f(p,\nil) \trans[\mu] \nu.\nil \trans[\nu] \nil$, and thus $f(p,\nil) + f(\nil,p) \trans[\mu]\trans[\nu] \nil$.
However, $p \mathbin{\|} \nil \trans[\mu] \nil \mathbin{\|} \nil \ntrans[\nu]$.
Hence, we would have that $p \mathbin{\|} \nil \nbis f(p, \nil) + f(\nil,p)$, thus contradicting the soundness of Equation~\eqref{Eqn:parf2}.
\item $t = \nu.x_2$.
This case is analogous to the previous one.
\end{itemize}
\item {\sc $g$ is the nondeterministic choice operator} and thus $t = y_1 + x_2$.
Consider processes $p = \mu.\mu.\nil$ and $q = \mu.\nil$.
Since $p \trans[\mu] q$, from an application of rule \eqref{asyncruleleft} we can infer that $f(p,q) \trans[\mu] q + q \trans[\mu] \nil$, and thus $f(p,q) + f(q,p) \trans[\mu] \trans[\mu] \nil$.
However, there is no process $p'$ such that $p \mathbin{\|} q \trans[\mu]\trans[\mu] p'$ and $p' \bis\nil$, since $p'$ can always perform an additional $\mu$-transition. 
Hence, we would have $p \mathbin{\|} q \nbis f(p,q) + f(q,p)$, which contradicts the soundness of Equation~\eqref{Eqn:parf2}.

\item $g = f$.
First of all, we notice that in this case we can infer that $f$ cannot have both types of rules of the form \eqref{syncrule}, and both types of rules, \eqref{asyncruleleft} and \eqref{asyncruleright}, for all actions.
In fact, if this was the case, due to Lemmas~\ref{Lem:syncrules} and~\ref{Lem:asyncrules}, the set of rules defining the behaviour of $f(x_1,x_2)$ would be 
\begin{align*}
&\SOSrule{x_1 \trans[\mu] y_1}{f(x_1,x_2) \trans[\mu] f(y_1,x_2)}
\qquad
\SOSrule{x_2 \trans[\mu] y_2}{f(x_1,x_2) \trans[\mu] f(x_1,y_2)}
\\
&\SOSrule{x_1 \trans[\alpha] y_1 \quad x_2 \trans[\bar{\alpha}]y_2}{f(x_1,x_2) \trans[\tau] y_1 \mathbin{\|} y_2}
\qquad
\SOSrule{x_1 \trans[\bar{\alpha}] y_1 \quad x_2 \trans[\alpha]y_2}{f(x_1,x_2) \trans[\tau] y_1 \mathbin{\|} y_2}
\end{align*}
with $\mu \in \Act$ and $\alpha \in \{a,\bar{a}\}$.
Operator $f$ would then be a mere renaming of the parallel composition operator.
In particular, as a one-to-one correspondence between the rules for $f$ and those for $\mathbin{\|}$ could be established, we have that $f(x,y)$ would be \emph{bisimilar under formal hypothesis} to $x \mathbin{\|} y$ (see~\cite[Definition 1.10]{dS85}) and therefore, by~\cite[Theorem 1.12]{dS85}, we could directly conclude that $f(x,y) \approx x \mathbin{\|} y$ for all $x,y$.
However, this would contradict the fact that $x \parallel y \not\approx f(x,y)$.
Let us now consider the case of an operator $f$ having both types of rules, \eqref{asyncruleleft} and \eqref{asyncruleright}, and only one type of rules of the form \eqref{syncrule}, say the rule $S^f_{\alpha,\bar{\alpha}}$.
We proceed towards contradiction and distinguish two subcases, according to whether the order of the arguments is preserved or not by the rules of type \eqref{asyncruleleft} with label $\alpha$.
Similar arguments would allow us to deal with rules of type \eqref{asyncruleright}.
\begin{itemize}
\item The target of the rule is $f(y_1,x_2)$.
Then $f(\alpha.\bar{\alpha},\alpha) \trans[\alpha] f(\bar{\alpha},\alpha) \bis \bar{\alpha}.\alpha + \alpha.\bar{\alpha}$.
However, there is no $\alpha$-transition from $\alpha.\bar{\alpha} \| \alpha$ to a process bisimilar to $\bar{\alpha}.\alpha + \alpha.\bar{\alpha}$, thus contradicting the soundness of Equation~\eqref{Eqn:parf2}.
\item The target of the rule is $f(x_2,y_1)$.
Then $f(\alpha.\alpha,\bar{\alpha}.\alpha) \trans[\alpha] f(\bar{\alpha}.\alpha,\alpha) \ntrans[\tau]$.
However, whenever $\alpha.\alpha \| \bar{\alpha}.\alpha$ performs an $\alpha$-transition, it always reaches a process that can perform a $\tau$-move.
This contradicts the soundness of Equation~\eqref{Eqn:parf2}.
\end{itemize}

Finally, let us deal with the case in which there is at least one action $\mu \in \Act$ for which only one rule among \eqref{asyncruleleft} and \eqref{asyncruleright} is available.
According to our current simplifying assumptions, let \eqref{asyncruleleft} be the available rule for $f$ with label $\mu$. 
We can distinguish two cases, according to the occurrences of the variables of the rule in $t$:
\begin{itemize}
\item $t = f(y_1,x_2)$.
Consider process $p = \mu.\nil$.
Since $p \trans[\mu]\nil$, from an application of rule \eqref{asyncruleleft} we can infer that $f(p,p) \trans[\mu] f(\nil,p)$, and thus $f(p,p) + f(p,p) \trans[\mu] f(\nil,p)$, with $f(\nil,p) \bis \nil$, since only rules of the form \eqref{asyncruleleft} are available with respect to action $\mu$.
However, there is no $\mu$-transition from $p \mathbin{\|} p$ to a process bisimilar to $\nil$, as whenever $p \mathbin{\|} p \trans[\mu] q$ then $q$ is a process that will always be able to perform a second $\mu$-transition.
Hence, we would have $p \mathbin{\|} p \nbis f(p,p) + f(p,p)$, thus contradicting the soundness of Equation~\eqref{Eqn:parf2}.
\item $t = f(x_2,y_1)$.
Consider process $p = \mu.\mu.\nil$.
Since $p \trans[\mu] \mu.\nil$, and only rules of the from \eqref{asyncruleleft} are available with respect to action $\mu$, we can infer that $f(p,\nil) \trans[\mu] f(\nil,\mu.\nil) \ntrans[\mu]$ and $f(\nil,p) \ntrans[]$, which means that $f(p,\nil) + f(\nil,p)$ cannot perform two $\mu$-transitions in a row.
However, we have that $p \mathbin{\|} \nil \trans[\mu] \mu.\nil \mathbin{\|} \nil \trans[\mu] \nil \mathbin{\|} \nil$.
Hence, we would have $p \mathbin{\|} \nil \nbis f(p,\nil) + f(\nil,p)$, thus contradicting the soundness of Equation~\eqref{Eqn:parf2}.
\end{itemize}
\end{itemize}
\end{itemize}
\end{proof}
%------------------------------------------------------

Henceforth we assume, without loss of generality, that the target of a rule of the form (\ref{asyncruleleft}) is $y_1 \| x_2$ and the target of a rule of the form (\ref{asyncruleright}) is $x_1 \| y_2$.

For each $\mu \in \Act$, we introduce two unary predicates, $L^{\ff}_{\mu}$ and $R^{\ff}_{\mu}$, that allow us to identify which rules with label $\mu$ are available for $f$.
In detail, 
\begin{itemize}
\item $L^{\ff}_{\mu}$ holds if $f$ has a rule of the form~\eqref{asyncruleleft} with label $\mu$;
\item $R^{\ff}_{\mu}$ holds if $f$ has a rule of the form~\eqref{asyncruleright} with label $\mu$.
\end{itemize}
We let $L^{\ff}_{\mu} \wedge R^{\ff}_{\mu}$ denote that $f$ has both a rule of the form~\eqref{asyncruleleft} and one of the form~\eqref{asyncruleright} with label $\mu$.
We stress that, for each $\mu$, the validity of predicate $L^{\ff}_{\mu}$ does not prevent $R^{\ff}_{\mu}$ from holding, and vice versa.
Throughout the paper, in case \emph{only one} of the two predicates holds, we will clearly state it.

Summing up, we have obtained that:

\begin{proposition}
\label{prop:new}
If $f$ meets Assumptions~\ref{Ass:deSimone} and~\ref{assumption:targets} and Equation (\ref{Eqn:parf2}) is sound modulo bisimilarity, then $f$ must satisfy $S_{\alpha,\bar{\alpha}}^f$ for at least one $\alpha \in \{a,\bar{a}\}$, and, for each $\mu\in\Act$, at least one of $L^{\ff}_{\mu}$ and $R^{\ff}_{\mu}$.
\end{proposition}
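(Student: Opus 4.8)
The plan is to read the statement off directly from Lemmas~\ref{Lem:syncrules} and~\ref{Lem:asyncrules}, whose hypotheses are met verbatim by the standing Assumptions~\ref{Ass:deSimone} and~\ref{assumption:targets} together with the soundness of Equation~\eqref{Eqn:parf2}. The content of the proposition is therefore nothing more than a reformulation of those two existence results in the predicate notation $S^f_{\alpha,\bar\alpha}$, $L^{\ff}_{\mu}$ and $R^{\ff}_{\mu}$ introduced immediately after them, so the proof amounts to unwinding definitions.

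For the synchronisation clause I would invoke Lemma~\ref{Lem:syncrules}: since Equation~\eqref{Eqn:parf2} is sound modulo bisimilarity, $f$ must possess a rule of the form~\eqref{syncrule} for some $\alpha\in\{a,\bar a\}$, that is, a rule with premises $x_1 \trans[\alpha] y_1$ and $x_2 \trans[\bar\alpha] y_2$. By the very definition of the predicate $S^f_{\alpha,\bar\alpha}$ (respectively $S^f_{\bar a,a}$), the existence of such a rule is precisely the assertion that $S^f_{\alpha,\bar\alpha}$ holds. Hence at least one of $S^f_{a,\bar a}$ and $S^f_{\bar a,a}$ holds, which is the first half of the claim.

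For the interleaving clause I would fix an arbitrary action $\mu\in\Act$ and apply Lemma~\ref{Lem:asyncrules}, whose existence part is available under Assumptions~\ref{Ass:deSimone} and~\ref{assumption:targets} and the soundness of Equation~\eqref{Eqn:parf2}. That lemma yields a rule of the form~\eqref{asyncruleleft} or of the form~\eqref{asyncruleright} with label $\mu$. Unfolding the definitions of $L^{\ff}_{\mu}$ and $R^{\ff}_{\mu}$, the former case means $L^{\ff}_{\mu}$ holds and the latter means $R^{\ff}_{\mu}$ holds; in either case at least one of the two predicates holds for $\mu$. As $\mu$ was arbitrary, this delivers the second half of the claim.

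Because both lemmas have already carried out the combinatorial case analysis and the bisimulation arguments, I do not expect any genuine obstacle here: the only points to verify are that the hypotheses of the proposition feed correctly into each lemma and that the existence statements translate faithfully into the predicate notation. In particular, the reliance on Assumption~\ref{assumption:targets} is inherited solely through Lemma~\ref{Lem:asyncrules}, and the target normalisations fixed just after Lemmas~\ref{Lem:syncrules} and~\ref{Lem:asyncrules} are what make the predicates $S^f_{\alpha,\bar\alpha}$, $L^{\ff}_{\mu}$ and $R^{\ff}_{\mu}$ well defined, so nothing beyond this bookkeeping remains to be checked.
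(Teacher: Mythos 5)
Your proposal is correct and matches the paper exactly: the paper offers no separate proof for Proposition~\ref{prop:new}, introducing it with ``Summing up, we have obtained that,'' i.e., it is precisely the restatement of the existence parts of Lemmas~\ref{Lem:syncrules} and~\ref{Lem:asyncrules} in the predicate notation $S^f_{\alpha,\bar{\alpha}}$, $L^{\ff}_{\mu}$, $R^{\ff}_{\mu}$, which is what you do. Your observation that Assumption~\ref{assumption:targets} enters only through Lemma~\ref{Lem:asyncrules} also agrees with the paper's own remark on that assumption.
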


The next proposition states that this is enough to obtain the soundness of Equation (\ref{Eqn:parf2}).

\begin{proposition}
\label{prop:enough-rules}
Assume that all of the rules for $f$ have the form (\ref{syncrule}), (\ref{asyncruleleft}), or (\ref{asyncruleright}).
If $S_{\alpha,\bar{\alpha}}^f$ holds for at least one $\alpha \in \{a,\bar{a}\}$, and, for each $\mu\in\Act$, at least one of $L^{\ff}_{\mu}$ and $R^{\ff}_{\mu}$ holds, then Equation (\ref{Eqn:parf2}) is sound modulo bisimilarity.
\end{proposition}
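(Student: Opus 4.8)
The plan is to establish the soundness of~\eqref{Eqn:parf2} directly, by exhibiting an explicit bisimulation. Since $\bistext$ is a congruence over $\FCCS$ and~\eqref{Eqn:parf2} involves only the variables $x,y$, by Definition~\ref{Def:sound} it suffices to show that $p \mathbin{\|} q \bis f(p,q) + f(q,p)$ for all closed $\FCCS$ terms $p,q$. To this end I would consider the relation
\[
\rel = \{ (p \mathbin{\|} q,\ f(p,q) + f(q,p)) \mid p,q \text{ closed } \FCCS \text{ terms} \} \cup {\bistext}
\]
and prove that it is a bisimulation. The pairs inherited from $\bistext$ are matched automatically, so the real work is to verify the bisimulation clauses for a generic pair $(p \mathbin{\|} q,\ f(p,q) + f(q,p))$; after a single step both components reduce to parallel compositions, and these will be related through the $\bistext$-part of $\rel$, up to the commutativity of $\mathbin{\|}$.

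For the forward clause I would distinguish the three ways $p \mathbin{\|} q$ can move, according to the standard rules for $\mathbin{\|}$. A left interleaving step $p \mathbin{\|} q \trans[\mu] p' \mathbin{\|} q$ arising from $p \trans[\mu] p'$ is matched as follows: if $L^{\ff}_{\mu}$ holds then rule~\eqref{asyncruleleft} gives $f(p,q) \trans[\mu] p' \mathbin{\|} q$; if instead $R^{\ff}_{\mu}$ holds then rule~\eqref{asyncruleright} applied to the summand $f(q,p)$ gives $f(q,p) \trans[\mu] q \mathbin{\|} p'$, whose target is bisimilar to $p' \mathbin{\|} q$. Since for each $\mu$ at least one of $L^{\ff}_{\mu}, R^{\ff}_{\mu}$ holds by hypothesis, the step is always matched; the right interleaving steps are handled symmetrically. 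A synchronisation step $p \mathbin{\|} q \trans[\tau] p' \mathbin{\|} q'$ comes from $p \trans[\alpha] p'$ and $q \trans[\bar{\alpha}] q'$ for some $\alpha$. Assuming without loss of generality that $S_{a,\bar{a}}^f$ holds, I would observe that the communication in which $p$ contributes the $a$-move is captured by $f(p,q) \trans[\tau] p' \mathbin{\|} q'$ via~\eqref{syncrule}, whereas the communication in which $p$ contributes the $\bar{a}$-move is captured by the swapped summand $f(q,p) \trans[\tau] q' \mathbin{\|} p'$; in both cases the reached state is bisimilar to $p' \mathbin{\|} q'$. This is the crux of the argument: the symmetric sum $f(p,q)+f(q,p)$ lets a single synchronisation rule reproduce both orientations of a communication, and likewise lets a single interleaving predicate per action account for both summands.

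For the backward clause I would use the hypothesis that every rule of $f$ has one of the three prescribed forms, so that the transitions of $f(p,q)$ and $f(q,p)$ are exactly the ones enumerated above. Each such transition has a parallel composition as its target and is immediately matched by $p \mathbin{\|} q$: a step $f(p,q) \trans[\mu] p' \mathbin{\|} q$ by~\eqref{asyncruleleft}, or $f(p,q) \trans[\mu] p \mathbin{\|} q'$ by~\eqref{asyncruleright}, is mirrored by the corresponding interleaving step of $p \mathbin{\|} q$, and a step $f(p,q) \trans[\tau] p' \mathbin{\|} q'$ by~\eqref{syncrule} is mirrored by the synchronisation step of $p \mathbin{\|} q$; the transitions of the summand $f(q,p)$ are matched in the same way, again up to commutativity of $\mathbin{\|}$. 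In every case the two targets are bisimilar, hence related by the $\bistext$-component of $\rel$, which closes the clause. Since $\rel$ is then a bisimulation containing $(p \mathbin{\|} q,\ f(p,q)+f(q,p))$, we conclude $p \mathbin{\|} q \bis f(p,q)+f(q,p)$, that is, Equation~\eqref{Eqn:parf2} is sound modulo bisimilarity.

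I expect no deep obstacle in this proposition: in contrast with the preceding lemmas, which ruled out unsound rule shapes, here we are proving the comparatively easy converse, namely that the surviving rule shapes suffice. The only points requiring care are the bookkeeping involved in matching targets up to the commutativity of $\mathbin{\|}$, and making precise the observation that the symmetric summation $f(p,q)+f(q,p)$, together with a single synchronisation predicate $S_{\alpha,\bar{\alpha}}^f$ and at least one interleaving predicate per action, is exactly enough to regenerate every transition of $p \mathbin{\|} q$ and no spurious ones.
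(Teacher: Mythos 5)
Your proposal is correct and follows essentially the same route as the paper: the paper's proof defines exactly the same relation $\{(p \mathbin{\|} q,\ f(p,q)+f(q,p)) \mid p,q \text{ closed}\} \cup {\bistext}$, shows both transfer clauses using the hypothesised rules, and closes the targets up to commutativity of $\mathbin{\|}$ via the $\bistext$-component. Your write-up merely fills in the case analysis that the paper leaves as an instruction to the reader.
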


%-----------------------------------------------------------
\begin{proof}
We argue that the relation 
\[
\mathcal{B} = \{ (p \mathbin{\|} q, f(p,q)+f(q,p)) \mid \text{$p,q$ closed terms} \} \cup \bis 
\] 
is a bisimulation. 
To this end, pick closed terms $p,q$. 
Now show, using the information on the rules for $f$ given in the proviso of the proposition, that, for each action $\mu$ and closed term $r$,
\begin{itemize}
\item whenever $p \mathbin{\|} q \mv{\mu} r$, there is a term $r'$ that is equal to $r$ up to commutativity of $\|$ such that $f(p,q)+f(q,p) \mv{\mu} r'$, and
\item whenever $f(p,q)+f(q,p) \mv{\mu} r$, there is a term $r'$ that is equal to $r$ up to commutativity of $\|$ such that $p \mathbin{\|} q \mv{\mu} r'$.
\end{itemize}
The claim follows because $\|$ is commutative modulo $\bistext$. 
\end{proof}
%------------------------------------------------------------

When the set of actions is $\{a, \bar{a}, \tau\}$, there are $81$ operators that satisfy the constraints in Propositions~\ref{prop:new} and~\ref{prop:enough-rules}, including parallel composition and Hennessy's merge. 
In general, when the set of actions has $2n+1$ elements, there are $3^{3n+1}$ possible operators meeting those constraints. 

As an immediate consequence of the form of the rules for $f$ given in Proposition~\ref{prop:enough-rules}, we have the following lemma:

\begin{lemma}
\label{Lem:finite-branching}
Assume that all of the rules for $f$ have the form (\ref{syncrule}), (\ref{asyncruleleft}), or (\ref{asyncruleright}).
Then each closed term $p$ in $\FCCS$ is finitely branching, that is, the set $ \{ (\mu,q) \mid p \mv{\mu} q \}$ is finite.
\end{lemma}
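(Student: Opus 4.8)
The plan is to proceed by structural induction on the closed term $p$, establishing in each case that its set of outgoing transitions is finite under the assumption that the immediate subterms are finitely branching. The base case $p = \nil$ is immediate, since $\nil$ has no outgoing transitions. If $p = \mu.p_0$, the only applicable rule is the prefixing rule, yielding the single transition $\mu.p_0 \mv{\mu} p_0$. If $p = p_0 + p_1$, the two rules for $+$ show that the outgoing transitions of $p$ are exactly those inherited from $p_0$ together with those inherited from $p_1$; hence the set of transitions of $p$ is the union of two finite sets by the induction hypothesis, and is therefore finite.

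The two binary operators carry the only real content. For $p = p_0 \mathbin{\|} p_1$, every transition of $p$ arises from one of the three CCS rules for $\mathbin{\|}$: a left interleaving step $p_0 \mathbin{\|} p_1 \mv{\mu} p_0' \mathbin{\|} p_1$ determined by a transition $p_0 \mv{\mu} p_0'$, a symmetric right step determined by a transition of $p_1$, or a synchronisation $p_0 \mathbin{\|} p_1 \mv{\tau} p_0' \mathbin{\|} p_1'$ determined by a matching pair $p_0 \mv{\alpha} p_0'$ and $p_1 \mv{\bar{\alpha}} p_1'$. Writing $T_i$ for the set of transitions of $p_i$ (finite, by induction), the number of outgoing transitions of $p$ is thus at most $|T_0| + |T_1| + |T_0|\cdot|T_1|$, which is finite. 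The case $p = f(p_0,p_1)$ is handled identically: by the proviso every rule for $f$ has one of the forms \eqref{syncrule}, \eqref{asyncruleleft}, or \eqref{asyncruleright}, and over the finite action set $\Act = \{a,\bar{a},\tau\}$ there are only finitely many such rules. Each rule of the form \eqref{asyncruleleft} (respectively \eqref{asyncruleright}) contributes outgoing transitions in correspondence with the transitions of $p_0$ (respectively $p_1$), while each rule of the form \eqref{syncrule} contributes transitions in correspondence with a matching pair of transitions of $p_0$ and $p_1$. Summing the resulting bounds over the finitely many rules of $f$ again yields a finite total.

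The argument is essentially routine once one records two observations: that each operator in the signature of $\FCCS$ possesses only finitely many de Simone rules over the finite action set $\Act$, and that each such rule, applied at a fixed source, produces outgoing transitions that are completely determined by the transitions of the arguments, with multiplicity bounded by a sum or product of their branching degrees. The only point deserving attention is the quadratic contribution of the synchronisation rules for $\mathbin{\|}$ and for $f$; but since both arguments are finitely branching by the induction hypothesis, these products remain finite, so no step of the induction can introduce infinite branching.
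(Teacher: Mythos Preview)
Your proof is correct; the paper does not spell out a proof at all, simply presenting the lemma as an immediate consequence of the restricted form of the rules for $f$. Your structural induction is exactly the routine argument the paper is tacitly invoking, so there is nothing to compare beyond noting that you have made explicit what the authors left implicit.
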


\begin{remark}
\label{Rem:tree-equivalence}
A standard consequence of the finiteness of the depth (Lemma~\ref{Lem:finite-depth}), and of $\FCCS$ processes being finite branching, is that each closed $\FCCS$ term is bisimilar to a synchronisation tree~\cite{Mi80}, that is, a closed term built using only the constant $\nil$, the unary prefixing operations and the binary $+$ operation.  
Since bisimilarity is a congruence over $\FCCS$, this means, in particular, that an equation $t \approx u$ over $\FCCS$ is sound modulo bisimilarity if, and only if, the closed terms $\sigma(t)$ and $\sigma(u)$ are bisimilar for each substitution mapping variables to synchronisation trees. 
Moreover, we can use the sub-language of synchronisation trees, which is common to all of the languages $\FCCS$, to compare terms from these languages for dif{f}erent choices of binary operation $f$ with respect to bisimilarity.
\end{remark}

%=======================================
% sec: proof_method
%=========================================

\section{The main theorem and its proof strategy}
\label{sec:proof_method}

So far we have discussed the three simplifying assumptions and the structural operational semantics of the operators satisfying them:
\begin{description}
\item[Assumption 1]
The behaviour of $f$ is described by rules in de Simone format. 
\item[Assumption 2]
Equation
\begin{equation}
\tag{4}
x\mathbin{\|} y \approx f(x,y) + f(y,x) 
\end{equation}
is sound modulo bisimilarity.
\item[Assumption 3]
If $t$ is the target of a rule for $f$, then $t$ is either a variable or a term obtained by applying a single $\FCCS$ operator to the variables of the rule, according to the constraints of the de Simone format.
\end{description}
We have obtained that if a binary operator $f$ meets these assumptions, then it must satisfy $S_{\alpha,\bar{\alpha}}^f$ for at least one $\alpha \in \{a,\bar{a}\}$ (i.e., it must guarantee a form of synchronisation between its arguments), and, for each $\mu\in\Act$, at least one of $L^{\ff}_{\mu}$ and $R^{\ff}_{\mu}$ (i.e., for each action $\mu$, at least one of its arguments is allowed to interleave a $\mu$-transition).

Our order of business will now be to use this information to prove our main result, namely the following theorem:

\begin{theorem}
\label{Thm:f(x,y)+f(y,x)_general}
Assume that $f$ satisfies Assumptions~\ref{Ass:deSimone}--\ref{assumption:targets}.
Then bisimilarity admits no finite equational axiomatisation over $\FCCS$.
\end{theorem}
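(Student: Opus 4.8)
The plan is to combine the operational characterisation of $f$ obtained above with a proof-theoretic argument in the style of Moller. By Propositions~\ref{prop:new} and~\ref{prop:enough-rules}, any $f$ meeting Assumptions~\ref{Ass:deSimone}--\ref{assumption:targets} is completely determined, up to the form of its rules, by a finite amount of data: which of the synchronisation predicates $S^f_{a,\bar a}, S^f_{\bar a,a}$ hold (at least one does), and, for each action $\mu\in\Act$, which of the interleaving predicates $L^f_\mu, R^f_\mu$ hold (at least one does). Hence there are only finitely many admissible operators $f$. My first move is to organise these into a small number of cases according to the symmetry of the available interleaving and synchronisation rules, and to prove the non-finite-axiomatisability result separately in each case; the theorem then follows by exhausting the cases.

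Within each case I would run the template described in Section~\ref{sec:proof_method}. By Lemmas~\ref{Lem:finite-depth} and~\ref{Lem:finite-branching} together with Remark~\ref{Rem:tree-equivalence}, every closed $\FCCS$ term is bisimilar to a finite synchronisation tree, so I may test soundness on substitutions into synchronisation trees and assume that equational derivations range over this common sub-language. I would then (i) exhibit a case-specific family of \emph{sound closed} equations $\{e_n\mid n\ge 0\}$; (ii) define a case-specific predicate $W_n$ on closed terms; (iii) prove a \emph{preservation lemma}: for every finite axiom system $\E$ sound modulo $\bistext$ there is an $N$ such that, for all $n\ge N$, whenever $\E\vdash p\approx q$ and $p$ satisfies $W_n$, then $q$ satisfies $W_n$ as well; and (iv) verify that, for each $n$, the predicate $W_n$ holds of exactly one side of $e_n$. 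Items (i)--(iv) together imply that no finite sound $\E$ can derive every $e_n$, so no finite sound $\E$ is ground-complete modulo $\bistext$, which establishes the theorem (and a fortiori rules out a finite complete axiomatisation).

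The preservation lemma is the heart of the matter and I would prove it by induction on equational derivations. The rules for reflexivity, transitivity and the $\FCCS$ congruence rules are dispatched routinely once $W_n$ is chosen to interact well with contexts; the decisive case is a closed substitution instance $\sigma(t)\approx\sigma(u)$ of an axiom $t\approx u\in\E$. Here I would pick $N$ larger than the size of every term occurring in $\E$, so that any summand of $\sigma(t)$ big enough to witness $W_n$ cannot be manufactured by the bounded term structure of $t$ itself, but must already occur inside $\sigma(x)$ for some variable $x$; soundness of $t\approx u$ then forces the same witness to resurface in $\sigma(u)$. Making this \emph{``the witness lives in a variable''} dichotomy precise --- usually by letting $W_n$ assert the presence of a summand bisimilar to a fixed depth-$n$ process that is indecomposable with respect to the operations $f$ can perform --- is where the case-specific combinatorics live.

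For the witness families I would use processes of the form $\mu^{\le n}$ placed in parallel with fixed small processes, chosen to probe exactly the rule data that characterises the case; the soundness of $e_n$ is then checked directly from the rules for $f$, using that by Lemmas~\ref{Lem:syncrules} and~\ref{Lem:asyncrules} every relevant rule target is bisimilar to $x\mathbin{\|}y$. When $f$ provides interleaving in only one direction for some action (the asymmetric, Hennessy-merge-like regime), the construction can follow the blueprint of~\cite{AFIL05}; when both interleaving directions are always present, the distinguishing power must instead come from the synchronisation rules, i.e.\ from which of $S^f_{a,\bar a}, S^f_{\bar a,a}$ hold. I expect the principal obstacle to be exactly the breadth of this case analysis: no single $W_n$ works uniformly, and for each family of operators one must both establish soundness of $e_n$ from the operational semantics of $f$ and push the axiom-instance step of the preservation lemma through, which demands an operator-specific bound on $N$ and a correspondingly tailored notion of indecomposable witness.
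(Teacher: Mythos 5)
Your proposal takes essentially the same route as the paper: the theorem is deduced from the stronger non-existence of a finite \emph{ground-complete} axiomatisation (Theorem~\ref{Thm:f(x,y)+f(y,x)}), which is in turn proved by a case analysis over the finitely many admissible rule configurations for $f$ (Propositions~\ref{prop:new} and~\ref{prop:enough-rules}), each case being settled by a Moller-style argument with witness property $W_n$ = ``has a summand bisimilar to a case-specific process $f(\mu,p_n)$'', a preservation lemma whose crux is the axiom-instance step with $n$ exceeding the size of every term in $\E$ (so the witness must live in a variable), and a witness family $e_n$ satisfying $W_n$ on exactly one side. One caveat for the execution: the paper organises the cases by distributivity of $f$ over $+$ (equivalent, via Lemma~\ref{lem:dist}, to your symmetry criterion), and in the case $L^{\ff}_{\alpha} \wedge R^{\ff}_{\alpha}$ the witness $f(\alpha,q_n)$ is bisimilar to $\alpha \mathbin{\|} q_n$ and hence \emph{not} prime, so your proposed requirement that the witness be indecomposable cannot be imposed uniformly across cases.
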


This is achieved by using the \emph{proof-theoretic} technique from~\cite{Mo89,Mo90,Mo90a} to prove a stronger result, of which Theorem~\ref{Thm:f(x,y)+f(y,x)_general} is an immediate consequence, namely:
\begin{theorem}
\label{Thm:f(x,y)+f(y,x)}
Assume that $f$ satisfies Assumptions~\ref{Ass:deSimone}--\ref{assumption:targets}.
Then bisimilarity admits no finite, ground-complete axiomatisation over $\FCCS$.
\end{theorem}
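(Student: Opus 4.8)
The plan is to run Moller's \emph{proof-theoretic} scheme in its contrapositive form: assume, towards a contradiction, that some finite axiom system $\E$ is sound and ground-complete for bisimilarity over $\FCCS$, and derive a contradiction by exhibiting an infinite family of sound closed equations that $\E$ cannot all prove. The first step is to reduce the theorem to a finite case analysis. By Proposition~\ref{prop:new}, any $f$ meeting Assumptions~\ref{Ass:deSimone}--\ref{assumption:targets} and validating Equation~\eqref{Eqn:parf2} must satisfy $S_{\alpha,\bar{\alpha}}^f$ for at least one $\alpha\in\{a,\bar{a}\}$ together with, for every $\mu\in\Act$, at least one of $L^{\ff}_\mu$ and $R^{\ff}_\mu$; and by Proposition~\ref{prop:enough-rules} these rules already fix the behaviour of $f$ up to bisimilarity. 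Hence there are only finitely many operators to consider (the $81$ noted after Proposition~\ref{prop:enough-rules}), which I would partition into a handful of cases according to which synchronisation and interleaving predicates hold --- for instance, whether $f$ synchronises in one or in both argument orders, and whether for some action only one of $L^{\ff}_\mu$, $R^{\ff}_\mu$ is available. The point of the grouping is that within each group the distinguishing behaviour of $f$, and hence the invariant and the witnessing equations below, can be chosen uniformly.

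The heart of each case is a single \emph{preservation lemma} of the following shape. With $\E$ fixed, let $N$ bound the size of all terms occurring in its finitely many axioms. One designs a case-specific unary predicate $W_n$ on closed $\FCCS$ terms and proves: for $n$ sufficiently large relative to $N$, if $\E\vdash p\approx q$ then $p$ satisfies $W_n$ if and only if $q$ does. This is shown by induction on the structure of a symmetry-free equational derivation, using the normalisations on proofs noted after Table~\ref{tab:equational_logic}. The inductive step is routine for transitivity and for the congruence rules $(e_5)$--$(e_8)$, as $W_n$ is engineered to interact predictably with prefixing, $+$, $\|$ and $f$; the delicate base case is a substitution instance $\sigma(t)\approx\sigma(u)$ of an axiom, where one argues that, since $t$ and $u$ have size at most $N\ll n$, no depth-$n$ witnessing structure of $W_n$ can be produced by the axiom skeleton itself, so it must be inherited through a shared variable from the closed terms $\sigma(x)$; soundness of the axiom then forces the same witness onto both sides.

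To finish each case I would supply an infinite family of closed equations $e_n\colon p_n\approx q_n$, all sound modulo bisimilarity, on which $W_n$ fails to be preserved, so that for every $n$ exactly one of $p_n,q_n$ satisfies $W_n$. These witnesses are built from the towers $\mu^m$ and the terms $a^{\leq n}=a+a^2+\cdots+a^n$ introduced in Section~\ref{sec:background}, arranged so that the two sides agree modulo bisimilarity yet differ precisely in the depth-$n$ summand structure recorded by $W_n$; the gap between $\depth$ and $\norm$, both bisimulation invariants, is what makes such bisimilar-but-differently-presented pairs available. Granting these two ingredients the contradiction is immediate: a finite sound ground-complete $\E$ would prove every sound $e_n$, so by the preservation lemma (taking $n>N$) the two sides of $e_n$ would share their $W_n$-status, contradicting the choice of the family. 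As this applies in every case of the partition, no such $\E$ exists, and Theorem~\ref{Thm:f(x,y)+f(y,x)_general} follows at once since any finite complete axiomatisation is in particular ground-complete.

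The main obstacle I expect is twofold. First, the breadth of the case analysis: distinct rule sets for $f$ yield genuinely different operational behaviour (pure interleaving in one argument, synchronisation in both orders, order-preserving versus order-swapping interleaving, and so on), so no single $W_n$ and single witness family can serve all cases, and each group needs its own definitions together with fresh verifications of both the soundness and the discriminating power of the $e_n$. Second, and more fundamentally, the substitution base case of the preservation lemma is technically demanding: one must rule out \emph{all} the ways in which a small sound axiom, instantiated by large processes and combined with the synchronising behaviour of $f$, might transfer or destroy the depth-$n$ witness across the two sides. This is exactly where the precise form of the rules for $f$ from Lemmas~\ref{Lem:syncrules} and~\ref{Lem:asyncrules}, and the restriction afforded by Assumption~\ref{assumption:targets}, are brought to bear, and where --- following~\cite{Mo89,Mo90,AFIL05} --- the bulk of the delicate combinatorial work lies.
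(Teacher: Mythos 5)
Your proposal is correct and takes essentially the same approach as the paper: Moller's proof-theoretic scheme, instantiated via a case analysis over the finitely many rule configurations allowed by Propositions~\ref{prop:new} and~\ref{prop:enough-rules}, with a case-specific witness property (in the paper, having a summand bisimilar to a designated process $f(\mu,p_n)$) proved to be preserved under derivation from any finite sound axiom system---the substitution instance of an axiom being the crux---and a per-case infinite family of sound equations on which the property fails to transfer. The paper additionally streamlines the argument by first eliminating $\|$ in favour of $f$ (Proposition~\ref{Prop:no-par}) and, in the distributive case, by saturating axiom systems to avoid $\nil$ summands and factors, but these are refinements of the same plan rather than a different route.
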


In this section, we discuss the general reasoning behind the proof of Theorem~\ref{Thm:f(x,y)+f(y,x)}.
In light of Propositions~\ref{prop:new} and~\ref{prop:enough-rules}, to prove Theorem~\ref{Thm:f(x,y)+f(y,x)} we will proceed by a case analysis over the possible sets of allowed SOS rules for operator $f$.
In each case, our proof method will follow the same general schema, which has its roots in Moller's arguments to the effect that bisimilarity is not finitely based over CCS (see, e.g.,~\cite{AFIL05,Mo89,Mo90,Mo90a}), and that we present here at an informal level.

The main idea is to identify a {\sl witness property of the negative result}.
This is a specific property of $\FCCS$ terms, say $W_n$ for $n \ge 0$, that, when $n$ is \emph{large enough}, is preserved by provability from finite axiom systems.
Roughly, this means that if $\E$ is a finite set of axioms that are sound modulo bisimilarity, the equation $p \approx q$ is provable from $\E$, and $n$ is greater than the size of all the terms in the equations in $\E$, then either both $p$ and $q$ satisfy $W_n$, or none of them does. 
Then, we exhibit an infinite family of valid equations, say $e_n$, called accordingly {\sl witness family of equations for the negative result}, in which $W_n$ is not preserved, namely it is satisfied only by one side of each equation.
Thus, Theorem~\ref{Thm:f(x,y)+f(y,x)} specialises to:

\begin{theorem}
\label{Thm:nonfin-en}
Suppose that Assumptions~\ref{Ass:deSimone}--\ref{assumption:targets} are met.
Let $\E$ be a finite axiom system over $\FCCS$ that is sound modulo bisimilarity. 
Then there is an infinite family $\{e_n\}_{n\geq 0}$ of sound equations such that $\E$ does not prove equation $e_n$ for each $n$ larger than the size of each term in the equations in $\E$. 
\end{theorem}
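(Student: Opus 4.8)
The plan is to establish Theorem~\ref{Thm:nonfin-en} by a case analysis over the admissible sets of SOS rules for $f$ classified in Propositions~\ref{prop:new} and~\ref{prop:enough-rules}: each such $f$ is pinned down by which synchronisation predicate $S^{f}_{\alpha,\bar{\alpha}}$ holds, together with, for every $\mu\in\Act$, which of $L^{\ff}_{\mu}$ and $R^{\ff}_{\mu}$ hold (the cases group naturally according to whether $f$ offers one or both synchronisation directions, and whether both interleaving directions are available for each action). In each case I would instantiate the Moller-style schema sketched above: depending only on the rules of $f$ and not on $\E$, I fix an infinite \emph{witness family} $\{e_n\}_{n\ge0}$ of equations $e_n\colon p_n\approx q_n$ between \emph{closed} $\FCCS$ terms that are sound, i.e.\ $p_n\bis q_n$ for every $n$, together with a \emph{witness property} $W_n$ of terms chosen so that, for each $n$, exactly one of $p_n$ and $q_n$ satisfies $W_n$. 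Checking the soundness $p_n\bis q_n$ is routine and, by Remark~\ref{Rem:tree-equivalence}, may be carried out at the level of synchronisation trees; Lemma~\ref{Lem:finite-depth} ensures that every process in sight has finite depth, so the depth- and norm-based quantities on which $W_n$ is built are well defined.

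The technical core, to be proved afresh in every case, is a \emph{preservation lemma}: writing $N$ for the maximal size of a term occurring in $\E$, if $\E$ is finite and sound modulo bisimilarity, $n>N$, and $\E\vdash p\approx q$ between closed terms, then $W_n(p)$ holds if and only if $W_n(q)$ holds. Granting this, the theorem is immediate: for every $n>N$ the equation $e_n$ violates preservation of $W_n$ (one side satisfies it, the other does not), so $e_n$ is not derivable from $\E$; since the family $\{e_n\}$ is infinite and fixed in advance, no finite sound $\E$ can prove all of its members, whence bisimilarity has no finite ground-complete axiomatisation over $\FCCS$.

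I would prove the preservation lemma by induction on the length of a derivation of $p\approx q$ from $\E$, assuming, as may be done, that substitutions are applied first, so that each appeal to an axiom is a substitution instance. Rules ($e_1$)--($e_3$) are trivial since $W_n$ is a two-sided property, and the congruence rules ($e_5$)--($e_8$) follow once one understands how $W_n$ propagates through the operators; here the explicit rule shapes for $f$ from Proposition~\ref{prop:enough-rules} are used to compute how summands, together with their norms and depths, are generated by $f(\cdot,\cdot)$ and by $\cdot\mathbin{\|}\cdot$, exploiting that bisimilar processes share depth and norm. The delicate step, which I expect to be the main obstacle and which absorbs the bulk of the later sections, is the base case of a substitution instance $\sigma(t)\approx\sigma(u)$ of an axiom $(t\approx u)\in\E$. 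The point is that, since $n>N$, the property $W_n$ is too large to be created by the bounded syntactic skeleton of $t$ or $u$ alone: any witnessing summand of $\sigma(t)$ must draw its size from some $\sigma(x)$. One therefore decomposes $\sigma(t)$ into the contribution of the operators of $t$ and of the substituted processes $\sigma(x)$, and uses the soundness $\sigma(t)\bis\sigma(u)$ to transfer a witnessing summand of one side to a matching summand of the other. Making this decomposition precise — in particular controlling exactly how $f$ and $\mathbin{\|}$ may merge summands coming from different variables without inadvertently manufacturing or annihilating the witness — is the genuinely hard, case-specific combinatorial work on which the whole argument turns.
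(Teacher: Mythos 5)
Your proposal follows essentially the same route as the paper: a case analysis over the admissible rule sets for $f$ (which the paper organises via distributivity of $f$ over $+$, Lemmas~\ref{lem:no-two-sided-dist} and~\ref{lem:dist}), a case-specific witness family $e_n$ and witness property $W_n$ (in the paper: ``has a summand bisimilar to $f(\mu,p_n)$'' or $f(\mu,q_n)$), and a preservation lemma proved by induction on derivations whose crux is the substitution-instance case (Propositions~\ref{Propn:crux}, \ref{prop:LaRa_substitution}, \ref{prop:LaRbaC_substitution}, \ref{prop:LaRba_substitution}, \ref{prop:Lt_substitution_1}), exactly as you anticipate. The only notable divergence is that the paper states preservation only for $p,q$ bisimilar to the witness process (and, in the distributive case, additionally relies on saturation to eliminate $\nil$ summands and factors), rather than for arbitrary closed provable equations as in your formulation.
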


In this paper, the property $W_n$ corresponds to having a summand that is bisimilar to a specific process.
In detail: 
\begin{enumerate}
\item We identify, for each case, a family of processes $f(\mu,p_n)$, for $n \ge 0$, with the choices of $\mu$ and $p_n$ tailored to the particular set of SOS rules allowed for $f$.
Moreover, process $p_n$ will have size at least $n$, for each $n \ge 0$.
We shall refer to the processes $f(\mu,p_n)$ as the \emph{witness processes}.
\item \label{strategy_item_2} 
We prove that by choosing $n$ {\sl large enough}, given a finite set of valid equations $\E$ and processes $p,q \bis f(\mu,p_n)$, if $\E \vdash p \approx q$ and $p$ has a summand bisimilar to $f(\mu,p_n)$, then also $q$ has a summand bisimilar to $f(\mu,p_n)$.
Informally, we will choose $n$ greater than the size of all the terms in the equations in $\E$, so that we are guaranteed that the behaviour of the summand bisimilar to $f(\mu,p_n)$ is due to a closed substitution instance of a variable.
\item We provide an infinite family of valid equations $e_n$ in which one side has a summand bisimilar to $f(\mu,p_n)$, but the other side does not.
In light of item~\ref{strategy_item_2}, this implies that such a family of equations cannot be derived from any finite collection of valid equations over $\FCCS$, modulo bisimilarity, thus proving Theorem~\ref{Thm:nonfin-en}.
\end{enumerate}

To narrow down the combinatorial analysis over the allowed sets of SOS rules for $f$, we examine first the \emph{distributivity properties}, modulo $\bistext$, of the operator $f$ over summation. 

Firstly, we notice that $f$ cannot distribute over summation in both arguments. 
This is a consequence of our previous analysis of the operational rules that such an operator $f$ may and must have in order for Equation (\ref{Eqn:parf2}) to hold.  
However, it can also be shown in a purely algebraic manner.

\begin{lemma}
\label{lem:no-two-sided-dist}
A binary operator satisfying Equation (\ref{Eqn:parf2}) cannot distribute over $+$ in both arguments.
\end{lemma}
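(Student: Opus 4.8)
The plan is to prove Lemma~\ref{lem:no-two-sided-dist} by contradiction: assume that the binary operator $f$ satisfies Equation~\eqref{Eqn:parf2} \emph{and} distributes over $+$ in both arguments, i.e.~that the equations
\begin{equation*}
f(x+x',y)\bis f(x,y)+f(x',y)
\qquad\text{and}\qquad
f(x,y+y')\bis f(x,y)+f(x,y')
\end{equation*}
both hold modulo bisimilarity. I would then show that these distributivity laws, combined with Equation~\eqref{Eqn:parf2}, force $\|$ itself to distribute over $+$ in both arguments, and finally exhibit a concrete pair of processes witnessing that $\|$ does \emph{not} distribute over $+$, yielding the contradiction.

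First I would derive the distributivity of $\|$ from that of $f$. Using Equation~\eqref{Eqn:parf2} we have $x\mathbin{\|}(y+y')\bis f(x,y+y')+f(y+y',x)$. Applying the assumed right-distributivity to the first summand and left-distributivity to the second, and then regrouping the four resulting summands (legitimate since $+$ is associative and commutative modulo $\bistext$), I would rewrite this as $\bigl(f(x,y)+f(y,x)\bigr)+\bigl(f(x,y')+f(y',x)\bigr)$, which by two further applications of Equation~\eqref{Eqn:parf2} equals $(x\mathbin{\|}y)+(x\mathbin{\|}y')$. Hence $x\mathbin{\|}(y+y')\bis (x\mathbin{\|}y)+(x\mathbin{\|}y')$, so $\|$ distributes over $+$ in its second argument (and symmetrically in its first, though one side already suffices).

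The final and decisive step is to refute the distributivity of $\|$ by a direct counterexample. I would take the instance $x\mapsto a$, $y\mapsto a$, $y'\mapsto\nil$ (or any similar small choice) and compare $a\mathbin{\|}(a+\nil)$ with $(a\mathbin{\|}a)+(a\mathbin{\|}\nil)$. Since $a+\nil\bis a$, the left-hand side is just $a\mathbin{\|}a$, a process of norm and depth $2$ whose every maximal trace has length $2$. The right-hand side, however, has $a\mathbin{\|}\nil\bis a$ as a summand, giving it a maximal trace of length $1$; thus it has norm $1$. As bisimilar processes have the same norm (noted in the excerpt just after Definition~\ref{Def:bisimulation}), the two sides cannot be bisimilar, contradicting the distributivity of $\|$ established above.

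The only mild subtlety I anticipate is bookkeeping in the regrouping step, where one must be careful that the four $f$-summands can indeed be reassociated into the two $\|$-patterns; but this is routine given that we work modulo associativity and commutativity of $+$. The choice of witnessing processes for the non-distributivity of $\|$ is the one genuinely creative ingredient, and any pair distinguishing $p\mathbin{\|}(q+q')$ from $(p\mathbin{\|}q)+(p\mathbin{\|}q')$ by a norm or depth argument will close the proof.
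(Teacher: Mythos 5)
Your proof is correct and takes essentially the same approach as the paper: both derive distributivity of $\mathbin{\|}$ over $+$ from the assumed two-sided distributivity of $f$ via Equation~\eqref{Eqn:parf2}, and then refute it with a small closed counterexample. The only (inessential) difference is the witness: the paper uses $(a+\tau)\mathbin{\|}a \nbis (a\mathbin{\|}a)+(\tau\mathbin{\|}a)$, whereas you use $a\mathbin{\|}(a+\nil) \nbis (a\mathbin{\|}a)+(a\mathbin{\|}\nil)$ justified by a norm argument, which is equally valid.
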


%-------------------------------------------------------
\begin{proof}
Assume, towards a contradiction, that $f$ distributes over summation in both arguments. 
Then, using Equation~\eqref{Eqn:parf2}, we have that: 
\begin{eqnarray*}
(x+y) \mathbin{\|} z & \approx & f(x+y,z) + f(z,x+y) \\
& \approx & f(x,z) + f(y,z) + f(z,x) + f(z,y) \\
& \approx & (x \mathbin{\|} z) + (y \mathbin{\|} z) \enspace .
\end{eqnarray*}
However, this is a contradiction because, as is well known, the equation 
\[
(x+y) \mathbin{\|} z \approx (x \mathbin{\|} z) + (y \mathbin{\|} z)
\]
is not sound in bisimulation semantics. 
For example, $(a+\tau) \mathbin{\|} a \nbis (a \mathbin{\|} a) + (\tau \mathbin{\|} a)$.
\end{proof}
%-----------------------------------------------------------

Hence, we can limit ourselves to considering binary operators satisfying our constraints that, modulo bisimilarity, distribute over $+$ in one argument or in none.

We consider these two possibilities in turn. 

%===================================================

\subsubsection*{Distributivity in one argument}
\label{sec:1_dist}

Due to our Assumptions~\ref{Ass:deSimone}--\ref{assumption:targets}, we can exploit a result from~\cite{ABV94} to characterise the rules for an operator $f$ that distributes over summation in one of its arguments.
More specifically,~\cite[Lemma 4.3]{ABV94} gives a condition on the rules for a \emph{smooth operator} $g$ in a GSOS system that includes the $+$ operator in its signature, which guarantees that $g$ distributes over summation in one of its arguments. 
(The rules defining the semantics of smooth operators are a generalisation of those in de Simone format.) Here we show that, for operator $f$, the condition in~\cite[Lemma 4.3]{ABV94} is both necessary and sufficient for distributivity of $f$ in one of its two arguments. 

\begin{lemma}
\label{lem:dist}
Let $i \in \{1,2\}$.
Modulo bisimilarity, operator $f$ distributes over summation in its $i$-th argument if and only if each rule for $f$ has a premise $x_i \trans[\mu_i] y_i$, for some $\mu_i$.
\end{lemma}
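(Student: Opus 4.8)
The plan is to prove both directions of the biconditional separately, using the concrete form of the de Simone rules for $f$ that is available to us. Throughout, I would work modulo bisimilarity and, by Remark~\ref{Rem:tree-equivalence}, restrict attention to substitutions mapping variables to synchronisation trees, which makes the operational reasoning cleaner. Fix $i \in \{1,2\}$; I will write the argument for $i=1$, the case $i=2$ being symmetric.

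\textbf{The ``if'' direction.} Assume every rule for $f$ has a premise testing its first argument, i.e.\ a premise of the form $x_1 \trans[\mu_1] y_1$. I would show that the relation
\[
\rel = \{ (f(p+p',q),\, f(p,q)+f(p',q)) \mid p,p',q \text{ closed} \} \cup {\bis}
\]
is a bisimulation, from which $f(x+y,z) \bis f(x,z)+f(y,z)$ follows by congruence. The key point is the matching of transitions: any transition $f(p+p',q) \trans[\mu] r$ is derived by a rule $\rho$, and by hypothesis $\rho$ has a premise $x_1 \trans[\mu_1] y_1$, so this premise is met by some transition $p+p' \trans[\mu_1} s$. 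Since $p+p'$ has no rule for $+$ other than the two projections, $s$ arises from either $p$ or $p'$ alone; substituting that transition back into $\rho$ yields a transition of $f(p,q)$ or $f(p',q)$ with the same label and a $\rel$-related (in fact, by the form of targets in Lemma~\ref{Lem:syncrules} and Lemma~\ref{Lem:asyncrules}, equal-up-to-$\bis$) target. The converse matching is immediate since every rule of $f(p,q)$ or $f(p',q)$ lifts to one of $f(p+p',q)$.

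\textbf{The ``only if'' direction.} For the contrapositive, suppose some rule $\rho$ for $f$ has \emph{no} premise of the form $x_1 \trans[\mu_1] y_1$. By Lemma~\ref{lem:basics}(\ref{positive-premises}) $\rho$ has at least one premise, so its only premise must test the second argument: $\rho$ has the form \eqref{asyncruleright}, with label $\mu$, say, and premise $x_2 \trans[\mu] y_2$. I would then exhibit closed instances witnessing the failure of distributivity in the first argument. The natural witness is to use a nondeterministic choice in the first argument that changes branching structure: compare $f((p+p'),q)$ with $f(p,q)+f(p',q)$ where the first argument plays no role in firing $\rho$. Concretely, picking $p,p',q$ so that $q \trans[\mu] q'$, the rule $\rho$ lets $f(p+p',q)$ take a $\mu$-step to $(p+p') \| q'$ (using the target normalisation $x_1\|y_2$), whereas each summand $f(p,q)$ and $f(p',q)$ takes the corresponding step to $p \| q'$ and $p' \| q'$ respectively. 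Since $p+p'$ is generally not bisimilar to either $p$ or $p'$, choosing $p,p'$ with incomparable behaviour (e.g.\ distinct prefixes) breaks the bisimulation, giving $f(x+y,z) \nbis f(x,z)+f(y,z)$.

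\textbf{Main obstacle.} The delicate step is the ``if'' direction, specifically verifying that the targets match up to $\bis$ rather than merely having related labels. This requires invoking the normal forms of the targets established in Lemmas~\ref{Lem:syncrules} and~\ref{Lem:asyncrules} (targets are $y_1\|y_2$, $y_1\|x_2$, or $x_1\|y_2$), and checking that when the first-argument premise is satisfied by a transition originating in $p$ alone, the resulting target computed in $f(p+p',q)$ and the one in $f(p,q)$ coincide up to commutativity of $\|$; the summand $p'$ must genuinely drop out. I would handle this by a case split on the three target shapes, using congruence of $\bis$ to absorb the $\|$-contexts. A secondary subtlety is ensuring, in the ``only if'' direction, that the chosen witnesses are robust across \emph{all} admissible rule sets for $f$ (as constrained by Proposition~\ref{prop:new}); I would select the witness processes so that only the offending rule $\rho$ contributes the distinguishing transition, keeping the other available rules from accidentally restoring the bisimulation.
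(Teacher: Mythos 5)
Your proof is correct and takes essentially the same approach as the paper: your ``if'' direction is the bisimulation/congruence argument that the paper simply delegates to~\cite[Lemma 4.3]{ABV94}, and your ``only if'' direction is the same contrapositive-by-witness argument, exploiting a rule that tests only the other argument. The robustness obligation you flag at the end is exactly what the paper discharges by its concrete choice of witnesses, $f(\tau+\tau^2,\nu) \nbis f(\tau,\nu)+f(\tau^2,\nu)$ (and symmetrically $f(\mu,\tau+\tau^2) \nbis f(\mu,\tau)+f(\mu,\tau^2)$), where the $\tau$-prefixed summands ensure that neither the synchronisation rules nor the rules testing the first argument can supply a matching transition.
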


%--------------------------------------------------------
\begin{proof}
The ($\Leftarrow$) implication follows by similar arguments to those used in the proof of~\cite[Lemma 4.3]{ABV94}.
Hence, we omit it and we proceed to prove the ($\Rightarrow$) implication.

Assume that $f$ distributes over $+$ in some argument. 
We recall that by Lemma~\ref{Lem:asyncrules} for each action $\mu$ at least one between $L^{\ff}_{\mu}$ and $R^{\ff}_{\mu}$ must hold.
We aim to prove that either $L^{\ff}_{\mu}$ holds for all actions $\mu$ and none of the $R^{\ff}_{\mu}$ does, or vice versa. 
Indeed, suppose towards a contradiction that there are rules satisfying $L^{\ff}_{\mu}$ and $R^{\ff}_{\nu}$ for some actions $\mu$ and $\nu$. 
Then
\begin{itemize}
\item $f(\tau+\tau^2, \nu) \nbis f(\tau, \nu) + f(\tau^2, \nu)$, because the validity of $R^{\ff}_{\nu}$ allows us to prove that $f(\tau+\tau^2, \nu) \trans[\nu] (\tau+\tau^2) \| \nil$ and $f(\tau, \nu) + f(\tau^2, \nu)$ cannot match that transition up to bisimilarity.
\item $f(\mu,\tau+\tau^2) \nbis f(\mu,\tau) + f(\mu,\tau^2)$, because the validity of $L^{\ff}_{\mu}$ allows us to prove that $f(\mu,\tau+\tau^2) \trans[\mu] \nil \| (\tau+\tau^2)$ and $f(\mu,\tau) + f(\mu,\tau^2)$ cannot match that transition up to bisimilarity. 
\end{itemize}
\end{proof}
%---------------------------------------------------------------

By Proposition~\ref{prop:new}, Lemma~\ref{lem:dist} implies that, when $f$ is distributive in one argument, either $L^{\ff}_{\mu}$ holds for all $\mu\in\Act$ or $R^{\ff}_{\mu}$ holds for all $\mu \in \Act$, and $S_{\alpha,\bar{\alpha}}$ holds for at least one $\alpha \in \{a,\bar{a}\}$.
Notice that if $L^{\ff}_{\mu}$ holds for each action $\mu$ and both $S_{a,\bar{a}}^f$ and $S_{\bar{a},a}^f$ hold, then $f$ behaves as Hennessy's merge $\hmerge$~\cite{He88}, and our Theorem~\ref{Thm:nonfin-en} specialises to~\cite[Theorem~22]{AFIL05}. 
Hence we assume, without loss of generality, that $S_{\alpha,\bar{\alpha}}^f$ holds for only one $\alpha\in\{a,\bar{a}\}$.
A similar reasoning applies if $R^{\ff}_{\mu}$ holds for each action $\mu$.

In Section~\ref{sec:Labat} we will present the proof of Theorem~\ref{Thm:nonfin-en} in the case of an operator $f$ that distributes over summation in its first argument (see Theorem~\ref{thm:Labat}).

%==========================================================

\subsubsection*{Distributivity in neither argument}
\label{sec:no_dist}

We now consider the case in which $f$ does not distribute over summation in either argument. 

Also in this case, we can exploit Lemma~\ref{lem:dist} to obtain a characterisation of the set of rules allowed for an operator $f$ satisfying the desired constraints.
In detail, we infer that there must be $\mu, \nu \in\Act$, not necessarily distinct, such that $L^{\ff}_{\mu}$ and $R^{\ff}_{\nu}$ hold.
Otherwise, as $f$ must have at least one rule for each action (see Proposition~\ref{prop:new}), at least one argument would be involved in the premises of each rule, and this would entail distributivity over summation in that argument.

We will split the proof of Theorem~\ref{Thm:nonfin-en} for an operator $f$ that, modulo bisimilarity, does not distribute over summation in either argument into three main cases:
\begin{enumerate}
\item In Section~\ref{sec:LaRa}, we consider the case of $L^{\ff}_{\alpha} \wedge R^{\ff}_{\alpha}$ holding, for some $\alpha \in \{a,\bar{a}\}$ (Theorem~\ref{thm:LaRa}). 
\item In Section~\ref{sec:LaRba}, we deal with the case of $f$ having only one rule for $\alpha$, only one rule for $\bar{\alpha}$, and such rules are of dif{f}erent forms.
As we will see, we will need to distinguish two subcases, according to which predicate $S_{\alpha,\bar{\alpha}}^f$ holds (Theorem~\ref{thm:LaRbaC} and Theorem~\ref{thm:LaRba}). 
\item Finally, in Section~\ref{sec:Lt}, we study the case of $f$ having only one rule with label $\alpha$, only one rule with label $\bar{\alpha}$, and such rules are of the same type (Theorem~\ref{thm:Lt_1}).
\end{enumerate}

Before proceeding to the proofs of the various cases, we devote Sections~\ref{sec:axioms} and~\ref{sec:prelim} to the introduction of some general preliminary results and observations, concerning, respectively, the equational theory over $\FCCS$ and the decomposition of the semantics of terms, that will play a key role in the technical development of our proofs.

%===============================================
% sec - equational theory
%==============================================

\section{The equational theory of $\FCCS$}
\label{sec:axioms}

In this section we study some aspects of the equational theory of $\FCCS$ modulo bisimilarity that are useful in the proofs of our negative results.
In particular, we show that, due to Equation~\eqref{Eqn:parf2}, proving the negative result over $\FCCS$ is equivalent to proving it over its reduct $\FCCS^-$, whose signature does not contain occurrences of $\|$ (Proposition~\ref{Prop:no-par} below).

Furthermore, we discuss the relation between the available rules for $f$ and the bisimilarity of terms of the form $f(p,q)$ with $\nil$.
As we will see, in the case of an operator $f$ that distributes over summation in one argument, it is possible to {\sl saturate} the axiom systems~\cite{Mo89} yielding a simplification in the proofs (Proposition~\ref{Propn:proofswithout0} below).
On the other hand, we cannot rely on saturation for an operator $f$ that distributes with respect to $+$ in neither of its arguments. 

%====================================================

\subsection{Simplifying equational proofs}

We show that it is sufficient to prove that bisimilarity admits no finite equational axiomatisation over $\FCCS^-$, consisting of the $\FCCS$ terms that do not contain occurrences of $\|$.

\begin{definition}
\label{Defn:t-hat}
For each $\FCCS$ term $t$, we define $\hat{t}$ as follows: 
\[
\begin{array}{lclcl}
\hat{\nil} = \nil & \qquad & 
\hat{x} = x & \qquad & 
\widehat{\mu t} = \mu \hat{t} \\
\widehat{t+u} = \hat{t} + \hat{u} & \qquad & 
\widehat{f(t,u)} = f(\hat{t},\hat{u}) & \qquad & 
\widehat{t \| u} = f(\hat{t},\hat{u}) + f(\hat{u},\hat{t})
\enspace .
\end{array}
\]
Then, for any axiom system $\E$ over $\FCCS$, we let $\widehat{\E} = \{ \hat{t} \approx \hat{u} \mid (t\approx u)\in \E \}$.
\end{definition}

We notice that, for each $\FCCS$ term $t$, the term $\hat{t}$ is in $\FCCS^-$. Moreover, if $t$ contains no occurrences of the parallel composition operator, then $\hat{t}=t$.  
Since Equation~\eqref{Eqn:parf2} is sound modulo bisimilarity, which is a congruence relation, it is not hard to show that each term $t$ in $\FCCS$ is bisimilar to $\hat{t}$.
Therefore if $\E$ is an axiom system over $\FCCS$ that is sound modulo bisimilarity, then $\widehat{\E}$ is an axiom system over $\FCCS^-$ that is sound modulo bisimilarity.

The following result states the reduction of the non-finite axiomatisability of $\bistext$ over $\FCCS$ to that of $\bistext$ over $\FCCS^-$.

\begin{proposition}
\label{Prop:no-par}
Let $\E$ be an axiom system over $\FCCS$. 
Then:
\begin{enumerate}
\item \label{proof-red} 
If $\E \vdash t \approx u$, then $\widehat{\E} \vdash \hat{t} \approx \hat{u}$.
\item \label{compl-red} 
If $\E$ is a complete axiomatisation of $\bistext$ over $\FCCS$, then $\widehat{\E}$ completely axiomatises $\bistext$ over $\FCCS^-$.
\item \label{nonax-red} 
If bisimilarity is not finitely axiomatisable over $\FCCS^-$, then it is not finitely axiomatisable over $\FCCS$ either.
\end{enumerate}
\end{proposition}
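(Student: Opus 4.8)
The plan is to prove statement~\ref{proof-red} first, as a purely syntactic transfer of equational proofs, and then obtain statements~\ref{compl-red} and~\ref{nonax-red} as easy corollaries. For~\ref{proof-red} I would induct on the structure of the equational derivation witnessing $\E \vdash t \approx u$, checking that every rule of equational logic survives, up to the hat operation, as a derivation from $\widehat{\E}$. Since $\hat{\cdot}$ rewrites every occurrence of $\mathbin{\|}$ while $\widehat{\E}$ lives in the $\mathbin{\|}$-free signature $\FCCS^-$, the delicate point will be the congruence rule for $\mathbin{\|}$; everything else is bookkeeping, once the behaviour of the hat under substitution is pinned down.

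The first step is a substitution lemma. For a substitution $\sigma$, let $\sigma'$ be the substitution with $\sigma'(x) = \widehat{\sigma(x)}$ for every variable $x$. I would show, by structural induction on $t$, that $\widehat{\sigma(t)} = \sigma'(\hat{t})$ for every $\FCCS$ term $t$. The clauses for $\nil$, variables, prefixing, $+$ and $f$ are immediate from the defining equations of $\hat{\cdot}$ together with the induction hypothesis; the only case with any content is $t = t_1 \mathbin{\|} t_2$, where both $\widehat{\sigma(t)}$ and $\sigma'(\hat{t})$ unfold to $f(\widehat{\sigma(t_1)},\widehat{\sigma(t_2)}) + f(\widehat{\sigma(t_2)},\widehat{\sigma(t_1)})$.

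With this lemma in hand, the induction for~\ref{proof-red} is routine. Recall that, as noted in the text, we may assume substitutions are applied only to axioms and that axiom systems are closed under symmetry, so rules ($e_2$) and ($e_4$) need only be treated at the leaves. A substitution instance $\sigma(t) \approx \sigma(u)$ of an axiom $(t\approx u)\in\E$ hats, by the substitution lemma, to $\sigma'(\hat{t}) \approx \sigma'(\hat{u})$, which is a substitution instance via $\sigma'$ of the axiom $(\hat{t}\approx\hat{u})\in\widehat{\E}$. Reflexivity and transitivity transfer verbatim, and the congruence rules ($e_5$)--($e_7$) transfer directly because $\hat{\cdot}$ commutes with $\mu.(-)$, $+$ and $f$. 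The one genuinely non-trivial inference is the $\mathbin{\|}$-congruence ($e_8$): from the induction hypotheses $\widehat{\E} \vdash \hat{t} \approx \hat{u}$ and $\widehat{\E}\vdash \widehat{t'}\approx\widehat{u'}$ I must derive $\widehat{\E} \vdash f(\hat{t},\widehat{t'}) + f(\widehat{t'},\hat{t}) \approx f(\hat{u},\widehat{u'}) + f(\widehat{u'},\hat{u})$, i.e.\ the hatted version of $t \mathbin{\|} t' \approx u \mathbin{\|} u'$. This is assembled inside $\FCCS^-$ using the $f$-congruence ($e_7$) twice and the $+$-congruence ($e_6$) once. This is the step to watch, and the main obstacle, since $\widehat{\E}$ cannot appeal to ($e_8$) at all: it is precisely here that the reduction does its work, simulating the $\mathbin{\|}$-congruence by the congruence rules of the reduct.

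Statements~\ref{compl-red} and~\ref{nonax-red} then follow quickly. For~\ref{compl-red}, take $\FCCS^-$ terms $t,u$ with $t \bis u$; I would first promote this to $t \bis u$ over the larger language $\FCCS$ by using that every closed $\FCCS$ term $p$ satisfies $p \bis \hat{p}$ (a consequence of the soundness of Equation~\eqref{Eqn:parf2}), so that any closed $\FCCS$ instance of $t$ and $u$ is bisimilar to a closed $\FCCS^-$ instance. Completeness of $\E$ then gives $\E \vdash t \approx u$, and~\ref{proof-red} gives $\widehat{\E} \vdash \hat{t} \approx \hat{u}$, which is $\widehat{\E}\vdash t\approx u$ since $\hat{t}=t$ and $\hat{u}=u$ for $\mathbin{\|}$-free terms; together with the soundness of $\widehat{\E}$ already observed before the proposition, this makes $\widehat{\E}$ a complete axiomatisation over $\FCCS^-$. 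The ground-complete analogue, which is what the main theorem actually needs, is the very same argument restricted to closed terms, where the promotion step is trivial because a closed $\FCCS^-$ term is already a closed $\FCCS$ term. Finally,~\ref{nonax-red} is the contrapositive: a finite, sound, (ground-)complete axiom system $\E$ over $\FCCS$ yields $\widehat{\E}$, which is sound, (ground-)complete over $\FCCS^-$ by~\ref{compl-red}, and finite because $|\widehat{\E}| \le |\E|$, contradicting the assumed non-axiomatisability over $\FCCS^-$. The whole argument carries no real difficulty beyond the ($e_8$) case of~\ref{proof-red} and its supporting substitution lemma, where all the content resides.
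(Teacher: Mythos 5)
Your proposal is correct and follows essentially the same route as the paper's own proof: the same substitution lemma (the paper's $\hat{\sigma}$ is your $\sigma'$), the same simulation of the $\mathbin{\|}$-congruence rule ($e_8$) by two applications of ($e_7$) and one of ($e_6$), the same promotion of $\FCCS^-$-bisimilarity to $\FCCS$-bisimilarity via $\sigma(v) \bis \hat{\sigma}(v)$ for statement~\ref{compl-red}, and the same cardinality observation for statement~\ref{nonax-red}. No gaps; your explicit treatment of the ground-complete variant is a small but apt addition, since that is what the main theorem actually uses.
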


%---------------------------------------------------------
\begin{proof}
We prove the three statements separately. 
\begin{itemize}
\item {\sc Proof of Statement~\ref{proof-red}}.  
Assume that $\E \vdash t \approx u$. We shall argue that $\widehat{\E}$ proves the equation $\hat{t} \approx \hat{u}$ by induction on the depth of the proof of $t\approx u$ from $\E$. 
We proceed by a case analysis on the last rule used in the proof.  
Below we only consider the two most interesting cases in this analysis.
\begin{itemize}
\item {\sc Case $\E \vdash t \approx u$, because $\sigma(t')=t$ and $\sigma(u')=u$ for some equation $(t'\approx u')\in \E$}. 
Note, first of all, that, by the definition of $\widehat{\E}$, the equation $\widehat{t'} \approx \widehat{u'}$ is contained in $\widehat{\E}$. 
Observe now that $\hat{t} = \hat{\sigma}(\widehat{t'}) \text{ and } \hat{u} = \hat{\sigma}(\widehat{u'})$, where $\hat{\sigma}$ is the substitution mapping each variable $x$ to the term $\widehat{\sigma(x)}$. 
It follows that the equation $\hat{t} \approx \hat{u}$ can be proven from the axiom system $\widehat{\E}$ by instantiating the equation $\widehat{t'} \approx \widehat{u'}$ with the substitution $\hat{\sigma}$, and we are done.
\item {\sc Case $\E \vdash t \approx u$, because $t=t_1\| t_2$ and $u=u_1\| u_2$ for some $t_i, u_i$ ($i=1,2$) such that $\E \vdash t_i \approx u_i$ ($i=1,2$)}.
Using the inductive hypothesis twice, we have that $\widehat{\E}\vdash\widehat{t_i} \approx \widehat{u_i}$ ($i=1,2$). 
Therefore, using substitutivity,  $\widehat{\E}$ proves that
\[
\hat{t} = f(\widehat{t_1},\widehat{t_2}) + f(\widehat{t_2},\widehat{t_1}) 
\approx f(\widehat{u_1},\widehat{u_2}) + f(\widehat{u_2},\widehat{u_1})
= \hat{u} \enspace ,
\]
which was to be shown. 
\end{itemize}
The remaining cases are simpler, and we leave the details to the
reader.
\item {\sc Proof of Statement~\ref{compl-red}}. 
Assume that $t$ and $u$ are two bisimilar terms in the language $\FCCS^-$. 
We shall argue that $\widehat{\E}$ proves the equation ${t} \approx {u}$. 
To this end, notice that the equation $t \approx u$ also holds in the algebra of $\FCCS$ terms modulo bisimulation. 
In fact, for each term $v$ in the language $\FCCS$ and closed substitution $\sigma$ mapping variables to $\FCCS$ terms, we have that
\[
\sigma(v) \bis \hat{\sigma}(v) \enspace , 
\]
where the substitution $\hat{\sigma}$ is defined as above. 
  
Since $\E$ is complete for bisimilarity over $\FCCS$ by our assumptions, it follows that $\E$ proves the equation $t \approx u$.
Therefore, by statement~\ref{proof-red} of the proposition, we have that $\widehat{\E}$ proves the equation $\hat{t} \approx \hat{u}$.
The claim now follows because $\hat{t}=t$ and $\hat{u}=u$.
\item {\sc Proof of Statement~\ref{nonax-red}}. 
This is an immediate consequence of statement~\ref{compl-red} because $\widehat{\E}$ has the same cardinality of $\E$, and is therefore finite, if so is $\E$.
\end{itemize}
\end{proof}
%----------------------------------------------

In light of this result, henceforth we shall focus on proving that $\bistext$ affords no finite equational axiomatisation over $\FCCS^-$.

%==================================================

\subsection{Bisimilarity with $\nil$}

As a further simplification, we can focus on the $\nil$ \emph{absorption properties} of $\FCCS^-$ operators.
Informally, we can restrict the axiom system to a collection of equations that do not introduce unnecessary terms that are bisimilar to $\nil$ in the equational proofs, namely $\nil$ summands and $\nil$ \emph{factors}.

\begin{definition}
\label{Lem:nil-factors}
We say that a $\FCCS^-$ term $t$ has a {\sl $\nil$ factor} if it contains a subterm of the form  $f(t',t'')$, where either $t'$ or $t''$ is bisimilar to $\nil$.
\end{definition}

The $\nil$ absorption properties of $f$ depend crucially on the allowed set of SOS rules for $f$.
Notably, we have dif{f}erent results, according to the distributivity properties of $f$.

%========================================================

\subsubsection{$\nil$ absorption for $f$ that distributes in one argument}

We examine first the case of an operator $f$ that, modulo bisimilarity, distributes over summation in its first argument.

\begin{table}
\[
\setlength{\arraycolsep}{.4cm}
\scalebox{0.8}{$
\begin{array}{lll}
\text{\underline{Some common axioms}} & 
\text{\underline{Some axioms for $L^{\ff}_{\Act}$ $R^{\ff}_{\emptyset}$}} & 
\text{\underline{Some axioms for $L^{\ff}_{\emptyset}$ $R^{\ff}_{\Act}$}}\\[.7ex]
\text{A1 }\; x + x \approx x & 
\text{F1 }\; f(\nil,x) \approx \nil & 
\text{F3 }\; f(x,\nil) \approx \nil\\
\text{A2 }\; x + y \approx y + x &
\text{F2 }\; f(x,\nil) \approx x &
\text{F4 }\; f(\nil,x) \approx x\\
\text{A3 }\; (x + y) + z \approx x + (y + z) \\
\text{A4 }\; x + \nil \approx x \\
\text{F0 }\; f(\nil,\nil) \approx \nil \\[1ex]
\text{\underline{Some axioms for $L^{\ff}_{\Act}$ $R^{\ff}_{\alpha}$}} &
\text{\underline{Some axioms for $L^{\ff}_{\Act}$ $R^{\ff}_{\bar{\alpha}}$}} &
\text{\underline{Some axioms for $L^{\ff}_{\Act}$ $R^{\ff}_{\tau}$}} \\[.7ex]
\text{F5 }\; f(\nil,\bar{\alpha}.x+w) \approx f(\nil,w) &
\text{F7 }\; f(\nil,\alpha.x + w) \approx f(\nil,w) &
\; \text{F2, F5, F7}\\
\text{F6 }\; f(\nil,\tau.x + w) \approx f(\nil,w) &
\text{and }\; \text{F2, F6}\\
\text{and }\; \text{F2}\\[1ex]
\text{\underline{Some axioms for $L^{\ff}_{\alpha}$ $R^{\ff}_{\Act}$}} &
\text{\underline{Some axioms for $L^{\ff}_{\bar{\alpha}}$ $R^{\ff}_{\Act}$}} &
\text{\underline{Some axioms for $L^{\ff}_{\tau}$ $R^{\ff}_{\Act}$}} \\[.7ex]
\text{F8 }\; f(\bar{\alpha}.x+w,\nil) \approx f(w,\nil) &
\text{F10 }\; f(\alpha.x + w,\nil) \approx f(w,\nil) &
\; \text{F4, F8, F10}\\
\text{F9 }\; f(\tau.x + w,\nil) \approx f(w,\nil) &
\text{and }\; \text{F4, F9}\\
\text{and }\; \text{F4}\\[1.1ex]
\text{\underline{Some axioms for $L^{\ff}_{\Act}$ $R^{\ff}_{\alpha,\bar{\alpha}}$}} &
\text{\underline{Some axioms for $L^{\ff}_{\Act}$ $R^{\ff}_{\alpha,\tau}$}} &
\text{\underline{Some axioms for $L^{\ff}_{\Act}$ $R^{\ff}_{\bar{\alpha},\tau}$}} \\[.7ex]
\; \text{F2, F6} &
\; \text{F2, F5} &
\; \text{F2, F7}\\[1ex]
\text{\underline{Some axioms for $L^{\ff}_{\alpha,\bar{\alpha}}$ $R^{\ff}_{\Act}$}} &
\text{\underline{Some axioms for $L^{\ff}_{\alpha,\tau}$ $R^{\ff}_{\Act}$}} &
\text{\underline{Some axioms for $L^{\ff}_{\bar{\alpha},\tau}$ $R^{\ff}_{\Act}$}} \\[.7ex]
\; \text{F4, F9} &
\; \text{F4, F8} &
\; \text{F4,F10}\\[1ex]
\text{\underline{Some axioms for $L^{\ff}_{\alpha,\tau}$ $R^{\ff}_{\bar{\alpha}}$ $S_{\alpha,\bar{\alpha}}$}} &
\text{\underline{Some axioms for $L^{\ff}_{\alpha,\tau}$ $R^{\ff}_{\bar{\alpha}}$ $S_{\bar{\alpha},\alpha}$}} &
\text{\underline{Some axioms for $L^{\ff}_{\alpha,\tau}$ $R^{\ff}_{\alpha,\bar{\alpha}}$}} \\[.7ex]
\text{F11 }\; f(\bar{\alpha}.x + w, \tau.y) \approx f(w,\tau.y) &
\; \text{F6, F7, F8, F11, F12, F15} &
\; \text{F6, F8, F11, F12, F15}\\
\text{F12 }\; f(\bar{\alpha}.x, \tau.y + w) \approx f(\bar{\alpha}.x,w) \\
\text{F13 }\; f(\bar{\alpha}.x+w,\alpha.y) \approx f(w,\alpha.y) &
\text{\underline{Some axioms for $L^{\ff}_{\alpha,\tau}$ $R^{\ff}_{\bar{\alpha},\tau}$ $S_{\alpha,\bar{\alpha}}$}} &
\text{\underline{Some axioms for $L^{\ff}_{\alpha,\tau}$ $R^{\ff}_{\bar{\alpha},\tau}$ $S_{\bar{\alpha},\alpha}$}}\\
\text{F14 }\; f(\bar{\alpha}.x,\alpha.y+w) \approx f(\bar{\alpha}.x, w) &
\; \text{F7, F8, F13, F14, F16} &
\; \text{F7, F8}\\
\text{F15 }\; f(\bar{\alpha}.x,\tau.y) \approx \nil \\
\text{F16 }\; f(\bar{\alpha}.x,\alpha.y) \approx \nil\\
\text{and }\; \text{F6, F7, F8} \\[1ex]
\text{\underline{Some axioms for $L^{\ff}_{\bar{\alpha}}$ $R^{\ff}_{\alpha,\tau}$ $S_{\bar{\alpha},\alpha}$}} &
\text{\underline{Some axioms for $L^{\ff}_{\bar{\alpha}}$ $R^{\ff}_{\alpha,\tau}$ $S_{\alpha,\bar{\alpha}}$}} &
\text{\underline{Some axioms for $L^{\ff}_{\bar{\alpha},\alpha}$ $R^{\ff}_{\alpha,\tau}$}} \\[.7ex]
\text{F17 }\; f(\alpha.x + w, \bar{\alpha}.y) \approx f(w,\bar{\alpha}.y) &
\; \text{F5, F9, F10, F19, F20, F22} &
\; \text{F9, F19, F20, F22}\\
\text{F18 }\; f(\alpha.x, \bar{\alpha}.y + w) \approx f(\alpha.x,w) \\
\text{F19 }\; f(\tau.x+w,\bar{\alpha}.y) \approx f(w,\bar{\alpha}.y) &
\text{\underline{Some axioms for $L^{\ff}_{\bar{\alpha},\tau}$ $R^{\ff}_{\alpha,\tau}$ $S_{\alpha,\bar{\alpha}}$}} &
\text{\underline{Some axioms for $L^{\ff}_{\bar{\alpha},\tau}$ $R^{\ff}_{\alpha,\tau}$ $S_{\bar{\alpha},\alpha}$}}\\
\text{F20 }\; f(\tau.x,\bar{\alpha}.y+w) \approx f(\tau.x, w) &
\; \text{F5, F10, F17, F18, F21} &
\; \text{F5, F10}\\
\text{F21 }\; f(\alpha.x,\bar{\alpha}.y) \approx \nil \\
\text{F22 }\; f(\tau.x,\bar{\alpha}.y) \approx \nil\\
\text{and }\; \text{F5, F9, F10} \\[1ex]
\text{\underline{Some axioms for $L^{\ff}_{\alpha}$ $R^{\ff}_{\bar{\alpha},\tau}$ $S_{\alpha,\bar{\alpha}}$}} &
\text{\underline{Some axioms for $L^{\ff}_{\alpha}$ $R^{\ff}_{\bar{\alpha},\tau}$ $S_{\bar{\alpha},\alpha}$}} &
\text{\underline{Some axioms for $L^{\ff}_{\tau}$ $R^{\ff}_{\alpha,\bar{\alpha}}$}} \\[.7ex]
\text{F23}\; f(\tau.x +w, \alpha.y) \approx f(w,\alpha.y) &
\; \text{F7, F8, F9, F23, F24, F25} &
\text{F26 }\; f(\alpha.x+w,\tau.y) \approx f(w,\tau.y)\\
\text{F24 }\; f(\tau.x,\alpha.y+w) \approx f(\tau.x,w) &
&
\text{F27 }\; f(\alpha.x, \tau.y +w) \approx f(\alpha.x,w) \\
\text{F25 }\; f(\tau.x,\alpha.y) \approx \nil &
&
\text{F28 }\; f(\alpha.x,\tau.y) \approx \nil\\
\text{and }\; \text{F7, F8, F9, F13, F14, F16} &
&
\text{and }\; \text{F6, F8, F10, F11, F12, F15} \\[1ex]
\text{\underline{Some axioms for $L^{\ff}_{\bar{\alpha},\tau}$ $R^{\ff}_{\alpha}$ $S_{\bar{\alpha},\alpha}$}} &
\text{\underline{Some axioms for $L^{\ff}_{\bar{\alpha},\tau}$ $R^{\ff}_{\alpha}$ $S_{\alpha,\bar{\alpha}}$}} &
\text{\underline{Some axioms for $L^{\ff}_{\alpha,\bar{\alpha}}$ $R^{\ff}_{\tau}$}} \\[.7ex]
\; \text{F5, F6, F10, F17, F18,} &
\; \text{F5, F6, F10, F26, F27, F28} &
\; \text{F5, F7, F9, F19, F20,} \\
\; \text{F21, F26, F27, F28} & 
& 
\;\text{F22, F23, F24, F25} \\[1ex]
\text{\underline{Some axioms for $L^{\ff}_{\alpha,\bar{\alpha}}$ $R^{\ff}_{\bar{\alpha},\tau}$}} &
\text{\underline{Some axioms for $L^{\ff}_{\bar{\alpha},\tau}$ $R^{\ff}_{\alpha,\bar{\alpha}}$}} \\[.7ex]
\; \text{F7, F9, F23, F24, F25} &
\; \text{F6, F10, F26, F27, F28}
\end{array}
$}
\]
\caption{\label{tab:axioms} Some sets of axioms, according to which rules are available for $f$.
For $A \subseteq \Act$ and $X \in \{L,R\}$, we use $X^f_A$ as a shorthand for $\bigwedge_{\mu \in A} X^f_\mu$.}
\end{table}

In this case, an example of a collection of equations over $\FCCS^-$ that are sound modulo $\bistext$ is given by axioms A1--A4, F0--F2 in Table~\ref{tab:axioms}.
Interestingly, axioms A4 and F1 in Table~\ref{tab:axioms} (used from left to right) are enough to establish that each $\FCCS^-$ term that is bisimilar to $\nil$ is also provably equal to $\nil$. 

Before proceeding, we observe the following:
\begin{remark}
\label{rmk:summands}
Whenever a process term $t$ has neither $\nil$ summands nor factors then we can assume that, for some finite non-empty index set $I$, $t = \sum_{i \in I} t_i$ for some terms $t_i$ such that none of them has $+$ as head operator and moreover, none of them has $\nil$ summands nor factors.
\end{remark}

\begin{lemma}
\label{lem:prove-nil}
Let $t$ be a $\FCCS^-$ term. Then $t \bis \nil$ if, and only if, the equation $t\approx \nil$ is provable using axioms A4 and F1 in Table~\ref{tab:axioms} from left to right.
\end{lemma}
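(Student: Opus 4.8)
The plan is to prove the two implications separately, with the left-to-right (soundness) direction being essentially routine and the converse carrying the real content through a structural analysis tailored to the present case, in which $f$ distributes over $+$ in its first argument.

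For the ($\Leftarrow$) direction I would first record that both A4 and F1 are sound modulo $\bistext$. Soundness of A4 is standard. For F1, note that in the case under consideration Lemma~\ref{lem:dist} (together with Proposition~\ref{prop:new} and the standing hypothesis that $f$ distributes over its first argument) tells us that every rule for $f$ has a premise of the form $\rx_1 \trans[\mu_1] \ry_1$; hence $f(\nil,p)$ has no outgoing transitions for any closed $p$, so $f(\nil,p) \bis \nil$, and F1 is sound. Since provability from sound axioms yields bisimilar terms, any derivation of $t \approx \nil$ using A4 and F1 gives $t \bis \nil$.

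For the ($\Rightarrow$) direction the key structural fact I would isolate is that, in this case, $L^{\ff}_{\mu}$ holds for every $\mu \in \Act$ and no rule of the form~\eqref{asyncruleright} is present. From this I would derive the characterisation: for all $\FCCS^-$ terms $t_1,t_2$, we have $f(t_1,t_2) \bis \nil$ if and only if $t_1 \bis \nil$. Indeed, if $t_1 \not\bis \nil$, choose a closed substitution $\sigma$ and an action $\mu$ with $\sigma(t_1) \trans[\mu]$; the rule witnessing $L^{\ff}_{\mu}$ then gives $f(\sigma(t_1),\sigma(t_2)) \trans[\mu]$, so $f(t_1,t_2) \not\bis \nil$. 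Conversely, if $t_1 \bis \nil$ then every rule for $f$ is blocked on its first argument, so $f(\sigma(t_1),\sigma(t_2))$ has no transitions for any $\sigma$.

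With the characterisation in hand, I would argue by induction on the structure of $t$, reading $t \bis \nil$ as $\sigma(t) \bis \nil$ for every closed $\sigma$ (Definition~\ref{Def:sound}). The cases $t = x$ and $t = \mu.t'$ cannot occur, since such terms admit a closed instance with an outgoing transition. If $t = t_1 + t_2$, then $t \bis \nil$ forces $t_1 \bis \nil$ and $t_2 \bis \nil$ (a sum has a transition exactly when one summand does); by the induction hypothesis each $t_i$ rewrites to $\nil$ using A4 and F1, so $t$ rewrites to $\nil + \nil$, and one application of A4 with $x := \nil$ finishes. If $t = f(t_1,t_2)$, the characterisation yields $t_1 \bis \nil$; by the induction hypothesis $t_1$ rewrites to $\nil$, so $t$ rewrites to $f(\nil,t_2)$, and a single application of F1 gives $\nil$ (observe that $t_2$ need not be simplified at all). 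The main obstacle is establishing the equivalence $f(t_1,t_2) \bis \nil \iff t_1 \bis \nil$, which is precisely where the case hypothesis (distributivity in the first argument, equivalently $L^{\ff}_{\mu}$ for all $\mu$ and absence of right rules) is indispensable; once it is available, the induction is mechanical. A minor point to treat with care is that $t$ may be open, so each step is phrased in terms of the universally quantified closed instances of $t$ rather than $t$ itself.
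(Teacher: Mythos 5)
Your proof is correct and takes essentially the same route as the paper's: the ``if'' direction by soundness of A4 and F1, and the ``only if'' direction by structural induction resting on the key fact that, in this case, every rule for $f$ has a premise on its first argument, so that $f(t_1,t_2)\bis\nil$ if and only if $t_1\bis\nil$. The paper merely packages your inductive case analysis as an explicit grammar-defined set $\text{NIL}$ ($t ::= \nil \mid t+t \mid f(t,u)$, $u$ arbitrary) whose membership it first shows equivalent to bisimilarity with $\nil$ before running the same provability induction.
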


\begin{proof}
The ``if'' implication is an immediate consequence of the soundness of the equations A4 and F1 with respect to $\bis$. 
To prove the  ``only if'' implication, define, first of all, the collection $\text{NIL}$ of $\FCCS^-$ terms as the set of terms generated by the following grammar:
\[
t::= \nil \bfmid t+t \bfmid  f(t, u) \enspace , 
\]
where $u$ is an arbitrary $\FCCS^-$ term. We claim that each $\FCCS^-$ term $t$ is bisimilar to $\nil$ if, and only if, $t\in\text{NIL}$.
Using this claim and structural induction on $t\in\text{NIL}$, it is a simple matter to show that if $t\bis\nil$, then $t\approx \nil$ is provable using axioms A0 and F0 from left to right, which was to be shown.
  
To complete the proof, it therefore suffices to show the above claim. 
To establish the ``if'' implication one proves, using structural induction on $t$ and the congruence properties of bisimilarity, that if $t\in\text{NIL}$, then $\sigma(t) \bis \nil$ for every closed substitution $\sigma$.  
To show the ``only if'' implication, we establish the contrapositive statement, viz.~that if $t\not\in\text{NIL}$, then $\sigma(t) \nbis \nil$ for some closed substitution $\sigma$. 
To this end, it suffices only to show, using structural induction on $t$, that if $t\not\in\text{NIL}$, then $\sigma_a(t)\mv{\mu}$ for some action $\mu\in\Act$, where $\sigma_a$ is the closed substitution mapping each variable to the closed term $a\nil$. 
The details of this argument are not hard, and are therefore omitted.
\end{proof}

In light of the above result, in the technical developments to follow, when dealing with an operator $f$ that distributes over $+$ in its first argument we shall assume, without loss of generality, that each axiom system we consider includes the equations A1--A4, F0--F2 in Table~\ref{tab:axioms}.
This assumption means, in particular, that our axiom systems will allow us to identify each term that is bisimilar to $\nil$ with $\nil$.

It is well-known (see, e.g., Sect.~2 in~\cite{Gr90}) that if an equation relating two closed terms can be proved from an axiom system $\E$, then there is a closed proof for it.  
Moreover, if $\E$ satisfies a further closure property, called \emph{saturation}, in addition to those mentioned earlier, and that closed equation relates two terms containing no occurrences of $\nil$ as a summand or factor, then there is a closed proof for it in which all of the terms have no occurrences of $\nil$ as a summand or factor.

\begin{definition}
\label{Defn:nf}
For each $\FCCS^-$ term $t$, we define ${t}/\nil$ thus: 
\[
\begin{array}{lclcl}
{\nil}/\nil = \nil & \qquad\qquad &
{x}/\nil = x & \qquad\qquad & 
{\mu t}/\nil = \mu ({t}/\nil)
\end{array}
\]
\[
\begin{array}{lcl}
({t+u})/\nil = 
\begin{cases}
{u}/\nil & \text{if $t\bis \nil$} \\
{t}/\nil & \text{if $u\bis \nil$} \\
({t}/\nil)+ ({u}/\nil) & \text{otherwise}
\end{cases} 
& \qquad \quad \qquad &
f(t, u)/\nil =  
\begin{cases}
\nil & \text{if $t\bis \nil$} \\
{t}/\nil & \text{if $u\bis \nil$} \\
f({t}/\nil, {u}/\nil) & \text{otherwise}
\end{cases}  
\end{array}
\]
\end{definition}

Intuitively, $t /\nil$ is the term that results by removing {\sl all} occurrences of $\nil$ as a summand or factor from $t$. 

The following lemma, whose simple proof by structural induction on terms is omitted, collects the basic properties of the above construction. 

\begin{lemma}
\label{Lem:-nil}
For each $\FCCS^-$ term $t$, the following statements hold:
\begin{enumerate}
\item \label{nil1} 
the equation $t \approx t /\nil$ can be proven using the equations A1--A4, F0--F2, and therefore $t \bis t  /\nil$;
\item \label{nil2} 
the term $t /\nil$ has no occurrence of $\nil$ as a summand or factor;
\item \label{nil3} 
$t /\nil = t$, if $t$ has no occurrence of $\nil$ as a summand or factor;
\item \label{nil4} 
$\sigma(t /\nil)/\nil = \sigma(t)/\nil$, for each substitution $\sigma$.
\end{enumerate}
\end{lemma}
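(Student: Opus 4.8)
The plan is to prove all four statements by structural induction on the $\FCCS^-$ term $t$, in each case following the branches of Definition~\ref{Defn:nf}. Statement~\ref{nil1} is self-contained and I would prove it first; statement~\ref{nil2} then relies on it; statement~\ref{nil3} is independent; and statement~\ref{nil4} uses statements~\ref{nil1} and~\ref{nil2}. Throughout I would freely invoke the congruence rules ($e_5$)--($e_7$) of equational logic and the characterisation of terms bisimilar to $\nil$ recorded inside the proof of Lemma~\ref{lem:prove-nil} (the set $\mathrm{NIL}$). Combined with statement~\ref{nil2}, that characterisation yields an auxiliary fact used repeatedly in statement~\ref{nil4}: a term $s$ with $s\bis\nil$ satisfies $s/\nil=\nil$, since $s/\nil\bis s\bis\nil$ has no $\nil$ summand or factor and $\nil$ is the only such term.

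For statement~\ref{nil1}, the base cases $\nil,x$ and the prefix case are immediate from reflexivity and ($e_5$). In the deleting branches of the $+$ and $f$ clauses, where a subterm $s$ is bisimilar to $\nil$, I would use Lemma~\ref{lem:prove-nil} to obtain $\E\vdash s\approx\nil$ from A4 and F1, and then close the gap with A4 (for a deleted summand), with F1 (for $f(s,\cdot)$ with $s\bis\nil$, giving $\approx\nil$), or with F2 (for $f(\cdot,s)$ with $s\bis\nil$, giving $\approx$ the first argument); the surviving branches follow from the induction hypothesis and congruence. Soundness of A1--A4, F0--F2 modulo $\bistext$ then gives $t\bis t/\nil$. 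Statement~\ref{nil2} is a routine induction once statement~\ref{nil1} is available: in the surviving clauses $(u/\nil)+(v/\nil)$ and $f(u/\nil,v/\nil)$ the outer term is not a $\nil$ summand or factor because $u\nbis\nil$ and $v\nbis\nil$ together with $u/\nil\bis u$, $v/\nil\bis v$ force $u/\nil\nbis\nil$ and $v/\nil\nbis\nil$, while the absence of inner $\nil$ summands and factors is exactly the induction hypothesis. Statement~\ref{nil3} is the easiest: if $t$ has no $\nil$ summand or factor then none of its immediate subterms is bisimilar to $\nil$, so every clause takes its ``otherwise'' branch and the induction hypothesis gives $t/\nil=t$ verbatim.

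The delicate statement is~\ref{nil4}, and it is where I expect the real work. The obstacle is that the case split defining $/\nil$ is governed by tests $s\bis\nil$, and these tests are \emph{not} invariant under substitution: while $s\bis\nil$ implies $\sigma(s)\bis\nil$ (because $\rho\circ\sigma$ ranges over closed substitutions as $\rho$ does), the converse fails, since a substitution can collapse a non-null subterm into one bisimilar to $\nil$ (e.g.\ $x\nbis\nil$ but $\sigma(x)=\nil$). Hence in computing $\sigma(t)/\nil$ I cannot reuse the branch chosen for $t/\nil$; instead I would drive the analysis by the tests $\sigma(u)\bis\nil$ and $\sigma(v)\bis\nil$ on the \emph{substituted} subterms, reconciling the two sides via statement~\ref{nil1}, which gives $\sigma(u/\nil)\bis\sigma(u)$ and therefore $\sigma(u/\nil)\bis\nil$ iff $\sigma(u)\bis\nil$. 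For the $+$ and $f$ clauses I would split on the three possibilities for $t/\nil$ and, within each, on whether $\sigma(u)$ and $\sigma(v)$ are bisimilar to $\nil$; in every sub-case both $\sigma(t/\nil)/\nil$ and $\sigma(t)/\nil$ reduce, via the induction hypothesis and the auxiliary fact $s\bis\nil\Rightarrow s/\nil=\nil$, to the same one of $\nil$, $\sigma(u)/\nil$, $\sigma(v)/\nil$, $(\sigma(u)/\nil)+(\sigma(v)/\nil)$, or $f(\sigma(u)/\nil,\sigma(v)/\nil)$. The only genuinely fiddly point is the bookkeeping of the overlapping branches when both $\sigma(u)\bis\nil$ and $\sigma(v)\bis\nil$, which is handled by noting that all candidate right-hand sides collapse to $\nil$ there.
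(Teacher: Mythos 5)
Your proof is correct and takes the same route as the paper, which omits the argument entirely, describing it only as a ``simple proof by structural induction on terms'': your induction, using Lemma~\ref{lem:prove-nil} for the deleting branches in statement~\ref{nil1} and resolving the one real subtlety of statement~\ref{nil4} (that $\sigma(s)\bis\nil$ does not imply $s\bis\nil$) via statement~\ref{nil1} together with the collapse fact $s\bis\nil\Rightarrow s/\nil=\nil$, is exactly the intended argument. One cosmetic slip: your justification of statement~\ref{nil3} (``none of its immediate subterms is bisimilar to $\nil$'') fails for prefixed terms such as $\mu.\nil$, but that claim is only needed, and is true, for the $+$ and $f$ clauses, so nothing breaks.
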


\begin{definition}
\label{Defn:0subs}
We say that a substitution $\sigma$ is a {\sl $\nil$-substitution} if{f} $\sigma(x)\neq x$ implies that $\sigma(x)=\nil$, for each variable $x$.
\end{definition}

\begin{definition}
\label{Defn:closure}
Let $\E$ be an axiom system. We define the axiom system $\chiudi(\E)$ thus:
\[
\chiudi(\E) = \E \cup \{ \sigma(t)/\nil \approx \sigma(u)/\nil \mid (t\approx u)\in \E,~\sigma~\text{a $\nil$-substitution} \} \enspace .
\]
An axiom system $\E$ is {\sl saturated} if $\E =\chiudi(\E)$.
\end{definition}

The following lemma collects some basic sanity properties of the closure operator $\chiudi(\cdot)$. 
(Note that the application of $\chiudi(\cdot)$ to an axiom system preserves closure with respect to symmetry.)

\begin{lemma}
\label{Lem:properties-closure}
Let $\E$ be an axiom system. Then the following statements hold. 
\begin{enumerate}
\item \label{cl-cl} $\chiudi(\E) = \chiudi(\chiudi(\E))$. 
\item \label{cl-finite} $\chiudi(\E)$ is finite, if so is $\E$.
\item $\chiudi(\E)$ is sound, if so is $\E$.
\item $\chiudi(\E)$ is closed with respect to symmetry, if so is $\E$.
\item \label{cl-equi} $\chiudi(\E)$ and $\E$ prove the same equations, if $\E$ contains the equations A1--A4, F0--F2.
\end{enumerate}
\end{lemma}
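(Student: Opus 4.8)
The plan is to treat the five statements in turn, relying throughout on the basic properties of the $t/\nil$ operation collected in Lemma~\ref{Lem:-nil} and on the inclusion $\E\subseteq\chiudi(\E)$, which is immediate from the definition of $\chiudi(\cdot)$. The two purely structural statements are quickest. For finiteness (statement~\ref{cl-finite}), I would observe that for a fixed axiom $(t\approx u)\in\E$ the term $\sigma(t)/\nil$ depends only on the restriction of the $\nil$-substitution $\sigma$ to the finite set $\var(t)\cup\var(u)$, and that such a restriction is completely determined by the subset of those variables that $\sigma$ sends to $\nil$; hence each axiom of $\E$ contributes only finitely many new equations to $\chiudi(\E)$, so $\chiudi(\E)$ is finite whenever $\E$ is. For the symmetry statement, I would note that the symmetric counterpart of a new equation $\sigma(t)/\nil\approx\sigma(u)/\nil$, which is generated from $(t\approx u)\in\E$, is exactly the equation generated by the \emph{same} substitution $\sigma$ from the symmetric axiom $(u\approx t)$, and the latter lies in $\E$ whenever $\E$ is closed under symmetry.

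The soundness and provability statements both reduce to Lemma~\ref{Lem:-nil}(\ref{nil1}), which states that $v\approx v/\nil$ is derivable from A1--A4, F0--F2 and, consequently, that $v\bis v/\nil$. For soundness, if $\E$ is sound then $t\bis u$, and hence $\sigma(t)\bis\sigma(u)$ as open terms (precompose any closed substitution with $\sigma$); combining this with $\sigma(t)/\nil\bis\sigma(t)$ and $\sigma(u)/\nil\bis\sigma(u)$ from Lemma~\ref{Lem:-nil}(\ref{nil1}) and the transitivity of $\bistext$ yields soundness of the new equation. For statement~\ref{cl-equi}, the inclusion $\E\subseteq\chiudi(\E)$ already shows that everything provable from $\E$ is provable from $\chiudi(\E)$; for the converse it suffices to derive each new equation from $\E$, and this follows by using substitutivity to obtain $\E\vdash\sigma(t)\approx\sigma(u)$ and then Lemma~\ref{Lem:-nil}(\ref{nil1}) together with the presence of A1--A4, F0--F2 in $\E$ to rewrite each side into its $/\nil$ form, giving $\E\vdash\sigma(t)/\nil\approx\sigma(u)/\nil$. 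Since every axiom of $\chiudi(\E)$ is then $\E$-derivable, the two systems prove the same equations.

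The main obstacle is idempotence (statement~\ref{cl-cl}). The inclusion $\chiudi(\E)\subseteq\chiudi(\chiudi(\E))$ comes for free, so the work is to show that a second round of closure adds nothing. A generic new equation of $\chiudi(\chiudi(\E))$ has the form $\tau(t')/\nil\approx\tau(u')/\nil$ with $\tau$ a $\nil$-substitution and $(t'\approx u')\in\chiudi(\E)$. If $(t'\approx u')\in\E$ the equation is directly an element of $\chiudi(\E)$, so the interesting case is $t'=\sigma(t)/\nil$ and $u'=\sigma(u)/\nil$ for some $(t\approx u)\in\E$ and some $\nil$-substitution $\sigma$. Here I would apply Lemma~\ref{Lem:-nil}(\ref{nil4}) to collapse the nested normal forms, obtaining $\tau(\sigma(t)/\nil)/\nil=\tau(\sigma(t))/\nil=(\tau\circ\sigma)(t)/\nil$, and likewise for $u$. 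The crux is then the combinatorial observation that the composition $\tau\circ\sigma$ of two $\nil$-substitutions is again a $\nil$-substitution: on each variable $x$ it returns $\tau(x)\in\{x,\nil\}$ when $\sigma(x)=x$, and $\tau(\nil)=\nil$ when $\sigma(x)=\nil$. With this, $(\tau\circ\sigma)(t)/\nil\approx(\tau\circ\sigma)(u)/\nil$ is by definition already in $\chiudi(\E)$, closing the argument. I expect the delicate point to be getting the bookkeeping of Lemma~\ref{Lem:-nil}(\ref{nil4}) and of the composition right, since it is precisely there that the restrictive definition of a $\nil$-substitution, as opposed to an arbitrary substitution into $\nil$-terms, does the essential work.
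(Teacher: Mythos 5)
Your proposal is correct and follows essentially the same route as the paper: idempotence is obtained from Lemma~\ref{Lem:-nil}(\ref{nil4}) together with the observation that $\nil$-substitutions are closed under composition, and statement~\ref{cl-equi} is obtained by using Lemma~\ref{Lem:-nil}(\ref{nil1}) to derive each axiom of $\chiudi(\E)$ from $\E$ (the paper packages this as an induction on the depth of proofs, which is exactly what your axiom-wise replacement argument amounts to). The paper omits the proofs of the remaining statements, and your arguments for finiteness, soundness, and symmetry correctly supply what is left to the reader.
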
 

%---------------------------------------------------------------------
\begin{proof}
We limit ourselves to sketching the proofs of statements~\ref{cl-cl} and~\ref{cl-equi} in the lemma. 
  
In the proof of statement~\ref{cl-cl}, the only non-trivial thing to check is that the equation
\[
\sigma(\sigma'(t)/\nil))/\nil \approx \sigma(\sigma'(u)/\nil))/\nil 
\]
is contained in $\chiudi(\E)$, whenever $(t \approx u)\in \E$ and $\sigma,\sigma'$ are $\nil$-substitutions. 
This follows from Lemma~\ref{Lem:-nil}(\ref{nil4}) because the collection of $\nil$-substitutions is closed under composition.
  
To show statement~\ref{cl-equi}, it suffices only to argue that each equation $t\approx u$ that is provable from $\chiudi(\E)$ is also provable from $\E$, if $\E$ contains the equations A1--A4, F0--F2.
This can be done by induction on the depth of the proof of the equation $t\approx u$ from $\chiudi(\E)$, using Lemma~\ref{Lem:-nil}(\ref{nil1}) for the case in which $t\approx u$ is a substitution instance of an axiom in $\chiudi(\E)$.
\end{proof}
%---------------------------------------------------------------------

In light of this result, the saturation of a finite axiom system that includes the equations A1--A4, F0--F2 results in an equivalent, finite collection of equations (Lemma~\ref{Lem:properties-closure}(\ref{cl-finite}) and~(\ref{cl-equi})).

We are now ready to state our counterpart of~\cite[Proposition~5.1.5]{Mo89}.

\begin{proposition}
\label{Propn:proofswithout0}
Assume that $\E$ is a saturated axiom system. 
Suppose furthermore that we have a closed proof from $\E$ of the closed equation $p\approx q$. 
Then replacing each term $r$ in that proof with $r/\nil$ yields a closed proof of the equation $p/\nil\approx  q/\nil$. 
In particular, the proof from $\E$ of an equation $p\approx q$, where $p$ and $q$ are terms not containing occurrences of $\nil$ as a summand or factor, need not use terms containing occurrences of $\nil$ as a summand or factor.
\end{proposition}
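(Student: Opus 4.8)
The plan is to induct on the structure of the given closed proof of $p \approx q$ from $\E$, showing that when every term $r$ occurring in it is replaced by $r/\nil$, each inference step remains a valid instance of some rule of equational logic, so that the result is a closed proof of $p/\nil \approx q/\nil$. Throughout we use the standing soundness of $\E$: since $\E \vdash r \approx s$ implies $r \bis s$ for closed $r,s$, provably equal closed terms agree on whether they are bisimilar to $\nil$, and this is what guarantees that the two sides of each equation make the same choice in the clauses defining $\cdot/\nil$ (Definition~\ref{Defn:nf}). A preliminary observation we shall use repeatedly is that, for a closed term $r$, one has $r \bis \nil$ if and only if $r/\nil = \nil$: the ``if'' direction is immediate from Lemma~\ref{Lem:-nil}(\ref{nil1}), and the ``only if'' direction follows because $r/\nil$ is bisimilar to $r$ yet contains no $\nil$ summand or factor, so the characterisation of nil-terms in the proof of Lemma~\ref{lem:prove-nil} forces $r/\nil = \nil$. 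The ``in particular'' clause is then immediate from Lemma~\ref{Lem:-nil}(\ref{nil3}): if $p,q$ already have no $\nil$ summand or factor, then $p/\nil = p$ and $q/\nil = q$, so the normalised proof witnesses $p \approx q$ without ever introducing such occurrences.

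The routine cases are those for reflexivity and transitivity, which are preserved verbatim, and for prefixing, since $(\mu r)/\nil = \mu(r/\nil)$ commutes with the congruence rule ($e_5$). For the additive congruence ($e_6$), a step $p_1 + p_2 \approx q_1 + q_2$ obtained from $p_1 \approx q_1$ and $p_2 \approx q_2$ is handled by a case analysis on whether $p_1$ or $p_2$ is bisimilar to $\nil$: if, say, $p_1 \bis \nil$ then $q_1 \bis \nil$ too, both normalised sums collapse to $p_2/\nil$ and $q_2/\nil$, and the step reduces to the normalised subproof of $p_2 \approx q_2$; otherwise both sides keep the shape $(\cdot)/\nil + (\cdot)/\nil$ and ($e_6$) applies to the two normalised subproofs. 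The congruence rule ($e_7$) for $f$ is treated in the same spirit, using the clauses of Definition~\ref{Defn:nf}: when $p_1 \bis \nil$ both sides normalise to $\nil$ and the step degenerates to reflexivity; when $p_1 \not\bis \nil$ but $p_2 \bis \nil$ both sides drop the second factor and the step reduces to the normalised subproof of $p_1 \approx q_1$; and when neither factor is bisimilar to $\nil$ the step remains an ($e_7$) step on the normalised subproofs. (No case for $\|$ arises, since we work over $\FCCS^-$.)

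The crux of the argument, and the only place where saturation is used, is the case in which $p \approx q$ is a closed substitution instance $\sigma(t) \approx \sigma(u)$ of an axiom $(t \approx u) \in \E$. Here I would introduce the $\nil$-substitution $\rho$ that sends $x$ to $\nil$ exactly when $\sigma(x) \bis \nil$ and fixes every other variable, together with the closed substitution $\theta$ given by $\theta(x) = \sigma(x)/\nil$. Because $\E$ is saturated, the equation $\rho(t)/\nil \approx \rho(u)/\nil$ belongs to $\chiudi(\E) = \E$ by Definition~\ref{Defn:closure}. The key lemma, which I expect to be the main obstacle, is the substitution identity
\[
\sigma(s)/\nil = \theta\bigl(\rho(s)/\nil\bigr) \quad\text{for every $\FCCS^-$ term } s ,
\]
proved by structural induction on $s$ and using the preliminary observation $r \bis \nil \iff r/\nil = \nil$ to line up the clauses of $\cdot/\nil$ across $\rho$ and $\sigma$. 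The delicate point is the $+$ and $f$ subcases, where one must check that $\rho(s_i)/\nil = \nil$ precisely when $\sigma(s_i) \bis \nil$ (which holds since every variable surviving in $\rho(s_i)/\nil$ is mapped by $\theta$ to a term not bisimilar to $\nil$), so that both normalisations discard the same summands and factors. Granting this identity, instantiating the axiom $\rho(t)/\nil \approx \rho(u)/\nil$ with $\theta$ yields $\sigma(t)/\nil \approx \sigma(u)/\nil$ in a single application of ($e_4$), which is exactly the normalised conclusion. This completes the induction and hence the proof.
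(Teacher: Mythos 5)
Your proof is correct and takes essentially the route the paper intends: the paper omits the argument and defers to Moller's Proposition~5.1.5, whose proof is exactly this induction on the structure of the closed proof, with soundness handling the congruence cases and saturation handling the axiom-instantiation case via a $\nil$-substitution. Your supporting observations --- that $r \bis \nil$ iff $r/\nil = \nil$ for closed $r$, and the identity $\sigma(s)/\nil = \theta(\rho(s)/\nil)$ --- are precisely the lemmas needed to make that argument go through.
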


%--------------------------------------------------------------
\begin{proof}
The proof follows the lines of that of~\cite[Proposition~5.1.5]{Mo89}, and is therefore omitted.
\end{proof}
%----------------------------------------------------------

In light of Proposition~\ref{Propn:proofswithout0}, henceforth, when dealing with an operator $f$ that distributes over $+$ in one of its arguments, we shall limit ourselves to considering saturated axiom systems.

%============================

\subsubsection{$\nil$ absorption for a non distributive $f$}
\label{app:nil_absorption}

In Section~\ref{sec:no_dist}, we argued that the set of allowed rules for an operator $f$ that does not distribute over summation in either argument has to include at least a rule of type (\ref{asyncruleleft}) {\sl and} at least one of type (\ref{asyncruleright}).
We also notice that for an operator $f$ having both types of rules for all actions we can distinguish two cases, according to which rules of type (\ref{syncrule}) are available:
\begin{inparaenum}
\item If $f$ has both rules of type (\ref{syncrule}), then it would be a mere rewriting of the parallel composition operator (see the proof of Lemma~\ref{Lem:asyncrules}).
\item If $f$ has only one rule of type (\ref{syncrule}), then one can observe that Moller's argument to the effect that bisimilarity is not finitely based over the fragment of CCS with action prefixing, nondeterministic choice and purely interleaving parallel composition, could be applied to $f$, yielding the desired negative result. 
\end{inparaenum}

Hence, we can assume that there is an action $\mu\in\Act$ such that $f$ has only one rule, of type either (\ref{asyncruleleft}) or (\ref{asyncruleright}), with $\mu$ as label.
This asymmetry in the set of rules for $f$ can cause some $\FCCS^-$ term to behave as $\nil$ when occurring in the scope of $f$, despite not being bisimilar to $\nil$ at all.

\begin{example}
\label{ex:blocking_f}
Consider the term $t = f(a + \bar{a}.u, \tau)$, for some term $u$, and assume that $f$ has only rules of type (\ref{asyncruleleft}) with labels $a$ and $\tau$ and only a rule of type (\ref{asyncruleright}) with label $\bar{a}$.
One can easily check that, since the initial execution of the $\tau$-move in the second argument is prevented by the rules for $f$, then the subterm $\bar{a}.u$ can never contribute to the behaviour of $t$.
Thus, $t \bis a.\tau$, even though $\bar{a}.u \nbis \nil$ for each term $u$.
\end{example}

From a technical point of view, this implies that Lemmas~\ref{lem:prove-nil} and~\ref{Lem:-nil}.\ref{nil1} no longer hold.
In fact, one can always construct a term $t$ of the form $t = f(\sum_{i=1}^n \mu.x_i, \sum_{j = 1}^m \nu.y_j)$ for some $n,m\ge 0$, with $\mu,\nu$ chosen according to the available set of rules for $f$, such that $t \bis \nil$.
We conjecture that \emph{since we are considering an operator $f$ that does not distribute over summation in either of its arguments, the valid equations, modulo bisimilarity, of the form $t \approx \nil$ cannot be proved by means of any finite, sound set of axioms.}
Roughly speaking, this is due to the fact that no valid axiom can be established for a term of the form $f(\mu.x + z, \nu.y + w)$ in that the behaviour of the terms substituted for the variables $z$ and $w$ is crucial to determine that of a closed instantiation of the term.

Summarizing, this would imply that, in the case at hand, we cannot assume that we can use saturation to simplify the axiom systems and, moreover, the family of equations 
\[
f(\sum_{i = 1}^n \mu.p_i, \sum_{j=1}^m \nu.q_j) \approx \nil \qquad n,m\ge 0
\]
for some processes $p_i,q_j$, could play the role of witness family of equations for our desired negative result.
Unfortunately, the presence of two summations would force us to introduce a number of additional technical results that would make the proof of the negative results even heavier than it already is.
Moreover, those supplementary results are not necessary to treat the case of the witness families that we are going to introduce in Sections~\ref{sec:LaRa}--\ref{sec:Lt} to obtain the proof of Theorem~\ref{Thm:nonfin-en}.

%===============================================================
% sec - technical madness
%===============================================================

\section{Preliminary results}
\label{sec:prelim}

As briefly outlined in Section~\ref{sec:proof_method}, to obtain the desired negative results we will proceed by a case analysis on the actual set of rules that are available for operator $f$.
However, there are a few preliminary results that hold for all the allowed behaviours of $f$ and that will be useful in the upcoming proofs.
We devote this section to presenting such results and notions.

%==============================================================

\subsection{Unique prime decomposition}
\label{sec:prime}

In the proof of our main results, we shall often make use of some notions from~\cite{MM93,Mo89}. 
These we now proceed to introduce for the sake of completeness and readability.

\begin{definition}
\label{Def:prime}
A closed term $p$ is {\sl irreducible} if $p \bis q \mathbin{\|} r$ implies $q\bis \nil$ or $r\bis \nil$, for all closed terms $q,r$.
We say that $p$ is {\sl prime} if it is irreducible and is not bisimilar to $\nil$.
\end{definition}

Note that each process $p$ of depth $1$ is prime, as every term of the form $q\|r$ (without $\nil$ factors) has depth at least $2$, and cannot be thus bisimilar to $p$.
Similarly, each process of norm $1$ is prime. 

The following lemma states the primality of two families of closed terms that will play a key role in the proof of our main result.

\begin{lemma}
\label{lem:rhs-prime_1}
\begin{enumerate}
\item \label{claim:alpha<=n} 
The term $\mu^{\scriptstyle \leq m}$ is prime, for each $m\geq 1$.
\item \label{claim:pn} 
Let $\nu \in \{a,\bar{a}\}$, $\mu \in \Act$, $\nu \neq \mu$, $m\geq 1$ and $1\leq i_1 < \ldots < i_m$.
Then the term $\nu.\mu^{\scriptstyle \leq i_1} + \cdots + \nu.\mu^{\scriptstyle \leq i_m}$ is prime.  
\end{enumerate}
\end{lemma}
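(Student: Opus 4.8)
The plan is to prove both statements by contradiction: in each case I assume the displayed term is bisimilar to a product $q \mathbin{\|} r$ with $q \nbis \nil$ and $r \nbis \nil$, and derive a contradiction. Throughout I would use that $p \bis \nil$ holds exactly when $\init{p} = \emptyset$, so that $q \nbis \nil$ and $r \nbis \nil$ guarantee both factors can move; and that bisimilar terms afford transitions carrying the same labels. The single most useful observation is that a genuine product keeps both arguments alive: after an interleaving step on one side, the other side is still present and can still act.

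The decisive preliminary point, which distinguishes these terms from arbitrary small-norm processes, is that no \emph{synchronisation} can occur in a decomposition $q \mathbin{\|} r$ of either term. For statement~(1), since $q \mathbin{\|} r \bis \mu^{\leq m}$ and every transition of $\mu^{\leq m}$ is labelled $\mu$, each interleaving transition $q \trans[\beta] q'$ (and likewise on the $r$-side) must have $\beta = \mu$; hence every initial action of $q$ and of $r$ is $\mu$. A synchronisation step requires two complementary visible actions $\alpha,\bar{\alpha}$ to be simultaneously enabled in the two arguments, but $q$ and $r$ initially offer only $\mu$, and $\mu$ is never complementary to itself (if $\mu \in \{a,\bar{a}\}$ then $\bar{\mu} \neq \mu$, and if $\mu = \tau$ there are no visible actions at all). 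So the initial step of $q \mathbin{\|} r$ cannot be a synchronisation. Now I would use the summand $\mu^{1} = \mu.\nil$ of $\mu^{\leq m}$, giving $\mu^{\leq m} \trans[\mu] \nil$, which must be matched by some $q \mathbin{\|} r \trans[\mu] s$ with $s \bis \nil$. By the previous point this step is an interleaving, so $s$ is $q' \mathbin{\|} r$ or $q \mathbin{\|} r'$; but $s \bis \nil$ forces $\init{s} = \emptyset$, hence $\init{r} = \emptyset$ (respectively $\init{q} = \emptyset$), i.e.\ $r \bis \nil$ (respectively $q \bis \nil$), contradicting the assumption. This is the rigorous content of the remark that $\norm(\mu^{\leq m}) = 1$.

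For statement~(2) I would argue analogously, now exploiting that the guard $\nu$ differs from the action $\mu$ appearing below it. Write $P = \nu.\mu^{\leq i_1} + \cdots + \nu.\mu^{\leq i_m}$. The only initial action of $P$ is $\nu \in \{a,\bar{a}\}$, so as above every initial action of $q$ and of $r$ must be $\nu$, and no initial synchronisation is possible since $\bar{\nu} \neq \nu$. Picking the interleaving step $q \mathbin{\|} r \trans[\nu] q' \mathbin{\|} r$ arising from $q \trans[\nu] q'$ (which exists as $\init{q} \ni \nu$), it must be matched by $P \trans[\nu] \mu^{\leq i_j}$ for some $j$, so $q' \mathbin{\|} r \bis \mu^{\leq i_j}$. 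But $r \nbis \nil$ still offers its initial action $\nu$, yielding $q' \mathbin{\|} r \trans[\nu] q' \mathbin{\|} r'$, whereas every transition of $\mu^{\leq i_j}$ is labelled $\mu \neq \nu$; hence this $\nu$-move cannot be matched, contradicting $q' \mathbin{\|} r \bis \mu^{\leq i_j}$.

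I expect the main obstacle to be exactly the treatment of synchronisation, which is why I isolate it. One is tempted to invoke a slogan like ``a norm-$1$ process is prime,'' but for parallel composition \emph{with} communication this is false: for instance $a \mathbin{\|} \bar{a} \bis a.\bar{a} + \bar{a}.a + \tau$ has norm $1$ yet is a nontrivial product. What rescues both statements is that the relevant terms use a single action along the critical path (only $\mu$ in statement~(1), and only $\mu \neq \nu$ below the guard in statement~(2)), so no complementary pair of visible actions is ever simultaneously enabled in a candidate decomposition. Once synchronisation is ruled out, the ``both components stay alive'' principle closes each case with a single carefully chosen transition.
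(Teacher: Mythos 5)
Your proof is correct, and for statement~(2) it is essentially the paper's own argument: the paper likewise observes that both components of a putative decomposition must offer $\nu$, interleaves to obtain two consecutive $\nu$-moves, and contradicts the fact that $\nu.\mu^{\scriptstyle\leq i_1}+\cdots+\nu.\mu^{\scriptstyle\leq i_m}$ cannot perform two $\nu$-moves in a row; your phrasing via the bisimulation game (match the first $\nu$-move by a transition to $\mu^{\scriptstyle\leq i_j}$, then note that $r$ can still move) is the same idea in a different dress. The genuine difference is in statement~(1): the paper dispatches it in one line, ``immediate because the norm of $\mu^{\scriptstyle\leq m}$ is one,'' leaning on its earlier remark that every process of norm $1$ is prime. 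As you point out, that blanket remark is not sound in a calculus with communication: $a \mathbin{\|} \bar{a}$ has norm $1$ (via its $\tau$-transition to $\nil \mathbin{\|} \nil$) yet is a nontrivial parallel product, so norm-$1$-implies-prime cannot be invoked as a general principle. What rescues the lemma is exactly the observation you isolate: since every transition of $\mu^{\scriptstyle\leq m}$ --- and hence every initial action of $q$ and of $r$ --- is labelled $\mu$, and $\mu$ is never its own complement (including the case $\mu=\tau$), no synchronisation can fire, so the $\mu$-move of $\mu^{\scriptstyle\leq m}$ to $\nil$ must be matched by an interleaving step, which leaves the other, non-$\nil$ component alive. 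Your write-up therefore supplies the justification that the paper leaves implicit and states too generally; it costs a short case analysis but is the more rigorous of the two proofs.
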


%-------------------------------------------------------------
\begin{proof}
The first claim is immediate because the norm of $\mu^{\scriptstyle \leq m}$ is one, for each $m\geq 1$.

For the second claim, assume by contradiction that there are process terms $p,q$ such that $p,q\nbis\nil$ and $\nu.\mu^{\scriptstyle \leq i_1} + \cdots + \nu.\mu^{\scriptstyle \leq i_m} \bis p \| q$.
Clearly, this would imply the existence of process terms $p',q'$ such that $p \trans[\nu] p'$ and $q \trans[\nu] q'$ so that $p \| q \trans[\nu] p'\| q$ and $p \| q \trans[\nu] p \| q'$.
However, these transitions would in turn imply that $p \| q \trans[\nu] p' \| q \trans[\nu] p' \| q'$, namely $p \| q$ could perform two $\nu$-moves in a row, whereas $\nu.\mu^{\scriptstyle \leq i_1} + \cdots + \nu.\mu^{\scriptstyle \leq i_m}$ cannot perform such a sequence of actions, thus contradicting $\nu.\mu^{\scriptstyle \leq i_1} + \cdots + \nu.\mu^{\scriptstyle \leq i_m} \bis p \| q$.
\end{proof}
%-----------------------------------------------------------------------

In~\cite{MM93} the notion of \emph{unique prime decomposition} of a process $p$ was introduced, as the unique multiset $\{\!|\,q_1,\dots,q_n\,|\!\}$ of primes s.t.\ $p \bis q_1 \| \dots \| q_n$.
Inspired by the unique prime decomposition result of~\cite{MM93}, the authors of~\cite{LvO05} proposed the notion of \emph{decomposition order for commutative monoids}, and proved that the existence of a decomposition order on a commutative monoid implies that the monoid has the unique prime decomposition property.
$\FCCS$ modulo $\bistext$ is a \emph{commutative monoid} with respect to $\parallel$, having $\nil$ as unit, and the transition relation defines a \emph{decomposition order} over bisimilarity equivalence classes of closed terms.
Then, by~\cite[Theorem 32]{LvO05}, the following result holds:

\begin{proposition}
\label{prop:unique_decomposition}
Any $\FCCS$ term can be expressed uniquely, up to $\bistext$, as a parallel composition of primes.
\end{proposition}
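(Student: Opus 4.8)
The plan is to derive Proposition~\ref{prop:unique_decomposition} by instantiating the abstract unique-decomposition theorem of~\cite{LvO05} on the commutative monoid of closed $\FCCS$ terms modulo bisimilarity. First I would verify that this structure genuinely is a commutative monoid. Parallel composition $\parallel$ is associative and commutative modulo $\bistext$ (standard facts that follow from its interleaving and synchronisation SOS rules together with the fact, recorded in Section~\ref{Sect:HCCS}, that $\bistext$ is a congruence over $\FCCS$), and $\nil$ acts as a unit since $\nil \parallel p \bis p$ for every closed $p$. Hence $\parallel$ descends to a well-defined, associative and commutative operation on the set of $\bistext$-equivalence classes of closed terms, with $[\nil]$ as unit.

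Next I would exhibit, and verify the defining axioms of, a decomposition order on this monoid in the sense of~\cite{LvO05}. The natural candidate, as anticipated in the discussion preceding the statement, is the order induced on equivalence classes by reachability along transitions: set $[p] \preceq [q]$ whenever $q \trans[\varphi] p'$ for some trace $\varphi$ and some $p'$ with $p' \bis p$. One then checks that this is a partial order (antisymmetry on classes follows because a transition strictly decreases depth, so mutual reachability forces bisimilarity), that $[\nil]$ is the least element (every finite process reaches a deadlocked, hence $\nil$-bisimilar, state along any maximal trace), that $\preceq$ is compatible with $\parallel$ (reachability in a component lifts to reachability of the composite), and, crucially, that it is \emph{precompositional}: if $q \parallel r \trans[\varphi] s$ then $s \bis q' \parallel r'$ with $q$ reaching $q'$ and $r$ reaching $r'$. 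Well-foundedness is exactly where the earlier development pays off: by Lemma~\ref{Lem:finite-depth} every closed $\FCCS$ term has finite depth, and each transition strictly decreases depth, so no infinite $\preceq$-descending chain can exist.

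I expect the precompositionality condition to be the main obstacle. At the single-step level it is precisely the content of the SOS rules for $\parallel$ — every transition of $q \parallel r$ is an interleaving step $q' \parallel r$ or $q \parallel r'$, or a synchronisation step $q' \parallel r'$ — but promoting this single-step decomposition to the order-theoretic precompositionality statement over whole traces, and reconciling it cleanly with the congruence and finite-branching properties (Lemma~\ref{Lem:finite-branching}), is the delicate bookkeeping. Once the monoid structure and the decomposition order are in place, the conclusion is immediate from~\cite[Theorem~32]{LvO05}: the existence of a decomposition order entails the unique prime decomposition property, so each class is, up to $\bistext$, a parallel composition of primes in the sense of Definition~\ref{Def:prime}, and in exactly one way. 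This establishes Proposition~\ref{prop:unique_decomposition}.
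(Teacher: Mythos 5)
Your proposal follows exactly the paper's own route: the paper likewise observes that closed $\FCCS$ terms modulo $\bistext$ form a commutative monoid under $\parallel$ with unit $\nil$, that the transition relation induces a decomposition order on bisimilarity equivalence classes (with well-foundedness ultimately resting on the finiteness of depth, Lemma~\ref{Lem:finite-depth}), and then concludes by invoking~\cite[Theorem~32]{LvO05}. The only difference is that you spell out the verification of the decomposition-order axioms (antisymmetry, least element, compatibility, precompositionality), which the paper leaves implicit in its one-paragraph justification.
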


As we will see, this property will play a crucial role in some of the upcoming proofs.

%===================================================

\subsection{Decomposing the semantics of terms}
\label{sec:open}

In the proofs to follow, we shall sometimes need to establish a correspondence between the behaviour of open terms and the semantics of their closed instances, with a special focus on the role of variables.
In detail, we need to consider the possible origins of a transition of the form $\sigma(t)\mv{\alpha}p$, for some action $\alpha\in\{a,\bar{a}\}$, closed substitution $\sigma$, $\FCCS^-$ term $t$ and closed term $p$. 
In fact, the equational theory is defined over process terms, whereas the semantic properties can be verified only on their closed instances.

\begin{lemma}
\label{lem:substitution}
Let $\mu \in \Act$.
Then, for all $t,t'$ and substitutions $\sigma$, if $t \trans[\mu] t'$ then $\sigma(t) \trans[\mu] \sigma(t')$.
\end{lemma}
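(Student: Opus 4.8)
The plan is to prove the statement by structural induction on $t$ (equivalently, by induction on the derivation of the transition $t \trans[\mu] t'$), exploiting the fact that the operational semantics of $\FCCS$ is given entirely by rules in de Simone format. The crucial preliminary observation is that, over open terms, a variable is a \emph{stuck} state: no SOS rule has a bare variable as the head operator of its source, and hence $x \ntrans[\mu]$ for every variable $x$ and every action $\mu$. Consequently the cases $t = \nil$ and $t = x$ are vacuous, since neither admits an outgoing transition, and the implication holds trivially. For the prefix case $t = \mu'.s$, the only applicable rule forces $\mu = \mu'$ and $t' = s$, so that $\sigma(t) = \mu'.\sigma(s) \trans[\mu'] \sigma(s) = \sigma(t')$ directly, with no appeal to the induction hypothesis.

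The remaining cases are the genuinely inductive ones, and all follow the same pattern: inspect the rule applied at the root, apply the induction hypothesis to each premise, and reapply the very same rule with source $\sigma(t)$. If $t = s_1 + s_2$, the transition arises from $s_k \trans[\mu] t'$ for some $k \in \{1,2\}$; the induction hypothesis gives $\sigma(s_k) \trans[\mu] \sigma(t')$, and the rule for $+$ yields $\sigma(t) = \sigma(s_1) + \sigma(s_2) \trans[\mu] \sigma(t')$. The case $t = s_1 \mathbin{\|} s_2$ is handled uniformly across the three CCS rules for $\mathbin{\|}$: the two interleaving rules use a single premise $s_k \trans[\mu] s_k'$ with $t'$ of the form $s_1' \mathbin{\|} s_2$ or $s_1 \mathbin{\|} s_2'$, and the synchronisation rule uses both premises $s_1 \trans[\alpha] s_1'$ and $s_2 \trans[\bar{\alpha}] s_2'$ with $t' = s_1' \mathbin{\|} s_2'$. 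Applying the induction hypothesis to each premise and reapplying the same rule to $\sigma(s_1) \mathbin{\|} \sigma(s_2)$ delivers the conclusion, once one notes that $\sigma$ distributes over $\mathbin{\|}$, e.g.\ $\sigma(s_1' \mathbin{\|} s_2) = \sigma(s_1') \mathbin{\|} \sigma(s_2)$.

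The only case requiring a little care, which I expect to be the sole (still routine) obstacle, is $t = f(s_1,s_2)$, where the transition is justified by some de Simone rule for $f$ of the form~\eqref{deSimone}, with premises $\{x_i \trans[\mu_i] y_i \mid i \in I\}$, conclusion $f(x_1,x_2) \trans[\mu] w$, and target $w$. Instantiating the rule to derive $f(s_1,s_2) \trans[\mu] t'$ matches $x_i$ to $s_i$ and supplies witnessing transitions $s_i \trans[\mu_i] s_i'$ for $i \in I$, so that $t'$ is $w$ with $x_1,x_2$ replaced by $s_1,s_2$ and each $y_i$ replaced by $s_i'$. The induction hypothesis applied to the premises gives $\sigma(s_i) \trans[\mu_i] \sigma(s_i')$ for all $i \in I$, whence the same rule fires on $f(\sigma(s_1),\sigma(s_2))$ and produces the target $w$ with $x_1,x_2,y_i$ replaced by $\sigma(s_1),\sigma(s_2),\sigma(s_i')$. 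It then remains to verify that this target coincides with $\sigma(t')$, i.e.\ that $\sigma$ commutes with the instantiation of the rule target: $\sigma\bigl(w[s_1,s_2,s_i']\bigr) = w[\sigma(s_1),\sigma(s_2),\sigma(s_i')]$. This holds because $w$, as a rule template, is built solely from $\FCCS$ operator symbols applied to the rule variables $\{x_1,x_2,y_i\}$, so $\sigma$ leaves the operator structure of $w$ untouched and acts only on the $\FCCS$ terms plugged into its variable positions. The de Simone constraints on targets—each rule variable occurs at most once, and the $x_i$ with $i \in I$ do not reappear—make this substitution bookkeeping unambiguous, which completes the induction and hence the proof.
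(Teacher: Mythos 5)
Your proof is correct, and it is precisely the standard argument the paper has in mind: the paper states Lemma~\ref{lem:substitution} without any proof, treating it as a routine fact about SOS rules, so there is no alternative argument to compare against. Your induction handles every case properly, and you correctly isolate the one point that deserves explicit justification, namely that applying $\sigma$ commutes with instantiating the de Simone rule target $w$ (i.e.\ $\sigma(\rho(w)) = (\sigma \circ \rho)(w)$ for the instantiating substitution $\rho$, which suffices since $w$ contains only the rule variables).
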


However, a transition $\sigma(t) \trans[\mu] p$ may also derive from the initial behaviour of some closed term $\sigma(x)$, provided that the collection of initial moves of $\sigma(t)$ depends, in some formal sense, on that of the closed term substituted for the variable $x$. 
Roughly speaking, our aim is now to provide the conditions under which $\sigma(t) \trans[\mu] p$ can be inferred from $\sigma(x) \trans[\nu] q$, for some $\mu,\nu \in\Act$ and processes $p,q$.
As one might expect, in our setting the provability of transitions needs to be parametric with respect to the rules for $f$.

\begin{example}
\label{ex:open_running_1}
Consider the $\FCCS^-$ term $t = f(x,\tau)$.
Firstly, we notice that if $R^{\ff}_{\tau}$ holds then we can infer that $\sigma(t) \trans[\tau] \sigma(x) \| \nil$ for all closed substitutions $\sigma$.
Assume now that $\sigma(x) = a$.
Clearly, we can derive $\sigma(t) \trans[a] \nil \| \tau$ only if $L^{\ff}_{a}$ holds.
\end{example}

To fully describe this situation, for each $\mu \in \Act$, we introduce the auxiliary transition relation $\trans[]_{\mu}$ over open terms.
To this end, we present the notion of {\sl configuration} over $\FCCS^-$ terms, which stems from~\cite{AFIN06}.
Configurations are terms defined over a set of variables $\DVar= \{x_d \mid x \in \Var\}$, disjoint from $\Var$, and $\FCCS^-$ terms.
Intuitively, the symbol $x_d$ (read ``$x$ derivative'') will be used to denote that the closed term substituted for an occurrence of variable $x$ has begun its execution.

\begin{definition}
The collection of {\sl $\FCCS^-$ configurations} is given by the following grammar:
\[
c :: = t \bfmid x_d \bfmid c \mathbin{\|} t \bfmid t \mathbin{\|} c \enspace ,
\]
where $t$ is a $\FCCS^-$ term, and $x_d \in \Var_d$. 
\end{definition}

For example, the configuration $x_d \mathbin{\|} f(a,x)$ is meant to describe a state of the computation of some term in which the (closed term substituted for the) occurrence of variable $x$ on the left-hand side of the $\|$ operator has begun its execution, but the one on the right-hand side has not.

We introduce also special labels for the auxiliary transitions $\trans[]_{\mu}$, to keep track of which rules for $f$ are available, and thus which one triggered the move by the closed instance of $x$.
In detail, we let $x_{\sx}$ denote that the closed instance of $x$ is responsible for the transition when $L^{\ff}_{\mu}$ holds.
In case $R^{\ff}_{\mu}$ holds, we use $x_{\rr}$.
Finally, $x_{\bb}$ is used when $L^{\ff}_{\mu} \wedge R^{\ff}_{\mu}$ holds. 

The auxiliary transitions of the form $\trans[]_{\mu}$ are then formally defined via the inference rules in Table~\ref{tab:aux_rules}.

\begin{table}[t]
\[
\begin{array}{ll}
\scalebox{0.9}{$(a_1)$}\; \SOSrule{L^{\ff}_{\mu}}
{x \trans[x_{\sx}]_{\mu} x_{\dd}}
\qquad\qquad
\scalebox{0.9}{$(a_2)$}\; \SOSrule{R^{\ff}_{\mu}}
{x \trans[x_{\rr}]_{\mu} x_{\dd}}
&
\qquad\qquad
\scalebox{0.9}{$(a_3)$}\; \SOSrule{L^{\ff}_{\mu} \wedge R^{\ff}_{\mu}}
{x \trans[x_{\bb}]_{\mu} x_{\dd}}
\\
\scalebox{0.9}{$(a_4)$}\; \SOSrule{t_1 \trans[x_{\ww}]_{\mu} c}
{t_1 + t_2 \trans[x_{\ww}]_{\mu} c} \; \ww \in \{\sx,\rr,\bb\}
&
\qquad\qquad
\scalebox{0.9}{$(a_5)$}\; \SOSrule{t_2 \trans[x_{\ww}]_{\mu} c}
{t_1 + t_2 \trans[x_{\ww}]_{\mu} c} \; \ww \in \{\sx,\rr,\bb\}
\\
\scalebox{0.9}{$(a_6)$}\; \SOSrule{t_1 \trans[x_{\sx}]_{\mu} c}
{f(t_1,t_2) \trans[x_{\sx}]_{\mu} c \| t_2}
&
\qquad\qquad
\scalebox{0.9}{$(a_7)$}\; \SOSrule{t_2 \trans[x_{\rr}]_{\mu} c}
{f(t_1,t_2) \trans[x_{\rr}]_{\mu} t_1 \| c}
\\
\scalebox{0.9}{$(a_8)$}\; \SOSrule{t_1 \trans[x_{\bb}]_{\mu} c}
{f(t_1,t_2) \trans[x_{\bb}]_{\mu} c \| t_2}
&
\qquad\qquad
\scalebox{0.9}{$(a_9)$}\; \SOSrule{t_2 \trans[x_{\bb}]_{\mu} c}
{f(t_1,t_2) \trans[x_{\bb}]_{\mu} t_1 \| c}
\end{array}
\]
\caption{\label{tab:aux_rules} The auxiliary rules.}
\end{table}

\begin{example}
\label{ex:open_running_2}
Consider the term $t = f(x,\tau)$ from Example~\ref{ex:open_running_1}.
Assume, for instance, that $L^{\ff}_{a}$ holds, yielding the transition $x \trans[x_{\sx}]_a x_{\dd}$, due to rule ($a_1$).
Then, an application of rule ($a_6$) would give $f(x,\tau) \trans[x_{\sx}]_a x_{\dd} \| \tau$ with the following meaning: since the rules for $f$ allow $a$-moves of the first argument to yield $a$-moves of terms of the form $f(p,q)$, then an $a$-transition by (an instance of) variable $x$ occurring in the first argument of $f$ will induce an $a$-move of $f(x,\tau)$.

Conversely, assume that only $R^{\ff}_a$ holds.
Then, by applying rule ($a_2$) we obtain that $x \trans[x_{\rr}]_a x_{\dd}$ and, from the rules, it is not possible to derive any $\trans[]_a$ transition of $f(x,\tau)$ from that of $x$, modelling the fact that the rules for $f$ prevent the execution of $a$-moves from the first argument. 
\end{example}

The structure of the targets of the auxiliary rules in Table~\ref{tab:aux_rules} can be characterised modulo $\bistext$.

\begin{lemma}
\label{lem:aux_target}
Let $t$ be a $\FCCS^-$ term, $x$ be a variable and $\ww \in \{\sx,\rr,\bb\}$.
If $t \trans[x_{\ww}]_\mu c$ can be inferred from the rules in Table~\ref{tab:aux_rules}, for some action $\mu$, then $c \bis x_{\dd} \mathbin{\|} t'$ for some $\FCCS$ term $t'$.
\end{lemma}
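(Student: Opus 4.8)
The plan is to prove the statement by induction on the structure of the derivation of $t \trans[x_{\ww}]_\mu c$ from the rules in Table~\ref{tab:aux_rules}; since those rules are syntax-directed, this amounts to an induction on the structure of $t$. Throughout, I would rely on three facts, all of which lift from closed $\FCCS$ terms to configurations once each derivative variable $x_{\dd}$ is treated as an ordinary variable (so that $c_1 \bis c_2$ abbreviates $\rho(c_1) \bis \rho(c_2)$ for all closed substitutions $\rho$): bisimilarity is a congruence with respect to $\mathbin{\|}$, the operator $\mathbin{\|}$ is associative and commutative modulo $\bistext$, and $\nil$ is its unit modulo $\bistext$.

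For the base cases the transition is inferred by one of the axioms $(a_1)$, $(a_2)$, $(a_3)$. In each of these $t=x$ and $c = x_{\dd}$, so it suffices to take $t' = \nil$: since $\nil$ is the unit of $\mathbin{\|}$ modulo bisimilarity, $x_{\dd} \bis x_{\dd} \mathbin{\|} \nil$, which is exactly the required shape. For the inductive step I would split on the last rule applied. If it is $(a_4)$ or $(a_5)$, then $t = t_1 + t_2$ and the target $c$ coincides with the target of the premise $t_1 \trans[x_{\ww}]_\mu c$ (resp. $t_2 \trans[x_{\ww}]_\mu c$), so the conclusion follows at once from the induction hypothesis applied to that premise.

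The interesting cases are the rules for $f$. If the last rule is $(a_6)$ or $(a_8)$, then $t = f(t_1,t_2)$, the premise has the form $t_1 \trans[x_{\ww}]_\mu c_0$, and the target is $c = c_0 \mathbin{\|} t_2$. By the induction hypothesis $c_0 \bis x_{\dd} \mathbin{\|} t''$ for some $\FCCS$ term $t''$, and then congruence together with associativity of $\mathbin{\|}$ modulo $\bistext$ gives
\[
c = c_0 \mathbin{\|} t_2 \bis (x_{\dd} \mathbin{\|} t'') \mathbin{\|} t_2 \bis x_{\dd} \mathbin{\|} (t'' \mathbin{\|} t_2),
\]
so the claim holds with $t' = t'' \mathbin{\|} t_2$, which is again an $\FCCS$ term (it is built from the $\FCCS$ term $t''$ and the $\FCCS^-$ term $t_2$ using $\mathbin{\|}$). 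The cases of $(a_7)$ and $(a_9)$, where $t = f(t_1,t_2)$ and the target is $c = t_1 \mathbin{\|} c_0$ with premise on $t_2$, are symmetric: the induction hypothesis yields $c_0 \bis x_{\dd} \mathbin{\|} t''$, and commutativity and associativity of $\mathbin{\|}$ give $t_1 \mathbin{\|} c_0 \bis x_{\dd} \mathbin{\|} (t_1 \mathbin{\|} t'')$, so one takes $t' = t_1 \mathbin{\|} t''$.

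I do not expect a genuine obstacle: the argument is a routine structural induction, and the shape $x_{\dd} \mathbin{\|} t'$ is preserved precisely because every rule either leaves the target unchanged or composes it in parallel with an $\FCCS^-$ term. The one point deserving care is making explicit that the commutative-monoid laws for $\mathbin{\|}$ and the congruence property of $\bistext$, originally stated for closed $\FCCS$ terms, transfer to configurations by interpreting the derivative variables $x_{\dd}$ as ordinary variables; this is what legitimises the rewriting steps displayed above.
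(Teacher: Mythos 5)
Your proof is correct and follows exactly the route the paper takes: the paper's proof is simply ``by induction on the derivation of $t \trans[x_{\ww}]_\mu c$'', which is precisely the structural induction you carry out (base cases $(a_1)$--$(a_3)$ with $t'=\nil$, choice rules preserving the target, and the $f$-rules handled via congruence plus the commutative-monoid laws of $\mathbin{\|}$ modulo $\bistext$). Your explicit remark about interpreting $x_{\dd}$ as an ordinary variable so that these laws transfer to configurations is a detail the paper leaves implicit, but it does not change the argument.
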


%---------------------------------------------------------
\begin{proof}
The proof follows by induction on the derivation of $t \trans[x_\ww]_\mu c$.
\end{proof}
%---------------------------------------------------------

Lemmas~\ref{lem:o2c} and~\ref{lem:c2o} below formalise the decomposition of the semantics of $\FCCS^-$ terms.
We recall that, due to Lemma~\ref{Lem:asyncrules}, at least one between $L^{\ff}_{\mu}$ and $R^{\ff}_{\mu}$ holds for each $\mu$.

\begin{lemma}
\label{lem:o2c}
Let $\mu \in \Act$, $t$ be a $\FCCS^-$ term, $x$ be a variable, $\ww \in \{\sx,\rr,\bb\}$ and $\sigma$ be a closed substitution.
If $\sigma(x)\trans[\mu] p$ for some process $p$, and $t \trans[x_{\ww}]_{\mu} c$ for some configuration $c$, then $\sigma(t) \trans[\mu] \sigma[x_{\dd} \mapsto p](c)$. 
\end{lemma}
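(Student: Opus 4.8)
The plan is to prove the statement by induction on the derivation of the auxiliary transition $t \trans[x_{\ww}]_{\mu} c$ from the inference rules $(a_1)$--$(a_9)$ in Table~\ref{tab:aux_rules}. The hypothesis $\sigma(x) \trans[\mu] p$ will be consumed exactly once, at the leaf of the derivation where the chosen occurrence of $x$ begins its execution; all the remaining structure is carried through by congruence. A useful preliminary observation is that the label $\ww \in \{\sx,\rr,\bb\}$ records \emph{which} predicate held at the base of the derivation: a transition with label $\sx$ can only originate from an application of $(a_1)$, hence $L^{\ff}_{\mu}$ holds; similarly $\rr$ forces $R^{\ff}_{\mu}$, and $\bb$ forces $L^{\ff}_{\mu} \wedge R^{\ff}_{\mu}$. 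This is precisely the information needed to guarantee the existence of the corresponding rule for $f$ in the inductive steps.

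For the base cases $(a_1)$, $(a_2)$ and $(a_3)$ we have $t = x$ and $c = x_{\dd}$. Then $\sigma(t) = \sigma(x)$ and $\sigma[x_{\dd}\mapsto p](c) = \sigma[x_{\dd}\mapsto p](x_{\dd}) = p$, so the required transition $\sigma(t) \trans[\mu] \sigma[x_{\dd}\mapsto p](c)$ is literally the hypothesis $\sigma(x) \trans[\mu] p$. For the inductive cases $(a_4)$ and $(a_5)$, we have $t = t_1 + t_2$ with (say, in $(a_4)$) a premise $t_1 \trans[x_{\ww}]_{\mu} c$; the induction hypothesis gives $\sigma(t_1) \trans[\mu] \sigma[x_{\dd}\mapsto p](c)$, and the SOS rule for the left summand of $+$ lifts this to $\sigma(t) = \sigma(t_1) + \sigma(t_2) \trans[\mu] \sigma[x_{\dd}\mapsto p](c)$, as required; case $(a_5)$ is symmetric.

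For the remaining inductive cases $(a_6)$--$(a_9)$ we have $t = f(t_1,t_2)$. Consider $(a_6)$: here $\ww = \sx$, the premise is $t_1 \trans[x_{\sx}]_{\mu} c'$ and the target is $c = c' \| t_2$. Since $\ww = \sx$, the predicate $L^{\ff}_{\mu}$ holds, so $f$ has a rule of the form~\eqref{asyncruleleft} with label $\mu$, whose target we have fixed to be $y_1 \| x_2$. The induction hypothesis yields $\sigma(t_1) \trans[\mu] \sigma[x_{\dd}\mapsto p](c')$, and applying that rule for $f$ gives $\sigma(t) = f(\sigma(t_1),\sigma(t_2)) \trans[\mu] \sigma[x_{\dd}\mapsto p](c') \| \sigma(t_2)$. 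Case $(a_7)$ is handled in the same way using rule~\eqref{asyncruleright} (the label $\rr$ guaranteeing $R^{\ff}_{\mu}$), and cases $(a_8)$, $(a_9)$ likewise, with the label $\bb$ ensuring $L^{\ff}_{\mu} \wedge R^{\ff}_{\mu}$ so that both rules are available and the appropriate one can be selected.

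The only point requiring care---and the one I expect to be the main obstacle, albeit a purely bookkeeping one---is verifying that the target produced by the $f$-rule coincides with $\sigma[x_{\dd}\mapsto p](c)$. For $(a_6)$ this amounts to the identity $\sigma[x_{\dd}\mapsto p](c' \| t_2) = \sigma[x_{\dd}\mapsto p](c') \| \sigma(t_2)$, which holds because $t_2$ is a genuine $\FCCS^-$ term and therefore contains no derivative variable, so $\sigma[x_{\dd}\mapsto p]$ agrees with $\sigma$ on $t_2$. The same observation---that $\FCCS^-$ subterms are unaffected by the $[x_{\dd}\mapsto p]$ update, whereas the configuration part carries the single occurrence of $x_{\dd}$---settles the analogous identities for $(a_7)$--$(a_9)$ and completes the induction.
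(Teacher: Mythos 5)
Your proof is correct and takes essentially the same route as the paper, which simply states that the result follows by induction on the structure of $t$ and on the derivation of $t \trans[x_{\ww}]_{\mu} c$ without spelling out the cases. Your fleshed-out argument---including the observation that the label $\ww$ certifies the predicate ($L^{\ff}_{\mu}$, $R^{\ff}_{\mu}$, or both) needed for the corresponding rule of $f$ to exist, and the bookkeeping that $\FCCS^-$ subterms are unaffected by the $[x_{\dd}\mapsto p]$ update---is exactly the intended proof.
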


%---------------------------------------------------------------------
\begin{proof}
The proof follows by induction on the structure of $t$ and the derivation of $t \trans[x_{\ww}]_{\mu} c$.
\end{proof}
%--------------------------------------------------------------------

\begin{lemma}
\label{lem:c2o}
Let $\alpha \in \{a,\bar{a}\}$, $t$ be a $\FCCS^-$ term, $\sigma$ be a closed substitution and $p$ be a closed term.
Whenever $\sigma(t) \trans[\alpha] p$, then one of the following holds:
\begin{enumerate}
\item \label{lem:c2o_prefix}
There is term $t'$ such that $t \trans[\alpha] t'$ and $\sigma(t') = p$.
\item There are a variable $x$, a process $q$ and a configuration $c$ such that:
\begin{enumerate}
\item \label{lem:c2o_l}
only $L^{\ff}_{\alpha}$ holds, $\sigma(x) \trans[\alpha] q$, $t \trans[x_{\sx}]_{\alpha} c$ and $\sigma[x_{\dd} \mapsto q] (c) = p$;
\item \label{lem:c2o_r}
only $R^{\ff}_{\alpha}$ holds, $\sigma(x) \trans[\alpha] q$, $t \trans[x_{\rr}]_{\alpha} c$ and $\sigma[x_{\dd} \mapsto q] (c) = p$; or
\item \label{lem:c2o_b}
$L^{\ff}_{\alpha} \wedge R^{\ff}_{\alpha}$ holds, $\sigma(x) \trans[\alpha] q$, $t \trans[x_{\bb}]_{\alpha} c$ and $\sigma[x_{\dd} \mapsto q] (c) = p$.
\end{enumerate}
\end{enumerate}
\end{lemma}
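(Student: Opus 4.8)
The plan is to proceed by structural induction on the $\FCCS^-$ term $t$, analysing in each case the last SOS rule used to derive the transition $\sigma(t)\trans[\alpha]p$. The crucial preliminary observation, and what makes the restriction $\alpha\in\{a,\bar a\}$ essential, is that by Lemma~\ref{Lem:frules}(\ref{alphaact}) every rule for $f$ whose label lies in $\{a,\bar a\}$ has a single premise $x_i\trans[\alpha]y_i$; hence no synchronisation rule of the form~\eqref{syncrule} can ever contribute, and the only rules that can justify an $\alpha$-move of a term $f(\sigma(t_1),\sigma(t_2))$ are the asynchronous rules \eqref{asyncruleleft} (present exactly when $L^{\ff}_{\alpha}$ holds) and \eqref{asyncruleright} (present exactly when $R^{\ff}_{\alpha}$ holds). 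By Lemma~\ref{Lem:asyncrules}, at least one of $L^{\ff}_{\alpha}$ and $R^{\ff}_{\alpha}$ holds, so the trichotomy ``only $L^{\ff}_{\alpha}$ / only $R^{\ff}_{\alpha}$ / both'' appearing in the statement is exhaustive; note that it is a \emph{global} property of $f$ and $\alpha$, and so remains fixed throughout the whole induction.

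The base and intermediate cases are routine. If $t=\nil$ the hypothesis $\sigma(t)\trans[\alpha]p$ is vacuous. If $t=x$ is a variable, then $\sigma(t)=\sigma(x)\trans[\alpha]p$, and since $x$ itself affords no transition, option~1 cannot apply; instead I read off the appropriate auxiliary move from rules $(a_1)$--$(a_3)$, taking $q=p$ and $c=x_{\dd}$ so that $x\trans[x_{\sx}]_{\alpha}x_{\dd}$, $x\trans[x_{\rr}]_{\alpha}x_{\dd}$, or $x\trans[x_{\bb}]_{\alpha}x_{\dd}$ according to which predicates hold, with $\sigma[x_{\dd}\mapsto p](x_{\dd})=p$, giving the matching subcase of option~2. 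If $t=\mu.t_1$, the only move of $\sigma(t)$ is $\mu.\sigma(t_1)\trans[\mu]\sigma(t_1)$, forcing $\mu=\alpha$ and $p=\sigma(t_1)$, so option~1 holds with $t'=t_1$. If $t=t_1+t_2$, then $\sigma(t)\trans[\alpha]p$ is inherited from a summand, say $\sigma(t_1)\trans[\alpha]p$; applying the induction hypothesis to $t_1$ either yields option~1, which lifts to $t_1+t_2\trans[\alpha]t_1'$, or yields option~2, which lifts through rule $(a_4)$ (and symmetrically $(a_5)$ when the move comes from $t_2$) to the same variable, process, and configuration.

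The substantive case is $t=f(t_1,t_2)$, where I use the preliminary observation. Suppose first the firing rule is the left asynchronous rule~\eqref{asyncruleleft}; then $\sigma(t_1)\trans[\alpha]p_1$ for some $p_1$ and $p=p_1\mathbin{\|}\sigma(t_2)$ by the convention that its target is $y_1\mathbin{\|}x_2$. I apply the induction hypothesis to $t_1$ with the transition $\sigma(t_1)\trans[\alpha]p_1$. If option~1 holds for $t_1$, say $t_1\trans[\alpha]t_1'$ with $\sigma(t_1')=p_1$, then applying \eqref{asyncruleleft} to the open term gives $f(t_1,t_2)\trans[\alpha]t_1'\mathbin{\|}t_2$ with $\sigma(t_1'\mathbin{\|}t_2)=p_1\mathbin{\|}\sigma(t_2)=p$, i.e.\ option~1 for $t$. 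If instead option~2 holds for $t_1$, then—because the global trichotomy is fixed—it is exactly the subcase matching the predicates in force, supplying $x$, $q$, and a configuration $c'$ with $t_1\trans[x_{\sx}]_{\alpha}c'$ (respectively $x_{\bb}$ when both predicates hold) and $\sigma[x_{\dd}\mapsto q](c')=p_1$; rule $(a_6)$ (respectively $(a_8)$) then yields $f(t_1,t_2)\trans[x_{\sx}]_{\alpha}c'\mathbin{\|}t_2$, and setting $c=c'\mathbin{\|}t_2$ I obtain, using that $t_2$ is a $\FCCS^-$ term and hence carries no derivative variable, $\sigma[x_{\dd}\mapsto q](c)=p_1\mathbin{\|}\sigma(t_2)=p$. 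The case where the firing rule is the right asynchronous rule~\eqref{asyncruleright} is entirely symmetric, using rules $(a_7)$ and $(a_9)$ and the target $x_1\mathbin{\|}y_2$.

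The main obstacle I anticipate is the bookkeeping in this last case: one must ensure that the label ($x_{\sx}$, $x_{\rr}$, or $x_{\bb}$) produced by the induction hypothesis is precisely the one demanded as a side condition by the auxiliary rule used to lift it, since $(a_6)$ requires $x_{\sx}$, $(a_7)$ requires $x_{\rr}$, and $(a_8)$/$(a_9)$ require $x_{\bb}$. This consistency is guaranteed because $L^{\ff}_{\alpha}$ and $R^{\ff}_{\alpha}$ are fixed throughout, so the same subcase of option~2 is in force at every level of the recursion; and because for $\alpha\in\{a,\bar a\}$ only asynchronous rules can fire, I never have to reconcile a move of $\sigma(t)$ against a synchronisation rule. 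The remaining verifications—that $\sigma[x_{\dd}\mapsto q](c)=p$ and that $\sigma(t')=p$—are then immediate, once one notes that substitution commutes with $\mathbin{\|}$ and that subterms of a $\FCCS^-$ term contain no derivative variables.
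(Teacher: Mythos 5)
Your proof is correct and follows essentially the same route as the paper's: structural induction on $t$ with the substantive case $t=f(t_1,t_2)$, a case split according to which of $L^{\ff}_{\alpha}$ and $R^{\ff}_{\alpha}$ hold, an appeal to the induction hypothesis on the argument that fired, and lifting of the result through the asynchronous rule or the matching auxiliary rule $(a_6)$--$(a_9)$. The only difference is presentational: you spell out the routine cases and make explicit (via Lemma~\ref{Lem:frules}(\ref{alphaact})) why synchronisation rules cannot contribute for $\alpha\in\{a,\bar a\}$, points the paper leaves implicit.
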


%------------------------------------------------------
\begin{proof}
The proof is by induction on the structure of $t$.
The interesting case is the inductive step corresponding to $t = f(t_1,t_2)$, which we expand below.
According to which rules are available for $f$ with respect to $\alpha$, we can distinguish three cases:
\begin{enumerate}
\item {\sc Case only $L^{\ff}_{\alpha}$ holds.} 
Then, $f(\sigma(t_1),\sigma(t_2)) \trans[\alpha] p$ can be inferred only from a transition of the form $\sigma(t_1) \trans[\alpha] p'$ for some closed term $p'$ with $p = p' \| \sigma(t_2)$.
By induction over the derivation of $\sigma(t_1) \trans[\alpha] p'$, and considering that only $L^{\ff}_{\alpha}$ holds, we can then distinguish two cases:
\begin{itemize}
\item There is a term $t_1'$ such that $t_1 \trans[\alpha] t_1'$ and $\sigma(t_1') = p'$.
As $f$ has the rule of the form~\eqref{asyncruleleft} for $\alpha$ we can immediately infer that $t \trans[\alpha] t_1' \| t_2$.
Hence, by letting $t' = t_1' \| t_2$, we obtain $t \trans[\alpha] t'$ and $\sigma(t') = p$.
\item There are a variable $x$, a closed term $q$ and a configuration $c_1$ such that $\sigma(x) \trans[\alpha] q$, $t_1 \trans[x_{\sx}]_{\alpha} c_1$ with $\sigma[x_{\dd} \mapsto q](c_1) = p'$.
Hence, by applying the auxiliary rule $(a_6)$ we can infer that $f(t_1,t_2) \trans[x_{\sx}]_{\alpha} c_1 \| t_2$ and moreover, since $x_{\dd}$ may occur only in $c_1$, we have $p = p' \| \sigma(t_2) = \sigma[x_{\dd} \mapsto q](c_1 \| t_2)$.
\end{itemize} 
%%%

\item {\sc Case only $R^{\ff}_{\alpha}$ holds.}
This case is analogous to the previous one (it is enough to switch the roles of $t_1$ and $t_2$ and consider $x_{\rr}$ in place of $x_{\sx}$) and therefore omitted.
%%%

\item {\sc Case $L^{\ff}_{\alpha} \wedge R^{\ff}_{\alpha}$ holds.}
This case follows by noticing that $t \trans[x_{\bb}]_{\alpha}$ can be inferred from both $t_1 \trans[x_{\bb}]_{\alpha}$ and $t_2 \trans[x_{\bb}]_{\alpha}$, and therefore the follows from the structure of the previous two cases, using rules ($a_8$) and ($a_9$).
\end{enumerate}
\end{proof}
%---------------------------------------------------------

Next, we proceed to a more detailed analysis of the contribution of variables to the behaviour of closed instantiations of terms in which they occur.

\begin{lemma}
\label{lem:variables}
Let $t$ be a term in $\FCCS^-$, $\sigma$ be a closed substitution and $\alpha \in \{a,\bar{a}\}$.
Assume that 
$
\sigma(t) \bis \sum_{i = 1}^n \alpha.p_i + q
$
for some $n$ greater than the size of $t$ and closed terms $p_i,q$ with $p_i \nbis  p_j$ whenever $i \neq j$.
Then $t$ has a summand $x$, for some variable $x$, such that
$
\sigma(x) \bis \sum_{j \in J} \alpha.q_j + q'
$
for some $J \subseteq \{1,\dots,n\}$, with $|J| \ge 2$, some process $q'$, and processes $q_j$ such that:
\begin{itemize}
\item $q_j \nbis q_k$ whenever $j \neq k$.
\item Either $q_j \bis p_j$, for each $j \in J$, or there is a process $r$ such that $p_j \bis q_j \mathbin{\|} r$, for each $j \in J$.
\end{itemize}
\end{lemma}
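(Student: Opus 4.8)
The plan is a pigeonhole argument on the origins of the $\alpha$-transitions of $\sigma(t)$, exploiting the transition decomposition in Lemma~\ref{lem:c2o} and the shape of the auxiliary targets in Lemma~\ref{lem:aux_target}. From $\sigma(t) \bis \sum_{i=1}^n \alpha.p_i + q$ and the moves $\alpha.p_i \trans[\alpha] p_i$, bisimilarity yields, for each $i \in \{1,\dots,n\}$, a transition $\sigma(t) \trans[\alpha] p_i'$ with $p_i' \bis p_i$; since the $p_i$ are pairwise non-bisimilar, so are the $p_i'$, and hence $\sigma(t)$ has at least $n$ pairwise non-bisimilar $\alpha$-derivatives. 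By Lemma~\ref{lem:c2o}, each transition $\sigma(t) \trans[\alpha] p_i'$ has one of two origins: it is \emph{structural}, coming from some $t \trans[\alpha] t_i'$ with $\sigma(t_i') = p_i'$ and hence traceable to a fixed occurrence of an $\alpha$-prefix in $t$; or it is \emph{variable-induced}, coming from a variable $x$ with $\sigma(x) \trans[\alpha] q$, $t \trans[x_{\ww}]_{\alpha} c$ and $\sigma[x_{\dd} \mapsto q](c) = p_i'$, and hence traceable to a fixed top-level occurrence of $x$ in $t$.

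Next comes the counting. Distinct structural origins correspond to distinct $\alpha$-prefix occurrences of $t$, and distinct variable-induced origins correspond to distinct top-level variable occurrences of $t$; these are disjoint families of symbol occurrences, so the number of origins actually used is at most the size of $t$, which is strictly less than $n$. Moreover a single structural origin produces the single derivative $\sigma(t_i')$, which --- since the $p_i$ are pairwise distinct --- can match at most one index $i$; by contrast, one variable occurrence may produce several derivatives, one for each $\alpha$-derivative of $\sigma(x)$. Writing $P$ and $X$ for the numbers of structural origins and variable occurrences used, their sum is at most the size of $t$ and hence strictly less than $n$, while the structural origins account for at most $P$ of the indices. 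Were every variable occurrence to account for at most one index, the variable-induced indices would number at most $X$, so that $n$ would be at most $P + X$, hence at most the size of $t$ --- a contradiction. Hence some single variable occurrence $x$, through one fixed auxiliary transition $t \trans[x_{\ww}]_{\alpha} c$, accounts for at least two indices; collecting them into $J$ gives $|J| \ge 2$, together with $\alpha$-derivatives $q_j$ of $\sigma(x)$ satisfying $\sigma[x_{\dd} \mapsto q_j](c) \bis p_j$ for all $j \in J$.

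Finally I would read off the conclusion from the fixed context $c$. By Lemma~\ref{lem:aux_target}, $c \bis x_{\dd} \| t'$ for a single term $t'$ independent of $j$, so setting $r = \sigma(t')$ yields $p_j \bis q_j \| r$ for every $j \in J$. If the chosen occurrence is a genuine summand of $t$, the auxiliary derivation uses only rules $(a_1)$--$(a_5)$, whence $c = x_{\dd}$, $t' \bis \nil$ and $r \bis \nil$, giving the first alternative $q_j \bis p_j$; otherwise the occurrence sits inside an $f$-context, giving the second alternative $p_j \bis q_j \| r$. The derivatives are pairwise distinct, for if $q_j \bis q_k$ then, as $\|$ is a congruence, $p_j \bis q_j \| r \bis q_k \| r \bis p_k$, forcing $j = k$. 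Since the $q_j$ are pairwise distinct $\alpha$-derivatives of $\sigma(x)$, expressing $\sigma(x)$ in synchronisation-tree form (Remark~\ref{Rem:tree-equivalence}) lets me isolate the branches $\sum_{j \in J} \alpha.q_j$ and gather the remaining summands into a process $q'$, which establishes $\sigma(x) \bis \sum_{j \in J} \alpha.q_j + q'$.

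The delicate point --- and the place where the auxiliary machinery earns its keep --- is the counting step: one must ensure both that distinct structural origins cannot be conflated (so each accounts for only one of the pairwise-distinct targets) and that a single variable occurrence is attached to a single context $c$ up to $\bistext$, despite the $\sx/\rr/\bb$ labelling of the rules in Table~\ref{tab:aux_rules}. The latter is precisely what guarantees that the parallel remainder $r$ is the same for every $j \in J$, and it is supplied by Lemma~\ref{lem:aux_target}; one should also note that the phrase ``summand $x$'' is to be read as the selected top-level occurrence of $x$, which is a genuine summand in the first alternative and lies within an $f$-context in the second.
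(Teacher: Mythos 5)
Your proof is correct and follows essentially the same route as the paper's: both obtain $n$ pairwise non-bisimilar $\alpha$-derivatives of $\sigma(t)$, decompose their origins via Lemma~\ref{lem:c2o}, apply a pigeonhole count against the size of $t$ to find a single variable occurrence responsible for at least two of them, and then use Lemma~\ref{lem:aux_target} to extract the common parallel remainder $r = \sigma(t')$ (with $r \bis \nil$ giving the first alternative). If anything, your write-up is slightly more explicit than the paper's on the counting step, on the pairwise distinctness of the $q_j$, and on reading ``summand $x$'' as an unguarded top-level occurrence of $x$.
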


%----------------------------------------------------------------------
\begin{proof}
For simplicity of notation let $I = \{1,\dots,n\}$. 
Since there is a transition $\sum_{i \in I} \alpha.p_i + q \trans[\alpha] p_i$ for each $i \in I$, from $\sigma(t) \bis \sum_{i \in I} \alpha.p_i + q$ we get that $\sigma(t) \trans[\alpha] r_i$ with $r_i \bis p_i$, for all $i \in I$.
Since $n$ is greater than the size of $t$, we infer that Lemma~\ref{lem:c2o}.\ref{lem:c2o_prefix} can be applied only to $m$ such transitions, for some $m <n$, so that there are an index set $H \subset I$ (possibly empty) and $\FCCS$ terms $t_h$, for $h \in H$ such that $|H| = m$, $t \trans[\alpha] t_h$ and $\sigma(t_h) \bis p_h$.
Notice that since $p_i \nbis p_j$ for $i \neq j$ we get that the $t_h$ are pairwise distinct.
Let $J = I \setminus H$. 
For the remaining $\alpha$-transitions $\sigma(t) \trans[\alpha] r_j$ for $j \in J$ we have that one among cases~\ref{lem:c2o_l}--\ref{lem:c2o_b} of Lemma~\ref{lem:c2o} applies, according to which rules are available for $f$ with respect to action $\alpha$.
Hence, we have that, for each $j \in J$ there are a variable $x_j$, a closed term $q_j$ and a configuration $c_j$ such that $\sigma(x_j) \trans[\alpha] q_j$, $t \trans[x_{j,\ww}]_{\alpha} c_j$ and $\sigma[x_{j,\dd} \mapsto q_j](c_j) =r_j$, where $\ww \in \{\sx,\rr,\bb\}$ depends on the rules for $f$.
Since $n$ is greater than the size of $t$ there cannot be more than $|J|-1$ distinct variables $x_j$ occurring in $t$ and causing such $\alpha$-moves (actually the constraint is also on the number of \emph{distinct occurrences} of variables in $t$).
Hence, there is at least one variable $x \in \var(t)$ such that $\sigma(x) \bis \alpha.q_{j_1} + \alpha.q_{j_2} + q'$ for some $j_1 \neq j_2 \in J$, $q_{j_1} \nbis q_{j_2}$, and closed term $q'$.  

Let $c$ be the configuration such that $t \trans[x_{\ww}]_{\alpha} c$.
By Lemma~\ref{lem:aux_target} we have that, $c \bis x_{\dd} \mathbin{\|} t'$, for some term $t'$.
Moreover from the analysis carried out above, we have that $\sigma[x_{\dd} \mapsto q_{j_1}](c) \bis q_{j_1} \mathbin{\|} \sigma(t') \bis p_{j_1}$, and $\sigma[x_{\dd} \mapsto q_{j_2}](c) \bis q_{j_2} \mathbin{\|} \sigma(t') \bis p_{j_2}$.
In particular, if $\sigma(t') \bis \nil$ we get that $q_{j_1} \bis p_{j_1}$ and $q_{j_2} \bis p_{j_2}$.
\end{proof}
%---------------------------------------------------------

The next result shows a particular case of Lemma~\ref{lem:variables}, in which we can infer that, provided the term $t$ has only one summand and has neither $\nil$ summands nor factors, not only is a variable $x$ responsible for the additional behaviour of $t$, but that $t$ coincides with $x$. 

\begin{lemma}
\label{Lem:vh-claim_gen}
Let $t$ be a term in $\FCCS^-$ that does not have $+$ as head operator, and let $\sigma$ be a closed substitution. 
Let $\alpha \in \{a,\bar{a}\}$ and $\mu \in \Act$ with $\alpha \neq \mu$.
Assume that $\sigma(t)$ has neither $\nil$ summands nor factors, and that
$
\sigma(t) \bis \alpha.\mu^{\scriptstyle \le i_1} + \cdots + \alpha.\mu^{\scriptstyle \le i_m},
$
for some $m>1$ and $1\le i_1 < \ldots < i_m$. 
Then $t=x$, for some variable $x$.
\end{lemma}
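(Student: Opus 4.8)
The plan is to argue by a case analysis on the head operator of $t$. Since $t$ does not have $+$ as head operator and lives in $\FCCS^-$ (so $\mathbin{\|}$ is unavailable), $t$ is either $\nil$, a variable $x$, a prefix $\mu'.t'$, or a term $f(t_1,t_2)$, and I will show that only the second option survives. Throughout, write $P = \alpha.\mu^{\scriptstyle \leq i_1} + \cdots + \alpha.\mu^{\scriptstyle \leq i_m}$ for the right-hand side and record its two crucial features: every initial transition of $P$ is labelled $\alpha$, and $P$ has exactly $m$ distinct $\alpha$-derivatives up to $\bistext$, namely $\mu^{\scriptstyle \leq i_1},\dots,\mu^{\scriptstyle \leq i_m}$, which are pairwise non-bisimilar since they have pairwise distinct depths and each of which is prime by Lemma~\ref{lem:rhs-prime_1}(\ref{claim:alpha<=n}). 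As $\sigma(t)\bis P$, the term $\sigma(t)$ inherits these features: all its initial moves are $\alpha$-moves, and it has exactly $m\ge 2$ distinct $\alpha$-derivatives up to $\bistext$, each bisimilar to one of the primes $\mu^{\scriptstyle \leq i_j}$. The constant and prefix cases are then dispatched immediately: if $t=\nil$ then $\sigma(t)=\nil$ has no transitions, contradicting $\sigma(t)\bis P$; and if $t=\mu'.t'$, then $\sigma(t)=\mu'.\sigma(t')$ has a single outgoing transition and hence at most one $\alpha$-derivative up to $\bistext$, clashing with $m\ge 2$.

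The real work, and the step I expect to be the main obstacle, is ruling out $t=f(t_1,t_2)$. Here I will use the hypothesis that $\sigma(t)$ has no $\nil$ factor, which by Definition~\ref{Lem:nil-factors} gives $\sigma(t_1)\nbis\nil$ and $\sigma(t_2)\nbis\nil$. By Propositions~\ref{prop:new} and~\ref{prop:enough-rules} every rule for $f$ has the form (\ref{syncrule}), (\ref{asyncruleleft}) or (\ref{asyncruleright}), and only (\ref{syncrule}) carries a $\tau$-label; hence an $\alpha$-transition of $\sigma(t)=f(\sigma(t_1),\sigma(t_2))$ can only be produced by a rule of type (\ref{asyncruleleft}), giving $\sigma(t)\trans[\alpha] p_1'\mathbin{\|}\sigma(t_2)$ with $\sigma(t_1)\trans[\alpha]p_1'$, or by a rule of type (\ref{asyncruleright}), giving $\sigma(t)\trans[\alpha]\sigma(t_1)\mathbin{\|}p_2'$ with $\sigma(t_2)\trans[\alpha]p_2'$. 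Each such derivative is bisimilar to some prime $\mu^{\scriptstyle \leq i_j}$; since that process is irreducible (Definition~\ref{Def:prime}) and neither $\sigma(t_1)$ nor $\sigma(t_2)$ is bisimilar to $\nil$, irreducibility forces the freshly created factor to vanish, i.e.\ $p_1'\bis\nil$ in the first case and $p_2'\bis\nil$ in the second. Consequently every $\alpha$-derivative of $\sigma(t)$ is bisimilar either to $\sigma(t_2)$ (via a rule of type (\ref{asyncruleleft})) or to $\sigma(t_1)$ (via a rule of type (\ref{asyncruleright})).

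This caps the number of $\alpha$-derivatives of $\sigma(t)$ up to $\bistext$ at two, so the requirement that $\sigma(t)$ exhibit $m\ge 2$ distinct $\alpha$-derivatives forces $m=2$ with both classes $[\sigma(t_1)]$ and $[\sigma(t_2)]$ realised and distinct. In particular $[\sigma(t_1)]$ is realised, so $\sigma(t_1)\bis\mu^{\scriptstyle \leq i_{j'}}$ for some $j'$, whose initial actions are all $\mu$; yet realising $[\sigma(t_2)]$ via a rule of type (\ref{asyncruleleft}) requires an $\alpha$-transition $\sigma(t_1)\trans[\alpha]p_1'$. Since $\alpha\neq\mu$, this is the desired contradiction, excluding $t=f(t_1,t_2)$. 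With the constant, prefix and $f$ cases all ruled out, the only remaining possibility is $t=x$ for some variable $x$, as claimed. The delicate point is the bookkeeping in the $f$-case: one must check that the retained argument cannot itself contribute additional $\alpha$-derivatives, which is precisely where primality of the $\mu^{\scriptstyle \leq i_j}$ and the mismatch $\alpha\neq\mu$ do the work.
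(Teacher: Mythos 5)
Your proof is correct, and its skeleton is the same as the paper's: a case analysis on the head operator of $t$, with the constant and prefix cases dispatched by elementary transition counting, and the critical case $t=f(t_1,t_2)$ handled by combining the absence of $\nil$ factors with primality of the processes $\mu^{\scriptstyle \le i_j}$ (Lemma~\ref{lem:rhs-prime_1}) to force the freshly created parallel factor to vanish. Where you genuinely diverge is in how the $f$-case contradiction is closed. The paper fixes the rule type producing the transition that matches $\mu^{\scriptstyle \le i_m}$ --- say a rule of type (\ref{asyncruleleft}) --- deduces $\sigma(t_2)\bis\mu^{\scriptstyle \le i_m}$, then repeats the argument for $\mu^{\scriptstyle \le i_1}$ to get $\sigma(t_2)\bis\mu^{\scriptstyle \le i_1}$, contradicting the distinct depths (and argues symmetrically for rules of type (\ref{asyncruleright})). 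You instead bound the set of $\alpha$-derivatives of $f(\sigma(t_1),\sigma(t_2))$, up to $\bistext$, by $\{[\sigma(t_1)],[\sigma(t_2)]\}$, conclude $m=2$ with both classes realised and distinct, and derive the contradiction from an action mismatch: realising $[\sigma(t_1)]$ via a rule of type (\ref{asyncruleright}) forces $\sigma(t_1)\bis\mu^{\scriptstyle \le i_j}$, whose initials are all $\mu$, while realising $[\sigma(t_2)]$ via a rule of type (\ref{asyncruleleft}) requires $\sigma(t_1)\trans[\alpha]$, impossible since $\alpha\neq\mu$. A merit of your bookkeeping is that it uniformly covers the situation in which both $L^{\ff}_{\alpha}$ and $R^{\ff}_{\alpha}$ hold and the two matched transitions stem from \emph{different} arguments of $f$; the paper's ``similar reasoning'' step tacitly assumes both matches come from the same argument, and excluding the mixed case needs exactly the $\alpha\neq\mu$ observation you make explicit. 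One minor citation point: the fact that every rule for $f$ has one of the forms (\ref{syncrule}), (\ref{asyncruleleft}), (\ref{asyncruleright}) is not established by Propositions~\ref{prop:new} and~\ref{prop:enough-rules} (the latter assumes it); it is the standing convention in force from Lemmas~\ref{Lem:syncrules} and~\ref{Lem:asyncrules} onwards, as also used, e.g., in Lemma~\ref{Lem:finite-branching}, so your argument is fine in context but should point there instead.
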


%-----------------------------------------------------
\begin{proof}
Assume, towards a contradiction, that $t$ is not a variable. 
We proceed by a case analysis on the possible form this term may have.
\begin{enumerate}
\item 
\label{Case:t-pre-vh_gen} 
{\sc Case $t = \nu.t'$ for some term $t'$}. 
Then $\nu=\alpha$ and $\mu^{\scriptstyle \le i_1} \bis \sigma(t') \bis \mu^{\scriptstyle \le i_m}$. 
However, this is a contradiction because, since ${i_1} < {i_m}$, the terms $\mu^{\scriptstyle \le i_1}$ and $\mu^{\scriptstyle \le i_m}$ have different depths, and are therefore not bisimilar.

\item \label{Case:t-hmerge-vh_gen} 
{\sc Case $t = f(t',t'')$ for some terms $t',t''$}. Since $\sigma(t)$ has no $\nil$ factors, we have that $\sigma(t')\nbis \nil$ and $\sigma(t'')\nbis \nil$.
Moreover, observe that $\alpha.\mu^{\scriptstyle \le i_1} + \alpha.\mu^{\scriptstyle \le i_m} \mv{\alpha} \mu^{\scriptstyle \le i_m}$.  

Thus, as
$
\sigma(t)= f(\sigma(t'),\sigma(t'')) \bis \alpha.\mu^{\scriptstyle \le i_1} + \cdots + \alpha.\mu^{\scriptstyle \le i_m}
$,
according to which rules are available for $f$ with respect to $\nu$, we can distinguish the following two cases:
\begin{itemize}
\item $L^{\ff}_{\alpha}$ holds and there is a term $p'$ such that 
$
\sigma(t') \mv{\alpha} p' \text{ and } p' \| \sigma(t'') \bis \mu^{\scriptstyle \le i_m}
$.
As $\sigma(t'')\nbis \nil$ and $\mu^{\scriptstyle \le i_m}$ is prime (Lemma~\ref{lem:rhs-prime_1}(\ref{claim:alpha<=n})), this implies that $p'\bis \nil$ and
$
\sigma(t'') \bis \mu^{\scriptstyle \le i_m}
$.
Since $\alpha.\mu^{\scriptstyle \le i_1} + \cdots + \alpha.\mu^{\scriptstyle \le i_m} \mv{\alpha} \mu^{\scriptstyle \le i_1}$, a similar reasoning allows us to conclude that 
$
\sigma(t'') \bis \mu^{\scriptstyle \le i_1} 
$
also holds. 
However, this is a contradiction because by the proviso of the lemma $m>1$ and $1\le i_1 < \ldots < i_m$, and therefore $\mu^{\scriptstyle \le i_1}$ and
$\mu^{\scriptstyle \le i_m}$ are not bisimilar.
\item $R^{\ff}_{\alpha}$ holds and there is a term $p''$ such that
$
\sigma(t'') \mv{\alpha} p'' \text{ and } \sigma(t') \| p'' \bis \mu^{\scriptstyle \le i_m}
$.
This case is analogous to the previous one and leads as well to a contradiction.
\end{itemize}
\end{enumerate}
We may therefore conclude that $t$ must be a variable, which was to be shown.
\end{proof}
%--------------------------------------------------------

We can now establish whether some of the initial behaviour of two bisimilar terms is determined by the same variable (Proposition~\ref{prop:x_in_tu}).

We start by arguing that we can also give a syntactic characterisation of the occurrences in a term of the variables that can contribute to the behaviour of closed instances of that term.
Formally, to infer the behaviour of a term $t$ from that of (a closed instance of) a variable $x$, the latter must occur \emph{unguarded} in $t$, namely $x$ cannot occur in the scope of a prefixing operator in $t$.
Inspired by~\cite{ACILvdP11}, for $\mu \in \Act$ and $\ww \in \{\sx,\rr,\bb\}$, we introduce a relation $\trt{\ww}{\mu}$ between a variable $x$ and a term $t$.
Intuitively, the role of the label $\ww$ is the same as in the auxiliary transitions, namely, to identify which predicates hold for $f$ (and thus which rules for $f$ are available) with respect to action $\mu$.
Then $x \trt{\ww}{\mu} t$ holds if the predicate associated with $\ww$ holds for $f$, and whenever $t$ has a subterm of the form $f(t_1,t_2)$ and $x$ occurs in $t_i$ (with $i=1$ if $\ww \in\{\sx,\bb\}$ and $i=2$ if $\ww \in \{\rr,\bb\}$) then the occurrence of $x$ is unguarded and can contribute to an initial $\mu$-transition of $\sigma(t)$ when $\sigma(x) \trans[\mu]$.

\begin{definition}
[Relation $\trt{}{}$]
\label{def:trt}
Let $\mu \in \Act$ and $\ww \in \{\sx,\rr,\bb\}$.
The relation $\trt{\ww}{\mu}$ between variables and terms is defined inductively as follows:
\[
\begin{array}{ll}
\scalebox{0.9}{1.}\; x \trt{\sx}{\mu} x \text{ if } L^{\ff}_{\mu}
\qquad\qquad
\scalebox{0.9}{2.}\; x \trt{\rr}{\mu} x \text{ if } R^{\ff}_{\mu}
& \quad
\scalebox{0.9}{3.}\; x \trt{\bb}{\mu} x \text{ if } L^{\ff}_{\mu}\wedge R^{\ff}_{\mu}
\\[.1cm]
\scalebox{0.9}{4.}\; x \trt{\ww}{\mu} t \;\Rightarrow\; x \trt{\ww}{\mu} t + u \; \wedge \; x \trt{\ww}{\mu} u + t  
& \quad
\scalebox{0.9}{5.}\; x \trt{\sx}{\mu} t \;\Rightarrow\; x \trt{\sx}{\mu} f(t,u) 
\\[.1cm]
\scalebox{0.9}{6.}\; x \trt{\rr}{\mu} t \;\Rightarrow\; x \trt{\rr}{\mu} f(u,t)
& \quad
\scalebox{0.9}{7.}\; x \trt{\bb}{\mu} t \;\Rightarrow\; x \trt{\bb}{\mu} f(t,u) \;\wedge\; x \trt{\bb}{\mu} f(u,t).
\end{array}
\]
\end{definition}

\begin{example}
\label{ex:open_running_3}
Assume, for instance, that $L^{\ff}_a$, $R^{\ff}_{\bar{a}}$ and $L^{\ff}_{\tau} \wedge R^{\ff}_{\tau}$ are the only predicates holding.
Then, for $t=f(x,\tau)$ we have that $x \trt{\sx}{a} t$, $x \trt{\sx}{\tau} t$ and $x \trt{\bb}{\tau} t$.
\end{example}

There is a close relation between unguarded occurrences of variables in terms and the auxiliary transitions, as stated in the following:

\begin{lemma}
\label{lem:trt_open}
Let $\mu \in \Act$ and $\ww\in\{\sx,\rr,\bb\}$.
Then $x \trt{\ww}{\mu} t$ if and only if $t \trans[x_{\ww}]_{\mu} c$ for a configuration $c \bis x_{\dd} \| t'$ for some $\FCCS^-$ term $t'$. 
\end{lemma}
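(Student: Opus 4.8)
The plan is to prove both implications by induction, exploiting the fact that the seven clauses defining $\trt{\ww}{\mu}$ in Definition~\ref{def:trt} are in exact correspondence with the auxiliary rules $(a_1)$--$(a_9)$ of Table~\ref{tab:aux_rules}. A useful preliminary observation is that Lemma~\ref{lem:aux_target} already guarantees that the target of \emph{any} auxiliary transition $t \trans[x_{\ww}]_{\mu} c$ satisfies $c \bis x_{\dd} \| t'$ for some $\FCCS^-$ term $t'$. Consequently, the bisimilarity side-condition in the statement comes for free in both directions, and I only need to establish the existence of the transition $t \trans[x_{\ww}]_{\mu} c$ itself, invoking Lemma~\ref{lem:aux_target} at the very end to discharge the requirement on the shape of $c$.

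For the ($\Rightarrow$) implication I would induct on the derivation of $x \trt{\ww}{\mu} t$. The base cases are clauses~1--3, where $t = x$ and the relevant predicate ($L^{\ff}_{\mu}$, $R^{\ff}_{\mu}$, or $L^{\ff}_{\mu} \wedge R^{\ff}_{\mu}$) holds; these are matched directly by rules $(a_1)$, $(a_2)$, $(a_3)$, each of which yields $x \trans[x_{\ww}]_{\mu} x_{\dd}$. The inductive cases handle the closure clauses: clause~4 (for $t+u$) is matched by $(a_4)$ and $(a_5)$, clause~5 (for $f(t,u)$ with $\ww = \sx$) by $(a_6)$, clause~6 by $(a_7)$, and clause~7 by $(a_8)$ and $(a_9)$. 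In each inductive step the induction hypothesis supplies an auxiliary transition for the immediate subterm, and applying the corresponding rule from Table~\ref{tab:aux_rules} lifts it to a transition for $t$.

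For the ($\Leftarrow$) implication I would induct on the derivation of $t \trans[x_{\ww}]_{\mu} c$, reading the same correspondence in reverse. Rules $(a_1)$--$(a_3)$ force $t = x$ together with the appropriate predicate, and thus give clauses~1--3. An application of $(a_4)$ or $(a_5)$ to a subderivation $t_i \trans[x_{\ww}]_{\mu} c$ yields $x \trt{\ww}{\mu} t_i$ by the induction hypothesis, whence clause~4 gives $x \trt{\ww}{\mu} t_1 + t_2$; and the $f$-rules $(a_6)$--$(a_9)$ are dispatched analogously via clauses~5--7, according to the decoration $\ww$ and to which argument of $f$ carries the premise.

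The proof is essentially bookkeeping, and I do not expect a genuine obstacle. The only point requiring a little care is keeping the three decorations $\ww \in \{\sx,\rr,\bb\}$ consistently aligned with the predicates they encode across the two inductions: one must ensure that the $\sx$-case stays confined to the first argument of $f$ and the $\rr$-case to the second (mirroring rules $(a_6)$ and $(a_7)$ and clauses~5 and~6), while the $\bb$-case ranges over both arguments (mirroring $(a_8)$/$(a_9)$ and clause~7). Once the transition has been produced in the forward direction, Lemma~\ref{lem:aux_target} supplies the witnessing term $t'$ with $c \bis x_{\dd} \| t'$, completing the equivalence.
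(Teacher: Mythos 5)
Your proof is correct, and its skeleton --- a double induction exploiting the one-to-one correspondence between clauses 1--7 of Definition~\ref{def:trt} and rules $(a_1)$--$(a_9)$ of Table~\ref{tab:aux_rules} --- is the same as the paper's. The genuine difference is in how the side condition $c \bis x_{\dd} \| t'$ is handled. The paper keeps this condition inside both inductions: in the ($\Rightarrow$) direction it reassembles the witness $t'$ at each $f$-step using associativity and commutativity of $\|$ modulo $\bis$, and in the ($\Leftarrow$) direction it must first argue that the sub-configuration $c_1$ (where $c = c_1 \| t_2$) again has the shape $x_{\dd} \| t''$ --- using that $x_{\dd}$ can occur only in $c_1$ --- before the induction hypothesis becomes applicable. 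You instead prove the shape-free equivalence ``$x \trt{\ww}{\mu} t$ holds if and only if $t \trans[x_{\ww}]_{\mu} c$ for some configuration $c$'' and invoke Lemma~\ref{lem:aux_target} once to supply the witness $t'$. This is sound: Lemma~\ref{lem:aux_target} makes the shape condition automatic for every auxiliary transition, so your ($\Leftarrow$) statement is a harmless strengthening of the lemma's (weaker hypothesis), and your ($\Rightarrow$) conclusion is completed by that lemma. What your organisation buys is a noticeably lighter backward induction --- the induction hypothesis applies directly to the premise of each rule, with no bisimilarity bookkeeping --- at the modest cost of leaning on Lemma~\ref{lem:aux_target}, which the paper establishes immediately before this lemma anyway.
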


%---------------------------------------------------------------
\begin{proof}
We prove the two implications separately.

($\Rightarrow$) We proceed by induction over the structure of $t$.
The interesting case is the inductive step corresponding to $t = f(t_1,t_2)$ which we expand below, by distinguishing three cases, according to which rules for $f$ are available:
\begin{itemize}
\item $x \trt{\sx}{\mu} f(t_1,t_2)$.
This can only be due to $x \trt{\sx}{\mu} t_1$.
By the induction hypothesis for $t_1$, this implies that $t_1 \trans[x_{\sx}]_{\mu} c_1$ with $c_1 \bis x_{\dd} \| t_1'$ for some $t_1'$.
By applying the auxiliary rule ($a_6$), we infer $f(t_1,t_2) \trans[x_{\sx}]_{\mu} c$ with $c = c_1 \| t_2$ and, since $\bis$ is a congruence with respect to $\|$ and $\|$ is associative with respect to $\bis$, we get $c \bis (x_{\dd} \| t_1') \| t_2 \bis x_{\dd} \| t'$ with $t' \bis t_1' \| t_2$.
\item $x \trt{\rr}{\mu} f(t_1,t_2)$.
This can only be due to $x \trt{\rr}{\mu} t_2$.
Thus, we can proceed as above, by applying the auxiliary rule ($a_7$) in place of rule ($a_6$) and using the commutativity of $\|$ modulo $\bis$.
\item $x \trt{\bb}{\mu} f(t_1,t_2)$.
This can be due to either $x \trt{\bb}{\mu} t_1$ or $x \trt{\bb}{\mu} t_2$.
For both, we can proceed as above, by applying the auxiliary rules ($a_8$) or, respectively, ($a_9$) in place of rules ($a_6$) and ($a_7$).
\end{itemize}

($\Leftarrow$) We proceed by induction over the derivation of the open transition $t \trans[x_{\ww}]_{\mu} c$.
Again, the interesting case is the inductive step corresponding to $t = f(t_1,t_2)$, which we expand below by considering three cases, according to which rules are available for $f$:
\begin{itemize}
\item $f(t_1,t_2) \trans[x_{\sx}]_{\mu} c$ with $c \bis x_{\dd} \| t'$ for some $t'$.
According to the auxiliary operational semantics, it must be the case that $t_1 \trans[x_{\sx}]_{\mu} c_1$ for some $c_1$ such that $c = c_1 \| t_2$. 
Notice that since $x_{\dd}$ can occur only in $c_1$, from $c = c_1 \| t_2$ and $c \bis x_{\dd} \| t'$, we infer $c_1 \bis x_{\dd} \| t''$ for some $t''$ such that $t'' \| t_2 \bis t'$.
Hence, we can apply the induction hypothesis to the transition from $t_1$ and obtain $x \trt{\sx}{\mu} t_1$.
Since $t = f(t_1,t_2)$ we can immediately conclude that $x \trt{\sx}{\mu} t$.
\item The cases of $f(t_1,t_2) \trans[x_{\rr}]_{\mu} c$ and $f(t_1,t_2) \trans[x_{\bb}]_{\mu} c$ follow by a similar reasoning.
\end{itemize}
\end{proof}
%-------------------------------------------------------------

We now discuss the necessary conditions to relate the depth of closed instances of a term to the depth of the closed instances of the variables occurring in it.

\begin{lemma}
\label{lem:that_one}
Let $t$ be a $\FCCS^-$ term and $\sigma$ be a closed substitution.
If $t$ has no $\nil$ summands or factors and $x \trt{\ww}{\mu} t$, for some $\ww \in \{\sx,\rr,\bb\}$ and $\mu \in \Act$ with $\init{\sigma(x)} \subseteq \{\mu \mid x \trt{\ww}{\mu}t \}$, then $\depth(\sigma(t)) \ge \depth(\sigma(x))$.
\end{lemma}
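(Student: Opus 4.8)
The plan is to transport a \emph{longest} trace of $\sigma(x)$ through the occurrence of $x$ witnessed by $\trt{\ww}{\mu}$, and then read off the depth inequality from the additivity of depth over $\|$. First I would dispose of the degenerate case: if $\depth(\sigma(x)) = 0$ the claim is vacuous, so assume $\depth(\sigma(x)) = k \ge 1$. By the soundness of Equation~\eqref{Eqn:parf2} and Lemma~\ref{Lem:finite-depth} the depth of $\sigma(x)$ is finite; since the set of trace lengths of a process is an initial segment of $\N$, the supremum defining the depth is attained, i.e.\ there is an actual longest trace
\[
\sigma(x) = r_0 \trans[\mu_0] r_1 \trans[\mu_1] \cdots \trans[\mu_{k-1}] r_k \enspace .
\]
In particular $\mu_0 \in \init{\sigma(x)}$, and truncating this trace after its first step shows that $r_1$ admits a trace of length $k-1$, whence $\depth(r_1) \ge k-1$.

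Next I would lift the initial step to $\sigma(t)$. Since $\mu_0 \in \init{\sigma(x)} \subseteq \{\mu \mid x \trt{\ww}{\mu} t\}$ we have $x \trt{\ww}{\mu_0} t$, so Lemma~\ref{lem:trt_open} supplies an auxiliary transition $t \trans[x_{\ww}]_{\mu_0} c$ with $c \bis x_{\dd} \| t'$ for some $\FCCS^-$ term $t'$ (consistent with Lemma~\ref{lem:aux_target}). Feeding the transition $\sigma(x) \trans[\mu_0] r_1$ and this auxiliary transition into Lemma~\ref{lem:o2c} yields
\[
\sigma(t) \trans[\mu_0] \sigma[x_{\dd} \mapsto r_1](c) \enspace .
\]
Because bisimilarity is a congruence and is preserved under closed substitution, $c \bis x_{\dd} \| t'$ gives $\sigma[x_{\dd} \mapsto r_1](c) \bis r_1 \| \sigma(t')$; since bisimilar terms have equal depth and $\depth(p \| q) = \depth(p) + \depth(q)$ (Lemma~\ref{Lem:finite-depth}), this term has depth $\depth(r_1) + \depth(\sigma(t')) \ge \depth(r_1)$.

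Finally I would chain the estimates. Prepending the $\mu_0$-step to a longest trace of $\sigma[x_{\dd} \mapsto r_1](c)$ gives $\depth(\sigma(t)) \ge 1 + \depth(\sigma[x_{\dd} \mapsto r_1](c))$, so that
\[
\depth(\sigma(t)) \ge 1 + \depth(r_1) \ge 1 + (k-1) = k = \depth(\sigma(x)) \enspace ,
\]
which is the desired conclusion. There is no single hard obstacle here: the argument is a careful assembly of the decomposition machinery, and the only points needing attention are that a genuine longest trace exists (finiteness of depth), that the instantiated configuration splits off an $r_1$-component of full depth via additivity of depth over $\|$ and its invariance under $\bistext$, and that the depth of the residual factor $\sigma(t')$ is simply discarded as a nonnegative quantity. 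Notably, the hypothesis that $t$ has no $\nil$ summands or factors is not used for this lower bound; it is merely part of the standing assumptions under which the lemma is later applied.
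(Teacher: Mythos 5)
Your proof is correct, and it rests on the same key ingredients as the paper's own proof --- Lemmas~\ref{lem:trt_open} and~\ref{lem:o2c}, invariance of depth under $\bistext$, and additivity of depth over $\|$ --- but it is organised in a genuinely different and arguably cleaner way. The paper proceeds by structural induction on $t$ with a case analysis on $\ww$, and it further simplifies by assuming that $\init{\sigma(x)}$ is a singleton $\{\mu\}$, so that the depth-maximising transition of $\sigma(x)$ automatically carries the liftable label; the general case is only asserted to follow easily. You avoid both the induction and that simplification: since Lemma~\ref{lem:trt_open} is stated for arbitrary terms, you apply it to $t$ itself in one shot, and by taking the first step $\sigma(x) \trans[\mu_0] r_1$ of an actual longest trace (whose existence you correctly justify from finiteness of depth, Lemma~\ref{Lem:finite-depth}, plus attainment of a finite supremum over naturals) you handle an arbitrary set of initials directly --- the hypothesis $\init{\sigma(x)} \subseteq \{\mu \mid x \trt{\ww}{\mu} t\}$ is invoked exactly where it matters, namely to guarantee that whatever action starts the deepest trace can be lifted through $t$. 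The resulting estimate (lift one step, rewrite the target as $r_1 \| \sigma(t')$ up to $\bistext$, discard $\depth(\sigma(t')) \ge 0$) is precisely the computation the paper performs inside its inductive step for $t = f(t_1,t_2)$; your argument shows the surrounding induction is unnecessary once Lemma~\ref{lem:trt_open} is in place. Your closing remark is also accurate and consistent with the paper: its proof never uses the assumption that $t$ has no $\nil$ summands or factors either; that hypothesis merely reflects the standing conditions under which the lemma is later applied (e.g., in Proposition~\ref{prop:x_in_tu} and Proposition~\ref{Propn:crux}).
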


%--------------------------------------------------------------
\begin{proof}
The proof proceeds by structural induction over $t$ and a case analysis over $\ww\in\{\sx,\rr,\bb\}$.
The interesting case is the inductive step corresponding to $t = f(t_1,t_2)$ which we expand below for the case of $\ww = \sx$.
The other cases can be obtained by applying a similar reasoning.

Moreover, always for sake of simplicity, assume that there is only one action $\mu$ such that $x \trt{\sx}{\mu} t$, so that $\init{\sigma(x)} = \{\mu\}$.
Once again, the general case can be easily derived from this one.
Notice that $\init{\sigma(x)} = \{\mu\}$ implies the existence of a closed term $q$ such that $\sigma(x) \trans[\mu] q$ and $\depth(\sigma(x)) = \depth(q) + 1$. 
We have that $x \trt{\sx}{\mu} f(t_1,t_2)$ can be derived only by $x \trt{\sx}{\mu} t_1$.
Hence, structural induction over $t_1$ gives $\depth(\sigma(t_1)) \ge \depth(\sigma(x))$.
Moreover, by Lemma~\ref{lem:trt_open} we obtain that $t_1 \trans[x_{\sx}]_{\mu} c_1$ for some $c_1 \bis x_{\dd} \| t'$ for some term $t'$.
Furthermore, $\sigma(x) \trans[\mu] q$ together with Lemma~\ref{lem:o2c} gives $\sigma(t_1) \trans[\mu] \sigma[x_{\dd} \mapsto q](c_1)$.
Then we can infer that $\sigma(t) \trans[\mu] \sigma[x_{\dd} \mapsto q](c_1) \| \sigma(t_2)$ $\bis q \| (\sigma(t') \| \sigma(t_2))$.
Therefore
\begin{align*}
\depth(\sigma(t)) \ge{} & 1 + \depth(q \| (\sigma(t') \| \sigma(t_2))) \\
={} & 1 + \depth(q) + \depth(\sigma(t') \| \sigma(t_2)) \\
\ge{} & 1 + \depth(q) = \depth(\sigma(x)).
\end{align*}
\end{proof}
%----------------------------------------------------------------

\begin{example}
\label{ex:open_running_4}
We remark that, due to the potential asymmetry of the rules for $f$, the requirement on the set of initials of $\sigma(x)$ cannot be relaxed in any trivial way.
Consider, for instance, the term $t = f(x,\tau)$ from our running example and assume that the only predicates holding are $L^{\ff}_{\alpha}$, $L^{\ff}_{\tau}$ and $R^{\ff}_{\bar{\alpha}}$.
Notice that $x \trt{\sx}{\alpha} t$ and $x \trt{\sx}{\tau} t$.
Consider the closed substitution $\sigma$ with $\sigma(x) = \alpha + \tau + \bar{\alpha}.\alpha^n$, for some $n \ge 2$, so that $\{\alpha,\tau\} \subset \init{\sigma(x)} = \{\alpha,\tau,\bar{\alpha}\}$.
As $L^{\ff}_{\bar{\alpha}}$ and $R^{\ff}_{\tau}$ do not hold, the only inferable initial transitions for $\sigma(t)$ are those resulting from the $\alpha$-move and the $\tau$-move by $\sigma(x)$.
Thus, we get that $\depth(\sigma(t)) = 2$, whereas $\depth(\sigma(x)) \ge 3$.
This is due to the fact that the computation of $\sigma(x)$ starting with a $\bar{\alpha}$-move is \emph{blocked} by the rules for $f$ and, thus, it cannot contribute to the behaviour of $t$.
\end{example}

We can now proceed to prove the following:

\begin{proposition}
\label{prop:x_in_tu}
Let $\alpha \in \{a,\bar{a}\}$, $x$ be a variable and $t,u$ be $\FCCS^-$ with $t \bis u$ and such that neither $t$ nor $u$ has $\nil$ summands or factors.
If $x \trt{\ww}{\alpha} t$ for some $\ww\in \{\sx,\rr,\bb\}$, then $x \trt{\ww}{\alpha} u$.
In particular, if $x \trt{\ww}{\alpha} t$ because $t$ has a summand $x$, then so does $u$.
\end{proposition}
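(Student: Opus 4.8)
The plan is to detect the unguarded, $\alpha$-active occurrence of $x$ in $t$ by a single well-chosen closed substitution, transport the resulting behaviour across the bisimilarity $t \bis u$, and then argue that in $u$ this behaviour can only be produced by an occurrence of the same variable $x$. The point that makes the attribution work is that $\alpha \in \{a,\bar a\}$, so no synchronisation rule (which always produces $\tau$) can fire: by Lemma~\ref{lem:c2o} every $\alpha$-move of a closed instance is either inherited from the prefix structure of the term or triggered by a variable through the auxiliary transitions of Table~\ref{tab:aux_rules}.

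Concretely, I would fix an action $\gamma \neq \alpha$ (which exists since $\Act$ has three elements) and an integer $N$ larger than the sizes of both $t$ and $u$, and define $\sigma$ by $\sigma(x) = \sum_{k=1}^{N}\alpha.\gamma^{\scriptstyle \leq k}$ and $\sigma(y)=\nil$ for every variable $y \neq x$. By Lemma~\ref{lem:trt_open}, the hypothesis $x \trt{\ww}{\alpha} t$ yields a configuration $c$ with $t \trans[x_{\ww}]_{\alpha} c$ and $c \bis x_{\dd} \mathbin{\|} t'$. Since $\sigma(x) \trans[\alpha] \gamma^{\scriptstyle \leq k}$ for each $k$, Lemma~\ref{lem:o2c} gives $\sigma(t) \trans[\alpha] \sigma[x_{\dd}\mapsto \gamma^{\scriptstyle \leq k}](c) \bis \gamma^{\scriptstyle \leq k}\mathbin{\|} \sigma(t')$. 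Each $\gamma^{\scriptstyle \leq k}$ is prime by Lemma~\ref{lem:rhs-prime_1}(\ref{claim:alpha<=n}), so by the unique prime decomposition of Proposition~\ref{prop:unique_decomposition} (and the cancellation of the common factor $\sigma(t')$ it entails) the $N$ residuals $\gamma^{\scriptstyle \leq k}\mathbin{\|} \sigma(t')$ are pairwise non-bisimilar.

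Since $t \bis u$ we have $\sigma(u) \bis \sigma(t)$, so for each $k$ there is a matching transition $\sigma(u) \trans[\alpha] r_k$ with $r_k \bis \gamma^{\scriptstyle \leq k}\mathbin{\|} \sigma(t')$. I would then apply Lemma~\ref{lem:c2o} to each of these. Those falling under case~\ref{lem:c2o_prefix} come from an open transition $u \trans[\alpha] u_k'$ with $\sigma(u_k')=r_k$; as the $r_k$ are pairwise non-bisimilar, the targets $u_k'$ are pairwise distinct, and there are at most $\mathit{size}(u)<N$ distinct open $\alpha$-targets of $u$, so case~\ref{lem:c2o_prefix} accounts for fewer than $N$ indices. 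Hence some index falls under case~\ref{lem:c2o_l}, \ref{lem:c2o_r}, or~\ref{lem:c2o_b}: there are a variable $z$ with $\sigma(z)\trans[\alpha]$ and an auxiliary transition $u \trans[z_{\ww}]_{\alpha} c'$. But for $y\neq x$ the process $\sigma(y)=\nil$ has no $\alpha$-transition, so necessarily $z=x$, and the value of $\ww$ is the one determined by the rules available for $f$ at $\alpha$ (Proposition~\ref{prop:new}), matching the hypothesis. By Lemma~\ref{lem:aux_target} we have $c' \bis x_{\dd}\mathbin{\|} u'$, so Lemma~\ref{lem:trt_open} yields $x \trt{\ww}{\alpha} u$, as required.

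For the \emph{in particular} clause, if $x$ is a summand of $t$ then with the same $\sigma$ we get $\sigma(t) \trans[\alpha] \gamma^{\scriptstyle \leq k}$ for $k=1,\dots,N$, so $\sigma(u)\bis\sigma(t)$ is bisimilar to a term $\sum_{k=1}^{N}\alpha.\gamma^{\scriptstyle \leq k} + q$ with the $\gamma^{\scriptstyle \leq k}$ pairwise non-bisimilar; applying Lemma~\ref{lem:variables} to $u$ (legitimate as $N>\mathit{size}(u)$) produces a \emph{summand} $z$ of $u$ whose closed instance has at least two distinct $\alpha$-derivatives, and again only $\sigma(x)$ affords $\alpha$-moves, forcing $z=x$. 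The main obstacle is exactly this attribution step: a priori the matching $\alpha$-move of $\sigma(u)$ could stem from the prefix structure of $u$ rather than from $x$ — a genuine danger given the blocking behaviour of non-distributive $f$ in Example~\ref{ex:blocking_f}. It is overcome by the pigeonhole count, which forces some residual to originate from a variable, together with the choice of $\sigma$ under which only $x$ performs $\alpha$; the prime terms $\gamma^{\scriptstyle \leq k}$ and unique prime decomposition are what keep the $N$ residuals pairwise distinct even after the possibly nontrivial parallel context $\sigma(t')$ is attached.
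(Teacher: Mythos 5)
Your proof is correct, but it takes a genuinely different route from the paper's. For the transfer of the relation $\trt{\ww}{\alpha}$ from $t$ to $u$, the paper instantiates $x$ with the \emph{single}-derivative process $\alpha.p_n$ (every other variable going to $\nil$), follows \emph{one} matching $\alpha$-transition of $\sigma(u)$, and then must exclude by hand the possibility that this transition is inherited from the prefix structure of $u$: that exclusion is a two-branch contradiction argument (split on whether $\sigma(t')\bis\nil$) invoking Lemma~\ref{Lem:vh-claim_gen} in one branch and the expansion law together with Lemma~\ref{lem:variables} in the other. You instead give $\sigma(x)$ the $N$ pairwise non-bisimilar derivatives $\gamma^{\leq k}$ and dispose of the prefix-structure case wholesale by a pigeonhole count: at most $\mathit{size}(u)<N$ of the matching transitions can fall under case~\ref{lem:c2o_prefix} of Lemma~\ref{lem:c2o}, so some match is variable-triggered, and the choice of $\sigma$ forces that variable to be $x$. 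This lifts the counting device that the paper uses \emph{inside} the proof of Lemma~\ref{lem:variables} up to the attribution step itself, and it buys uniformity: no case analysis on $\sigma(t')$, and no contradiction subproofs. The same economy shows in the \emph{in particular} clause: the paper proves it by a further case split (distributive versus non-distributive $f$) with two different substitutions and depth arguments, whereas you apply Lemma~\ref{lem:variables} once, with the same $\sigma$; as a bonus, your summand argument does not route through the first part of the proposition at all. What the paper's longer analysis buys in exchange is sharper structural information (it pins down the responsible summand $u_j$ and, via primality of $p_n$ and unique decomposition, the shape of the resulting configuration), which is the template reused in Propositions~\ref{Propn:crux}, \ref{prop:LaRa_substitution} and~\ref{prop:LaRbaC_substitution}.

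One claim should be tightened: that the label $\ww$ returned by Lemma~\ref{lem:c2o} ``matches the hypothesis''. That is automatic only when at most one of $L^{\ff}_{\alpha}$, $R^{\ff}_{\alpha}$ holds, for then the hypothesis' $\ww$ is forced to be the canonical label ($\sx$ or $\rr$). If $L^{\ff}_{\alpha}\wedge R^{\ff}_{\alpha}$ holds, Lemma~\ref{lem:c2o} delivers the label $\bb$, while the hypothesis could be $x\trt{\sx}{\alpha}t$ (an occurrence reachable through first arguments only), and $x\trt{\bb}{\alpha}u$ does not entail $x\trt{\sx}{\alpha}u$. So in that mixed sub-case your argument, as written, proves the statement with $\bb$ in place of the hypothesised $\ww$. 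This is not a defect relative to the paper --- its own proof expands only the ``only $L^{\ff}_{\alpha}$'' case and declares the others analogous, so it has exactly the same granularity, and all downstream uses (as well as your proof of the summand claim) need only the canonical label --- but a fully literal proof of the statement as phrased would need a word about it.
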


%--------------------------------------------------------------------
\begin{proof}
Observe, first of all, that since $t$ and $u$ have no $\nil$ summands or factors, by Remark~\ref{rmk:summands} we can assume that $t  =  \sum_{i\in I} t_i$ and $u  =  \sum_{j\in J} u_j$ for some finite non-empty index sets $I, J$, where none of the $t_i$ ($i\in I$) and $u_j$ ($j\in J$) has $+$ as its head operator, and none of the $t_i$ ($i\in I$) and $u_j$ ($j\in J$) have $\nil$ summands or factors.
Therefore, $x \trt{\ww}{\alpha} t$ implies that there is some index $i \in I$ such that $x \trt{\ww}{\alpha} t_i$.
We then proceed by a case analysis on the rules available for $f$.
Actually we expand only the case in which only $L^{\ff}_{\alpha}$ holds, as the other two cases, in which respectively only $R^{\ff}_{\alpha}$ holds, or $L^{\ff}_{\alpha} \wedge R^{\ff}_{\alpha}$ holds, can be obtained analogously.

Since only $L^{\ff}_{\alpha}$ holds, then it must be the case that $x \trt{\sx}{\alpha} t_i$.
By Lemma~\ref{lem:trt_open} we get that $t_i \trans[x_{\sx}]_{\alpha} c$ for some configuration $c$ with $c \bis x_{\dd} \| t'$ for some $t'$.
Let $n$ be greater than the sizes of $t$ and $u$, and consider the substitution $\sigma$ such that 
\[
\sigma(y) = 
\begin{cases}
\alpha \sum_{i =1}^n \bar{\alpha}\alpha^{\scriptsize \le i} & \text{ if } y = x \\
\nil & \text{ otherwise.}
\end{cases}
\]
For simplicity of notation, let $p_n = \sum_{i=1}^n \bar{\alpha}\alpha^{\scriptsize \le i}$.
Clearly $\sigma(x) \trans[\alpha] p_n$.
By Lemma~\ref{lem:o2c} we obtain that $\sigma(t_i) \trans[\alpha] p$ with $p = \sigma[x_{\dd} \mapsto p_n] (c)$ and, thus, $p \bis p_n \| \sigma(t')$.
As $t \bis u$ implies $\sigma(t) \bis \sigma(u)$,  we get that there is an index $j \in J$ such that $\sigma(u_j) \trans[\alpha] q$ for some $q \bis p_n \| \sigma(t')$.
As only $L^{\ff}_{\alpha}$ holds, by Lemma~\ref{lem:c2o} we can distinguish two cases:
\begin{itemize}
\item There are a variable $y$, a closed term $q'$ and a configuration $c'$ such that $\sigma(y) \trans[\alpha] q'$, $u_j \trans[y_{\sx}]_{\alpha} c'$ and $q = \sigma[y_{\dd} \mapsto q'](c')$.
Since $\sigma$ maps all variables but $x$ to $\nil$, we can directly infer that $y = x$, $q' = p_n$.
Moreover, as $p_n$ is prime and there is a unique prime decomposition of processes, we also infer that $c' \bis x_{\dd} \| u'$ for some $u'$ with $\sigma(u') \bis \sigma(t')$.
Consequently, by Lemma~\ref{lem:trt_open} we can conclude that $x \trt{\sx}{\alpha} u_j$ and thus $x \trt{\sx}{\alpha} u$ as required.

\item There is a term $u'$ such that $u_j \trans[\alpha] u'$ and $\sigma(u') \bis p_n \| \sigma(t')$.
We proceed to show that this case leads to a contradiction.
We distinguish two cases:
\begin{itemize}
\item $\sigma(t') \bis \nil$. Thus $\sigma(u') \bis p_n$ and we can rewrite $u' = \sum_{h \in H} v_h$ for some terms $v_h$ that do not have $+$ as head operator.
Moreover, since $u$ not having $\nil$ summands nor factors implies that neither $u_j$ no $u'$ have some, the same holds for all the $v_h$.
Since $n$ is larger than the size of $u$, and thus than that of $u'$, by Lemma~\ref{Lem:vh-claim_gen} $\sigma(u') \bis p_n$ implies that there is one index $h \in H$ such that $v_h = y$ for some variable $y$ and $\sigma(y) \bis \bar{\alpha}\alpha^{\scriptsize \le i_1} + \dots + \bar{\alpha} \alpha^{\scriptsize \le i_m}$ for some $m > 1$ and $1 \le i_1 < \dots < i_m \le n$.
However, by the choice of $\sigma$, all variables but $x$ are mapped to $\nil$, and moreover $\sigma(x) \nbis \bar{\alpha}\alpha^{\scriptsize \le i_1} + \dots + \bar{\alpha} \alpha^{\scriptsize \le i_m}$ thus contradicting $\sigma(u') \bis p_n$.
\item $\sigma(t') \nbis \nil$.
Consequently, $\sigma(t') \bis \sum_{h \in H} \mu_h q_h$ for some actions $\mu_h \in \Act$ and closed terms $q_h$.
We can therefore apply the expansion law for parallel composition obtaining
\[
\sigma(u') 
\bis 
p_n \| \sigma(t') 
\bis 
\sum_{i = 1}^n \bar{\alpha} (\alpha^{\scriptsize \le i} \| \sigma(t')) +
\sum_{h \in H} \mu_h (p_n \| q_h) + 
\sum_{i = 1,\dots,n \atop h \in H \text{ s.t. } \mu_h = \alpha} \tau (\alpha^{\scriptsize \le i} \| q_h).
\]
We notice that the first term in the expansion has size at least $n+1$ and therefore greater than the size of $u$ and in particular of $u'$.
Moreover $\alpha^{\scriptsize \le i} \| \sigma(t') \nbis \alpha^{\scriptsize \le j} \| \sigma(t')$ whenever $i \neq j$.
Therefore, by Lemma~\ref{lem:variables} there is a variable $y \in \var(u')$ such that $\sigma(y) \bis \bar{\alpha} q_{i_1} + \dots + \bar{\alpha} q_{i_m} + r$ for some $m > 1$ and $1 \le i_1 < \dots < i_m$, closed term $r$ and proper closed terms $q_{i_1}, \dots, q_{i_m}$ according to Lemma~\ref{lem:variables} (their exact form is irrelevant to our purposes).
However, $\sigma(y) = \nil$ whenever $y \neq x$ and $\sigma(x) \nbis \bar{\alpha} q_{i_1} + \dots + \bar{\alpha} q_{i_m}+ r$, for any closed term $r$ (since $\init{\sigma(x)} = \{\alpha\}$) thus contradicting $\sigma(u') \bis p_n \| \sigma(t')$.
\end{itemize}
\end{itemize}
We have therefore obtained that whenever $x \trt{\sx}{\alpha}t$ then also $x \trt{\sx}{\alpha}u$. 

Assume now that $t$ has a summand $x$.
We aim to show that $u$ has a summand $x$ as well.
Since $x \trt{\sx}{\alpha} x$ gives $x \trt{\sx}{\alpha} t$, by the first part of the Proposition we get $x \trt{\sx}{\alpha} u$ and thus there is an index $j \in J$ such that $x \trt{\sx}{\alpha} u_j$.
We now treat the cases of an operator $f$ that distributes over $+$ in its first argument and of an operator $f$ that does not distribute in either argument separately.

\textcolor{blue}{\sc Case of an operator $f$ that distributes over $+$ in its first argument.}
Consider the substitution $\sigma_{\scriptstyle \nil}$ mapping each variable to $\nil$.  
Pick an integer $m$ larger than the depth of $\sigma_{\scriptstyle \nil}(t)$ and of $\sigma_{\scriptstyle\nil}(u)$. 
Let $\sigma$ be the substitution mapping $x$ to the term $a^{\scriptstyle m+1}$ and agreeing with $\sigma_{\scriptstyle \nil}$ on all the other variables.

As $t \approx u$ is sound modulo bisimilarity, we have that
$
\sigma(t)\bis \sigma(u)
$.
Moreover, the term $\sigma(t)$ affords the transition $\sigma(t)\mv{a} a^{\scriptstyle m}$, for $t_i=x$ and $\sigma(x)=a^{\scriptstyle m+1}\mv{a} a^{\scriptstyle m}$. 
Hence, for some closed term $p$,
\[
\sigma(u)= \sum_{j\in J} \sigma(u_j) \mv{a} p \bis a^{\scriptstyle m} \enspace .
\]
This means that there is a $j\in J$ such that $\sigma(u_j) \mv{a}p$. 
We claim that this $u_j$ can only be the variable $x$. 
To see that this claim holds, observe, first of all, that $x\in \var(u_j)$.
In fact, if $x$ did not occur in $u_j$, then we would reach a contradiction thus:
\[
m = 
\depth(p) < \depth(\sigma(u_j)) = 
\depth(\sigma_{\scriptstyle \nil}(u_j)) \leq 
\depth(\sigma_{\scriptstyle \nil}(u)) < 
m \enspace .
\]
Using this observation and Lemma~\ref{lem:that_one}, it is not hard to show that, for each of the other possible forms $u_j$ may
have, $\sigma(u_j)$ does not afford an $a$-labelled transition leading to a term of depth $m$.  
We may therefore conclude that $u_j=x$, which was to be shown.

\textcolor{blue}{\sc Case of an operator $f$ that does not distribute over $+$ in either argument.}
Notice that in the case at hand, there must be at least one action $\mu \in \Act$ such that $R^{\ff}_{\mu}$ holds.
Assume such an action $\mu$.
Again, let $n$ be greater than the sizes of $t$ and $u$, and consider the substitution
\[
\sigma_1(y) = 
\begin{cases}
\alpha\alpha^{\scriptsize \le n} & \text{ if } y = x \\
\alpha + \mu & \text{ otherwise.}
\end{cases}
\]
Thus $\sigma_1(x) \trans[\alpha] \alpha^{\scriptsize \le n}$ and consequently $\sigma_1(t) \trans[\alpha] \alpha^{\scriptsize \le n}$.
Since $\sigma_1(t) \bis \sigma_1(u)$ it must hold that $\sigma_1(u) \trans[\alpha] q$ for some $q \bis \alpha^{\scriptsize \le n}$.
As $n$ is greater than the size of $u$, one can infer that $u$ can have a summand given by at most $\lfloor \frac{n-2}{2}\rfloor$ nested occurrences of $f$ (which is a binary operator of size at least $3$).
Since, moreover, all variables but $x$ are mapped into a term of depth $1$, we can infer that the only term that can be responsible for the $\alpha$-move to $q$ is a summand $u_j$ such that $x \trt{\sx}{\alpha} u_j$.
To show $u_j = x$ we show that the only other possible case, namely $u_j = f(u',u'')$ with $x \trt{\sx}{\alpha} u'$ leads to a contradiction.
Recall that by the proviso of the Proposition $u$ has no $\nil$ factors, which implies that $u',u'' \nbis \nil$.
Since moreover, $x \trt{\sx}{\alpha} u'$, by Lemma~\ref{lem:trt_open} and Lemma~\ref{lem:c2o} we get $u' \trans[x_{\sx}]_{\alpha} c$ and thus $u_{j} \trans[x_{\sx}]_{\alpha} c \| u''$ for some configuration $c \bis x_{\dd} \| u'''$ for some term $u'''$, so that $\sigma_1(u_j) \trans[\alpha] \sigma_1[x_{\dd} \mapsto \alpha^{\scriptsize \le n}](c) \| \sigma_1(u'') = q$.
However, $u'' \nbis \nil$ implies that either there is a term $v$ such that $u''\trans[\nu] v$, for some action $\nu$, or in $u''$ at least one variable occurs unguarded.
Hence, by the choice of $\sigma_1$, as both $L^{\ff}_{\alpha}$ and $R^{\ff}_{\mu}$ hold, we can infer that $\depth(\sigma_1(u'')) \ge 1$ which gives
\begin{align*}
n ={} & \depth(\alpha^{\scriptsize \le n}) 
= \depth(q) \\
={} & \depth(\sigma_1[x_{\dd} \mapsto \alpha^{\scriptsize \le n}](c) \| \sigma_1(u'')) \\
={} & \depth(\sigma_1[x_{\dd} \mapsto \alpha^{\scriptsize \le n}](c)) + \depth(\sigma_1(u'')) \\
\ge{} & \depth(\alpha^{\scriptsize \le n}) + \depth(\sigma_1(u'')) 
\ge n+1 
\end{align*}
thus contradicting $q \bis \alpha^{\scriptsize \le n}$.   
\end{proof}

%=======================================
% sec - Labat
%=====================================

\section{Negative result: the case $L^{\ff}_{a},L^{\ff}_{\bar{a}},L^{\ff}_{\tau}$}
\label{sec:Labat}

In this section we discuss the nonexistence of a finite axiomatisation of $\FCCS$ in the case of an operator $f$ that, modulo bisimilarity, distributes over summation in one of its arguments.
We expand only the case of $f$ distributing in the first argument.
(The case of distributivity in the second argument follows by a straightforward adaptation of the arguments we use in this section.)
Hence, in the current setting, we can assume the following set of SOS rules for $f$: 
\[
\SOSrule{x_1 \trans[\mu] y_1}{f(x_1,x_2) \trans[\mu] y_1 \| x_2}\; \forall\,\mu\in\Act
\qquad\quad
\SOSrule{x_1 \trans[\alpha] y_1 \quad x_2 \trans[\bar{\alpha}] y_2}{f(x_1,x_2) \trans[\tau] y_1 \| y_2}
\]
namely, only $L^{\ff}_{\mu}$ holds for each action $\mu$, and only $S_{\alpha,\bar{\alpha}}$ holds for some $\alpha \in \{a,\bar{a}\}$.

According to the proof strategy sketched in Section~\ref{sec:proof_method}, we now introduce a particular family of equations on which we will build our negative result.
We define
\begin{align*}
& p_n  =  \sum_{i=0}^{n} \bar{\alpha} \alpha^{\scriptstyle \leq i} & (n \ge 0) \enspace \phantom{.} \\
& e_n \colon \quad f(\alpha,p_n)  \approx  \alpha p_n + \sum_{i=0}^{n} \tau \alpha^{\scriptstyle \leq i} & (n\ge 0) \enspace .
\end{align*}
It is not dif{f}icult to check that the infinite family of equations $e_n$ is sound modulo bisimilarity.

Our order of business is now to prove the instance of Theorem~\ref{Thm:nonfin-en} considering the family of equations $e_n$ above, showing that no finite collection of equations over $\FCCS$ that are sound modulo bisimilarity can prove all of the equations $e_n$ ($n\ge 0$).

Formally, we prove the following theorem:

\begin{theorem}
\label{thm:Labat}
Assume an operator $f$ such that only $L^{\ff}_{\mu}$ holds for each action $\mu$ and only $S_{\alpha,\bar{\alpha}}^f$ holds.
Let $\E$ be a finite axiom system over $\FCCS$ that is sound modulo $\bistext$, 
$n$ be larger than the size of each term in the equations in $\E$, and $p,q$ be closed terms such that $p,q \bis f(\alpha,p_n)$. 
If $\E \vdash p \approx q$ and $p$ has a summand bisimilar to $f(\alpha,p_n)$, then so does $q$.
\end{theorem}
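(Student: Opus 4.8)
The plan is to prove the statement by induction on the structure of a closed equational proof of $p\approx q$ from $\E$, after the reductions that are by now standard in this setting. First I would invoke Proposition~\ref{Prop:no-par} to work entirely inside $\FCCS^{-}$, and then, using Lemma~\ref{lem:prove-nil}, the saturation machinery, and Proposition~\ref{Propn:proofswithout0}, assume that $\E$ is saturated, contains A1--A4 and F0--F2, and that the proof uses only terms without $\nil$ summands or factors. The second preparatory step records the shape of the witness: from the rules $L^{\ff}_\mu$ and $S_{\alpha,\bar\alpha}^f$ one gets $f(\alpha,p_n)\bis \alpha p_n+\sum_{i=0}^n\tau\alpha^{\scriptstyle\le i}$, so $f(\alpha,p_n)$ has norm $1$ and is therefore prime, of depth $n+2$, while $p_n$ is prime of depth $n+1$ by Lemma~\ref{lem:rhs-prime_1}. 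Since $n$ exceeds the size of every term in $\E$, both $f(\alpha,p_n)$ and $p_n$ are \emph{deeper than any term occurring in an axiom of} $\E$; this is the quantitative fact that drives the whole argument.

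For the induction I would track the specific summand $s\bis f(\alpha,p_n)$ through the proof. Reflexivity and symmetry are immediate, and transitivity follows by applying the induction hypothesis twice. The congruence cases are handled using primality and the two initials of the witness: a prefix $\mu.p'$ cannot be bisimilar to $f(\alpha,p_n)$, whose initials are $\{\alpha,\tau\}$, so rule $(e_5)$ is vacuous; for $(e_7)$ the term $q=f(q_1,q_2)$ is its own only summand and is bisimilar to $f(\alpha,p_n)$ by congruence, so it trivially has the required summand; and $(e_6)$ reduces to a smaller instance, since the witness summand lies in one of the two summand-components. The heart of the proof is the case in which $p=\sigma(t)$ and $q=\sigma(u)$ for a sound axiom $(t\approx u)\in\E$ and a closed substitution $\sigma$.

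In that case I would write $t=\sum_k t_k$ with each $t_k$ not headed by $+$, and locate the component $t_k$ whose instance carries $s$. Because $s$ has depth $n+2>\mathit{size}(t)$, a case analysis on $t_k$ (prefixes are excluded by the initials argument, and a genuine $f$-term $f(t',t'')$ with both instances nontrivial is excluded by the primality of $p_n$, the unique prime decomposition of Proposition~\ref{prop:unique_decomposition}, and the absence of $\nil$ factors) shows that exactly one of the following holds: \textbf{(a)} $t_k=x$ is a variable and $\sigma(x)$ has a summand bisimilar to $f(\alpha,p_n)$; or \textbf{(b)} $t_k=f(t',y)$ with $y$ a variable, $\sigma(t')\bis\alpha$ and $\sigma(y)\bis p_n$. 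In case \textbf{(a)} the transfer is clean: since $t$ has the summand $x$ and $L^{\ff}_\alpha$ holds we have $x\trt{\sx}{\alpha}t$, so Proposition~\ref{prop:x_in_tu} (soundness gives $t\bis u$, both $\nil$-free) yields that $u$ too has the summand $x$; hence $\sigma(u)$ inherits the summand of $\sigma(x)$ that is bisimilar to $f(\alpha,p_n)$.

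The main obstacle is case \textbf{(b)}, where the witness enters through the \emph{synchronising} second argument of an $f$-summand. Here Proposition~\ref{prop:x_in_tu} is of no direct use, since the relation $\trt{\ww}{\mu}$ of Definition~\ref{def:trt} records only the interleaving ($L$/$R$) contributions of a variable, never the synchronisation contribution that binds the deep $y$ (with $\sigma(y)\bis p_n$) into the full $f(\alpha,p_n)$ behaviour. To handle it I would exploit the soundness of $t\approx u$ under a family of substitutions that replace $y$ by a single deep branch $\bar\alpha\alpha^{\scriptstyle\le k}$, with $k>\mathit{size}(u)$, while keeping $\sigma(t')\bis\alpha$: then $\sigma'(t)$ has the summand $f(\sigma'(t'),\bar\alpha\alpha^{\scriptstyle\le k})\bis\alpha\bar\alpha\alpha^{\scriptstyle\le k}+\tau\alpha^{\scriptstyle\le k}$, whose depth-$k$ residual forces, via Lemmas~\ref{lem:c2o},~\ref{lem:variables} and~\ref{Lem:vh-claim_gen}, the variable $y$ to occur in $u$ inside a summand $f(u',y)$ with $\sigma(u')\bis\alpha$, giving $\sigma(u)$ the summand $f(\sigma(u'),\sigma(y))\bis f(\alpha,p_n)$. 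The delicate point, and where the real work lies, is that Lemma~\ref{lem:c2o} is stated only for observable actions $\alpha\in\{a,\bar a\}$, so this step needs an analogue of the decomposition machinery for the $\tau$-transitions produced by $S_{\alpha,\bar\alpha}^f$, that is, a separate analysis of how a synchronisation of a closed instance is mirrored between $t$ and $u$.
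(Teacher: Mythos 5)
Your overall architecture (reduction to $\FCCS^-$, saturation via Proposition~\ref{Propn:proofswithout0}, induction on the closed proof with the congruence cases handled exactly as you describe, and Proposition~\ref{prop:x_in_tu} for the variable case) coincides with the paper's, and your preliminary observations (including the norm-one shortcut to primality of $f(\alpha,p_n)$) are sound. The gap is in your case \textbf{(b)} of the substitution step, and it is twofold. First, your dichotomy is not exhaustive: from $\sigma(t_k)\bis f(\alpha,p_n)$ and Lemma~\ref{Lem:decomposing} you may conclude $t_k=f(t',t'')$ with $\sigma(t')\bis\alpha$ and $\sigma(t'')\bis p_n$, but nothing forces the second argument $t''$ to be a single variable $y$ with $\sigma(y)\bis p_n$; primality of $p_n$ and unique prime decomposition exclude nontrivial \emph{parallel} splittings, not nontrivial \emph{sums}. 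For instance $t''=x_1+x_2$, with $\sigma(x_1)$ and $\sigma(x_2)$ bisimilar to the sums of two disjoint halves of the summands $\bar{\alpha}\alpha^{\scriptstyle \le i}$ of $p_n$, is compatible with soundness, with the absence of $\nil$ summands and factors, and with every size constraint, yet no variable instance is bisimilar to $p_n$. What the bound $n>\mathit{size}(t)$ actually yields (pigeonhole plus Lemma~\ref{Lem:vh-claim_gen}) is only a variable summand $x$ of $t''$ with $\sigma(x)\bis\bar{\alpha}\alpha^{\scriptstyle \le i_1}+\cdots+\bar{\alpha}\alpha^{\scriptstyle \le i_m}$ for some $m>1$, i.e.\ a proper chunk of $p_n$. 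Your substitution $y\mapsto\bar{\alpha}\alpha^{\scriptstyle \le k}$ and your target conclusion, that $u$ contains a summand literally of the form $f(u',y)$ so that $f(\sigma(u'),\sigma(y))\bis f(\alpha,p_n)$, both presuppose the fictitious variable carrying all of $p_n$; with the real, partial variable the argument as written does not close (the witness summand of $u$ has the form $f(u',u'')$ where $u''$ merely \emph{has} $x$ as a summand, and recovering $\sigma(u_j)\bis f(\alpha,p_n)$ from that requires an extra step).

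Second, the piece you yourself flag as ``where the real work lies''---a $\tau$-analogue of the decomposition Lemma~\ref{lem:c2o}---is indeed missing, and the paper never supplies it: it is engineered away. In Proposition~\ref{Propn:crux} the substitution is $\sigma'=\sigma[x\mapsto\bar{\alpha}f(\alpha,p_n)]$, so the synchronisation residual of $\sigma'(t_i)$ is the witness $f(\alpha,p_n)$ itself; since $\depth(\sigma'(u_j))\ge n+3$ while $\depth(\sigma(u))=n+2$, the summand $u_j$ of $u$ matching this $\tau$-transition must contain $x$, and one then performs a purely syntactic case analysis on $u_j$ (namely $u_j=x$, $u_j=\mu u'$, $u_j=f(u',u'')$), reading off the only two possible origins of a $\tau$-transition of $f(\sigma'(u'),\sigma'(u''))$ directly from the rules $L^{\ff}_{\tau}$ and $S_{\alpha,\bar{\alpha}}^f$; no open-term decomposition result for $\tau$ is ever invoked. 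In the synchronisation subcase, primality of $f(\alpha,p_n)$ pins down which parallel component is $\nil$, one shows that $u''$ has $x$ as a summand by eliminating the alternative shapes of the responsible summand, and finally $\sigma(u_j)\bis f(\alpha,p_n)$ is recovered from the global hypothesis $\sigma(u)\bis f(\alpha,p_n)$ together with Lemma~\ref{Lem:decomposing}. So your plan needs two repairs: restate case \textbf{(b)} in terms of the partial variable $x$, and replace the postulated $\tau$-machinery either by the paper's choice of substitution plus direct rule-level analysis, or by actually proving the $\tau$-decomposition lemma you assume.
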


Then, since the left-hand side of equation $e_n$, viz.~the term $f(\alpha,p_n)$, has a summand bisimilar to $f(\alpha,p_n)$, whilst the right-hand side, viz.~the term $\alpha p_n + \sum_{i=0}^{n} \tau \alpha^{\scriptstyle \le i}$, does not, we can conclude that the infinite collection of equations $\{e_n \mid n \ge 0\}$ is the desired witness family.
Theorem~\ref{Thm:nonfin-en} is then proved for the considered class of auxiliary binary operators.

The remainder of this section is entirely devoted to a proof of the above statement.

%===================================================

\subsection{Case specific properties of $f(\alpha,p_n)$}

Firstly, we present a technical lemma stating that, under the considered set of rules for $f$, if a closed term $\sigma(t)$ is not bisimilar to $\nil$, then by instantiating the variables in $t$ with a process which is not bisimilar to $\nil$ we cannot obtain a closed instance of $t$ which is bisimilar to $\nil$. 

\begin{lemma}
\label{Lem:non-nil-subst}
Let $t$ be a $\FCCS^-$ term and let $\sigma$ be a substitution with $\sigma(t)\nbis\nil$. 
Assume that $u$ is a $\FCCS^-$ term that is not bisimilar to $\nil$. 
Then $\sigma[x \mapsto u](t)\nbis \nil$ for each variable $x$.
\end{lemma}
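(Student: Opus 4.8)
The plan is to prove the statement by structural induction on $t$, using crucially that in the setting of this section every rule for $f$ has the \emph{first} argument as a moving premise (only $L^{\ff}_{\mu}$ holds, for each $\mu\in\Act$, together with a single synchronisation rule $S_{\alpha,\bar{\alpha}}^f$). The driving observation is an absorption property of $f$: for all $\FCCS^-$ terms $s_1,s_2$ one has $f(s_1,s_2)\bis\nil$ if and only if $s_1\bis\nil$. Indeed, if $s_1\trans[\mu]$ for some $\mu$, then $L^{\ff}_{\mu}$ yields $f(s_1,s_2)\trans[\mu]$, so $f(s_1,s_2)\nbis\nil$; conversely, if $s_1\bis\nil$ then $s_1$ has no outgoing transition, and since both the left rules and the rule $S_{\alpha,\bar{\alpha}}^f$ require the first argument to move, $f(s_1,s_2)$ affords no transition either, whence $f(s_1,s_2)\bis\nil$. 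This is exactly the clause $f(c,u)$ in the grammar for $\mathrm{NIL}$ underpinning Lemma~\ref{lem:prove-nil}, which I can cite rather than re-derive, together with the standard facts that $\mu.s\nbis\nil$ and that $s_1+s_2\bis\nil$ iff both $s_1\bis\nil$ and $s_2\bis\nil$.

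Writing $\sigma'=\sigma[x\mapsto u]$, I would then run the induction on $t$ under the hypothesis $\sigma(t)\nbis\nil$. The base cases are immediate: $t=\nil$ cannot occur, since $\sigma(\nil)=\nil\bis\nil$; and if $t$ is a variable $y$, then either $y=x$, so $\sigma'(t)=u\nbis\nil$ by assumption, or $y\neq x$, so $\sigma'(t)=\sigma(t)\nbis\nil$. A prefixed term $t=\mu.t'$ needs no hypothesis, as $\sigma'(t)$ always affords a $\mu$-move. For $t=t_1+t_2$, the hypothesis forces some summand, say $\sigma(t_1)$, to be non-bisimilar to $\nil$; the induction hypothesis gives $\sigma'(t_1)\nbis\nil$, hence $\sigma'(t)\nbis\nil$. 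Finally, for $t=f(t_1,t_2)$ the absorption property turns $\sigma(t)\nbis\nil$ into $\sigma(t_1)\nbis\nil$; the induction hypothesis yields $\sigma'(t_1)\nbis\nil$, and the absorption property applied once more gives $\sigma'(t)=f(\sigma'(t_1),\sigma'(t_2))\nbis\nil$.

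The only case where the precise operational rules for $f$ enter is $t=f(t_1,t_2)$, and this is the step I expect to be the crux: it is the asymmetry of $f$ (all its rules fire on the first argument) that lets $\nbis\nil$ be propagated through $f$ while the second argument is ignored entirely. This is precisely where distributivity in the first argument is used, and it fits the scope of the lemma: for a genuinely non-distributive $f$ the corresponding absorption property fails — as Example~\ref{ex:blocking_f} shows, a non-$\nil$ subterm can be rendered inert inside $f$ — so the statement is indeed tailored to the present case. (If $\sigma$ is allowed to map variables to open terms, the same argument goes through verbatim, since the inductive characterisations of $\bis\nil$ used above hold for open $\FCCS^-$ terms as well, being stable under arbitrary closed instantiation.)
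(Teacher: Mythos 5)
Your proof is correct and follows the same route as the paper, whose proof of Lemma~\ref{Lem:non-nil-subst} is simply stated as ``by induction on the structure of $t$'': you carry out exactly that structural induction, and your key absorption observation ($f(s_1,s_2)\bis\nil$ iff $s_1\bis\nil$, valid here because every rule for $f$ in this section has a premise on the first argument) is the implicit crux, consistent with the $\mathrm{NIL}$ grammar characterisation in the proof of Lemma~\ref{lem:prove-nil}. Your scoping remarks (why the argument is tailored to the case where only $L^{\ff}_{\mu}$ holds, and why it survives substitutions mapping variables to open terms) are accurate as well.
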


%----------------------------------------------------------
\begin{proof}
By induction on the structure $t$.
\end{proof}
%----------------------------------------------------------------

\begin{remark}
\label{rmk:synch}
We have defined the processes $p_n$ in a such a way that an initial synchronization, in the scope of operator $f$, with the process $\alpha$ is always possible.
This choice will allow us to slightly simplify the reasoning in the proof of the upcoming Proposition~\ref{Propn:crux} and thus of the negative result (cf., for instance, with the proof of Proposition~\ref{prop:LaRa_substitution} in Section~\ref{sec:LaRa}).
Clearly, the possibility of synchronization is directly related to which rules of type (\ref{syncrule}) are available for $f$.
However, since $f$ has a rule of type (\ref{asyncruleleft}) for all actions, it is then always possible to identify a pair $\mu,p_n$ such that $f(\mu,p_n) \trans[\tau]$ due to an application of the rule of type (\ref{syncrule}) allowed for $f$.
\end{remark}

We now study some properties of the processes $f(\alpha,p_n)$, which also depend on the particular configuration of rules for $f$ that we are considering.

\begin{lemma}
\label{lem:rhs-prime}
The term $f(\alpha,p_n)$ is prime, for each $n\geq 0$.
\end{lemma}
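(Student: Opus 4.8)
The plan is to reduce the primality of $f(\alpha,p_n)$ to a computation of its norm, relying on the observation recorded immediately after Definition~\ref{Def:prime} that every process of norm $1$ is prime. Thus it suffices to prove that $\norm(f(\alpha,p_n)) = 1$ for each $n \ge 0$.

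First I would read off all the outgoing transitions of $f(\alpha,p_n)$ from the rule set fixed in this section, namely that only $L^{\ff}_{\mu}$ holds for each $\mu \in \Act$ and only $S_{\alpha,\bar{\alpha}}^f$ holds. Since $\alpha \trans[\alpha] \nil$ and, for each $i \in \{0,\dots,n\}$, $p_n \trans[\bar{\alpha}] \alpha^{\scriptstyle \leq i}$, the rule for $L^{\ff}_{\alpha}$ yields $f(\alpha,p_n) \trans[\alpha] \nil \| p_n$, while the synchronisation rule $S_{\alpha,\bar{\alpha}}^f$ yields $f(\alpha,p_n) \trans[\tau] \nil \| \alpha^{\scriptstyle \leq i}$ for every such $i$. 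As no $R^{\ff}_{\mu}$ predicate holds and $\alpha$ performs only an $\alpha$-step, these are exactly the transitions of $f(\alpha,p_n)$.

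The crux is to single out the summand $i = 0$ of $p_n$. There $\alpha^{\scriptstyle \leq 0} = \nil$, so $p_n \trans[\bar{\alpha}] \nil$ and hence $f(\alpha,p_n) \trans[\tau] \nil \| \nil$, whose target is deadlocked. This exhibits a maximal trace of $f(\alpha,p_n)$ of length $1$. Since $f(\alpha,p_n)$ itself has outgoing transitions, it is not deadlocked, so its norm is at least $1$; combined with the maximal trace just found, this gives $\norm(f(\alpha,p_n)) = 1$, and primality follows from the remark after Definition~\ref{Def:prime}.

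I do not expect any serious obstacle. The only point requiring care is the boundary value of the indexed family, namely confirming that $\alpha^{\scriptstyle \leq 0} = \nil$, which is precisely what furnishes the short maximal trace; were the family to start at $i = 1$ instead, the norm would be $2$ and this argument would not apply. Once the norm is pinned to $1$, the conclusion is immediate.
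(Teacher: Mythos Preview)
Your argument is correct and takes a genuinely different, more elementary route than the paper. The paper splits into cases: for $n=0$ it appeals to a depth/norm bound, and for $n\geq 1$ it argues by contradiction, supposing $f(\alpha,p_n)\bis p\|q$ with $p,q\nbis\nil$, matching the transition $f(\alpha,p_n)\mv{\alpha}\nil\|p_n$ by some $p\|q\mv{\alpha}p'\|q$, and then invoking the primality of $p_n$ (Lemma~\ref{lem:rhs-prime_1}(\ref{claim:pn})) to force $q\bis p_n$, which in turn gives $p\|q\mv{\bar\alpha}$---a transition $f(\alpha,p_n)$ cannot perform since only $L^{\ff}_{\mu}$ holds. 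Your approach instead exploits the $i=0$ summand of $p_n$ to exhibit directly a maximal trace of length $1$ via the synchronisation rule, and then invokes the remark after Definition~\ref{Def:prime} that norm~$1$ implies primality. This is uniform in $n$, avoids the case split, and does not need the primality of $p_n$ as an auxiliary lemma; the paper's argument, on the other hand, is closer in spirit to the decomposition reasoning used repeatedly later (e.g.\ in Lemma~\ref{Lem:decomposing}), so it rehearses machinery that reappears elsewhere.
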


%--------------------------------------------------------
\begin{proof}
Since $f(\alpha,p_n)$ is not bisimilar to $\nil$, to prove the statement it suffices only to show that $f(\alpha,p_n)$ is irreducible for $n\geq 0$.
 
If $n=0$ then $f(\alpha,p_n) = f(\alpha,\nil)$ is a term of depth $1$, and is therefore irreducible as claimed.
  
Consider now $n\geq 1$. 
Assume, towards a contradiction, that $f(\alpha,p_n) \bis p\|q$ for two closed terms $p$ and $q$ with $p\nbis \nil$ and $q\nbis \nil$, that is,  $f(\alpha,p_n)$ is
{\sl not} irreducible. 
We have that
\[
f(\alpha,p_n) \mv{\alpha} \nil\| p_n \bis p_n \enspace .
\]
As $f(\alpha,p_n) \bis p\|q$, there is a transition $p\|q \mv{\alpha} r$ for some $r\bis p_n$. 
Without loss of generality, we may assume that $p \mv{\alpha} p'$ and $r = p'\|q$. 
Since we have assumed that $n\geq 1$, by Lemma~\ref{lem:rhs-prime_1}.\eqref{claim:pn} and our assumption that $q\nbis \nil$, we have that $p'\bis \nil$ and $q\bis  p_n$. 
Again using that $n\geq 1$, it follows that $q \mv{\bar{\alpha}} q'$ for some $q'$.  
This means that $p\|q \mv{\bar{\alpha}}$, contradicting the assumption that $f(\alpha,p_n)  \bis p\|q$. 
Thus $f(\alpha,p_n)$ is irreducible, which was to be shown.
\end{proof}
%--------------------------------------------------------------

\begin{lemma}
\label{Lem:decomposing}
Let $n\geq 1$. 
Assume that $f(p,q) \bis f(\alpha, p_n)$, where $q \nbis \nil$. 
Then $p\bis \alpha$ and $q \bis p_n$.
\end{lemma}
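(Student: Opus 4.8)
The plan is to read off the operational behaviour of both sides from the rules in force and then exploit primality. Here the only $\alpha$-labelled transitions of $f(\alpha,p_n)$ and of $f(p,q)$ arise from the left rule $L^{\ff}_{\alpha}$, while the single synchronisation rule contributes only $\tau$-moves. A direct computation gives $\init{f(\alpha,p_n)} = \{\alpha,\tau\}$: the unique $\alpha$-derivative is $p_n$ (since $\alpha \trans[\alpha]\nil$ and $\nil\|p_n \bis p_n$), and the $\tau$-derivatives are exactly the terms $\alpha^{\scriptstyle \leq i}$ for $0\leq i\leq n$. I would also record that $p_n$ has norm $1$, its summand $\bar\alpha$ (the $i=0$ case) yielding a maximal trace of length one, so that $p_n$ is prime (alternatively by Lemma~\ref{lem:rhs-prime_1}).

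First I would establish $q \bis p_n$. Since $f(p,q)\bis f(\alpha,p_n)\trans[\alpha]p_n$, the term $f(p,q)$ must match this move; as its $\alpha$-transitions arise only via the left rule, there is some $p\trans[\alpha]p'$ with $p'\|q \bis p_n$. Because $p_n$ is prime and $q\nbis\nil$ by hypothesis, irreducibility forces $p'\bis\nil$, and hence $q\bis p_n$. The very same argument, applied to \emph{every} transition $p\trans[\alpha]p'$, shows that each such $p'$ is bisimilar to $\nil$.

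It then remains to pin $p$ down as $\alpha$. Applying the left rule once more, $\init{p}\subseteq \init{f(p,q)} = \{\alpha,\tau\}$, so $p$ affords no $\bar\alpha$-move, and we already know that $p$ has at least one $\alpha$-move. The crucial step is to rule out $\tau$-transitions of $p$: if $p\trans[\tau]p'$, the left rule would give $f(p,q)\trans[\tau]p'\|q$, and since $q\bis p_n$ can perform $\bar\alpha$, the target $p'\|q$ can perform $\bar\alpha$ as well; but every $\tau$-derivative of $f(\alpha,p_n)$ is bisimilar to some $\alpha^{\scriptstyle \leq i}$, whose only possible action is $\alpha$, so $p'\|q$ cannot be bisimilar to any of them, contradicting $f(p,q)\bis f(\alpha,p_n)$. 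Consequently $\init{p}=\{\alpha\}$ and every $\alpha$-derivative of $p$ is bisimilar to $\nil$, whence $p\bis\alpha$, which completes the proof.

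The main obstacle is this $\tau$-elimination step. One must separate the harmless synchronisation-induced $\tau$-moves of $f(p,q)$, which lead to targets $p'\|q'\bis\nil\|\alpha^{\scriptstyle \leq i}\bis\alpha^{\scriptstyle \leq i}$ and match the right-hand side perfectly, from a spurious interleaving $\tau$-move coming from $p$ itself. The distinguishing feature is exactly that $q\bis p_n$ retains the ability to fire $\bar\alpha$, a capability that no $\tau$-derivative $\alpha^{\scriptstyle \leq i}$ of $f(\alpha,p_n)$ possesses; this is what makes the asymmetry of the rule set (only $L^{\ff}$ and a single $S_{\alpha,\bar\alpha}$) do the work.
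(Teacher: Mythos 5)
Your proof is correct, and its first half coincides exactly with the paper's: you match the move $f(\alpha,p_n)\trans[\alpha]\nil\|p_n\bis p_n$ through the left rule to obtain $p\trans[\alpha]p'$ with $p'\|q\bis p_n$, and then use the primality of $p_n$ (Lemma~\ref{lem:rhs-prime_1}) together with $q\nbis\nil$ to conclude $p'\bis\nil$ and $q\bis p_n$. Where you genuinely diverge is in pinning down $p\bis\alpha$. The paper first replaces $q$ by $p_n$ using congruence, obtaining $f(p,p_n)\bis f(\alpha,p_n)$, and then eliminates unwanted transitions of $p$ by depth computations: an $\alpha$-derivative $p''$ must satisfy $p''\|p_n\bis p_n$, forcing $\depth(p'')=0$, while a $\tau$-move of $p$ is impossible because $\depth(p''\|p_n)\ge n+1$ exceeds $\depth(\alpha^{\scriptstyle\leq i})\le n$. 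You instead stay with $f(p,q)$ throughout and argue with action capabilities: $\init{p}\subseteq\init{f(p,q)}=\{\alpha,\tau\}$ excludes $\bar{\alpha}$-moves, and a $\tau$-move of $p$ would produce a $\tau$-derivative $p'\|q$ of $f(p,q)$ that can still fire $\bar{\alpha}$ (since $q\bis p_n$), whereas every $\tau$-derivative of $f(\alpha,p_n)$ is bisimilar to some $\alpha^{\scriptstyle\leq i}$ and cannot; your paragraph-two argument then handles the $\alpha$-derivatives. Both mechanisms are sound. The paper's route leans on the congruence property of $\bistext$ and the additivity of depth under $\|$; yours is slightly more elementary, needing only the observation that the $\bar{\alpha}$-capability of $q$ survives interleaving, which is indeed precisely the asymmetry between the left rules and the single synchronisation rule that the lemma exploits.
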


%--------------------------------------------------------------------
\begin{proof}
Since $f(p,q) \bis f(\alpha, p_n)$ and $f(\alpha, p_n) \mv{\alpha} \nil \| p_n \bis p_n$, there is a $p'$ such that $p \mv{\alpha} p'$ and $p'\| q \bis p_n$. 
It follows that $q \bis p_n$ and $p' \bis \nil$, because $p_n$ is prime (Lemma~\ref{lem:rhs-prime_1}(\ref{claim:pn})) and $q\nbis \nil$. 
We are therefore left to prove that $p$ is bisimilar to $\alpha$. 
To this end, note, first of all, that, as $\bis$ is a congruence over the language $\FCCS$, we have that
\[
f(p,p_n) \bis f(\alpha, p_n) \enspace .
\]
Assume now that $p\mv{\mu} p''$ for some action $\mu$ and closed term $p''$. 
In light of the above equivalence, one of the following two cases may arise:
\begin{enumerate}
\item $\mu=\alpha$ and $p''\| p_n \bis p_n $ or
\item $\mu=\tau$ and $p''\| p_n \bis \alpha^{\scriptstyle \leq i}$, for some $i\in\{1,\ldots,n\}$. 
\end{enumerate}
In the former case, $p''$ must have depth $0$ and is thus bisimilar to $\nil$. 
The latter case is impossible, because the depth of $p''\| p_n$ is at least $n+1$. 
  
We may therefore conclude that every transition of $p$ is of the form $p\mv{\alpha} p''$, for some $p''\bis \nil$. 
Since we have already seen that $p$ affords an $\alpha$-labelled transition leading to $\nil$, modulo bisimilarity, it follows that $p\bis \alpha$, which was to be shown.
\end{proof}

%============================================================

\subsection{Proving Theorem~\ref{thm:Labat}}

We now proceed to present a detailed proof of Theorem~\ref{thm:Labat}.
The following result, stating that the invariant property of having a summand bisimilar to $f(\alpha,p_n)$ holds for all closed instantiations of axioms in $\E$, will be the crux in such a proof.

\begin{proposition}
\label{Propn:crux}
Assume an operator $f$ that, modulo $\bistext$, distributes over $+$ in its first argument and such that only $L^{\ff}_{\mu}$ holds for each action $\mu$, and only $S_{\alpha,\bar{\alpha}}^f$ holds.

Let $t \approx u$ be an equation over $\FCCS^-$ that is sound modulo $\bistext$. 
Let $\sigma$ be a closed substitution with $p=\sigma(t)$ and $q=\sigma(u)$. 
Suppose that $p$ and $q$ have neither $\nil$ summands or factors and $p,q \bis f(\alpha,p_n)$ for some $n$ larger than the size of $t$. 
If $p$ has a summand bisimilar to $f(\alpha,p_n)$, then so does $q$.
\end{proposition}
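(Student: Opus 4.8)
The plan is to locate, inside $t$, the syntactic source of the summand of $p$ bisimilar to $f(\alpha,p_n)$, and then to transport it across the sound equation $t\approx u$ to $q$. First I would invoke Remark~\ref{rmk:summands} to write $t=\sum_{i\in I}t_i$, where no $t_i$ has $+$ as head operator and none has $\nil$ summands or factors, so that $p=\sigma(t)=\sum_{i\in I}\sigma(t_i)$. The summand of $p$ bisimilar to $f(\alpha,p_n)$ then arises from one of the $\sigma(t_i)$. Since $f(\alpha,p_n)\nbis\nil$, the case $t_i=\nil$ is excluded; and since $f(\alpha,p_n)$ affords both an initial $\alpha$-transition (to $p_n$) and initial $\tau$-transitions (to the various $\alpha^{\leq j}$), its set of initials is $\{\alpha,\tau\}$, so it cannot be bisimilar to a single prefixed term $\mu.\sigma(t')$; hence $t_i$ is not a prefix either. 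Thus $t_i$ is either a variable or a term of the form $f(t_1,t_2)$.

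In the variable case $t_i=x$, the variable $x$ is a summand of $t$, and since only $L^{\ff}_{\alpha}$ is in force among the relevant predicates we have $x\trt{\sx}{\alpha}x$, hence $x\trt{\sx}{\alpha}t$ by Definition~\ref{def:trt}. Proposition~\ref{prop:x_in_tu} then yields that $u$ also has a summand $x$. Because the summand of $p$ bisimilar to $f(\alpha,p_n)$ is in fact a summand of $\sigma(x)$, and $\sigma(x)$ is a summand of $q=\sigma(u)$, the term $q$ inherits a summand bisimilar to $f(\alpha,p_n)$, as required.

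The case $t_i=f(t_1,t_2)$ is where the real work lies. Here $\sigma(f(t_1,t_2))=f(\sigma(t_1),\sigma(t_2))\bis f(\alpha,p_n)$; since $p$ has no $\nil$ factors, $\sigma(t_1),\sigma(t_2)\nbis\nil$, so Lemma~\ref{Lem:decomposing} gives $\sigma(t_1)\bis\alpha$ and $\sigma(t_2)\bis p_n$. It remains to produce a matching summand of $q$. I would exploit the synchronisation highlighted in Remark~\ref{rmk:synch}: this summand of $p$ affords $p\trans[\alpha]\, r$ with $r\bis p_n$ (via $L^{\ff}_{\alpha}$ and $\sigma(t_1)\trans[\alpha]\nil$) and $p\trans[\tau]\, r_j$ with $r_j\bis\alpha^{\leq j}$ for every $j$ (via $S_{\alpha,\bar{\alpha}}^f$, using $\sigma(t_2)\trans[\bar{\alpha}]\alpha^{\leq j}$). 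Since $p\bis q$, the term $q$ reproduces all of these moves. Analysing their origin in $\sigma(u)$ with Lemmas~\ref{lem:c2o} and~\ref{lem:o2c}, and using that $p_n$ and each $\alpha^{\leq j}$ are prime (Lemma~\ref{lem:rhs-prime_1}) together with unique prime decomposition (Proposition~\ref{prop:unique_decomposition}), one pins down the summand of $q$ responsible for the $\alpha$-move to $p_n$. Because $n$ exceeds the sizes of $t$ and $u$, the depth-$(n{+}1)$ behaviour of $p_n$ cannot be generated by the small term $u$ on its own: Lemmas~\ref{lem:variables},~\ref{Lem:vh-claim_gen} and~\ref{lem:that_one} force it to stem from a closed instance of a variable occurring in a single summand $f(u_1,u_2)$ of $u$, with $\sigma(u_1)\bis\alpha$ and $\sigma(u_2)\bis p_n$. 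By congruence of $\bistext$ we then obtain $f(\sigma(u_1),\sigma(u_2))\bis f(\alpha,p_n)$, the desired summand of $q$.

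I expect the delicate point to be precisely this last step: showing that the $\alpha$-move to $p_n$ and the synchronising $\tau$-moves to the $\alpha^{\leq j}$ are provided by \emph{one and the same} summand of $q$, rather than being distributed over several summands, as they are on the right-hand side of $e_n$. The asymmetry of the rules for $f$ (only $L^{\ff}_{\mu}$ interleaves, and a $\bar{\alpha}$-move of a right argument propagates only through synchronisation with an $\alpha$-move of the left argument) is what makes this possible, and the argument will rely essentially on probing the open bisimilarity $t\bis u$ with substitutions tailored so that the primality and size constraints isolate the relevant $f$-summand.
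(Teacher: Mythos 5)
Your set-up coincides with the paper's: the decomposition via Remark~\ref{rmk:summands}, the three-way case analysis on $t_i$, the variable case handled by Proposition~\ref{prop:x_in_tu}, and the use of Lemma~\ref{Lem:decomposing} to obtain $\sigma(t')\bis\alpha$ and $\sigma(t'')\bis p_n$ are all exactly as in the paper. The gap is in the case $t_i=f(t',t'')$, and it is precisely the point you flag yourself at the end. Your argument there runs entirely at the level of the \emph{given} substitution $\sigma$: you list the transitions of $p$ (the $\alpha$-move to $p_n$ and the $\tau$-moves to the $\alpha^{\leq j}$), match them in $q$ using $p\bis q$, and then try to pin down a single responsible summand of $q$ via primality, unique decomposition, and size bounds. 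No argument of this shape can close the case, because every piece of information it uses is also satisfied by terms that do \emph{not} have the desired summand. Concretely, take $t=f(x_1,x_2)$ and $u=y$, with $\sigma(x_1)=\alpha$, $\sigma(x_2)=p_n$ and $\sigma(y)=\alpha p_n+\sum_{i=0}^{n}\tau\alpha^{\leq i}$ (the right-hand side of the witness equation $e_n$ of Section~\ref{sec:Labat}): then $\sigma(t)\bis\sigma(u)\bis f(\alpha,p_n)$, neither side has $\nil$ summands or factors, both open terms are small, $\sigma(t)$ has the good summand, $q=\sigma(u)$ matches every transition of $p$ --- and yet $q$ has no summand bisimilar to $f(\alpha,p_n)$, and $u$ has no $f$-summand at all, contradicting what your chain of Lemmas~\ref{lem:c2o}, \ref{lem:variables}, \ref{Lem:vh-claim_gen}, \ref{lem:that_one} is supposed to force. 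What excludes such a $u$ in the proposition is only that $t\approx u$ would then fail to be sound as an \emph{open} equation; so the proof must probe $t\approx u$ under a substitution other than $\sigma$. Your closing sentence gestures at ``probing with tailored substitutions'', but that tailored substitution is the entire content of the hard case, and you never construct it.

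What the paper does, and what is missing from your proposal, is the following. First, it decomposes $t''$ (not $u$) into summands and uses $n>\mathrm{size}(t)$ together with Lemma~\ref{Lem:vh-claim_gen} to find a \emph{variable} $x$ that is a summand of $t''$, with $\sigma(x)\bis\bar{\alpha}\alpha^{\leq i_1}+\cdots+\bar{\alpha}\alpha^{\leq i_m}$ for some $m>1$, and with $x\notin\var(t')$. It then probes with $\sigma'=\sigma[x\mapsto\bar{\alpha}f(\alpha,p_n)]$: by the synchronisation rule $S_{\alpha,\bar{\alpha}}^f$, $\sigma'(t_i)\mv{\tau}p'\bis f(\alpha,p_n)$, so some summand $u_j$ of $u$ must satisfy $\sigma'(u_j)\mv{\tau}q'\bis f(\alpha,p_n)$; since $\depth(\sigma'(u_j))\geq n+3$ while $\depth(\sigma(u))=n+2$, this forces $x\in\var(u_j)$. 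A case analysis on the shape of $u_j$ under $\sigma'$ (variable and prefix shapes are killed by depth arguments; the shape $f(u',u'')$ is analysed through the possible origins of the $\tau$-move) then shows that $u''$ has $x$ as a summand and finally that $\sigma(u_j)\bis f(\alpha,p_n)$. Note also that this route uses only the hypothesis $n>\mathrm{size}(t)$, exactly as the proposition states, whereas your argument additionally assumes $n>\mathrm{size}(u)$, which is not among the hypotheses of Proposition~\ref{Propn:crux}.
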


%-------------------------------------------------------------------
\begin{proof}
Observe, first of all, that since $\sigma(t)=p$ and $\sigma(u)=q$ have no $\nil$ summands or factors, then neither do $t$ and $u$.  
Hence, by Remark~\ref{rmk:summands}, we have that, for some finite non-empty index sets $I, J$, $t = \sum_{i\in I} t_i$ and $u = \sum_{j\in J} u_j$, where none of the $t_i$ ($i\in I$) and $u_j$ ($j\in J$) is $\nil$, has $+$ as its head operator, has $\nil$ summands and factors.

As $p=\sigma(t)$ has a summand bisimilar to $f(\alpha,p_n)$, there is an index $i\in I$ such that $\sigma(t_i) \bis f(\alpha,p_n)$. 

Our aim is now to show that there is an index $j\in J$ such that $\sigma(u_j) \bis f(\alpha,p_n)$, proving that $q=\sigma(u)$ also has a summand bisimilar to $f(\alpha,p_n)$. 

We proceed by a case analysis on the form $t_i$ may have.

\begin{enumerate}
\item \label{Case:t-var} 
\textcolor{blue}{\sc Case $t_i = x$ for some variable $x$}.
In this case, we have $\sigma(x)\bis f(\alpha,p_n)$, and $t$ has $x$ as a summand. 
As $t \approx u$ is sound modulo bisimilarity and neither $t$ nor $u$ have $\nil$ summands or factors, it follows that $u$ also has $x$ as a summand (Proposition~\ref{prop:x_in_tu}). 
Thus there is an index $j\in J$ such that $u_j = x$, and, modulo bisimulation, $\sigma(u)$ has $f(\alpha,p_n)$ as a summand, which was to be shown.

%%%

\item \label{Case:t-pre} 
\textcolor{blue}{\sc Case $t_i = \mu t'$ for some term $t'$}. 
This case is vacuous because, since $\mu \sigma(t') \mv{\mu} \sigma(t')$ is the only transition afforded by $\sigma(t_i)$, this term cannot be bisimilar to $f(\alpha,p_n)$. 
Indeed $f(\alpha,p_n)$ can perform both, an $\alpha$-labelled transition triggered by the first argument, and the $\tau$-move due to the synchronisation between $\alpha$ and $p_n$.

%%%

\item \label{Case:t-hmerge} 
\textcolor{blue}{\sc Case $t_i = f(t',t'')$ for some terms $t',t''$}. 
In this case, $f(\sigma(t'),\sigma(t'')) \bis f(\alpha,p_n)$.
As $\sigma(t_i)$ has no $\nil$ factors, it follows that $\sigma(t')\nbis \nil$ and $\sigma(t'')\nbis \nil$.  
Thus $\sigma(t') \bis \alpha$ and $\sigma(t'')\bis p_n$ (Lemma~\ref{Lem:decomposing}). 
Now, $t''$ can be written as $t'' = v_1 + \cdots + v_\ell,  (\ell > 0)$, where none of the summands $v_i$ is $\nil$ or a sum. 
Observe that, since $n$ is larger than the size of $t$, we have that $\ell < n$.
Hence, since $\sigma(t'')\bis p_n = \sum_{i=1}^{n} \bar{\alpha}\alpha^{\scriptstyle \leq i}$, there must be some $h\in\{1,\ldots,\ell\}$ such that $\sigma(v_h) \bis \bar{\alpha}.\alpha^{\scriptstyle \leq i_1} + \cdots + \bar{\alpha}.\alpha^{\scriptstyle \leq i_m}$ for some $m>1$ and $1\leq i_1 < \ldots < i_m\leq n$. 
The term $\sigma(v_h)$ has no $\nil$ summands or factors---or else, so would $\sigma(t'')$, and thus $p=\sigma(t)$. 
By Lemma~\ref{Lem:vh-claim_gen}, it follows that $v_h$ can only be a variable $x$ and thus that
\begin{equation}
\label{Eqn:sigma(x)}
\sigma(x) \bis \bar{\alpha}.\alpha^{\scriptstyle \leq i_1} + \cdots + \bar{\alpha}.\alpha^{\scriptstyle \leq i_m} \enspace .
\end{equation}
Observe, for later use, that, since $t'$ has no $\nil$ factors, the above equation yields that $x\not\in\var(t')$---or else $\sigma(t')\nbis \alpha$ (Lemma~\ref{lem:that_one}). 
So, modulo bisimilarity, $t_i$ has the form $f(t',(x+t'''))$, for some term $t'''$, with $x\not\in\var(t')$ and $\sigma(t')\bis \alpha$.
    
Our order of business will now be to use the information collected so far to argue that $\sigma(u)$ has a summand bisimilar to $f(\alpha,p_n)$. 
To this end, consider the substitution 
\[
\sigma' = \sigma[x \mapsto \bar{\alpha} f(\alpha,p_n)] \enspace . 
\]
We have that 
\begin{align*}
\sigma'(t_i) &  ={} f(\sigma'(t'),\sigma'(t'')) \\
& ={} f(\sigma(t'),\sigma'(t'')) & \text{(As $x\not\in\var(t')$)} \\ 
& \bis f(\alpha, (\bar{\alpha} f(\alpha,p_n) +\sigma'(t''')) & \text{(As $t'' = x + t'''$).}
\end{align*}
Thus, $\sigma'(t_i) \mv{\tau} p' \bis f(\alpha,p_n)$ for some $p'$, so that 
\[
\sigma'(t) \mv{\tau} p' \bis f(\alpha,p_n)
\]
also holds. 
Since $t\approx u$ is sound modulo $\bistext$, it follows that 
\[
\sigma'(t) \bis \sigma'(u) \enspace . 
\]
Hence, we can infer that there are a $j\in J$ and a $q'$ such that
\begin{equation}
\label{Eqn:A.14}
\sigma'(u_j) \mv{\tau} q' \bis f(\alpha,p_n) \enspace .
\end{equation}
Recall that, by one of the assumptions of the proposition, $\sigma(u)\bis f(\alpha,p_n)$, and thus $\sigma(u)$ has depth $n+2$. 
On the other hand, by (\ref{Eqn:A.14}),
\[
\depth(\sigma'(u_j))\geq n+3 \enspace . 
\]
Since $\sigma$ and $\sigma'$ differ only in the closed term they map variable $x$ to, it follows that
\begin{equation}
\label{Eqn:A.15}
x\in \var(u_j) \enspace .
\end{equation}
We now proceed to show that $\sigma(u_j)\bis f(\alpha,p_n)$ by a further case analysis on the form a term $u_j$ satisfying (\ref{Eqn:A.14}) and (\ref{Eqn:A.15}) may have.
\begin{enumerate}
\item \label{Case:uj-var} 
\textcolor{red}{\sc Case $u_j = x$}. 
This case is vacuous because $\sigma' (x)= \bar{\alpha} f(\alpha,p_n)\nv{\tau}$, and thus this possible form for $u_j$ does not meet (\ref{Eqn:A.14}).
%%%

\item \label{Case:uj-pre} 
\textcolor{red}{\sc Case $u_j = \mu u'$ for some term $u'$}. 
In light of (\ref{Eqn:A.14}), we have that $\mu=\tau$ and $q'=\sigma'(u')\bis f(\alpha,p_n)$. 
Using (\ref{Eqn:A.15}) and the fact that $u'$ has no $\nil$ factors, we get $\depth(\sigma'(u'))\geq n+3$ (Lemma~\ref{lem:that_one}).  
Since $f(\alpha,p_n)$ has depth $n+2$, this contradicts $q'\bis f(\alpha,p_n)$.
%%%

\item \label{Case:uj-hmerge} 
\textcolor{red}{\sc Case $u_j = f(u',u'')$ for some terms $u',u''$}. 
Our assumption that $\sigma(u)$ has no $\nil$ factors yields that none of the terms $u',u'',\sigma(u')$ and $\sigma(u'')$ is bisimilar to $\nil$. 
Moreover, by (\ref{Eqn:A.15}), either $x\in\var(u')$ or $x\in\var(u'')$.
      
Since $\sigma'(u_j)=f(\sigma'(u'),\sigma'(u''))$ affords transition (\ref{Eqn:A.14}), we have that $q' = q_1\|q_2$ for some $q_1,q_2$. As $f(\alpha,p_n)$ is prime (Lemma~\ref{lem:rhs-prime}), it follows that either $q_1\bis \nil$ or $q_2\bis \nil$. 
Hence, we can distinguish two cases, according to the possible origins for transition (\ref{Eqn:A.14}):

\begin{enumerate} 
\item \textcolor{ForestGreen}{$\sigma'(u') \mv{\tau} q_1$ and $q_2 =\sigma'(u'')$.}
We now proceed to argue that this case produces a contradiction.

To this end, note first of all that $\sigma'(u'')\nbis\nil$, because $\sigma(u'')\nbis \nil$ (Lemma~\ref{Lem:non-nil-subst}). 
Thus it must be the case that $q_1\bis \nil$ and $q_2 =\sigma'(u'')\bis f(\alpha,p_n)$. 
In light of the definition of $\sigma'$, it follows that $x$ occurs in $u'$, but not in $u''$ (Lemma~\ref{lem:that_one}). 
Therefore, since $\sigma$ and $\sigma'$ only differ at the variable $x$,
\[
\sigma(u'') = \sigma'(u'')\bis f(\alpha,p_n) \enspace .
\]
Since $\bis$ is a congruence, we derive that 
\begin{equation}
\label{Eqn:uprime-tau}
\sigma(u_j) = f(\sigma(u'),\sigma(u'')) \bis f(\sigma(u'), f(\alpha,p_n)).
\end{equation}
Since $\sigma(u')\nbis \nil$ because $q=\sigma(u)$ has no $\nil$-factors, we may infer that
\begin{align*}
n+2 ={} & \depth(f(\alpha,p_n)) \\
={} & \depth(\sigma(u)) & \text{(As $\sigma(u) \bis f(\alpha,p_n)$)} \\ 
\ge{} & \depth(\sigma(u_j)) \\
={} & \depth(\sigma(u')) + n+2 & \text{(By (\ref{Eqn:uprime-tau}))} \\ 
>{} & n+2  & \text{(As $\depth(\sigma(u'))>0$),}
\end{align*}
which is the desired contradiction. 

\item \textcolor{ForestGreen}{$\sigma'(u') \mv{\alpha} q_1$ and $\sigma'(u'') \mv{\bar{\alpha}} q_2$.}
Recall that exactly one of $q_1,q_2$ is bisimilar to $\nil$. 
We proceed with the proof by considering these two possible cases in turn.
\begin{itemize}
\item \textcolor{purple}{\sc Case $q_1\bis \nil$}. 
Our order of business will be to argue that, in this case, $\sigma(u_j) \bis f(\alpha,p_n)$, and thus that $q=\sigma(u)$ has a summand bisimilar to  $f(\alpha,p_n)$.
        
To this end, observe, first of all, that $q_2\bis f(\alpha,p_n)$ by (\ref{Eqn:A.14}). 
It follows that $x\in\var(u'')$, for otherwise we could derive a contradiction thus:
\begin{align*}
\depth(f(\alpha,p_n)) ={} & \depth(\sigma(u)) & \text{(As $\sigma(u) \bis f(\alpha,p_n)$)} \\ 
\ge{} & \depth(\sigma(u_j)) \\
>{} & \depth(\sigma(u'')) & \text{(As $\depth(\sigma(u'))>0$)}\\
={} & \depth(\sigma'(u'')) & \text{(As $x\not\in\var(u'')$)} \\
>{} & \depth(f(\alpha,p_n)) & \text{(As $\sigma'(u'')\mv{\bar{\alpha}}q_2\bis f(\alpha,p_n)$).}  
\end{align*}        
Moreover, we claim that $x\not\in\var(u')$. 
Indeed, if $x$ also occurred in $u'$, then, since $u'$ has no $\nil$ factors, the term $\sigma(x)$ would contribute to the behaviour of $\sigma(u_j)$. 
Therefore, by (\ref{Eqn:sigma(x)}), the term $\sigma(u_j)$ would afford a sequence of actions containing two occurrences of $\bar{\alpha}$, contradicting our assumption that $\sigma(u)\bis f(\alpha,p_n)$. 
        
Observe now that, as $\sigma'(u'') \mv{\bar{\alpha}} q_2 \bis f(\alpha,p_n)$, it must be the case that $u''$ has a summand $x$.
To see that this does hold, we examine the other possible forms  a summand $w$ of $u''$ responsible for the transition 
\[
\sigma'(u'') \mv{\bar{\alpha}} q_2 \bis f(\alpha,p_n)
\]
may have, and argue that each of them leads to a contradiction. 
\begin{enumerate}
\item \textcolor{magenta}{\sc Case $w = \bar{\alpha} w'$, for some term $w'$}. 
In this case, $q_2 = \sigma'(w')$. However, the depth of such a $q_2$ is either smaller than $n+2$ (if $x\not\in\var(w')$), or larger than $n+2$ (if $x\in\var(w')$). 
More precisely, in the former case $x \not \in \var(w')$ implies $\sigma(w) = \sigma'(w)$ and thus $\sigma(u) \bis f(\alpha,p_n)$ gives $n+2 = \depth(\sigma(u)) \ge \depth(\sigma(w)) = 1 + \depth(\sigma(w'))$, giving $\depth(\sigma'(w')) \le n+1$.
In the latter case, as $x \in \var(w')$ and $w'$ does not have $\nil$ factors (or otherwise $u''$ would have $\nil$ factors), by Lemma~\ref{lem:that_one}, we would have $\depth(\sigma'(w')) \ge \depth(\sigma'(x)) = n+3$.
Both cases then contradict the fact that $q_2$ is bisimilar to $f(\alpha,p_n)$, because the latter term has depth $n+2$.
\item \textcolor{magenta}{\sc Case $w = f(w_1,w_2)$, for some terms $w_1$ and $w_2$}. 
Observe, first of all, that $\sigma(w_1)$ and $\sigma(w_2)$ are not bisimilar to $\nil$, because $\sigma(u)$ has no $\nil$ factors. 
It follows that $\sigma'(w_1)$ and $\sigma'(w_2)$ are not bisimilar to $\nil$ either (Lemma~\ref{Lem:non-nil-subst}). 
          
Now, since
\[
\sigma'(w)=f(\sigma'(w_1),\sigma'(w_2)) \mv{\bar{\alpha}} q_2 \enspace ,
\]
there is a closed term $q_3$ such that $\sigma'(w_1) \mv{\bar{\alpha}} q_3$ and 
\[
q_2 = q_3 \| \sigma'(w_2) \bis f(\alpha,p_n) \enspace .
\]
As the term $f(\alpha,p_n)$ is prime, and $\sigma'(w_2)$ is not bisimilar to $\nil$, we may infer that $q_3\bis \nil$ and 
\[
\sigma'(w_2) \bis f(\alpha,p_n) \enspace .
\]
It follows that $x\not\in\var(w_2)$, or else the depth of $\sigma'(w_2)$ would be at least $n+3$, and therefore that 
\[
\sigma'(w_2)=\sigma(w_2) \bis f(\alpha,p_n) \enspace .
\]
However, this contradicts our assumption that 
\[
q = \sigma(u) \bis f(\alpha,p_n) \enspace .
\]
\end{enumerate}
Summing up, we have argued that $u''$ has a summand $x$.
Therefore, by (\ref{Eqn:sigma(x)}),
\[
\sigma(u'') \bis \bar{\alpha}.\alpha^{\scriptstyle \leq i_1} + \cdots + \bar{\alpha}.\alpha^{\scriptstyle \leq i_m} + r'' \enspace ,
\]
for some closed term $r''$. We have already noted that 
\[
\sigma(u') = \sigma'(u')\mv{\alpha} q_1 \bis \nil \enspace .
\]
Therefore, we have that 
\[
\sigma(u') \bis \alpha + r' \enspace ,
\]
for some closed term $r'$. 
Using the congruence properties of bisimilarity, we may infer
\[
\sigma(u_j) = 
f(\sigma(u'),\sigma(u'')) \bis 
f((\alpha + r'), (\sum_{j = 1}^m \bar{\alpha}.\alpha^{\scriptstyle \leq i_j} + r'')).
\]
In light of this equivalence, we have that 
\[
\sigma(u_j) \mv{\alpha} r \bis \sum_{j = 1}^m \bar{\alpha}.\alpha^{\scriptstyle \leq i_j} + r'' \bis \sigma(u''),
\]
for some closed term $r$, and thus 
\[
q = \sigma(u) \mv{\alpha} r \enspace . 
\]
Since $q = \sigma(u) \bis f(\alpha,p_n)$ by our assumption, it must be the case that $r \bis \sigma(u'')\bis p_n$.
So, again using the congruence properties of $\bis$,  we have that
\[
\sigma(u_j) = f(\sigma(u'),\sigma(u'')) \bis f((\alpha + r'), p_n).
\]
As $\sigma(u) \bis f(\alpha, p_n)$, using Lemma~\ref{Lem:decomposing} it is now a simple matter to infer that
\[
\sigma(u') \bis \alpha \enspace . 
\]
Hence $\sigma(u_j)\bis f(\alpha, p_n)$. 
Note that $\sigma(u_j)$ is a summand of $q=\sigma(u)$. 
Therefore $q$ has a summand bisimilar to $f(\alpha, p_n)$, which was to be shown.
%%%
\item \textcolor{purple}{\sc Case $q_2\bis \nil$}. 
We now proceed to argue that this case produces a contradiction. 
To this end, observe, first of all, that $q_1\bis f(\alpha, p_n)$. 
Reasoning as in the analysis of the previous case, we may infer that $x$ occurs in $u'$, but $x$ does not occur in $u''$. 
Moreover, as $\sigma'(u') \mv{\alpha} q_1\bis f(\alpha, p_n)$, it must be the case that $u' \mv{\alpha} u'''$ for some $u'''$ such that 
\[
\sigma'(u''') = q_1 \bis f(\alpha,p_n) \enspace .
\]
\textcolor{DarkSlateGray}{(For, otherwise, using Lemma~\ref{lem:c2o}.\ref{lem:c2o_l}, we would have that $\sigma'(u') \mv{\alpha} q_1$ because $u' \mv{y} c$, $\sigma(y) \mv{\alpha} q_1'$ and $q_1 = \sigma'[y_d \mapsto q_1'](c)$, for some variable $y$, configuration $c$ and closed term $q_1'$. 
Then we would necessarily have that $y\neq x$.
In fact, if $y=x$, then we would have that $\alpha=\bar{\alpha}$ by the definition of $\sigma'$, contradicting the distinctness of these two complementary actions.      
Observe now that, again in light of the definition of $\sigma'$, the variable $x$ cannot occur in $c$, or else the depth of 
$
q_1 = \sigma'[y_d \mapsto q_1'](c)
$
would be at least $n+3$, contradicting our assumption that 
$
q_1 \bis f(\alpha, p_n) \enspace . 
$
Hence, since the variable $y$ is different from $x$, it is not hard to see that $\sigma(u') \mv{\alpha} q_1$ also holds, and thus that 
$
\depth(q_1) < \depth(\sigma(u)) = n+2 \enspace ,
$
contradicting our assumption that $q_1 \bis f(\alpha, p_n)$.)}

Since $u$ contains no $\nil$ factors, in light of the definition of $\sigma'$, this $u'''$ cannot contain occurrences of the variable $x$.
\textcolor{DarkSlateGray}{(For, otherwise, Lemma~\ref{lem:that_one} would yield that
$
\depth(\sigma'(u''')) = \depth(q_1) \geq n+3 \enspace , 
$
contradicting our assumption that $q_1 \bis f(\alpha, p_n)$.)}

So
\[
\sigma(u''') = q_1 \bis f(\alpha , p_n) 
\]
also holds. 
Thus 
\begin{align*}
n+2 ={} & \depth(f(\alpha, p_n)) \\
={} & \depth(\sigma(u)) & \text{(As $\sigma(u) \bis f(\alpha, p_n)$)} \\
\ge{} & \depth(\sigma(u_j)) \\
={} & \depth(f(\sigma(u'),\sigma(u''))) \\
>{} & \depth(\sigma(u''')) + \depth(\sigma(u'')) &  \text{(As $\sigma(u')\mv{\alpha} \sigma(u''')$)} \\
>{} &  n+2
\end{align*}
where the last inequality follows by the fact that $\depth(\sigma(u''))>0$ and $\depth(\sigma(u'''))=n+2$, and gives the desired contradiction. 
\end{itemize}
\end{enumerate}
This completes the proof for the case $u_j = f(u',u'')$ for some terms $u',u''$.
\end{enumerate}
\end{enumerate}
The proof of Proposition~\ref{Propn:crux} is now complete. 
\end{proof}

We are now ready to prove Theorem~\ref{thm:Labat}. \\

\begin{apx-proof}{Theorem~\ref{thm:Labat}}
Assume that $\E$ is a finite axiom system over the language $\FCCS^-$ that is sound modulo bisimilarity, and that the following hold, for some closed terms $p$ and $q$ and positive integer $n$ larger than the size of each term in the equations in $\E$:
\begin{enumerate}
\item $\E \vdash p \approx q$,
\item $p \bis q \bis f(\alpha,p_n)$, 
\item $p$ and $q$ contain no occurrences of $\nil$ as a summand or factor, and
\item $p$ has a summand bisimilar to $f(\alpha,p_n)$.
\end{enumerate}
We prove that $q$ also has a summand bisimilar to $f(\alpha,p_n)$ by induction on the depth of the closed proof of the equation $p\approx q$ from $\E$. 
Recall that, without loss of generality, we may assume that the closed terms involved in the proof of the equation $p \approx q$ have no $\nil$ summands or factors (by Proposition~\ref{Propn:proofswithout0}, as $\E$ may be assumed to be saturated), and that applications of symmetry happen first in equational proofs
(that is, $\E$ is closed with respect to symmetry). 
  
We proceed by a case analysis on the last rule used in the proof of $p \approx q$ from $\E$. 
The case of reflexivity is trivial, and that of transitivity follows immediately by using the inductive hypothesis twice. 
Below we only consider the other possibilities.
\begin{itemize}
\item \textcolor{blue}{\sc Case $\E \vdash p \approx q$, because $\sigma(t)=p$ and $\sigma(u)=q$ for some equation $(t\approx u)\in E$ and closed substitution $\sigma$}. 
Since $\sigma(t)=p$ and $\sigma(u)=q$ have no $\nil$ summands or factors, and $n$ is larger than the size of each term mentioned in equations in $\E$, the claim follows by Proposition~\ref{Propn:crux}.
\item \textcolor{blue}{\sc Case $\E \vdash p \approx q$, because $p=\mu p'$ and $q=\mu q'$ for some $p',q'$ such that $E \vdash p' \approx q'$}.
This case is vacuous because $p=\mu p'\nbis f(\alpha , p_n)$, and thus $p$ does not have a summand bisimilar to $f(\alpha, p_n)$.
\item \textcolor{blue}{\sc Case $\E \vdash p \approx q$, because $p=p'+p''$ and $q=q'+q''$ for some $p',q',p'',q''$ such that $E \vdash p' \approx q'$ and $E \vdash p'' \approx q''$}.
Since $p$ has a summand bisimilar to $f(\alpha , p_n)$, we have that so does either $p'$ or $p''$. 
Assume, without loss of generality, that $p'$ has a summand bisimilar to $f(\alpha , p_n)$. 
Since $p$ is bisimilar to $f(\alpha , p_n)$, so is $p'$. Using the soundness of $\E$ modulo bisimulation, it follows that $q'\bis f(\alpha , p_n)$. 
The inductive hypothesis now yields that $q'$ has a summand bisimilar to $f(\alpha , p_n)$. 
Hence, $q$ has a summand bisimilar to $f(\alpha , p_n)$, which was to be shown.
\item \textcolor{blue}{\sc Case $\E \vdash p \approx q$, because $p=f(p',p'')$ and $q=f(q',q'')$ for some $p',q',p'',q''$ such that $E \vdash p' \approx q'$ and $E \vdash p'' \approx q''$}. 
Since the proof involves no uses of $\nil$ as a summand or a factor, we have that $p',p''\nbis \nil$ and $q',q''\nbis \nil$. 
It follows that $q$ is a summand of itself. 
By our assumptions, $f(\alpha , p_n) \bis q$.
Therefore we have that $q$ has a summand bisimilar to $f(\alpha, p_n)$, and we are done.
\end{itemize}
This completes the proof of Theorem~\ref{thm:Labat} and thus of Theorem~\ref{Thm:nonfin-en} in the case of an operator $f$ that, modulo bisimilarity distributes over summation in its first argument. 
\end{apx-proof}

%==========================================
% sec - LaRa
%=========================================

\section{Negative result: the case $L^{\ff}_{\alpha} \wedge R^{\ff}_{\alpha}$}
\label{sec:LaRa}

In this section we investigate the first case, out of three, related to an operator $f$ that, modulo bisimilarity, does not distribute over summation in either of its arguments.

We choose $\alpha \in \{a,\bar{a}\}$ and we assume that the set of rules for $f$ includes 
\[
\SOSrule{x_1 \trans[\alpha] y_1}{f(x_1,x_2) \trans[\alpha] y_1 \| x_2}
\qquad\quad
\SOSrule{x_2 \trans[\alpha] y_2}{f(x_1,x_2) \trans[\alpha] x_1 \| y_2} \enspace ,
\]
namely, predicate $L^{\ff}_{\alpha} \wedge R^{\ff}_{\alpha}$ holds for $f$.

We stress that the validity of the negative result we prove in this section does not depend on which types of rules with labels $\bar{\alpha}$ and $\tau$ are available for $f$.
Moreover, the case of an operator for which $L^{\ff}_{\bar{\alpha}} \wedge R^{\ff}_{\bar{\alpha}}$ holds can be easily obtained from the one we are considering, and it is therefore omitted.

We now introduce the infinite family of valid equations, modulo bisimilarity, that will allow us to obtain the negative result in the case at hand.
We define
\begin{align*}
& q_n = \sum_{i = 0}^n \alpha\bar{\alpha}^{\scriptsize \le i} & (n \ge 0) \enspace\phantom{.} \\
& e_n \colon \quad f(\alpha,q_n) \approx \alpha q_n + \sum_{i = 0}^n \alpha ( \alpha \| \bar{\alpha}^{\scriptsize \le i} ) & (n \ge 0) \enspace.
\end{align*}

Following the proof strategy from Section~\ref{sec:proof_method}, we aim to show that, when $n$ is \emph{large enough}, the witness property of having a summand bisimilar to $f(\alpha,q_n)$ is preserved by derivations from a finite, sound axiom system $\E$, as stated in the following theorem:

\begin{theorem}
\label{thm:LaRa}
Assume an operator $f$ such that $L^{\ff}_{\alpha} \wedge R^{\ff}_{\alpha}$ holds.
Let $\E$ be a finite axiom system over $\FCCS$ that is sound modulo $\bistext$, $n$ be larger than the size of each term in the equations in $\E$, and $p,q$ be closed terms such that $p,q \bis f(\alpha,q_n)$.
If $\E \vdash p \approx q$ and $p$ has a summand bisimilar to $f(\alpha,q_n)$, then so does $q$.
\end{theorem}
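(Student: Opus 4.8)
The plan is to follow exactly the template of the proof of Theorem~\ref{thm:Labat}: reduce (via Proposition~\ref{Prop:no-par}) to proving that the invariant ``having a summand bisimilar to $f(\alpha,q_n)$'' is preserved by equational derivation over $\FCCS^-$, establish this by induction on the depth of a closed proof, and isolate the substitution-instance step as a separate \emph{crux} proposition (the analogue of Proposition~\ref{Propn:crux}). The routine congruence cases of the induction go through as in Theorem~\ref{thm:Labat}: the prefix case is vacuous because $f(\alpha,q_n)$ has, for $n\ge 1$, at least two non-bisimilar $\alpha$-derivatives, namely $q_n$ and $\alpha \mathbin{\|} \bar{\alpha}^{\le 1}$, whereas a prefix term has a single derivative; the $+$ case applies the induction hypothesis to the summand carrying the witness; and the $f$-congruence case is immediate since $f(p',p'')$ is its own unique summand and is assumed bisimilar to $f(\alpha,q_n)$.

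Before the crux I would record the case-specific behaviour of the witness. Because only $\alpha$ is produced by $\alpha$ and by $q_n$, and these are not complementary, $f(\alpha,q_n)$ performs \emph{only} $\alpha$-moves \emph{regardless of the $\bar{\alpha}$- and $\tau$-rules of $f$}, and its $\alpha$-derivatives are exactly $q_n$ (from the $L^{\ff}_\alpha$-rule) and $\alpha \mathbin{\|} \bar{\alpha}^{\le i}$, $0\le i\le n$ (from the $R^{\ff}_\alpha$-rule). In particular $f(\alpha,q_n)\bis \alpha\mathbin{\|}q_n$, so --- and this is the decisive difference from Section~\ref{sec:Labat} --- the witness is \emph{not} prime; its unique prime decomposition (Proposition~\ref{prop:unique_decomposition}) is $\{\!|\,\alpha,\,q_n\,|\!\}$, with $q_n$ prime by Lemma~\ref{lem:rhs-prime_1}(\ref{claim:pn}). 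I would then prove the replacement for Lemma~\ref{Lem:decomposing}: if $f(p,r)\bis f(\alpha,q_n)$ with $p,r\nbis\nil$ and $n\ge 1$, then $\{p,r\}=\{\alpha,q_n\}$ up to $\bistext$. The argument exploits that, since $L^{\ff}_\alpha\wedge R^{\ff}_\alpha$ holds and synchronisation produces $\tau$ rather than $\alpha$, the operators $f(p,r)$ and $p\mathbin{\|}r$ have \emph{the same} set of $\alpha$-derivatives; matching the unique derivative bisimilar to the prime $q_n$ forces (using $r\nbis\nil$) one factor to be $\bis q_n$, and matching the maximal-depth derivative $\alpha\mathbin{\|}\bar{\alpha}^{\le n}$ together with unique decomposition forces the other to be $\bis \alpha$.

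The crux is then proved, as in Proposition~\ref{Propn:crux}, by a case analysis on the summand $t_i$ of $t$ with $\sigma(t_i)\bis f(\alpha,q_n)$. The variable case is identical, invoking Proposition~\ref{prop:x_in_tu} to transfer the summand $x$ to $u$, and the prefix case is vacuous as above. In the case $t_i=f(t',t'')$ the decomposition lemma yields $\{\sigma(t'),\sigma(t'')\}=\{\alpha,q_n\}$; unlike Section~\ref{sec:Labat} both orderings must be handled, but symmetrically. Writing $\sigma(t'')\bis q_n=\sum_i \alpha.\bar{\alpha}^{\le i}$ and $\sigma(t')\bis\alpha$, I would locate a summand of $t''$ whose instance is bisimilar to $\sum_{j}\alpha.\bar{\alpha}^{\le i_j}$ with more than one summand and conclude, by Lemma~\ref{Lem:vh-claim_gen} (applied with $\nu=\alpha$, $\mu=\bar{\alpha}$), that this summand is a variable $x$, with $x\notin\var(t')$ by Lemma~\ref{lem:that_one}. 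A modified substitution $\sigma'$ replacing $x$ by a term of the form $\alpha f(\alpha,q_n)$ then makes $\sigma'(t_i)$ acquire, via the $R^{\ff}_\alpha$-rule, an $\alpha$-derivative of strictly greater depth containing a summand bisimilar to $f(\alpha,q_n)$; soundness of $t\approx u$ produces a summand $u_j$ of $u$ with $x\in\var(u_j)$ matching this derivative, and a depth/decomposition analysis of the possible shapes of $u_j$ --- each discarded except the one forcing $\sigma(u_j)\bis f(\alpha,q_n)$ --- completes the step.

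The main obstacle is precisely the loss of primality: every place where Theorem~\ref{thm:Labat} used that $f(\alpha,p_n)$ is prime (Lemma~\ref{lem:rhs-prime}) must be re-engineered around the decomposition $\{\!|\,\alpha,\,q_n\,|\!\}$ and the coincidence of the $\alpha$-derivatives of $f(p,r)$ and $p\mathbin{\|}r$, which in turn forces careful, repeated use of unique prime decomposition and of depth bookkeeping to separate the $\alpha$- and $q_n$-contributions. A secondary complication is that, $f$ being non-distributive, saturation (Proposition~\ref{Propn:proofswithout0}) is unavailable, so the harmless global removal of $\nil$ summands and factors cannot be assumed; instead I would carry the ``no $\nil$ summand or factor'' hypotheses explicitly through the case analysis and dispatch the $\nil$-factor possibilities by hand, which is sound here precisely because $\alpha$ and $q_n$ perform only $\alpha$-moves, so none of the blocking phenomena of Example~\ref{ex:blocking_f} can arise for this witness family.
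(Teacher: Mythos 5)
Your proposal matches the paper's own proof in all essentials: the same reduction to $\FCCS^-$ with the ``no $\nil$ summands or factors'' hypothesis carried explicitly (no saturation), the same two case-specific lemmas (the paper's Lemma~\ref{lem:f_vs_par}, $f(\alpha,q_n)\bis\alpha\mathbin{\|}q_n$, and Lemma~\ref{lem:f_vs_f_aa}, the decomposition $\{p,r\}=\{\alpha,q_n\}$ up to $\bistext$), the same crux proposition for substitution instances (Proposition~\ref{prop:LaRa_substitution}) proved by the same case analysis on $t_i$ using Proposition~\ref{prop:x_in_tu} and Lemma~\ref{Lem:vh-claim_gen}, and the same final induction on the depth of the closed proof. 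The only difference is a detail: the paper instantiates $x$ with $\alpha q_n$ rather than your $\alpha f(\alpha,q_n)$, which is marginally cleaner because the resulting $\alpha$-derivative is then bisimilar to $f(\alpha,q_n)\bis\alpha\mathbin{\|}q_n$ itself rather than to the deeper process $\alpha\mathbin{\|}\alpha\mathbin{\|}q_n$, though your variant supports the same unique-prime-decomposition bookkeeping.
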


Then, we can conclude that the infinite collection of equations $\{e_n \mid n \ge 0\}$ is the desired witness family. 
In fact, the left-hand side of equation $e_n$, viz.~the term $f(\alpha,q_n)$, has a summand bisimilar to $f(\alpha,q_n)$, whilst the right-hand side, viz.~the term $\alpha q_n + \sum_{i=0}^{n} \alpha(\alpha \| \bar{\alpha}^{\scriptstyle \le i})$, does not.

%========================================================
%========================================================

\subsection{Case specific properties of $f(\alpha,q_n)$}

Before proceeding to the proof of Theorem~\ref{thm:LaRa}, we discuss a few useful properties of the processes $f(\alpha,q_n)$.
Such properties are stated in Lemmas~\ref{lem:f_vs_par} and~\ref{lem:f_vs_f_aa} and they are the updated versions of, respectively, Lemmas~\ref{lem:rhs-prime} and~\ref{Lem:decomposing} with respect to the current set of SOS rules that are allowed for $f$.

\begin{lemma}
\label{lem:f_vs_par}
For each $n \ge 0$ it holds that $f(\alpha,q_n) \bis \alpha \| q_n$.
\end{lemma}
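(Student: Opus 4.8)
The plan is to show that $f(\alpha,q_n)$ and $\alpha \mathbin{\|} q_n$ afford exactly the same transitions, to literally the same targets, so that they are bisimilar via an essentially trivial bisimulation. The key observation, which I would establish first, is that $\init{\alpha} = \init{q_n} = \{\alpha\}$: every summand of $q_n = \sum_{i=0}^{n} \alpha\bar{\alpha}^{\scriptstyle \le i}$ has $\alpha$ as its initial action, and $\alpha$ itself can only perform an $\alpha$-move. In particular, neither argument of $f(\alpha,q_n)$ offers an initial $\bar{\alpha}$- or $\tau$-move.

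From this I would argue that, among all the SOS rules $f$ may possess according to Proposition~\ref{prop:new}, only the two rules witnessing $L^{\ff}_{\alpha}$ and $R^{\ff}_{\alpha}$ can fire on $f(\alpha,q_n)$. Indeed, a synchronisation rule of type~\eqref{syncrule} (be it $S_{\alpha,\bar{\alpha}}^f$ or $S_{\bar{\alpha},\alpha}^f$) requires one argument to perform $\alpha$ and the other to perform the complementary action $\bar{\alpha}$, which is impossible here; and any interleaving rule with label $\bar{\alpha}$ or $\tau$ requires a $\bar{\alpha}$- or $\tau$-premise from one of the arguments, which is again unavailable. This step---ruling out synchronisation, so that $f$ degenerates to pure interleaving on this particular pair of arguments---is the crux of the argument, and precisely the point where the specific shape of $q_n$ (all of whose initial actions equal $\alpha$) is used.

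Computing the surviving transitions is then routine: rule $L^{\ff}_{\alpha}$ applied to $\alpha \trans[\alpha] \nil$ yields $f(\alpha,q_n) \trans[\alpha] \nil \mathbin{\|} q_n$, while rule $R^{\ff}_{\alpha}$ applied to $q_n \trans[\alpha] \bar{\alpha}^{\scriptstyle \le i}$ (for $0 \le i \le n$) yields $f(\alpha,q_n) \trans[\alpha] \alpha \mathbin{\|} \bar{\alpha}^{\scriptstyle \le i}$. On the other side, the standard CCS rules give precisely $\alpha \mathbin{\|} q_n \trans[\alpha] \nil \mathbin{\|} q_n$ from the left argument and $\alpha \mathbin{\|} q_n \trans[\alpha] \alpha \mathbin{\|} \bar{\alpha}^{\scriptstyle \le i}$ from the right, with no $\tau$-transition since $\alpha$ and $q_n$ share no complementary initial actions. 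Thus both terms have the very same set of $(\text{label},\text{target})$ pairs.

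To conclude, I would exhibit the symmetric relation $\mathcal{R} = \mathrm{Id} \cup \{(f(\alpha,q_n),\, \alpha \mathbin{\|} q_n),\ (\alpha \mathbin{\|} q_n,\, f(\alpha,q_n))\}$, where $\mathrm{Id}$ is the identity on closed terms, and verify that it is a bisimulation: the identity is trivially a bisimulation, and every transition of $f(\alpha,q_n)$ is matched by the identically-labelled transition of $\alpha \mathbin{\|} q_n$ to the literally same target (and vice versa), so all target pairs lie in $\mathrm{Id}$. Hence $f(\alpha,q_n) \bis \alpha \mathbin{\|} q_n$. Alternatively, the same conclusion follows from the soundness of Equation~\eqref{Eqn:parf2}, which gives $\alpha \mathbin{\|} q_n \bis f(\alpha,q_n) + f(q_n,\alpha)$: a symmetric transition analysis shows $f(q_n,\alpha) \bis f(\alpha,q_n)$, and idempotence of $+$ modulo $\bistext$ then yields $f(\alpha,q_n)+f(q_n,\alpha) \bis f(\alpha,q_n)$.
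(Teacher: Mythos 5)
Your proof is correct, and in fact the paper states Lemma~\ref{lem:f_vs_par} \emph{without} proof (the proof environment that follows it belongs to Lemma~\ref{lem:f_vs_f_aa}), treating it as routine; your argument --- observing that $\init{\alpha}=\init{q_n}=\{\alpha\}$, so neither a synchronisation rule nor any interleaving rule with label $\bar{\alpha}$ or $\tau$ can fire, and that the two surviving rules $L^{\ff}_{\alpha}$ and $R^{\ff}_{\alpha}$ produce literally the same targets $\nil \mathbin{\|} q_n$ and $\alpha \mathbin{\|} \bar{\alpha}^{\scriptstyle \le i}$ as the CCS rules for $\alpha \mathbin{\|} q_n$ --- is precisely the routine justification the paper leaves implicit. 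Both your direct bisimulation and your alternative route via the soundness of Equation~\eqref{Eqn:parf2} together with $f(q_n,\alpha) \bis f(\alpha,q_n)$ and idempotence of $+$ are sound.
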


\begin{lemma}
\label{lem:f_vs_f_aa}
Let $n \ge 1$.
Assume that $f(p,q) \bis f(\alpha,q_n)$ for $p,q \nbis \nil$.
Then 
\begin{inparaenum}
\item either $p \bis \alpha$ and $q \bis q_n$, 
\item or $q \bis \alpha$ and $p \bis q_n$.
\end{inparaenum}
\end{lemma}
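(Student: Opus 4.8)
The plan is to reduce the hypothesis to a statement about parallel composition via Lemma~\ref{lem:f_vs_par}, which gives $f(\alpha,q_n)\bis\alpha\|q_n$, so that the assumption reads $f(p,q)\bis\alpha\|q_n$, and then to reason with the unique prime decomposition of Proposition~\ref{prop:unique_decomposition}. First I would record the primality facts needed: $\alpha$ is prime (it has depth $1$); $\bar{\alpha}^{\scriptstyle\le n}$ is prime by Lemma~\ref{lem:rhs-prime_1}(\ref{claim:alpha<=n}) (here $n\ge1$ is used); and $q_n$ is prime. The last I would argue directly, since $q_n$ is not literally in the shape of Lemma~\ref{lem:rhs-prime_1}(\ref{claim:pn}) (its $i=0$ summand is $\alpha.\bar{\alpha}^{\scriptstyle\le0}=\alpha$): one has $\init{q_n}=\{\alpha\}$ and every $\alpha$-derivative of $q_n$ is some $\bar{\alpha}^{\scriptstyle\le i}$ (or $\nil$), none of which can perform $\alpha$; hence $q_n$ cannot do two $\alpha$-moves in a row, whereas any $p'\|q'$ with $p',q'\nbis\nil$ and $\init{p'\|q'}=\{\alpha\}$ can, ruling out a nontrivial parallel decomposition of $q_n$.

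Next I would perform the initial matching. Since $\alpha\|q_n\trans[\alpha]\nil\|q_n\bis q_n$, bisimilarity yields a transition $f(p,q)\trans[\alpha]r$ with $r\bis q_n$. Because both $L^{\ff}_{\alpha}$ and $R^{\ff}_{\alpha}$ hold, this transition has exactly two possible origins, giving the two symmetric cases of the lemma. If it comes from $L^{\ff}_{\alpha}$, then $p\trans[\alpha]p'$ and $r=p'\|q\bis q_n$; primality of $q_n$ together with $q\nbis\nil$ forces $p'\bis\nil$ and $q\bis q_n$. If it comes from $R^{\ff}_{\alpha}$, the symmetric computation gives $p\bis q_n$, placing us in the second case.

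The key remaining step, which I expect to be the main obstacle, is to upgrade ``$q\bis q_n$'' to ``$p\bis\alpha$'' in the first case. The subtlety is that $\init{p}$ cannot simply be read off from the direct moves of $p$, since a transition of $p$ in the first argument may be blocked by $f$ (there need be no $L$-rule for $\bar{\alpha}$ or $\tau$). Instead I would use $R^{\ff}_{\alpha}$ to expose $p$ while $q_n$ moves: from $q_n\trans[\alpha]\bar{\alpha}^{\scriptstyle\le n}$ and $q\bis q_n$, congruence gives $f(p,q)\bis f(p,q_n)\trans[\alpha]p\|\bar{\alpha}^{\scriptstyle\le n}$, which must be matched by an $\alpha$-derivative of $\alpha\|q_n$, i.e.\ by $q_n$ or by some $\alpha\|\bar{\alpha}^{\scriptstyle\le i}$. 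Comparing prime decompositions — using that $\bar{\alpha}^{\scriptstyle\le n}$ is prime and appears on the left, and that $q_n\nbis\alpha$ and $q_n\nbis\bar{\alpha}^{\scriptstyle\le i}$ by depth and initials — rules out the match with $q_n$ and forces $i=n$, whence $p\|\bar{\alpha}^{\scriptstyle\le n}\bis\alpha\|\bar{\alpha}^{\scriptstyle\le n}$ and cancellation yields $p\bis\alpha$. The second case is entirely symmetric, interchanging the two arguments and using $L^{\ff}_{\alpha}$ in place of $R^{\ff}_{\alpha}$, and yields $q\bis\alpha$ and $p\bis q_n$. The routine depth/initials bookkeeping that distinguishes $q_n$, $\alpha$, and the $\bar{\alpha}^{\scriptstyle\le i}$ inside these decomposition arguments I would leave implicit.
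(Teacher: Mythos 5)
Your proof is correct, and its first half coincides with the paper's: you match the move $f(\alpha,q_n)\trans[\alpha]\nil\|q_n\bis q_n$, split on whether the matching transition of $f(p,q)$ originates from the $L^{\ff}_{\alpha}$-rule or the $R^{\ff}_{\alpha}$-rule, and use primality of $q_n$ together with $p,q\nbis\nil$ to settle the parallel component. (Your observation that $q_n$ is not a literal instance of Lemma~\ref{lem:rhs-prime_1}(\ref{claim:pn}) because of its $i=0$ summand, and your direct two-$\alpha$-moves-in-a-row argument, make you more precise here than the paper, which simply cites that lemma.) Where you genuinely diverge is the hard step, upgrading $q\bis q_n$ to $p\bis\alpha$. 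The paper first deduces $\depth(p)=1$, assumes $p\trans[\mu]\nil$ for some $\mu\neq\alpha$, and then case-splits on whether $L^{\ff}_{\mu}$ holds: if it does, the initials of $f(p,q_n)$ and $f(\alpha,q_n)$ disagree; if it does not, the paper derives $p\|\bar{\alpha}^{\scriptstyle\le n}\bis\alpha\|\bar{\alpha}^{\scriptstyle\le n}$ (essentially your intermediate equivalence) and refutes $\mu=\tau$ and $\mu=\bar{\alpha}$ by explicit behavioural arguments (a $\tau$-derivative that no synchronisation can match, respectively a $\bar{\alpha}$-trace that is too long). You instead expose $p$ via $R^{\ff}_{\alpha}$, match $f(p,q_n)\trans[\alpha]p\|\bar{\alpha}^{\scriptstyle\le n}$ against the $\alpha$-derivatives of $\alpha\|q_n$, and let unique prime decomposition (Proposition~\ref{prop:unique_decomposition}) force the match to be $\alpha\|\bar{\alpha}^{\scriptstyle\le n}$ and cancel the factor $\bar{\alpha}^{\scriptstyle\le n}$. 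Your route is shorter and needs no case analysis over which rules $f$ has for $\bar{\alpha}$ and $\tau$ — an analysis the paper must carry out precisely because the lemma is stated independently of those rules — at the price of invoking the full unique-factorisation machinery, where the paper gets by with primality of $q_n$ and elementary trace reasoning. Since Proposition~\ref{prop:unique_decomposition} is established earlier and is used in exactly this cancellation style elsewhere in the same section (e.g.\ in the proof of Proposition~\ref{prop:LaRa_substitution}), your reliance on it is legitimate, and both proofs stand.
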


%----------------------------------------------------------------------
\begin{proof}
Since $f(p,q) \bis f(\alpha, q_n)$ and $f(\alpha, q_n)\mv{\alpha} \nil \| q_n \bis q_n$, we can distinguish the following two cases depending on whether a matching transition from $f(p,q)$ stems from $p$ or $q$:
\begin{itemize}
\item There is a $p'$ such that $p \mv{\alpha}p'$ and $p'\| q \bis q_n$. 
It follows that $q \bis q_n$ and $p'  \bis \nil$, because $q_n$ is prime (Lemma~\ref{lem:rhs-prime_1}(\ref{claim:pn})) and $q\nbis \nil$. 
We are   therefore left to prove that $p$ is bisimilar to $\alpha$. 
To this end, note, first of all, that, as $\bis$ is a congruence over the language $\FCCS$, we have that
\[
f(p,q_n) \bis f(\alpha, q_n) \enspace .
\]
First of all, notice that the equivalence above implies that $\depth(p) = 1$.
We proceed to prove that $p \bis \alpha$.
Assume towards a contradiction that $p \nbis \alpha$ and thus that $p \trans[\mu] \nil$ for some $\mu \neq \alpha$.
We can distinguish two cases, according to whether the predicate $L^{\ff}_{\mu}$ holds or not.
\begin{itemize}
\item Assume first that $L^{\ff}_{\mu}$ holds.
Then we would have $\init{f(p,q_n)} = \{\alpha,\mu\}$ and $\init{f(\alpha,q_n)} = \{\alpha\}$, thus contradicting $f(p,q_n) \bis f(\alpha,q_n)$.
\item Assume now that $L^{\ff}_{\mu}$ does not hold.
Then, in light of the above equivalence, from $q_n \nbis \bar{\alpha}^{\scriptsize \le n}$ and the fact that $f(\alpha,q_n) \trans[\alpha] \alpha \| \bar{\alpha}^{\scriptsize \le n}$, we infer $f(p,q_n) \trans[\alpha] p \| \bar{\alpha}^{\scriptsize \le n}$ and $p \| \bar{\alpha}^{\scriptsize \le n} \bis \alpha \| \bar{\alpha}^{\scriptsize \le n}$.

Now, if $\mu = \tau$, then $p \| \bar{\alpha}^{\scriptsize \le n} \trans[\tau] \nil \| \bar{\alpha}^{\scriptsize \le n} \bis \bar{\alpha}^{\scriptsize \le n}$.
However, $\alpha \| \bar{\alpha}^{\scriptsize \le n}$ can perform a $\tau$-move only due to a synchronization between $\alpha$ and one of the $\bar{\alpha}$, thus implying that $\alpha \| \bar{\alpha}^{\scriptsize \le n} \trans[\tau] \nil \| \bar{\alpha}^{\scriptsize i} \bis \bar{\alpha}^{\scriptsize i}$ for some $i \in \{0,\dots,n-1\}$.
Since there is no such index $i$ such that $\bar{\alpha}^{\scriptsize \le n} \bis \bar{\alpha}^{\scriptsize i}$, this contradicts $f(p,q_n) \bis f(\alpha, q_n)$.

Similarly, if $\mu = \bar{\alpha}$, then $p \| \bar{\alpha}^{\scriptsize \le n}$ could perform a sequence of $n+1$ transitions all with label $\bar{\alpha}$, whereas $\alpha \| \bar{\alpha}^{\scriptsize \le n}$ can perform at most $n$ $\bar{\alpha}$-moves in a row.
Therefore, also this case is in contradiction with $f(p,q_n) \bis f(\alpha, q_n)$.
\end{itemize}
We may therefore conclude that every transition of $p$ is of the form $p\mv{\alpha} p''$, for some $p''\bis \nil$. 
Since we have already seen that $p$ affords an $\alpha$-labelled transition leading to $\nil$, modulo bisimilarity, it follows that $p\bis \alpha$, which was to be shown.

\item There is a $q'$ such that $q \trans[\alpha] q'$ and $p \| q' \bis q_n$.
This case can be treated similarly to the previous case and allows us to conclude that $q \bis \alpha$ and $p \bis q_n$.
\end{itemize}
\end{proof}

%============================================================

\subsection{Proving Theorem~\ref{thm:LaRa}}

The negative result stated in Theorem~\ref{thm:LaRa} is strongly based on the following proposition, which ensures that the property of having a summand bisimilar to $f(\alpha,q_n)$ is preserved by the closure under substitution of equations in a finite sound axiom system.

\begin{proposition}
\label{prop:LaRa_substitution}
Assume an operator $f$ such that $L^{\ff}_{\alpha} \wedge R^{\ff}_{\alpha}$ holds.

Let $t \approx u$ be an equation over $\FCCS^-$ that is sound modulo $\bistext$.
Let $\sigma$ be a closed substitution with $p = \sigma(t)$ and $q = \sigma(u)$.
Suppose that $p$ and $q$ have neither $\nil$ summands nor factors, and $p,q \bis f(\alpha,q_n)$ for some $n$ larger than the size of $t$.
If $p$ has a summand bisimilar to $f(\alpha,q_n)$, then so does $q$.
\end{proposition}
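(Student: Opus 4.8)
The plan is to prove Proposition~\ref{prop:LaRa_substitution} by the same architecture as Proposition~\ref{Propn:crux}, adapted to the interleaving rules $L^{\ff}_{\alpha}$ and $R^{\ff}_{\alpha}$ and to the witness process $f(\alpha,q_n)\bis\alpha\|q_n$ (Lemma~\ref{lem:f_vs_par}). First I would use Remark~\ref{rmk:summands} to write $t=\sum_{i\in I}t_i$ and $u=\sum_{j\in J}u_j$, with none of the summands having $+$ as head operator and none having $\nil$ summands or factors. Since $\sigma(t)$ has a summand bisimilar to $f(\alpha,q_n)$, I fix $i\in I$ with $\sigma(t_i)\bis f(\alpha,q_n)$ and aim to produce $j\in J$ with $\sigma(u_j)\bis f(\alpha,q_n)$. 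The argument then splits on the shape of $t_i$. If $t_i=x$, then $\sigma(x)\bis f(\alpha,q_n)$, and since $L^{\ff}_{\alpha}\wedge R^{\ff}_{\alpha}$ gives $x\trt{\bb}{\alpha}x$ and hence $x\trt{\bb}{\alpha}t$, Proposition~\ref{prop:x_in_tu} transfers the summand $x$ to $u$, and we are done. If $t_i=\mu t'$, the case is vacuous for large $n$: matching initials forces $\mu=\alpha$, but $\alpha\sigma(t')$ has a single $\alpha$-derivative, whereas $\alpha\|q_n$ has at least two inequivalent $\alpha$-derivatives (namely $q_n$, from the left rule, and $\alpha\|\bar{\alpha}^{\scriptstyle\le n}$, from the right rule, the latter affording an initial $\bar{\alpha}$ that $q_n$ does not), so $\sigma(t_i)\nbis f(\alpha,q_n)$.

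The main case is $t_i=f(t',t'')$. Since $\sigma(t_i)$ has no $\nil$ factors, $\sigma(t'),\sigma(t'')\nbis\nil$, so Lemma~\ref{lem:f_vs_f_aa} yields either $\sigma(t')\bis\alpha,\ \sigma(t'')\bis q_n$ or $\sigma(t'')\bis\alpha,\ \sigma(t')\bis q_n$. These two options are interchanged by the symmetry $L^{\ff}_{\alpha}\leftrightarrow R^{\ff}_{\alpha}$, so I treat the first without loss of generality. Decomposing $t''=v_1+\cdots+v_\ell$ with $\ell<n$ and using $\sigma(t'')\bis q_n=\sum_{k=0}^{n}\alpha\bar{\alpha}^{\scriptstyle\le k}$, some summand $\sigma(v_h)$ must be bisimilar to $\alpha.\bar{\alpha}^{\scriptstyle\le i_1}+\cdots+\alpha.\bar{\alpha}^{\scriptstyle\le i_m}$ with $m>1$; as this term has outer action $\alpha\neq\bar{\alpha}$, Lemma~\ref{Lem:vh-claim_gen} forces $v_h$ to be a variable $x$ with $\sigma(x)\bis\alpha.\bar{\alpha}^{\scriptstyle\le i_1}+\cdots+\alpha.\bar{\alpha}^{\scriptstyle\le i_m}$, and Lemma~\ref{lem:that_one} gives $x\notin\var(t')$.

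Next I would engineer a witnessing substitution $\sigma'=\sigma[x\mapsto\alpha\,f(\alpha,q_n)]$. Because a synchronisation rule need not be available here (exactly the complication flagged in Remark~\ref{rmk:synch}), the witness transition cannot be produced by a $\tau$-move and must instead be obtained through the interleaving rule $R^{\ff}_{\alpha}$: from the summand $x$ of $t''$ one gets $\sigma'(t_i)\trans[\alpha]\sigma'(t')\|f(\alpha,q_n)\bis\alpha\|f(\alpha,q_n)$, a derivative of depth $n+3$. Soundness of $t\approx u$ then yields a matching $\sigma'(u_j)\trans[\alpha]r'\bis\alpha\|f(\alpha,q_n)$ for some $j\in J$, and a depth count against $\depth(\sigma(u))=\depth(f(\alpha,q_n))=n+2$ (using that $\sigma(x)$ has depth at most $n+1$) forces $x\in\var(u_j)$. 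I would then run the case analysis on the shape of $u_j$ as in Proposition~\ref{Propn:crux}, using Lemma~\ref{lem:f_vs_f_aa}, Lemma~\ref{lem:that_one}, the primality of $q_n$ and of the terms $\mu^{\scriptstyle\le m}$ (Lemma~\ref{lem:rhs-prime_1}), and the unique prime decomposition (Proposition~\ref{prop:unique_decomposition}), to conclude in the surviving $f(u',u'')$ subcase that $\sigma(u_j)\bis f(\alpha,q_n)$.

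The hard part will be precisely the step that in Proposition~\ref{Propn:crux} was settled by a one-line depth argument. There the $\tau$-synchronisation consumed both the $\alpha$ of $t'$ and the $\bar{\alpha}$-prefix stored in $\sigma'(x)$, so the witness derivative was \emph{strictly shallower} than $\sigma'(x)$, immediately excluding a prefix $u_j$. Here the $\alpha$ contributed by $\sigma(t')\bis\alpha$ survives the interleaving step, so the witness derivative has the \emph{same} depth as $\sigma'(x)$, and depth alone no longer rules out $u_j$ being a prefix (or a variable). To close these residual forms of $u_j$ I would replace the depth argument by an analysis of the multiset of prime factors of the witness derivative $\alpha\|f(\alpha,q_n)$ via Proposition~\ref{prop:unique_decomposition}, exploiting that $\FCCS^-$ contains no explicit $\|$, so any parallel structure in $\sigma'(u_j)$ must be generated by $f$, together with an initial-action comparison (the image of $x$ begins with $\alpha$, not $\bar{\alpha}$). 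A secondary, pervasive difficulty is that outside the action $\alpha$ the predicates $L^{\ff}_{\mu},R^{\ff}_{\mu}$ may fail, so a variable occurring in $t$ or $u$ can be \emph{blocked} in the scope of $f$ (cf.\ Example~\ref{ex:open_running_4}); consequently every invocation of Lemma~\ref{lem:that_one} must first verify its hypothesis on $\init{\sigma(x)}$, which is why the substituted images are chosen with $\init$ contained in $\{\alpha\}$.
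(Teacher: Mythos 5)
Your skeleton matches the paper's proof of this proposition: the summand decomposition via Remark~\ref{rmk:summands}, the three cases for $t_i$ (with $t_i=x$ settled by Proposition~\ref{prop:x_in_tu}, the prefix case vacuous, and the $f$-case reduced via Lemma~\ref{lem:f_vs_f_aa} and Lemma~\ref{Lem:vh-claim_gen} to a variable $x$ with $\sigma(x)\bis\sum_{k}\alpha\bar{\alpha}^{\le i_k}$), followed by a witnessing substitution and a matching argument on the $u$-side. The genuine gap is your choice of witnessing substitution, and it is exactly what manufactures the ``hard part'' you then leave open. You set $\sigma'=\sigma[x\mapsto\alpha\,f(\alpha,q_n)]$, transplanting the pattern of Proposition~\ref{Propn:crux}, where the prefix of $\sigma'(x)$ is consumed by a $\tau$-synchronisation; here nothing is consumed, so your witness derivative is $\sigma'(t')\|f(\alpha,q_n)\bis\alpha\|\alpha\|q_n$, strictly larger than the witness. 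The paper instead sets $\sigma'=\sigma[x\mapsto\alpha q_n]$: since $L^{\ff}_{\alpha}\wedge R^{\ff}_{\alpha}$ holds, Lemma~\ref{lem:f_vs_par} gives $f(\alpha,q_n)\bis\alpha\|q_n$, so the $R^{\ff}_{\alpha}$-step $\sigma'(t_i)\trans[\alpha]\sigma'(t')\|q_n$ lands \emph{exactly} on a process bisimilar to the witness $f(\alpha,q_n)$. The fact that the witness is \emph{not} prime in this case is the key idea; you cite Lemma~\ref{lem:f_vs_par} but never exploit it where it matters.

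This is not merely extra bookkeeping, because the tools you name do not close the residual cases your substitution creates. The paper splits on the \emph{origin} of the matching transition via Lemma~\ref{lem:c2o} (not on the shape of $u_j$): term-level matches --- which subsume your troublesome prefix case, since prefixes admit no auxiliary transitions --- are refuted by applying Lemma~\ref{lem:variables} to the target $\alpha\|q_n$, where the produced variable summand $y$ cannot be $x$ (as $\sigma'(x)=\alpha q_n$ has a unique $\alpha$-derivative) and the surviving behaviour of $\sigma(y)$ lets $\sigma(u_j)$ perform three consecutive $\alpha$-moves, impossible for a summand of $\sigma(u)\bis\alpha\|q_n$; variable-induced matches are then forced through $x$ itself, whose derivative $q_n$ reassembles the witness in parallel with a context bisimilar to $\alpha$, and this is what actually delivers the summand of $q$. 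With your inflated target $\alpha\|\alpha\|q_n$, Lemma~\ref{lem:variables} additionally admits the common-factor alternative with $r\bis\alpha\|\alpha$, in which $\sigma(y)\bis\sum_k\alpha\bar{\alpha}^{\le i_k}+q'$ and the reached derivatives $\bar{\alpha}^{\le i_k}$ afford no third $\alpha$-move and no offending initial action: neither consecutive-$\alpha$ counting, nor prime-factor analysis of the derivative, nor initial-action comparison refutes this subcase. Excluding it would require tracing the parallel factor $\alpha\|\alpha$ back into the term context of $y$ and showing that it persists under $\sigma$ --- machinery your sketch neither supplies nor identifies. As it stands, the proposal postpones, rather than proves, the crux of the proposition; adopting the paper's substitution dissolves the difficulty at its source.
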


%-----------------------------------------------------------------------
\begin{proof}
First of all we notice that since $\sigma(t)$ and $\sigma(u)$ have no $\nil$ summands or factors, then neither do $t$ and $u$.
Therefore by Remark~\ref{rmk:summands} we get that $t = \sum_{i \in I} t_i$ and $u = \sum_{j \in J} u_j$, for some finite non-empty index sets $I,J$ with all the $t_i$ and $u_j$ not having $+$ as head operator, $\nil$ summands nor factors.
By the hypothesis, there is some $i \in I$ with $\sigma(t_i) \bis f(\alpha,q_n)$.
We proceed by a case analysis over the structure of $t_i$ to show that there is a $u_j$ such that $\sigma(u_j) \bis f(\alpha,q_n)$.
\begin{enumerate}
\item \textcolor{blue}{\sc Case $t_i = x$ for some variable $x$ such that $\sigma(x) \bis f(\alpha,q_n)$.}
By Proposition~\ref{prop:x_in_tu}, $t$ having a summand $x$ implies that $u$ has a summand $x$ as well.
Thus, we can immediately conclude that $\sigma(u)$ has a summand bisimilar to $f(\alpha,q_n)$ as required.
%%%

\item \textcolor{blue}{\sc Case $t_i = \mu.t'$ for some term $t'$.}
This case is vacuous, as it contradicts our assumption $\sigma(t_i) \bis f(\alpha,q_n)$. 
Indeed, if $\mu = \alpha$ then $\sigma(t')$ cannot be bisimilar to both $q_n$ and $\alpha \| \bar{\alpha}^{\scriptsize \le i}$, for any $i \in \{1,\dots,n\}$.
%%%

\item \textcolor{blue}{\sc Case $t_i = f(t',t'')$ for some terms $t',t''$.}
As $\sigma(t)$ has no $\nil$ factors, we have $\sigma(t'),\sigma(t'') \nbis \nil$.
Hence, from $f(\sigma(t'),\sigma(t'')) \bis f(\alpha,q_n)$ and Lemma~\ref{lem:f_vs_f_aa} we can distinguish two cases: 
\begin{inparaenum}
\item either $\sigma(t') \bis \alpha$ and $\sigma(t'') \bis q_n$, 
\item or $\sigma(t') \bis q_n$ and $\sigma(t'') \bis \alpha$.
\end{inparaenum}
We expand only the former case, as the latter follows from an identical (symmetrical) reasoning.
By Remark~\ref{rmk:summands}, from $\sigma(t'') \bis q_n$ we infer that $t'' = \sum_{h \in H} v_h$ for some terms $v_h$ that do not have $+$ as head operator and have no $\nil$-summands or factors.
Since $n$ is larger that the size of $t$, we have that $|H| < n$ and thus there is some $h \in H$ such that $\sigma(v_h) \bis \sum_{k = 1}^m \alpha\bar{\alpha}^{\scriptsize \le i_k}$ for some $m > 1$ and $1 \le i_1 < \dots < i_m \le n$.
Since $\sigma(v_h)$ has no $\nil$ summands or factors, from Lemma~\ref{Lem:vh-claim_gen} we infer that $v_h$ can only be a variable $x$ with
\begin{equation}
\label{eq:sigma_x}
\sigma(x) \bis \sum_{k = 1}^m \alpha\bar{\alpha}^{\scriptsize \le i_k}.
\end{equation}
Therefore, $t_i = f(t', x+t''')$ for some $t'''$ such that $\sigma(x + t''') \bis q_n$.
We also notice that since $\sigma(t') \bis \alpha$ and $\sigma(t')$ has no $\nil$ summands or factors, then it cannot be the case that $x \in \var(t')$.

To prove that $u$ has a summand bisimilar to $f(\alpha,q_n)$, consider the closed substitution
\[
\sigma' = \sigma[x \mapsto \alpha q_n].
\]
Since $R^{\ff}_{\alpha}$ and Lemma~\ref{lem:f_vs_par} hold, we have
\[
\sigma'(t_i) \trans[\alpha] p' \bis \alpha \| q_n \bis f(\alpha,q_n).
\]
As $t \approx u$ implies $\sigma'(t) \bis \sigma'(u)$, we infer that there must be a summand $u_j$ such that $\sigma'(u_j) \trans[\alpha] r$ for some $r \bis f(\alpha,q_n)$.
Notice that, since $\sigma(u) \bis f(\alpha,q_n)$ and $\sigma(u_j) = \sigma'(u_j)$ if $x \not \in \var(u_j)$, then it must be the case that $x \in \var(u_j)$, or otherwise we get a contradiction with $\sigma(u) \bis f(\alpha,q_n)$, as $\sigma(u_j) = \sigma'(u_j) \trans[\alpha] r$ would give $\sigma(u) \trans[\alpha] r \bis f(\alpha,q_n)$. 
However, there is no $r'$ such that $f(\alpha,q_n) \trans[\alpha] r'$ and $r' \bis f(\alpha,q_n)$.
By Lemma~\ref{lem:c2o}, as $L^{\ff}_{\alpha} \wedge R^{\ff}_{\alpha}$ holds, we can distinguish two cases:

\begin{enumerate}
\item \textcolor{red}{There is a term $u'$ s.t.\ $u_j \trans[\alpha] u'$ and $\sigma'(u') \bis f(\alpha,q_n)$.}
Then, since $f(\alpha,q_n) \bis \alpha \parallel q_n$ (Lemma~\ref{lem:f_vs_par}) we can apply the expansion law, obtaining 
\[
\sigma'(u') \bis \sum_{i = 1}^n \alpha(\alpha \| \bar{\alpha}^{\scriptsize \le i}) + \alpha q_n.
\]
As $n$ is greater than the size of $u$, and thus of those of $u_j$ and $u'$, by Lemma~\ref{lem:variables} we get that $u'$ has a summand $y$, for some variable $y$, such that 
\[
\sigma'(y) \bis \sum_{k = 1}^{m'} \alpha q_{i'_k} + r',
\]
for some $m' > 1$, $1 \le i'_1 < \dots < i'_{m'} \le n$, closed term $r'$ and closed terms $q_{i'_k}$ such that either $q_{i'_k} \bis \alpha \| \bar{\alpha}^{\scriptsize \le i'_k}$ or $q_{i'_k} \bis \bar{\alpha}^{\scriptsize \le i'_k}$, for each $k = 1,\dots, m'$.
(We can infer the exact form of the $q_{i'_k}$ since $\alpha$ and $\bar{\alpha}^{\scriptsize \le i'_k}$ are prime, the parallel component $\alpha$ is common to all summands and $\bar{\alpha}^{\scriptsize \le i'_k} \nbis \bar{\alpha}^{\scriptsize \le i'_j}$ if $k \neq j$).
In both cases, we can infer that $y \neq x$, as $\sigma'(x) \nbis \sigma'(y)$ for any closed term $r'$.
We have thus $\sigma'(y) = \sigma(y)$ and we get a contradiction with $\sigma(u) \bis f(\alpha,q_n)$ in that $\sigma(u_j)$ would be able to perform three $\alpha$-moves in a row.
In fact
\begin{align*}
\sigma(u_j) \trans[\alpha]{} & \sigma(u')  & \text{($u'$ has a summand $y$)}\\
\trans[\alpha]{} & \alpha \| \bar{\alpha}^{\scriptsize \le i'_k} & \text{for some $k \in \{1,\dots,m'\}$} \\
\trans[\alpha]{} & \bar{\alpha}^{\scriptsize \le i'_k},
\end{align*}
whereas $\sigma(u) \bis f(\alpha,q_n)$ can perform only two such transitions.

\item \textcolor{red}{There are a variable $y$, a closed term $r'$ and a configuration $c$ s.t.\ $\sigma'(y) \trans[\alpha] r'$, $u_j \trans[y_{\bb}]_{\alpha} c$ and $\sigma'[y_{\dd} \mapsto r'](c) \bis f(\alpha,q_n)$.}
We claim that it must be the case that $y =x$.
To see this, assume towards a contradiction that $y \neq x$.
We proceed by a case analysis on the possible occurrences of $x$ in $c$.
\begin{itemize}
\item \textcolor{Purple4}{$x \not \in \var(c)$ or $x \in \var(c)$ but its occurrence is in a guarded context that prevents the execution of its closed instances.}
In this case we get $r = \sigma[y_{\dd} \mapsto r'](c) \bis \sigma'[y_{\dd}\mapsto r'](c) \bis f(\alpha,q_n)$.
This contradicts $\sigma(u) \bis f(\alpha,q_n)$ since we would have $\sigma(u) \trans[\alpha] r \bis f(\alpha,q_n)$, and such a transition cannot be mimicked by $f(\alpha,q_n)$.
%%%
\item \textcolor{Purple4}{$x \in\var(c)$ and its execution is not prevented.}
We can distinguish two sub-cases, according to whether the occurrence of $x$ is guarded or not.
\begin{itemize}
\item \textcolor{DarkGoldenrod4}{Assume that $x$ occurs guarded in $c$.}
In this case we get a contradiction with $r \bis f(\alpha,q_n)$, as $x$ guarded implies:
\[
n+2 = \depth(f(\alpha,q_n)) 
= \depth(r) 
\ge 1 + \depth(\sigma'(x)) 
= n+3. 
\]
\item \textcolor{DarkGoldenrod4}{Assume now that $x \trt{\bb}{\alpha} c$.}
We proceed by a case analysis on the structure of $c$.
\begin{itemize}
\item \textcolor{magenta}{$c \bis y_{\dd}  \| (x + u_1) \| u_2$.}
Notice that in this case we have $r = r' \| (\sigma'(x) + \sigma'(u_1)) \| \sigma'(u_2)$.
Then, the only transition available for $\sigma'(x)$ is $\sigma'(x) \trans[\alpha] q_n$, which gives $r \trans[\alpha] r' \| q_n \| \sigma'(u_2)$.
Since $r \bis f(\alpha,q_n)$, then it must be the case that $f(\alpha,q_n) \trans[\alpha] r''$ for some $r'' \bis r' \| q_n \| \sigma'(u_2)$.
Since $q_n$ is prime, we can infer that $r'' \bis q_n$ and thus that $r' \bis \nil \bis \sigma'(u_2)$.
Hence, we have that $r \bis \sigma'(x) + \sigma'(u_1)$.
As the one we wrote is the only transition available for $\sigma'(x)$, we can infer that, for all $i \in \{1,\dots,n\}$, the transitions $r \trans[\alpha] \alpha \| \bar{\alpha}^{\scriptsize \le i}$ cannot be derived from $\sigma'(x)$, but only from $\sigma'(u_1)$.
Moreover, notice that $y \neq x$ gives $\sigma'(y) = \sigma(y)$, and from $\init{\sigma'(x)} = \init{\sigma(x)} = \{\alpha\}$ and the fact that $L^{\ff}_{\alpha} \wedge R^{\ff}_{\alpha}$ holds, we can infer that $\sigma(u_2) \bis \sigma'(u_2) \bis \nil$.
Therefore, this contradicts $\sigma(u) \bis f(\alpha,q_n)$, since $\sigma(u) \trans[\alpha] r' \| \sigma(x) + \sigma(u_1) \| \sigma(u_2) \bis \sigma(x) + \sigma(u_1) \trans[\alpha] \alpha \| \bar{\alpha}^{\scriptsize \le i}$, for any $i \in \{1,\dots,n\}$.
Process $f(\alpha,q_n)$, in turn, by performing two $\alpha$-moves can only reach processes bisimilar to $\bar{\alpha}^{\scriptsize \le i}$, for $i \in \{1,\dots,n\}$.
\item \textcolor{magenta}{$c$ has a subterm $u_3$ of the form $u_3 \bis f(x + u_2, u_1)$ or $u_3 \bis f(u_1, x +u_2)$.}
In both cases, we get that $\sigma'(x) \trans[\alpha] q_n$ implies $\sigma'(u_3) \trans[\alpha] q_n \| \sigma'(u_1)$.
However, $f(\alpha,q_n) \trans[\alpha] \nil \| q_n \bis q_n$ and $q_n$ prime give $\sigma'(u_1) \bis \nil$.
One can then argue that, as $\init{\sigma'(x)} = \{\alpha\}$, either $x$ does not occur in $u_1$, or it does it in a guarded context that prevents its execution.
Hence, we infer $\sigma(u_1) \bis \sigma'(u_1) \bis \nil$, thus contradicting $\sigma(u)$ not having $\nil$ factors.
\end{itemize}
\end{itemize}
\end{itemize}
Therefore, we can conclude that it must be the case that $y=x$ and $r' = q_n$.
In particular, notice that $x \trt{\bb}{\alpha} u_j$.
We now proceed by a case analysis on the structure of $u_j$ to show that $\sigma(u_j) \bis f(\alpha,q_n)$.
\begin{enumerate}
\item \textcolor{ForestGreen}{$u_j = x$.} 
This case is vacuous, as $\sigma'(x) \trans[\alpha] q_n$ and
$q_n \nbis f(\alpha,q_n)$.
\item \textcolor{ForestGreen}{$u_j = f(u',u'')$ for some $u',u''$.}
Notice that $x \trt{\bb}{\alpha} u_j$ can be due either to $x \trt{\bb}{\alpha} u'$ or $x \trt{\bb}{\alpha} u''$.
As both $\sigma'(u')$ and $\sigma'(u'')$ can be responsible for the $\alpha$-move by $\sigma'(u_j)$, we distinguish two cases:

\begin{enumerate}
\item \textcolor{purple}{$\sigma'(u') \trans[\alpha] r_1$ and $r_1 \| \sigma'(u'') \bis f(\alpha,q_n)$.}
As $f(\alpha,q_n) \bis \alpha \| q_n$ and both $\alpha$ and $q_n$ are prime, by the existence of a unique prime decomposition, we distinguish two cases:
\begin{itemize}
\item \textcolor{DeepPink}{$r_1 \bis \alpha$ and $\sigma'(u'') \bis q_n$.}
Since $x \trt{\bb}{\alpha} u''$ is in contradiction with $\sigma'(u'') \bis q_n$, we infer that $x \trt{\bb}{\alpha} u'$.
Moreover $\init{\sigma(x)} = \init{\sigma'(x)} = \{\alpha\}$, $L^{\ff}_{\alpha} \wedge R^{\ff}_{\alpha}$, $\sigma'(u'') \bis q_n$ and the fact that $\sigma(u)$ has no $\nil$ factors we get that either $x \not\in \var(u'')$ or $x$ occurs in $u''$ but its execution is prevented by the rules for $f$. 
Therefore 
\[
\sigma'(u'') \bis \sigma(u'') \bis q_n.
\]
However, $\depth(\sigma(x)) \ge 3$, and $x \trt{\bb}{\alpha} u'$ with $\init{\sigma(x)} = \{\alpha\}$ give us, by Lemma~\ref{lem:that_one}, that $\depth(\sigma(u')) \ge \depth(\sigma(x))$.
Therefore we get a contradiction, since
\begin{align*}
n+2 ={} & \depth(f(\alpha,q_n)) 
= \depth(\sigma(u))
\ge \depth(\sigma(u_j)) \\
={} & \depth(f(\sigma(u'),\sigma(u''))) 
\ge \depth(\sigma(x)) + \depth(\sigma(u'')) 
\ge 3 + n + 1.
\end{align*}
\item \textcolor{DeepPink}{$r_1 \bis q_n$ and $\sigma'(u'') \bis \alpha$.}
By reasoning as above, we can infer that either $x \not \in \var(u'')$ or its execution is blocked by the rules for $f$, so that $\sigma'(u'') \bis \sigma(u'')$.
Moreover, we get that $x \trt{\bb}{\alpha} u'$.
We aim at showing that $u'$ has a summand $x$.
We proceed by showing that the only other possibility, namely $u' = f(w_1,w_2)$ for some $w_1,w_2$, leads to a contradiction.
As $u' = f(w_1,w_2)$ we have that either $x \trt{\bb}{\alpha} w_1$ or $x \trt{\bb}{\alpha} w_2$.
However, $\sigma'(u') \trans[\alpha] r_1 \bis q_n$ gives two possibilities:
\begin{itemize}
\item \textcolor{RoyalBlue}{$\sigma'(w_1) \trans[\alpha] r_1'$ and $r_1' \| \sigma'(w_2) \bis q_n$.}
Since $q_n$ is prime, then either $r_1' \bis \nil$ and $\sigma'(w_2) \bis q_n$, or $r_1' \bis q_n$ and $\sigma'(w_2) \bis \nil$.
In both cases we infer that either $x \not \in \var(w_2)$ or its execution in it is always prevented, so that $\sigma(w_2) \bis \sigma'(w_2)$.
Therefore, the former case, combined with $\sigma(u'') \bis \alpha$, gives a contradiction with $\sigma(u) \bis f(\alpha,q_n)$.
The latter case contradicts $\sigma(u)$ not having $\nil$ factors.
\item \textcolor{RoyalBlue}{$\sigma'(w_2) \trans[\alpha] r_2'$ and $\sigma'(w_1) \| r_2' \bis q_n$.}
The same reasoning as in the previous case allows us to conclude that this case gives a contradiction.
\end{itemize}
Summing up, we have argued that $u'$ has a summand $x$.
Therefore, by Equation~\eqref{eq:sigma_x},
\[
\sigma(u') \bis \sum_{k = 1}^m\alpha.\bar{\alpha}^{\scriptstyle \leq i_k} + r'' \enspace ,
\]
for some closed term $r''$. 
We have already noted that
\[
\sigma(u'') \bis \sigma'(u'')  \bis \alpha \enspace .
\]
Therefore, using the congruence properties of bisimilarity, we may infer that
\[
\sigma(u_j) = 
f(\sigma(u'),\sigma(u'')) \bis 
f(\sum_{k = 1}^m \alpha\bar{\alpha}^{\scriptsize \le i_k} + r'', \alpha)
\enspace .
\]
In light of this equivalence, we have $\sigma(u_j) \trans[\alpha] r' \bis \sigma(u')$ and thus $\sigma(u) \trans[\alpha] r'$.
Since by hypothesis $\sigma(u) \bis f(\alpha,q_n)$ we have that either $r' \bis q_n$, or $r' \bis \alpha \| \alpha^{\scriptsize \le i}$ for some $i \in \{1,\dots,n\}$.
However, the latter case is in contradiction with $r' \bis \sigma(u')$, and thus it must be the case that $r' \bis q_n$.
Therefore, we can conclude that $\sigma(u_j) \bis f(q_n, \alpha)$.
It is easy to check that $f(\alpha,q_n) \bis f(q_n,\alpha)$.
Hence, $\sigma(u)$ has the desired summand.
\end{itemize}

\item \textcolor{purple}{$\sigma'(u'') \trans[\alpha] r_2$ and $\sigma'(u') \| r_2 \bis f(\alpha,q_n)$.}
This case follows as the previous one and allows us to conclude as well that $\sigma(u)$ has the desired summand.
\end{enumerate}
\end{enumerate}
\end{enumerate}
\end{enumerate}
The proof of Proposition~\ref{prop:LaRa_substitution} is now complete. 
\end{proof}
%-----------------------------------------------------------

We have now all the necessary ingredients for the proof of Theorem~\ref{thm:LaRa}, which we present below.\\

\begin{apx-proof}{Theorem~\ref{thm:LaRa}}
Assume that $\E$ is a finite axiom system over the language $\FCCS^-$ that is sound modulo bisimilarity, and that the following hold, for some closed terms $p$ and $q$ and positive integer $n$ larger than the size of each term in the equations in $\E$:
\begin{enumerate}
\item $\E \vdash p \approx q$,
\item $p \bis q \bis f(\alpha,q_n)$, 
\item $p$ and $q$ contain no occurrences of $\nil$ as a summand or factor, and
\item $p$ has a summand bisimilar to $f(\alpha,q_n)$.
\end{enumerate}
We proceed by induction on the depth of the closed proof of the equation $p \approx q$ from $\E$, to prove that also $q$ has a summand bisimilar to $f(\alpha,q_n)$.
Recall that, without loss of generality, we may assume that $\E$ is closed with respect to symmetry, and thus applications of symmetry happen first in equational proofs.
We proceed by a case analysis on the last rule used in the proof of $p \approx q$ from $\E$.
The case of reflexivity is trivial, and that of transitivity follows by applying twice the inductive hypothesis.
We proceed now to a detailed analysis of the remaining cases:
\begin{enumerate}
\item \textcolor{blue}{\sc Case $\E \vdash p \approx q$ because $\sigma(t) = p$ and $\sigma(u) = q$ for some terms $t,u$ with $E \vdash t \approx u$ and closed substitution $\sigma$.}
The proof of this case follows by Proposition~\ref{prop:LaRa_substitution}.
\item \textcolor{blue}{\sc Case $\E \vdash p \approx q$ because $p = \mu.p'$ and $q = \mu.q'$ for some $p',q'$ with $E \vdash p' \approx q'$.}
This case is vacuous in that $p = \mu.p' \nbis f(\alpha,q_n)$ and thus $p$ does not have a summand bisimilar to $f(\alpha,q_n)$.
\item \textcolor{blue}{\sc Case $\E \vdash p \approx q$ because $p = r_1 + r_2$ and $q = s_1 + s_2$ for some $r_i,s_i$ with $E \vdash r_i \approx s_i$, for $i \in \{1,2\}$.} 
Since $p$ has a summand bisimilar to $f(\alpha,q_n)$ then so does either $r_1$ or $r_2$.
Assume without loss of generality that $r_1$ has such a summand.
As $p \bis f(\alpha,q_n)$ then $r_1 \bis f(\alpha,q_n)$ holds as well.
Then, from $E \vdash r_1 \approx s_1$ we infer $s_1 \bis f(\alpha,q_n)$.
Thus, by the inductive hypothesis we obtain that $s_1$ has a summand bisimilar to $f(\alpha,q_n)$ and, consequently, so does $q$.
\item \textcolor{blue}{\sc Case $\E \vdash p \approx q$ because $p = f(r_1,r_2)$ and $q = f(s_1,s_2)$ for some $r_i,s_i$ with $E \vdash r_i \approx s_i$, for $i \in \{1,2\}$.}
By the proviso of the theorem $p,q$ have neither $\nil$ summands nor factors, thus implying $r_i,s_i \nbis \nil$.
Hence, from $p \bis f(\alpha,q_n)$ and $p = f(r_1,r_2)$ and Lemma~\ref{lem:f_vs_f_aa} we obtain $r_i \bis \alpha$ and $r_{3-i} \bis q_n$, thus implying, by the soundness of the equations in $\E$, that $s_i \bis \alpha$ and $s_{3-i} \bis q_n$, so that either $q = f(\alpha,q_n)$ or $q = f(q_n,\alpha)$.
In both cases, we can infer that $q$ has itself as the desired summand.
\end{enumerate}
This completes the proof of Theorem~\ref{thm:LaRa} and thus of Theorem~\ref{Thm:nonfin-en} in the case of an operator $f$ that does not distribute over summation in either argument, case $L^{\ff}_{\alpha} \,\wedge\, R^{\ff}_{\alpha}$. 
\end{apx-proof}

%============================================
% sec - LaRba
%=============================================

\section{Negative result: the case $L^{\ff}_{\alpha}$, $R^{\ff}_{\bar{\alpha}}$}
\label{sec:LaRba}

In this section we deal with the second case related to an operator $f$ that does not distribute over summation in either argument.
This time, given $\alpha \in \{a,\bar{a}\}$, we assume that operator $f$ has only one rule with label $\alpha$ and only one rule with label $\bar{\alpha}$, and moreover we assume such rules to be of different types.
In detail, we expand the case in which for action $\alpha$ only the predicate $L^{\ff}_{\alpha}$ holds, and for action $\bar{\alpha}$ only $R^{\ff}_{\bar{\alpha}}$ holds, namely $f$ has rules:
\[
\SOSrule{x_1 \trans[\alpha] y_1}{f(x_1,x_2) \trans[\alpha] y_1 \| x_2}
\qquad\quad
\SOSrule{x_2 \trans[\bar{\alpha}] y_2}{f(x_1,x_2) \trans[\bar{\alpha}] x_1 \| y_2} \enspace .
\]

Once again, the proof for the symmetric case with $L^{\ff}_{\bar{\alpha}}$ and $R^{\ff}_{\alpha}$ holding is omitted.

To obtain the proof of the negative result, we consider the same family of witness processes $f(\alpha,p_n)$ from Section~\ref{sec:Labat}.
However, dif{f}erently from the previous case, the definition of the witness family of equations depends on which rules of type (\ref{syncrule}) are available for $f$.
More precisely, we need to split the proof of the negative result into two cases, according to whether the rules for $f$ allow $\alpha$ and $p_n$ to synchronise or not.

%===================================================================

\subsection{Case 1: Possibility of synchronisation}

Assume first that $S_{\alpha,\bar{\alpha}}^f$ holds, so that the rule
\[
\SOSrule{x_1 \trans[\alpha] y_1 \quad x_2 \trans[\bar{\alpha}] y_2}{f(x_1,x_2) \trans[\tau] y_1 \| y_2}
\]
allows for synchronisation between $\alpha$ and $p_n$.
In this setting, the infinite family of equations
\[
e_n \colon \quad f(\alpha,p_n) \approx \alpha p_n + \sum_{i = 0}^n \bar{\alpha}(\alpha \| \alpha^{\scriptsize \le i}) + \sum_{i = 0}^n \tau\alpha^{\scriptsize \le i} \qquad (n \ge 0) 
\]
is sound modulo bisimilarity and it constitutes a family of witness equations. 

\begin{theorem}
\label{thm:LaRbaC}
Assume an operator $f$ such that only $L^{\ff}_{\alpha}$ holds for $\alpha$, only $R^{\ff}_{\bar{\alpha}}$ holds for $\bar{\alpha}$, and $S_{\alpha,\bar{\alpha}}^f$ holds.
Let $\E$ be a finite axiom system over $\FCCS$ that is sound modulo $\bistext$, $n$ be larger than the size of each term in the equations in $\E$, and $p,q$ be closed terms such that $p,q \bis f(\alpha,p_n)$.
If $\E \vdash p \approx q$ and $p$ has a summand bisimilar to $f(\alpha,p_n)$, then so does $q$.
\end{theorem}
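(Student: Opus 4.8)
The plan is to reproduce the two-layered schema of Sections~\ref{sec:Labat} and~\ref{sec:LaRa}: first establish the case-specific semantics of the witness processes $f(\alpha,p_n)$; then prove a substitution-invariance proposition (the counterpart of Propositions~\ref{Propn:crux} and~\ref{prop:LaRa_substitution}) to the effect that, for every $\FCCS^-$ equation $t\approx u$ sound modulo $\bistext$ and every closed $\sigma$ with $\sigma(t),\sigma(u)\bis f(\alpha,p_n)$ and without $\nil$ summands or factors, a summand of $\sigma(t)$ bisimilar to $f(\alpha,p_n)$ forces one in $\sigma(u)$; and finally derive Theorem~\ref{thm:LaRbaC} by induction on the depth of a closed proof of $p\approx q$, casing on the last rule of equational logic. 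Reflexivity and transitivity are routine, the prefixing case is vacuous because $\init{f(\alpha,p_n)}=\{\alpha,\bar\alpha,\tau\}$ has more than one element, and the sum- and $f$-congruence cases are handled exactly as in the proof of Theorem~\ref{thm:LaRa}, using the proviso that the terms carry no $\nil$ summands or factors.

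For the semantics of the witness processes I would first show that $f(\alpha,p_n)\bis\alpha\|p_n$ (the analogue of Lemma~\ref{lem:f_vs_par}): since only $L^{\ff}_{\alpha}$, $R^{\ff}_{\bar\alpha}$ and $S_{\alpha,\bar\alpha}^f$ can fire on $f(\alpha,p_n)$, and since neither $\alpha$ nor the summands $\bar\alpha\alpha^{\le i}$ of $p_n$ offer an initial $\tau$, the $\alpha$-, $\bar\alpha$- and $\tau$-moves of $f(\alpha,p_n)$ coincide with the left-interleaving, right-interleaving and synchronisation moves of $\alpha\|p_n$ (this holds regardless of whether $L^{\ff}_{\tau}$ or $R^{\ff}_{\tau}$ is present, since neither argument contributes an initial $\tau$). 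In particular $f(\alpha,p_n)$ is here \emph{not} prime, unlike in Section~\ref{sec:Labat}; its unique prime decomposition is $\{\!|\,\alpha,p_n\,|\!\}$, both factors being prime by Lemma~\ref{lem:rhs-prime_1} and Proposition~\ref{prop:unique_decomposition}. I would then prove the decomposition lemma (the analogue of Lemma~\ref{Lem:decomposing} and Lemma~\ref{lem:f_vs_f_aa}): if $f(p,q)\bis f(\alpha,p_n)$ with $p,q\nbis\nil$, then $p\bis\alpha$ and $q\bis p_n$. The asymmetry of the rules makes this sharper than Lemma~\ref{lem:f_vs_f_aa} and leaves no symmetric alternative: matching $f(\alpha,p_n)\trans[\alpha]\nil\|p_n\bis p_n$ can only stem from $L^{\ff}_{\alpha}$, so by primality of $p_n$ it pins $q\bis p_n$ and collapses every $\alpha$-residual of $p$ to $\nil$; matching $f(\alpha,p_n)\trans[\bar\alpha]\alpha\|\alpha^{\le n}$ can only stem from $R^{\ff}_{\bar\alpha}$, whose target $p\|\alpha^{\le j}$ \emph{exposes $p$ fully in a parallel context}, and unique prime decomposition together with the $\alpha$-residual constraint then forces $p\bis\alpha$.

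The substitution-invariance proposition is the core. Writing $t=\sum_{i\in I}t_i$ and $u=\sum_{j\in J}u_j$ with no $\nil$ summands or factors, I take the summand $t_i$ with $\sigma(t_i)\bis f(\alpha,p_n)$ and case on its shape. The variable case is closed by Proposition~\ref{prop:x_in_tu}, the prefix case is vacuous as noted, and for $t_i=f(t',t'')$ the decomposition lemma gives $\sigma(t')\bis\alpha$, $\sigma(t'')\bis p_n$; then Lemma~\ref{Lem:vh-claim_gen} locates a summand of $t''$ that can only be a variable $x$ with $\sigma(x)\bis\sum_k\bar\alpha\alpha^{\le i_k}$ for some $m>1$, and Lemma~\ref{lem:that_one} yields $x\notin\var(t')$. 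Exactly as in the proof of Proposition~\ref{Propn:crux}, I would now exploit $S_{\alpha,\bar\alpha}^f$: setting $\sigma'=\sigma[x\mapsto\bar\alpha f(\alpha,p_n)]$ produces a synchronisation $\sigma'(t)\trans[\tau]p'\bis f(\alpha,p_n)$, so soundness forces $\sigma'(u_j)\trans[\tau]q'\bis f(\alpha,p_n)$ for some $j$, and a depth count (using $\depth(f(\alpha,p_n))=n+2$) shows $x\in\var(u_j)$. A nested case analysis on $u_j$ then finishes: the variable and prefix subcases are excluded by depth or by $\sigma'(x)\ntrans[\tau]$, and for $u_j=f(u',u'')$ one decomposes the matching $\tau$-move, distinguishing whether it is a synchronisation or (if $L^{\ff}_{\tau}$ or $R^{\ff}_{\tau}$ holds) a single-argument $\tau$-move, and in the synchronisation subcase writes its target as $q_1\|q_2$ and uses the unique prime decomposition of $f(\alpha,p_n)\bis\alpha\|p_n$ to conclude $\sigma(u_j)\bis f(\alpha,p_n)$.

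The main obstacle, and precisely the reason this case is split off from the distributive case of Section~\ref{sec:Labat}, is that $L^{\ff}_{\alpha}$ and $R^{\ff}_{\bar\alpha}$ make $f$ \emph{non-distributive} in both arguments (Lemma~\ref{lem:dist}), so saturation is unavailable and Proposition~\ref{Propn:proofswithout0} cannot be used to strip $\nil$ summands and factors from the terms of a closed proof (cf.\ Section~\ref{app:nil_absorption}). Hence the no-$\nil$ proviso must be carried by hand through the induction, and one must guard everywhere against behaviours that are \emph{blocked} by the asymmetric rules (Example~\ref{ex:blocking_f}), invoking the decomposition-of-semantics apparatus --- configurations, the relation $\trt{\ww}{\mu}$, and Lemmas~\ref{lem:c2o}, \ref{lem:trt_open}, \ref{lem:variables} and~\ref{lem:that_one} --- as in the proof of Proposition~\ref{prop:LaRa_substitution}. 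A second complication, absent from Section~\ref{sec:Labat}, is that $f(\alpha,p_n)$ is no longer prime, so the clean dichotomy used in Proposition~\ref{Propn:crux} ("one parallel factor is $\nil$") is replaced by the richer unique-prime-decomposition bookkeeping of Section~\ref{sec:LaRa}. The heaviest part of the argument will therefore be the depth accounting in the $u_j=f(u',u'')$ subcases, where one repeatedly rules out that $x$ contributes spurious depth on the wrong side of $f$; this is technical but wholly analogous to the passages already carried out in Sections~\ref{sec:Labat} and~\ref{sec:LaRa}, and it closes the proof of Theorem~\ref{thm:LaRbaC}, hence of Theorem~\ref{Thm:nonfin-en} in this case.
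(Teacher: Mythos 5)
Your overall two-layer strategy coincides with the paper's: the case-specific lemmas you state are exactly the paper's Lemma~\ref{lem:f_vs_par_aba} ($f(\alpha,p_n)\bis\alpha\|p_n$) and Lemma~\ref{lem:f_vs_f_aba} (the asymmetric decomposition), the final induction on the depth of the closed proof is identical (including the treatment of the congruence cases and the hand-carried no-$\nil$ proviso), and the outer structure of the substitution-invariance proposition (summand decomposition, variable case via Proposition~\ref{prop:x_in_tu}, locating the variable $x$ with $\sigma(x)\bis\sum_k\bar\alpha\alpha^{\le i_k}$ via Lemma~\ref{Lem:vh-claim_gen}) matches Proposition~\ref{prop:LaRbaC_substitution}. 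Where you genuinely diverge is in the core step: you import the witness substitution of Proposition~\ref{Propn:crux}, namely $\sigma'=\sigma[x\mapsto\bar\alpha f(\alpha,p_n)]$, and use the $\tau$-synchronisation afforded by $S^f_{\alpha,\bar\alpha}$ as the witness transition, whereas the paper uses $\sigma'=\sigma[x\mapsto\bar\alpha p_n]$ and the $\bar\alpha$-\emph{interleaving} transition afforded by $R^{\ff}_{\bar\alpha}$, giving $\sigma'(t_i)\trans[\bar\alpha]p'\bis\alpha\|p_n\bis f(\alpha,p_n)$. The paper's choice buys two things your route has to pay for by hand. First, since $\bar\alpha$ is a visible action and only $R^{\ff}_{\bar\alpha}$ holds, the matching move $\sigma'(u_j)\trans[\bar\alpha]r$ is decomposed directly by Lemma~\ref{lem:c2o} (which is stated only for $\alpha\in\{a,\bar a\}$, \emph{not} for $\tau$), and the asymmetry forces any variable contribution into the second argument of $f$; your matching $\tau$-move falls outside Lemma~\ref{lem:c2o}, so you must decompose it at top level into synchronisation versus single-argument $\tau$-moves and then run Lemma~\ref{lem:c2o} separately on each visible premise. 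Second, the paper's analysis is uniform in the rules the theorem does \emph{not} fix --- which of $L^{\ff}_{\tau}$, $R^{\ff}_{\tau}$, $S^f_{\bar\alpha,\alpha}$ hold --- while your $\tau$-witness makes the case analysis explicitly depend on all of them, multiplied by the four-way unique-prime-decomposition splits of $q_1\|q_2\bis\alpha\|p_n$ that replace the clean primality dichotomy of Section~\ref{sec:Labat}. One small correction: your claim that Lemma~\ref{lem:that_one} yields $x\notin\var(t')$ is too strong in this non-distributive setting; because of blocking (Example~\ref{ex:blocking_f}), all one can conclude is that $x$ has no unguarded, unblocked occurrence in $t'$ (formally, that $x\trt{\rr}{\bar\alpha}t'$ fails), which still gives what the argument actually needs, namely $\sigma'(t')\bis\sigma(t')\bis\alpha$ --- this is precisely how the paper phrases it.
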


This proves Theorem~\ref{Thm:nonfin-en} in the considered setting, as the left-hand side of equation $e_n$, viz.~the term $f(\alpha,p_n)$, has a summand bisimilar to $f(\alpha,p_n)$, whilst the right-hand side, viz.~the term $\alpha p_n + \sum_{i=0}^{n} \bar{\alpha}(\alpha \| \bar{\alpha}^{\scriptstyle \le i}) + \sum_{i = 0}^n \tau \alpha^{\scriptsize \le i}$, does not.

Before proceeding to the proof, we remark that the processes $f(\alpha,p_n)$ enjoy the following properties, according to the current set of allowed rules for operator $f$:

\begin{lemma}
\label{lem:f_vs_par_aba}
For each $n \ge 0$ it holds that $f(\alpha,p_n) \bis \alpha \| p_n$.
\end{lemma}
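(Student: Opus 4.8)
The plan is to prove the equivalence directly, by exhibiting an explicit bisimulation. The key observation is that, for the specific arguments $\alpha$ and $p_n$, the rules available for $f$ fire in exact correspondence with the CCS rules for $\|$ and, moreover, produce syntactically identical targets; hence after a single step the two processes already coincide. First I would fix the candidate relation
\[
\mathcal{B} = \{ (f(\alpha,p_n), \alpha \mathbin{\|} p_n),\ (\alpha \mathbin{\|} p_n, f(\alpha,p_n)) \mid n \ge 0 \} \cup \{ (r,r) \mid r \text{ a closed } \FCCS \text{ term} \},
\]
which is symmetric, and argue that it is a bisimulation. Since the identity relation is trivially a bisimulation, it suffices to verify the bisimulation clauses for the pairs $(f(\alpha,p_n), \alpha \mathbin{\|} p_n)$ and their symmetric counterparts.

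Next I would enumerate the transitions of $\alpha \mathbin{\|} p_n$. Recalling that $\alpha \trans[\alpha] \nil$ is the only move of the first component, and that $p_n = \sum_{i=0}^n \bar{\alpha}\,\alpha^{\scriptstyle \le i}$ affords exactly the moves $p_n \trans[\bar{\alpha}] \alpha^{\scriptstyle \le i}$ for $i \in \{0,\dots,n\}$, the rules for $\|$ yield precisely: the interleaving move $\alpha \mathbin{\|} p_n \trans[\alpha] \nil \mathbin{\|} p_n$; the interleaving moves $\alpha \mathbin{\|} p_n \trans[\bar{\alpha}] \alpha \mathbin{\|} \alpha^{\scriptstyle \le i}$; and the synchronisation moves $\alpha \mathbin{\|} p_n \trans[\tau] \nil \mathbin{\|} \alpha^{\scriptstyle \le i}$, one for each $i$. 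Each of these is matched by $f(\alpha,p_n)$ using, respectively, the $L^{\ff}_{\alpha}$ rule, the $R^{\ff}_{\bar{\alpha}}$ rule, and the $S_{\alpha,\bar{\alpha}}^f$ rule, whose targets $\nil \mathbin{\|} p_n$, $\alpha \mathbin{\|} \alpha^{\scriptstyle \le i}$, and $\nil \mathbin{\|} \alpha^{\scriptstyle \le i}$ coincide with the ones above; the resulting pairs therefore lie in the identity part of $\mathcal{B}$.

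For the converse direction I would enumerate the transitions of $f(\alpha,p_n)$, and this is the only step requiring care. Besides the three rules above, $f$ may in principle carry interleaving rules for $\tau$ (should $L^{\ff}_{\tau}$ or $R^{\ff}_{\tau}$ hold) and the opposite synchronisation rule (should $S_{\bar{\alpha},\alpha}^f$ hold). I would argue that none of these can be applied to $f(\alpha,p_n)$: the first argument $\alpha$ admits only an $\alpha$-labelled move, while the second argument $p_n$ admits only $\bar{\alpha}$-labelled moves, so no premise of the form $x_1 \trans[\tau] y_1$, $x_2 \trans[\tau] y_2$, $x_1 \trans[\bar{\alpha}] y_1$, or $x_2 \trans[\alpha] y_2$ can ever be met. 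Consequently the transitions of $f(\alpha,p_n)$ are exactly the three families listed above, each matched by the corresponding move of $\alpha \mathbin{\|} p_n$ with identical target.

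This establishes that $\mathcal{B}$ is a bisimulation, whence $f(\alpha,p_n) \bis \alpha \mathbin{\|} p_n$ for every $n \ge 0$. The main obstacle, as noted, is not computational but lies in the careful justification that the potentially present extra $f$-rules are inapplicable to $f(\alpha,p_n)$; everything else reduces to reading off the operational rules for $\|$ and for $f$ and matching identical targets.
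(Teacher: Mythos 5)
Your proof is correct. The paper in fact states Lemma~\ref{lem:f_vs_par_aba} \emph{without} proof, treating it as routine, and your argument supplies exactly the verification that is left implicit: enumerate the three families of transitions of $f(\alpha,p_n)$ (via $L^{\ff}_{\alpha}$, $R^{\ff}_{\bar{\alpha}}$ and $S_{\alpha,\bar{\alpha}}^f$), observe that any other potentially present rule ($L^{\ff}_{\tau}$, $R^{\ff}_{\tau}$, or $S_{\bar{\alpha},\alpha}^f$) has premises that $\alpha$ and $p_n$ cannot satisfy, and match syntactically identical targets; this is the same explicit-bisimulation technique the paper itself uses in the proof of Proposition~\ref{prop:enough-rules}, there with the relation closed under $\bistext$ rather than under identity.
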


\begin{lemma}
\label{lem:f_vs_f_aba}
Let $n \ge 1$.
Assume that $f(p,q) \bis f(\alpha,p_n)$ for $p,q \nbis \nil$.
Then $p \bis \alpha$ and $q \bis p_n$.
\end{lemma}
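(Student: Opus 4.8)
The plan is to follow the template of the analogous decomposition lemmas, namely Lemma~\ref{Lem:decomposing} and especially Lemma~\ref{lem:f_vs_f_aa}, adapting the transition analysis to the asymmetric rule set in force here (only $L^{\ff}_{\alpha}$, only $R^{\ff}_{\bar{\alpha}}$, and $S_{\alpha,\bar{\alpha}}^f$, together with whichever of $L^{\ff}_{\tau}$ and $R^{\ff}_{\tau}$ happens to hold). First I would recover $q$. Since $f(\alpha,p_n)\trans[\alpha]\nil\|p_n\bis p_n$ and the unique rule with label $\alpha$ is $L^{\ff}_{\alpha}$ (there is no rule $R^{\ff}_{\alpha}$, and the synchronisation rule carries label $\tau$), any matching $\alpha$-move of $f(p,q)$ must stem from its first argument; hence $p\trans[\alpha]p'$ with $p'\|q\bis p_n$. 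As $p_n$ is prime (Lemma~\ref{lem:rhs-prime_1}(\ref{claim:pn})) and $q\nbis\nil$, this forces $q\bis p_n$ and $\depth(p')=0$, i.e.\ $p'\bis\nil$. By congruence of $\bistext$ I then obtain $f(p,p_n)\bis f(\alpha,p_n)$, and it remains only to prove $p\bis\alpha$.

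To pin down $p$ I would classify its transitions $p\trans[\mu]p''$. For $\mu=\alpha$, rule $L^{\ff}_{\alpha}$ gives $f(p,p_n)\trans[\alpha]p''\|p_n$; since the only $\alpha$-transition of $f(\alpha,p_n)$ reaches a process bisimilar to $p_n$, I get $p''\|p_n\bis p_n$ and hence $p''\bis\nil$ by a depth argument (alternatively via the unique prime decomposition, Proposition~\ref{prop:unique_decomposition}).

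The delicate part, which I expect to be the main obstacle, is ruling out $\bar{\alpha}$- and $\tau$-transitions of $p$. Because $L^{\ff}_{\bar{\alpha}}$ (and possibly $L^{\ff}_{\tau}$) fails, such moves of $p$ are \emph{blocked} while $p$ occupies the first argument of $f$, so they do not appear directly among the transitions of $f(p,p_n)$; this is precisely the blocking phenomenon of Example~\ref{ex:blocking_f}, and it is what defeats a naive comparison of initials. The device is to expose $p$ as an ordinary parallel component. Using $R^{\ff}_{\bar{\alpha}}$ and $p_n\trans[\bar{\alpha}]\alpha^{\scriptstyle\leq n}$ I obtain $f(p,p_n)\trans[\bar{\alpha}]p\|\alpha^{\scriptstyle\leq n}$, which must be matched by some $\bar{\alpha}$-move of $f(\alpha,p_n)$. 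The only $\bar{\alpha}$-transitions of $f(\alpha,p_n)$ arise from $R^{\ff}_{\bar{\alpha}}$ and lead to processes of the form $\alpha\|\alpha^{\scriptstyle\leq j}$, whose whole reachable part performs only $\alpha$-labelled moves (its two components never synchronise, being non-complementary). Thus $p\|\alpha^{\scriptstyle\leq n}\bis\alpha\|\alpha^{\scriptstyle\leq j}$ for some $j$, and any transition $p\trans[\mu]p''$ with $\mu\in\{\bar{\alpha},\tau\}$ would survive as $p\|\alpha^{\scriptstyle\leq n}\trans[\mu]p''\|\alpha^{\scriptstyle\leq n}$, contradicting this bisimilarity.

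Combining the two cases, every transition of $p$ is an $\alpha$-move to a process bisimilar to $\nil$; since $p\nbis\nil$ it affords at least one such transition, whence $p\bis\alpha$. Together with $q\bis p_n$ this is the desired conclusion. The $\tau$-case is handled uniformly by the exposure trick above, and the fully symmetric situation with $L^{\ff}_{\bar{\alpha}},R^{\ff}_{\alpha}$ requires no new ideas.
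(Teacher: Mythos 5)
Your proof is correct. Note that for this particular lemma the paper gives no argument at all: its ``proof'' consists of the single remark that it is analogous to that of Lemma~\ref{Lem:decomposing} and is therefore omitted. Your proposal is the right completion of that remark, and it is not a mere transliteration: in Lemma~\ref{Lem:decomposing} every action has an $L$-rule, so every move $p\trans[\mu]p''$ is directly visible as a move of $f(p,p_n)$ and can be refuted by matching it against the transitions of $f(\alpha,p_n)$; in the present rule configuration that argument disposes only of the $\alpha$-moves of $p$ (and of its $\tau$-moves in case $L^{\ff}_{\tau}$ holds), while $\bar{\alpha}$-moves and possibly $\tau$-moves of $p$ are blocked in the first argument of $f$, exactly as in Example~\ref{ex:blocking_f}. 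Your exposure step --- using $R^{\ff}_{\bar{\alpha}}$ to derive $f(p,p_n)\trans[\bar{\alpha}]p\|\alpha^{\scriptstyle\leq n}$, noting that the only possible match is $f(\alpha,p_n)\trans[\bar{\alpha}]\alpha\|\alpha^{\scriptstyle\leq j}$, and then observing that $\alpha\|\alpha^{\scriptstyle\leq j}$ affords no $\bar{\alpha}$- or $\tau$-labelled transitions anywhere in its reachable state space --- is precisely the device the paper itself employs in the corresponding step of the proof of Lemma~\ref{lem:f_vs_f_aa} (the subcase there in which $L^{\ff}_{\mu}$ fails), so your argument realises the technique the paper intends while making explicit the non-trivial adaptation that its ``analogous, omitted'' claim glosses over. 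The remaining steps (recovering $q\bis p_n$ and $p'\bis\nil$ from primality of $p_n$ via Lemma~\ref{lem:rhs-prime_1}(\ref{claim:pn}), passing to $f(p,p_n)\bis f(\alpha,p_n)$ by congruence, and concluding $p\bis\alpha$ from the classification of its transitions) coincide with the paper's template.
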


%-----------------------------------------------------------------
\begin{proof}
The proof is analogous to that of Lemma~\ref{Lem:decomposing} and therefore omitted.
\end{proof}

%=======================================================

\subsubsection{Proving Theorem~\ref{thm:LaRbaC}}

The crucial point in the proof of the negative result is (also in this case) the preservation of the witness property when instantiating an equation from a finite, sound axiom system.
We expand this case in the following proposition:

\begin{proposition}
\label{prop:LaRbaC_substitution}
Assume an operator $f$ such that only $L^{\ff}_{\alpha}$ holds for $\alpha$, only $R^{\ff}_{\bar{\alpha}}$ holds for $\bar{\alpha}$, and $S_{\alpha,\bar{\alpha}}$ holds.

Let $t \approx u$ be an equation over $\FCCS^-$ that is sound modulo $\bistext$.
Let $\sigma$ be a closed substitution with $p = \sigma(t)$ and $q = \sigma(u)$.
Suppose that $p$ and $q$ have neither $\nil$ summands nor factors, and $p,q \bis f(\alpha,p_n)$ for some $n$ larger than the size of $t$.
If $p$ has a summand bisimilar to $f(\alpha,p_n)$, then so does $q$.
\end{proposition}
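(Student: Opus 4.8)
The plan is to follow the template established in the proofs of Propositions~\ref{Propn:crux} and~\ref{prop:LaRa_substitution}, adapting each step to the present rule set $L^{\ff}_{\alpha}$, $R^{\ff}_{\bar{\alpha}}$, $S_{\alpha,\bar{\alpha}}^f$. First I would observe that, since $p=\sigma(t)$ and $q=\sigma(u)$ have no $\nil$ summands or factors, neither do $t$ and $u$; hence by Remark~\ref{rmk:summands} I may write $t=\sum_{i\in I}t_i$ and $u=\sum_{j\in J}u_j$ with each $t_i,u_j$ free of $+$ as head operator and of $\nil$ summands and factors. Fixing an index $i$ with $\sigma(t_i)\bis f(\alpha,p_n)$, I would split into the three cases $t_i=x$, $t_i=\mu.t'$, and $t_i=f(t',t'')$. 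The variable case is immediate from Proposition~\ref{prop:x_in_tu}, and the prefix case is vacuous, since $f(\alpha,p_n)$ has the three initials $\alpha,\bar{\alpha},\tau$ (the last arising from $S_{\alpha,\bar{\alpha}}^f$), whereas a prefix term exhibits only one.

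The substantive work is the case $t_i=f(t',t'')$. Here $\sigma(t'),\sigma(t'')\nbis\nil$ because $\sigma(t_i)$ has no $\nil$ factors, so Lemma~\ref{lem:f_vs_f_aba} gives $\sigma(t')\bis\alpha$ and $\sigma(t'')\bis p_n$. Decomposing $t''$ into summands and using that $n$ exceeds the size of $t$, I would locate a summand that, by Lemma~\ref{Lem:vh-claim_gen}, must be a variable $x$ with $\sigma(x)\bis\sum_{k=1}^m\bar{\alpha}\alpha^{\scriptstyle\le i_k}$ for some $m>1$; Lemma~\ref{lem:that_one} then shows that $x$ cannot contribute to $t'$, as otherwise $\depth(\sigma(t'))\ge 2\ne\depth(\alpha)$. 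I would then perturb $\sigma$ to $\sigma'=\sigma[x\mapsto\bar{\alpha}f(\alpha,p_n)]$. Since $x$ does not contribute to $t'$ we have $\sigma'(t')\bis\sigma(t')\bis\alpha$, which performs an $\alpha$-move on the left argument while $\sigma'(x)$ performs a $\bar{\alpha}$-move on the right, so the synchronisation rule $S_{\alpha,\bar{\alpha}}^f$ yields $\sigma'(t_i)\trans[\tau]p'\bis f(\alpha,p_n)$ and hence $\sigma'(t)\trans[\tau]p'\bis f(\alpha,p_n)$. Soundness of $t\approx u$ gives $\sigma'(t)\bis\sigma'(u)$, so some $u_j$ satisfies $\sigma'(u_j)\trans[\tau]q'\bis f(\alpha,p_n)$; a depth count ($\depth(q')=n+2$ forces $\depth(\sigma'(u_j))\ge n+3>\depth(\sigma(u))$) then forces $x\in\var(u_j)$.

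Next I would run a case analysis on $u_j$. The cases $u_j=x$ and $u_j=\mu.u'$ are ruled out by inspecting the $\tau$-transition and by depth estimates via Lemma~\ref{lem:that_one}, with the blocking phenomenon of Example~\ref{ex:blocking_f} handled separately: were the contribution of $x$ inside $u_j$ blocked, then $\sigma(u_j)$ and $\sigma'(u_j)$ would be bisimilar, contradicting the depth jump just established. The decisive case is $u_j=f(u',u'')$, in which the $\tau$-transition $\sigma'(u_j)\trans[\tau]q'$ may originate either from the synchronisation rule (an $\alpha$-move of the left argument together with a $\bar{\alpha}$-move of the right) or, in case $L^{\ff}_{\tau}$ or $R^{\ff}_{\tau}$ happen to hold, from a genuine $\tau$-move of one of the arguments; I would treat each origin in turn, using Lemmas~\ref{lem:trt_open},~\ref{lem:c2o} and~\ref{lem:variables} to determine which of $u',u''$ carries the contributing occurrence of $x$, and then conclude that $\sigma(u_j)\bis f(\alpha,p_n)$, so that $q=\sigma(u)$ has the required summand.

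The main obstacle, and the principal departure from Section~\ref{sec:Labat}, is that here the witness process is \emph{not} prime: by Lemma~\ref{lem:f_vs_par_aba} one has $f(\alpha,p_n)\bis\alpha\mathbin{\|}p_n$, a parallel product of the two primes $\alpha$ and $p_n$. Consequently I cannot argue, as in the primality-based proof of Proposition~\ref{Propn:crux}, that one parallel component of $q'$ collapses to $\nil$; instead I must invoke the unique prime decomposition (Proposition~\ref{prop:unique_decomposition}) to establish that the prime factors of $q'$ are exactly $\alpha$ and $p_n$ and to allocate them correctly between the two arguments of $f$. This, together with the asymmetry of the rules (only $L^{\ff}_{\alpha}$ and $R^{\ff}_{\bar{\alpha}}$, so that $\bar{\alpha}$-contributions are admissible only from the right argument and $\alpha$-contributions only from the left) and the resulting blocking of variable occurrences, is where the bulk of the delicate bookkeeping lies; the depth identities for $\mathbin{\|}$ (Lemma~\ref{Lem:finite-depth}) and the contribution lemma~\ref{lem:that_one} are the tools that make these estimates go through.
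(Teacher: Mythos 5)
Your setup matches the paper's: the summand decomposition via Remark~\ref{rmk:summands}, the three cases for $t_i$, Proposition~\ref{prop:x_in_tu} for the variable case, Lemma~\ref{lem:f_vs_f_aba} together with Lemma~\ref{Lem:vh-claim_gen} to isolate a variable $x$ with $\sigma(x)\bis\sum_{k=1}^m\bar{\alpha}\alpha^{\scriptstyle\le i_k}$, and unique prime decomposition in place of primality are all exactly the right ingredients. The genuine gap is in the perturbation step, which you import verbatim from Proposition~\ref{Propn:crux}: you set $\sigma'=\sigma[x\mapsto\bar{\alpha}f(\alpha,p_n)]$ and track the $\tau$-transition produced by $S_{\alpha,\bar{\alpha}}^f$. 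To analyse where the matching transition $\sigma'(u_j)\trans[\tau]q'\bis f(\alpha,p_n)$ originates you invoke Lemma~\ref{lem:c2o}, but that lemma is stated — and can only be stated — for labels in $\{a,\bar{a}\}$: a $\tau$-move of a closed instance may arise from a synchronisation in which one or even both premises are contributed by variable instances (e.g.\ a subterm $f(y,x)$ with $\sigma'(y)\trans[\alpha]$ and $\sigma'(x)\trans[\bar{\alpha}]$), and the auxiliary relations of Table~\ref{tab:aux_rules} cannot express such joint contributions. Proposition~\ref{Propn:crux} could afford to track a $\tau$-move because there (i) the witness $f(\alpha,p_n)$ is prime, so in the ``left argument does $\tau$'' origin the residual collapses to $\nil$ and the contradiction is extracted from $u''$ without ever decomposing the $\tau$-move, and (ii) the rule set fixes the $\tau$-origins to exactly two. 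Here both supports vanish: $f(\alpha,p_n)\bis\alpha\mathbin{\|}p_n$ is not prime (as you yourself note), and the hypotheses of the proposition leave $L^{\ff}_{\tau}$ and $R^{\ff}_{\tau}$ completely unconstrained, so your origin analysis acquires branches in which neither primality nor Lemma~\ref{lem:c2o} can locate the contribution of $x$. As sketched, the argument cannot be completed with the toolkit you cite.

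The fix — and this is what the paper does — is to choose the perturbation so that the tracked transition carries a \emph{visible} label whose origin is unambiguous under the present rules: set $\sigma'=\sigma[x\mapsto\bar{\alpha}p_n]$ and use the $\bar{\alpha}$-transition $\sigma'(t_i)\trans[\bar{\alpha}]\sigma'(t')\mathbin{\|}p_n\bis\alpha\mathbin{\|}p_n\bis f(\alpha,p_n)$; the role of $S_{\alpha,\bar{\alpha}}^f$ is precisely to supply the last bisimilarity, via Lemma~\ref{lem:f_vs_par_aba}, rather than to fire a synchronisation. Since only $R^{\ff}_{\bar{\alpha}}$ holds, an $\bar{\alpha}$-move of any $f$-term can originate only in its second argument, Lemma~\ref{lem:c2o} applies, and the matching move of $\sigma'(u_j)$ splits into exactly two clean cases: a syntactic origin $u_j\trans[\bar{\alpha}]u'$, refuted by the expansion law and Lemma~\ref{lem:variables}; and a variable origin, which is forced to be $x$ itself, after which the unique-prime-decomposition bookkeeping you describe (allocating the factors $\alpha$ and $p_n$ between the arguments of $f$, using Lemma~\ref{lem:that_one} for the depth estimates) goes through essentially as you envisage.
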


%--------------------------------------------------------------------
\begin{proof}
First of all we notice that since $\sigma(t)$ and $\sigma(u)$ have no $\nil$ summands or factors, then neither do $t$ and $u$.
Therefore by Remark~\ref{rmk:summands} we get that $t = \sum_{i \in I} t_i$ and $u = \sum_{j \in J} u_j$, for some finite non-empty index sets $I,J$ with all the $t_i$ and $u_j$ not having $+$ as head operator, $\nil$ summands nor factors.
By the hypothesis, there is some $i \in I$ with $\sigma(t_i) \bis f(\alpha,p_n)$.
We proceed by a case analysis on the structure of $t_i$ to show that there is a $u_j$ such that $\sigma(u_j) \bis f(\alpha,p_n)$, establishing our claim.
\begin{enumerate}
\item \textcolor{blue}{\sc Case $t_i = x$ for some variable $x$ such that $\sigma(x) \bis f(\alpha,p_n)$.}
By Proposition~\ref{prop:x_in_tu}, $t$ having a summand $x$ implies that $u$ has a summand $x$ as well.
Thus, we can immediately conclude that $\sigma(u)$ has a summand bisimilar to $f(\alpha,p_n)$ as required.
%%%

\item \textcolor{blue}{\sc Case $t_i = \mu.t'$ for some term $t'$.}
This case is vacuous, as it contradicts $\sigma(t_i) \bis f(\alpha,p_n)$. 
%%%

\item \textcolor{blue}{\sc Case $t_i = f(t',t'')$ for some terms $t',t''$.}
Since $\sigma(t)$ has no $\nil$ factors, we have $\sigma(t'),\sigma(t'') \nbis \nil$.
Hence, from $f(\sigma(t'),\sigma(t'')) \bis f(\alpha,p_n)$ and Lemma~\ref{lem:f_vs_f_aba} we obtain $\sigma(t') \bis \alpha$ and $\sigma(t'') \bis p_n$.
By Remark~\ref{rmk:summands} we infer that $t'' = \sum_{h \in H} v_h$ for some terms $v_h$ that do not have $+$ as head operator and have no $\nil$-summands or factors.
Since $n$ is larger that the size of $t$, we have that $|H| < n$ and thus there is some $h \in H$ such that $\sigma(v_h) \bis \sum_{k = 1}^m \bar{\alpha}\alpha^{\scriptsize \le i_k}$ for some $m > 1$ and $1 \le i_1 < \dots < i_m \le n$.
Since $\sigma(v_h)$ has no $\nil$ summands or factors, from Lemma~\ref{Lem:vh-claim_gen} we infer that $v_h$ can only be a variable $x$ with
\begin{equation}
\label{eq:sigma_x_aba}
\sigma(x) \bis \sum_{k = 1}^m \bar{\alpha}\alpha^{\scriptsize \le i_k}.
\end{equation}
Therefore, $t_i = f(t', x+t''')$ for some $t'''$ such that $\sigma(x + t''') \bis p_n$.
We also notice that since $\sigma(t') \bis \alpha$ and $\init{\sigma(x)} =\{\bar{\alpha}\}$, we can infer that $x \trt{\rr}{\bar{\alpha}} t'$ does not hold (otherwise, $\sigma'(t)$ would afford an initial $\bar{\alpha}$-transition and would not be bisimilar to $\alpha$). 

To prove that $u$ has a summand bisimilar to $f(\alpha,p_n)$, consider the closed substitution
\[
\sigma' = \sigma[x \mapsto \bar{\alpha} p_n].
\]
Notice that, since $\sigma(t') \bis \alpha$, $\sigma(t')$ has no $\nil$ summands or factors, $\init{\sigma(x)} = \init{\sigma'(x)} = \{\bar{\alpha}\}$ and $x$ is the only variable which is affected when changing $\sigma$ into $\sigma'$, then we can infer that either $x \not \in \var(t')$ or its execution is always prevented.
In both cases we get $\sigma(t') \bis \sigma'(t') \bis \alpha$.
Then, using Lemma~\ref{lem:f_vs_par_aba} and $t_i = f(t', x+t'')$, we have
\[
\sigma'(t_i) \trans[\bar{\alpha}] p' \bis \alpha \| p_n \bis f(\alpha,p_n).
\]
As $t \approx u$ implies $\sigma'(t) \bis \sigma'(u)$, we infer that there must be a summand $u_j$ such that $\sigma'(u_j) \trans[\bar{\alpha}] r$ for some $r \bis f(\alpha,p_n)$.
Notice that, since $\sigma(u) \bis f(\alpha,p_n)$ and $\sigma(u_j) = \sigma'(u_j)$ if $x \not \in \var(u_j)$, then it must be the case that $x \in \var(u_j)$, or otherwise we get a contradiction with $\sigma(u) \bis f(\alpha,p_n)$.
By Lemma~\ref{lem:c2o}, as only $R^{\ff}_{\bar{\alpha}}$ holds, we can distinguish two cases:

\begin{enumerate}
\item \textcolor{red}{There is a term $u'$ s.t.\ $u_j \trans[\bar{\alpha}] u'$ and $\sigma'(u') \bis f(\alpha,p_n)$.}
Then, since $f(\alpha,p_n) \bis \alpha \parallel p_n$ (Lemma~\ref{lem:f_vs_par_aba}) we can apply the expansion law, obtaining $\sigma'(u') \bis \alpha p_n +  \sum_{i = 1}^n \bar{\alpha}(\alpha \| \alpha^{\scriptsize \le i}) + \sum_{i = 1}^n \tau \alpha^{\scriptsize \le i}$.
As $n$ is greater than the size of $u$, and thus of those of $u_j$ and $u'$, by Lemma~\ref{lem:variables} we get that $u'$ has a summand $y$, for some variable $y$, such that $\sigma'(y) \bis \sum_{k = 1}^{m'} \bar{\alpha} q_{i'_k} + r'$, for some $m' > 1$, $1 \le i'_1 < \dots < i'_{m'} \le n$, closed term $r'$, and closed terms $q_{i'_k}$ such that either $q_{i'_k} \bis \alpha \| \alpha^{\scriptsize \le i'_k}$ or $q_{i'_k} \bis \alpha^{\scriptsize \le i'_k}$, for each $k=1,\dots,m'$.
(We can infer the exact form of the $q_{i'_k}$ since $\alpha$ and $\alpha^{\scriptsize \le i'_k}$ are prime, the parallel component $\alpha$ is common to all summands and $\alpha^{\scriptsize \le i'_k} \nbis \alpha^{\scriptsize \le i'_j}$ if $k \neq j$).
In both cases, we can infer that $y \neq x$, as $\sigma'(x) \nbis \sigma'(y)$ for any closed term $r'$.
Thus we have $\sigma'(y) = \sigma(y)$ and we get a contradiction with $\sigma(u) \bis f(\alpha,p_n)$ in that $\sigma(u_j)$ would be able to perform two $\bar{\alpha}$-moves in a row unlike $f(\alpha,p_n)$.

\item \textcolor{red}{There are a variable $y$, a closed term $r'$ and a configuration $c$ s.t.\ $\sigma'(y) \trans[\bar{\alpha}] r'$, $u_j \trans[y_{\rr}]_{\bar{\alpha}} c$ and $\sigma'[y_{\dd} \mapsto r'](c) \bis f(\alpha,p_n)$.}
We claim that it must be the case that $y =x$.
To see this claim, assume towards a contradiction that $y \neq x$.
We proceed by a case analysis on the possible occurrences of $x$ in $c$.
\begin{itemize}
\item \textcolor{Purple4}{$x \not \in \var(c)$ or $x \in \var(c)$ but its occurrence is in a guarded context that prevents the execution of its closed instances.}
In this case we get $\sigma[y_{\dd} \mapsto r'](c) \bis \sigma'[y_{\dd}\mapsto r'](c) \bis f(\alpha,p_n)$.
This contradicts $\sigma(u) \bis f(\alpha,p_n)$ since we would have $\sigma(u) \trans[\bar{\alpha}] r \bis f(\alpha,p_n)$, and such a transitions cannot be mimicked by $f(\alpha,p_n)$.
%%%
\item \textcolor{Purple4}{$x \in\var(c)$ and its execution is not prevented.}
We can distinguish two sub-cases, according to whether the occurrence of $x$ is guarded or not.
\begin{itemize}
\item \textcolor{DarkGoldenrod4}{Assume that $x$ occurs guarded in $c$.}
In this case we get a contradiction with $r \bis f(\alpha,p_n)$ since $x$ being guarded implies:
\[
n+2 = \depth(f(\alpha,p_n)) 
= \depth(r) 
\ge  1 + \depth(\sigma'(x))
= n+3. 
\]
\item \textcolor{DarkGoldenrod4}{Assume now that $x \trt{\bb}{\alpha} c$.}
This case contradicts our assumption that $\sigma(u) \bis f(\alpha,p_n)$ since we would have $\sigma(u) \trans[\bar{\alpha}] \sigma[y_{\dd} \mapsto r'](c) \trans[\bar{\alpha}]$, due to Lemmas~\ref{lem:trt_open} and~\ref{lem:o2c}, whereas $f(\alpha,p_n)$ cannot perform two $\bar{\alpha}$-moves in a row.
\end{itemize}
\end{itemize}
Therefore, we can conclude that it must be the case that $y=x$ and $r' = p_n$.
In particular, notice that $x \trt{\rr}{\bar{\alpha}} u_j$.
We now proceed by a case analysis on the structure of $u_j$ to show that $\sigma(u_j) \bis f(\alpha,p_n)$.
\begin{enumerate}
\item \textcolor{ForestGreen}{$u_j = x$.} 
This case is vacuous, as $\sigma'(x) \trans[\bar{\alpha}] p_n$ and $p_n \nbis f(\alpha,p_n)$.
\item \textcolor{ForestGreen}{$u_j = f(u',u'')$ for some $u',u''$.}
Notice that $x \trt{\rr}{\bar{\alpha}} u_j$ can be due only to $x \trt{\rr}{\bar{\alpha}} u''$.
We have $\sigma'(u'') \trans[\bar{\alpha}] r_1$ and $\sigma'(u_j) \trans[\bar{\alpha}] \sigma'(u') \| r_1 \bis f(\alpha,p_n)$.
Since $f(\alpha,p_n) \bis \alpha \| p_n$ and both $\alpha$ and $p_n$ are prime, by Proposition~\ref{prop:unique_decomposition}, we distinguish two cases:
\begin{itemize}
\item \textcolor{purple}{Case $\sigma'(u') \bis \alpha$ and $r_1 \bis p_n$.}
As $\init{\sigma(x)} = \init{\sigma'(x)} = \{\bar{\alpha}\}$, $R^{\ff}_{\bar{\alpha}}$, $\sigma'(u') \bis \alpha$ and $\sigma(u)$ has no $\nil$ factors, we get that either $x \not\in \var(u')$ or $x$ occurs in $u'$ but its execution is prevented by the rules for $f$.
Therefore $\sigma'(u') \bis \sigma(u') \bis \alpha$.
We aim at showing that $u''$ has a summand $x$.
We proceed by proving that the only other possibility, namely $u'' = f(w_1,w_2)$ for some $w_1,w_2$ with $x \trt{\rr}{\bar{\alpha}} w_2$, leads to a contradiction.

As $\sigma'(u'') \trans[\bar{\alpha}] r_1 \bis p_n$, we have $\sigma'(w_2) \trans[\bar{\alpha}] r_2$ and $\sigma'(w_1) \| r_2 \bis p_n$.
Since, $p_n$ is prime, we have that either $\sigma'(w_1) \bis \nil$ and $r_2 \bis p_n$, or $\sigma'(w_1) \bis p_n$ and $r_2 \bis \nil$.
In both cases, as $\sigma'(x) \nbis \sigma'(w_1)$ and the previous considerations, we infer $\sigma(w_1) \bis \sigma'(w_1)$.
Hence, the former case contradicts $\sigma(u)$ not having $\nil$ factors.
The latter case contradicts $\sigma(u) \bis f(\alpha,p_n)$ as, considering that $x\trt{\rr}{\bar{\alpha}} w_2$, the transition $\sigma'(w_2) \trans[\bar{\alpha}] r_2 \bis \nil$ cannot be due to $\sigma'(x)$ and therefore it would be available also to $\sigma(w_2)$ thus implying $\sigma(u_j) \trans[\bar{\alpha}] r''$ with $r'' \bis f(\alpha,p_n)$.

Summing up, we have argued that $u''$ has a summand $x$.
Therefore, by Equation~\eqref{eq:sigma_x_aba},
\[
\sigma(u'') \bis \sum_{k = 1}^m\bar{\alpha}.\alpha^{\scriptstyle \leq i_k} + r'' \enspace ,
\]
for some closed term $r''$. 
We have already noted that
\[
\sigma(u') \bis \sigma'(u')  \bis \alpha \enspace .
\]
Thus, using the congruence properties of bisimilarity, we may infer that
\[
\sigma(u_j) = 
f(\sigma(u'),\sigma(u'')) \bis 
f(\alpha,\sum_{k = 1}^m \bar{\alpha}\alpha^{\scriptsize \le i_k} + r'') 
\enspace .
\]
In light of this equivalence, we have $\sigma(u_j) \trans[\alpha] r' \bis \sigma(u'')$ and thus $\sigma(u) \trans[\alpha] r'$.
Since, by hypothesis, $\sigma(u) \bis f(\alpha,p_n)$ then it must be the case that $r' \bis p_n$.
Therefore, we can conclude that $\sigma(u_j) \bis f(\alpha,p_n)$.
Hence, $\sigma(u)$ has the desired summand.

\item \textcolor{purple}{Case $\sigma'(u') \bis p_n$ and $r_1 \bis \alpha$.}
By reasoning as above, we can infer that either $x \not \in \var(u')$ or it is blocked by the rules for $f$, so that 
\[
\sigma'(u') \bis \sigma(u') \bis p_n.
\]
However, $\depth(\sigma(x)) \ge 3$, and $x \trt{\rr}{\bar{\alpha}} u''$ with $\init{\sigma(x)} = \{\bar{\alpha}\}$ give us, by Lemma~\ref{lem:that_one}, that $\depth(\sigma(u'')) \ge \depth(\sigma(x))$.
Therefore we get a contradiction, in that
\begin{align*}
n+2 ={} & \depth(f(\alpha,p_n))
= \depth(\sigma(u)) 
\ge \depth(\sigma(u_j)) 
= \depth(f(\sigma(u'),\sigma(u''))) \\
\ge{} & \depth(\sigma(u')) + \depth(\sigma(u'')) 
\ge \depth(\sigma(u')) + \depth(\sigma(x)) 
\ge n+1 + 3.
\end{align*}
\end{itemize}
\end{enumerate}
\end{enumerate}
\end{enumerate}
The proof of Proposition~\ref{prop:LaRbaC_substitution} is now complete.
\end{proof}
%---------------------------------------------------------------

We can now formalise the proof of Theorem~\ref{thm:LaRbaC}.\\

\begin{apx-proof}{Theorem~\ref{thm:LaRbaC}}
The proof follows the same lines of that of Theorem~\ref{thm:LaRa}.
The only difference (besides the use of Proposition~\ref{prop:LaRbaC_substitution} in place of Proposition~\ref{prop:LaRa_substitution} in the case of substitutions) is the following inductive step:
\begin{enumerate}
\setcounter{enumi}{3}
\item \textcolor{blue}{\sc Case $\E \vdash p \approx q$ because $p = f(p_1,p_2)$ and $q = f(q_1,q_2)$ for some $p_i,q_i$ with $E \vdash p_i \approx q_i$, for $i \in \{1,2\}$.}
By the proviso of the theorem $p,q$ have neither $\nil$ summands nor factors, thus implying $p_i,q_i \nbis \nil$.
Hence, from $p \bis f(\alpha,p_n)$ and $p = f(p_1,p_2)$ and Lemma~\ref{lem:f_vs_f_aba} we obtain $p_1 \bis \alpha$ and $p_2 \bis p_n$, thus implying, by the soundness of the equations in $\E$, that $q_1 \bis \alpha$ and $q_2 \bis p_n$, so that $q = f(\alpha,p_n)$.
In both cases, we can infer that $q$ has itself as the desired summand.
\end{enumerate}
This completes the proof of Theorem~\ref{thm:LaRbaC} and thus of Theorem~\ref{Thm:nonfin-en} in the case of an operator $f$ that does not distribute over summation in either argument, case $L^{\ff}_{\alpha}, R^{\ff}_{\bar{\alpha}}, S_{\alpha,\bar{\alpha}}^f$.
\end{apx-proof}

%=====================================================================

\subsection{Case 2: No synchronisation}

Assume now that the synchronisation between $\alpha$ and $p_n$ is prevented, namely only $S_{\bar{\alpha},\alpha}^f$ holds.
Then, the witness family of equations changes as follows:
\[
e_n \colon \quad f(\alpha,p_n) \approx \alpha p_n + \sum_{i = 0}^n \bar{\alpha}(\alpha \| \alpha^{\scriptsize \le i}) \qquad (n \ge 0) \enspace .
\]
Our order of business is then to prove the following:

\begin{theorem}
\label{thm:LaRba}
Assume an operator $f$ such that only $L^{\ff}_{\alpha}$ holds for $\alpha$, only $R^{\ff}_{\bar{\alpha}}$ holds for $\bar{\alpha}$, and only $S_{\bar{\alpha},\alpha}^f$ holds.
Let $\E$ be a finite axiom system over $\FCCS$ that is sound modulo $\bistext$, $n$ be larger than the size of each term in the equations in $\E$, and $p,q$ be closed terms such that $p,q \bis f(\alpha,p_n)$.
If $\E \vdash p \approx q$ and $p$ has a summand bisimilar to $f(\alpha,p_n)$, then so does $q$.
\end{theorem}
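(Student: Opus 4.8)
The plan is to run exactly the three-layer argument used for Theorem~\ref{thm:LaRbaC}: reduce to $\FCCS^-$, reuse the witness processes $f(\alpha,p_n)$ and the witness property ``having a summand bisimilar to $f(\alpha,p_n)$'', establish a substitution-invariance proposition (the analogue of Proposition~\ref{prop:LaRbaC_substitution}), and then close by induction on the depth of a closed proof exactly as in the proof of Theorem~\ref{thm:LaRbaC}. The single structural novelty is that now only $S_{\bar{\alpha},\alpha}^f$ holds, so the $\alpha$ sitting in the first argument of $f$ cannot synchronise with the $\bar{\alpha}$'s of $p_n$. Consequently $f(\alpha,p_n)$ has \emph{no} $\tau$-transition, and — in contrast with Section~\ref{sec:LaRba}'s first case — it is \emph{not} bisimilar to $\alpha\|p_n$; there is no counterpart of Lemma~\ref{lem:f_vs_par_aba}. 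Isolating where that equivalence was used in the proof of Theorem~\ref{thm:LaRbaC} and replacing it appropriately is what the whole adaptation amounts to.

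First I would re-establish the two case-specific facts. For primality (the analogue of Lemma~\ref{lem:rhs-prime}): if $f(\alpha,p_n)\bis p\|q$ with $p,q\nbis\nil$, then matching the transition $f(\alpha,p_n)\trans[\alpha]p_n$ forces, by primality of $p_n$ (Lemma~\ref{lem:rhs-prime_1}) and $q\nbis\nil$, that $q\bis p_n$ and $p\trans[\alpha]$ to a process bisimilar to $\nil$; but then $p\|p_n$ affords the $\tau$-transition obtained by synchronising the $\alpha$-move of $p$ with a $\bar{\alpha}$-move of $p_n$, which $f(\alpha,p_n)$ cannot match, a contradiction. Hence $f(\alpha,p_n)$ is prime. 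For the decomposition lemma (replacing Lemma~\ref{lem:f_vs_f_aba}) I would show that $f(p,q)\bis f(\alpha,p_n)$ with $p,q\nbis\nil$ implies $p\bis\alpha$ and $q\bis p_n$: matching $\trans[\alpha]$ yields $q\bis p_n$ and forces every $\alpha$-derivative of $p$ to be bisimilar to $\nil$; then, inspecting the $\bar{\alpha}$-derivatives $p\|\alpha^{\leq i}$ produced by $R^{\ff}_{\bar{\alpha}}$, the case $i=n$ forces $\depth(p)=1$, and any $\bar{\alpha}$- or $\tau$-move of $p$ — blocked at the head of $f(p,p_n)$ because $L^{\ff}_{\bar{\alpha}}$ and (top-level) $\tau$-moves cannot fire — resurfaces in the parallel context $p\|\alpha^{\leq i}$, where it cannot be matched by any $\alpha\|\alpha^{\leq j}$; thus $\init(p)\subseteq\{\alpha\}$ and $p\bis\alpha$.

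With these in hand the substitution proposition is a line-by-line port of Proposition~\ref{prop:LaRbaC_substitution}. The probing substitution $\sigma'=\sigma[x\mapsto\bar{\alpha}p_n]$ is kept unchanged, but the transition it triggers, $\sigma'(t_i)\trans[\bar{\alpha}]\sigma'(t')\|p_n\bis\alpha\|p_n$, now lands on $\alpha\|p_n$ rather than on $f(\alpha,p_n)$. This still drives the argument because $\alpha\|p_n$ has depth $n+2$ (so $x\in\var(u_j)$ follows from the same depth count) and the same unique prime decomposition into $\alpha$ and $p_n$ (so the split after $\sigma'(u_j)\trans[\bar{\alpha}]\sigma'(u')\|r_1\bis\alpha\|p_n$ via Lemma~\ref{lem:c2o} and Proposition~\ref{prop:unique_decomposition} is identical); moreover $\alpha\|p_n$ is a genuine parallel composition, so the expansion law applies to it \emph{directly}, whereas in Theorem~\ref{thm:LaRbaC} it was routed through Lemma~\ref{lem:f_vs_par_aba}. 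All the depth, $\init$, and Lemma~\ref{lem:variables}/Lemma~\ref{Lem:vh-claim_gen}/Lemma~\ref{lem:that_one} sub-arguments then transfer verbatim, and the final reconstruction $\sigma(u_j)\bis f(\alpha,p_n)$ uses, as before, that $p_n$ is the only $\alpha$-derivative of $f(\alpha,p_n)$ up to $\bistext$. The proof of Theorem~\ref{thm:LaRba} itself then proceeds by induction on proof depth exactly as for Theorem~\ref{thm:LaRbaC}, invoking the new decomposition lemma in the $f$-congruence step.

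The hard part will be the bookkeeping forced by the two distinct roles of $\alpha\|p_n$ and $f(\alpha,p_n)$: the probing argument naturally manufactures the parallel process $\alpha\|p_n$ — which \emph{does} have $\tau$-moves — while the invariant being maintained concerns the $\tau$-free prime $f(\alpha,p_n)$, and one must never conflate the two (in particular, the reconstruction step must check the behaviour of $f(\alpha,p_n)$, not of $\alpha\|p_n$). The second delicate point, and the place where the $S_{\bar{\alpha},\alpha}^f$ hypothesis genuinely does its work, is the decomposition lemma, where the first argument's non-$\alpha$ initial moves are invisible at the head of $f$ and become observable only after $R^{\ff}_{\bar{\alpha}}$ has fired; making that exposure argument precise (and robust to whatever $\tau$-rule $f$ happens to carry) is the main obstacle.
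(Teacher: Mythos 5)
Your proposal is correct and follows essentially the same route as the paper: the paper's own proof of Theorem~\ref{thm:LaRba} simply ports the proof of Theorem~\ref{thm:LaRbaC}, replacing Proposition~\ref{prop:LaRbaC_substitution} by Proposition~\ref{prop:LaRba_substitution}, whose only change is that the probing transition now lands on $\alpha \| p_n \nbis f(\alpha,p_n)$, after which one substitutes $\alpha \| p_n$ for $f(\alpha,p_n)$ in the remaining steps. Your write-up is in fact more detailed than the paper's, which only notes parenthetically that $f(\alpha,p_n)$ is prime and leaves implicit the analogue of Lemma~\ref{lem:f_vs_f_aba} (your decomposition lemma, needed in the $f$-congruence step of the induction), both of which you establish correctly.
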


Once again, the validity of Theorem~\ref{Thm:nonfin-en} follows by noticing that the left-hand side of equation $e_n$, viz.~the term $f(\alpha,p_n)$, has a summand bisimilar to $f(\alpha,p_n)$, whilst the right-hand side, viz.~the term $\alpha p_n + \sum_{i=0}^{n} \bar{\alpha}(\alpha \| \bar{\alpha}^{\scriptstyle \le i})$, does not.

\subsubsection{Proving Theorem~\ref{thm:LaRba}}

The proof of Theorem~\ref{thm:LaRba} follows that of Theorem~\ref{thm:LaRbaC} in a step by step manner, by exploiting Proposition~\ref{prop:LaRba_substitution} below in place of Proposition~\ref{prop:LaRbaC_substitution}.
The only dif{f}erence with the proof of Proposition~\ref{prop:LaRbaC_substitution} is that, in the case at hand, Lemma~\ref{lem:f_vs_par_aba} does not hold anymore.
(In fact one could prove, as done for Lemma~\ref{lem:rhs-prime}, that $f(\alpha,p_n)$ is prime for all $n \ge 0$.)

\begin{proposition}
\label{prop:LaRba_substitution}
Assume an operator $f$ such that only $L^{\ff}_{\alpha}$ holds for $\alpha$, only $R^{\ff}_{\bar{\alpha}}$ holds for $\bar{\alpha}$, and only $S_{\bar{\alpha},\alpha}$ holds.

Let $t \approx u$ be an equation over $\FCCS^-$ that is sound modulo $\bistext$.
Let $\sigma$ be a closed substitution with $p = \sigma(t)$ and $q = \sigma(u)$.
Suppose that $p$ and $q$ have neither $\nil$ summands nor factors, and $p,q \bis f(\alpha,p_n)$ for some $n$ larger than the size of $t$.
If $p$ has a summand bisimilar to $f(\alpha,p_n)$, then so does $q$.
\end{proposition}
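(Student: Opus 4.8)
The plan is to run the argument of Proposition~\ref{prop:LaRbaC_substitution} essentially verbatim, changing only the two points where that proof relied on $f(\alpha,p_n)$ and $\alpha \| p_n$ being bisimilar. As in the synchronising case, since $\sigma(t)$ and $\sigma(u)$ have neither $\nil$ summands nor factors, by Remark~\ref{rmk:summands} I would write $t = \sum_{i \in I} t_i$ and $u = \sum_{j \in J} u_j$ with each $t_i$, $u_j$ having neither $+$ as head operator nor $\nil$ summands or factors, and fix an index $i \in I$ with $\sigma(t_i) \bis f(\alpha,p_n)$. The case analysis on $t_i$ then proceeds exactly as before: if $t_i = x$ the claim follows from Proposition~\ref{prop:x_in_tu}; the prefix case $t_i = \mu.t'$ is vacuous; and for $t_i = f(t',t'')$ the decomposition Lemma~\ref{lem:f_vs_f_aba} (whose proof is the analogue of that of Lemma~\ref{Lem:decomposing} and is insensitive to the direction of synchronisation) yields $\sigma(t') \bis \alpha$ and $\sigma(t'') \bis p_n$, whence Lemma~\ref{Lem:vh-claim_gen} locates a variable $x$ with $\sigma(x) \bis \sum_{k=1}^m \bar{\alpha}\alpha^{\scriptstyle \le i_k}$ as a summand of $t''$, so that $t_i = f(t', x + t''')$.

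The only genuine departure occurs at the marking step. I would again set $\sigma' = \sigma[x \mapsto \bar{\alpha}p_n]$ and obtain, via $R^{\ff}_{\bar{\alpha}}$, that $\sigma'(t_i) \trans[\bar{\alpha}] \sigma'(t') \| p_n \bis \alpha \| p_n$. In the synchronising case this target was rewritten as $f(\alpha,p_n)$ through Lemma~\ref{lem:f_vs_par_aba}; here that lemma fails, because $f$ no longer synchronises $\alpha$ with $p_n$ and, by the argument used for Lemma~\ref{lem:rhs-prime}, $f(\alpha,p_n)$ is prime, whereas $\alpha \| p_n$ is reducible and affords the extra transitions $\alpha \| p_n \trans[\tau] \alpha^{\scriptstyle \le i}$. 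The key observation is that the rest of the proof of Proposition~\ref{prop:LaRbaC_substitution} never actually needs the tracked process to be $f(\alpha,p_n)$: every subsequent use is either the expansion of $\alpha \| p_n$, the unique prime decomposition of $\alpha \| p_n$ (via primality of $\alpha$ and $p_n$ from Lemma~\ref{lem:rhs-prime_1}), or a depth computation exploiting $\depth(\alpha \| p_n) = n+2$. I would therefore simply carry $\alpha \| p_n$ as the tracked process throughout.

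Concretely, from $\sigma'(t) \bis \sigma'(u)$ I would extract a summand $u_j$ with $\sigma'(u_j) \trans[\bar{\alpha}] r \bis \alpha \| p_n$, and since $\depth(\alpha\|p_n) = n+2 = \depth(\sigma(u))$ while $\sigma$ and $\sigma'$ differ only at $x$, conclude $x \in \var(u_j)$. Lemma~\ref{lem:c2o} (only $R^{\ff}_{\bar{\alpha}}$ holds for $\bar{\alpha}$) then splits into the two familiar cases. In the prefix-origin case $u_j \trans[\bar{\alpha}] u'$ with $\sigma'(u') \bis \alpha \| p_n$, the expansion law gives $\sigma'(u') \bis \alpha p_n + \sum_{i} \bar{\alpha}(\alpha \| \alpha^{\scriptstyle \le i}) + \sum_{i} \tau \alpha^{\scriptstyle \le i}$, and Lemma~\ref{lem:variables} produces a variable $y \ne x$ whose instance offers two distinct $\bar{\alpha}$-branches, yielding a term that performs two $\bar{\alpha}$-moves in a row, which is impossible for $f(\alpha,p_n)$. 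In the variable-origin case one forces $y = x$ and $r' = p_n$, so that $x \trt{\rr}{\bar{\alpha}} u_j$, and a final case analysis on $u_j$ (the subterm $f(u',u'')$ being resolved by unique prime decomposition of $\alpha \| p_n$) shows $\sigma(u_j) \bis f(\alpha,p_n)$, as required.

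The main obstacle is bookkeeping rather than a new idea: I must verify, line by line, that each appeal in Proposition~\ref{prop:LaRbaC_substitution} to Lemma~\ref{lem:f_vs_par_aba} is in fact an appeal to the reachable parallel term $\alpha \| p_n$ together with its expansion or prime decomposition, and re-check the depth estimates now that $\alpha \| p_n$ and $f(\alpha,p_n)$ are genuinely distinct processes that merely share the depth $n+2$. The spot that deserves the most care is the sub-case in which the marking variable could fire a $\bar{\alpha}$-transition inside a configuration $c$: there I would again invoke the fact that $f(\alpha,p_n)$ cannot perform two consecutive $\bar{\alpha}$-moves to derive a contradiction, a property that holds equally in the non-synchronising regime and so transfers without change.
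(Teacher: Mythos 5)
Your proposal is correct and takes essentially the same approach as the paper: the paper's own proof also re-runs Proposition~\ref{prop:LaRbaC_substitution} verbatim, observes that the derived transition now yields $p' \bis \alpha \| p_n \nbis f(\alpha,p_n)$, and substitutes $\alpha \| p_n$ for $f(\alpha,p_n)$ as the tracked process in the remainder of the argument. Your line-by-line check that every later appeal is to the expansion, unique prime decomposition, or depth of $\alpha \| p_n$, or to properties of $f(\alpha,p_n)$ that persist in the non-synchronising regime (such as the impossibility of two consecutive $\bar{\alpha}$-moves), is precisely the bookkeeping the paper leaves implicit.
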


%-------------------------------------------------------------------
\begin{proof}
The proof follows exactly as the proof of Proposition~\ref{prop:LaRbaC_substitution}, with the only difference that when we consider the derived transition
\[
\sigma'(t_1) \trans[\bar{\alpha}] p'
\]
we have that $p' \bis \alpha \| p_n \nbis f(\alpha,p_n)$.
However, by substituting $f(\alpha,p_n)$ with $\alpha \| p_n$ in the remaining of the proof, the same arguments hold.
\end{proof}

%=============================================
% sec - Lt
%==============================================

\section{Negative result: the case $L^{\ff}_{\tau}$}
\label{sec:Lt}

This section considers the last case in our analysis, namely that of an operator $f$ that does not distribute, modulo bisimilarity, over summation in either argument and that has the same rule type for actions $\alpha,\bar{\alpha}$.
Here, we present solely the case in which $L^{\ff}_{\tau}$ holds, and only $R^{\ff}_{\alpha},R^{\ff}_{\bar{\alpha}}$ hold for $\alpha,\bar{\alpha}$, namely $f$ has rules:
\[
\SOSrule{x_1 \trans[\tau] y_1}{f(x_1,x_2) \trans[\tau] y_1 \| x_2}
\qquad
\SOSrule{x_2 \trans[\alpha] y_2}{f(x_1,x_2) \trans[\alpha] x_1 \| y_2}
\qquad
\SOSrule{x_2 \trans[\bar{\alpha}] y_2}{f(x_1,x_2) \trans[\bar{\alpha}] x_1 \| y_2}.
\]
The symmetric case can be obtained from this one in a straightforward manner.

Interestingly, the validity of the negative result we consider in this section is independent of which rules of type (\ref{syncrule}) are available for $f$, and of the validity of the predicate $R^{\ff}_{\tau}$.

Consider the family of equations defined by:
\[
e_n \colon \quad f(\tau,q_n) \approx \tau q_n + \sum_{i = 0}^n \alpha(\tau \| \bar{\alpha}^{\scriptsize \le i}) \qquad (n \ge 0)
\]
where the processes $q_n$ are the same used in Section~\ref{sec:LaRa}.
Theorem~\ref{thm:Lt_1} below proves that the collection of equations $e_n$, $n \ge 0$, is a witness family of equations for our negative result.

\begin{theorem}
\label{thm:Lt_1}
Assume an operator $f$ such that $L^{\ff}_{\tau}$ holds and only $R^{\ff}_{\alpha}$ and $R^{\ff}_{\bar{\alpha}}$ hold for actions $\alpha$ and $\bar{\alpha}$.
Let $\E$ be a finite axiom system over $\FCCS$ that is sound modulo $\bistext$, $n$ be larger than the size of each term in the equations in $\E$, and $p,q$ be closed terms such that $p,q \bis f(\tau,q_n)$.
If $\E \vdash p \approx q$ and $p$ has a summand bisimilar to $f(\tau,q_n)$, then so does $q$.
\end{theorem}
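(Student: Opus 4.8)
The plan is to establish Theorem~\ref{thm:Lt_1} by the same argument schema used for the other non-distributive cases, and in fact the proof runs almost in lockstep with that of Theorem~\ref{thm:LaRbaC} under the following correspondence of roles: the left-interleaving action $\tau$ (governed by $L^{\ff}_{\tau}$) plays the part previously played by $\alpha$, while the branching action $\alpha$ of $q_n$ (governed by $R^{\ff}_{\alpha}$) plays the part previously played by $\bar{\alpha}$, with $q_n$ replacing $p_n$. A pleasant simplification is that synchronisation never contributes to the behaviour of the witness $f(\tau,q_n)$: since the left argument $\tau$ affords only a $\tau$-move, no rule of the form~\eqref{syncrule} can fire from $f(\tau,q_n)$, and $q_n$ has neither a $\bar{\alpha}$- nor a $\tau$-initial. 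This is precisely why the result is independent of which rules of the form~\eqref{syncrule} are present and of whether $R^{\ff}_{\tau}$ holds.

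First I would record the two case-specific structural lemmas, the analogues of Lemmas~\ref{lem:f_vs_par_aba} and~\ref{lem:f_vs_f_aba}. The first states that $f(\tau,q_n)\bis \tau\mathbin{\|}q_n$ for each $n\ge 0$; this is immediate, since both sides afford exactly the move to $q_n$ (via $L^{\ff}_{\tau}$ on the left, respectively via $\tau\trans[\tau]\nil$) and the moves $\trans[\alpha]\tau\mathbin{\|}\bar{\alpha}^{\le i}$ (via $R^{\ff}_{\alpha}$, respectively via $q_n\trans[\alpha]\bar{\alpha}^{\le i}$), and nothing else. The second, a decomposition lemma analogous to Lemma~\ref{Lem:decomposing}, states that if $f(p,q)\bis f(\tau,q_n)$ with $p,q\nbis\nil$ and $n\ge 1$, then $p\bis\tau$ and $q\bis q_n$. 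I would prove it by observing that every $\alpha$-move of $f(p,q)$ must originate from its second argument (only $R^{\ff}_{\alpha}$ holds, and synchronisation yields $\tau$), so $q$ affords $\alpha$-moves to terms $q'_i$ with $p\mathbin{\|}q'_i\bis\tau\mathbin{\|}\bar{\alpha}^{\le i}$ for all $i\in\{0,\dots,n\}$; since $\tau$ and each $\bar{\alpha}^{\le i}$ are prime (Lemma~\ref{lem:rhs-prime_1}) and the $\bar{\alpha}^{\le i}$ are pairwise non-bisimilar, unique prime decomposition (Proposition~\ref{prop:unique_decomposition}) forces $p$ to be a common divisor of all these products, hence $p\bis\tau$; then the unique $\tau$-move of $f(\tau,q)$ must be matched by $f(\tau,q_n)\trans[\tau]q_n$, giving $q\bis q_n$. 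Note that this decomposition is rigid, not a two-sided disjunction as in Theorem~\ref{thm:LaRa}, because $f(q_n,\tau)$ affords no $\alpha$-move and therefore cannot be bisimilar to $f(\tau,q_n)$.

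The heart of the argument is the substitution proposition, the analogue of Proposition~\ref{prop:LaRbaC_substitution}: for a sound equation $t\approx u$ over $\FCCS^-$ and a closed $\sigma$ such that $\sigma(t),\sigma(u)$ are free of $\nil$ summands and factors and bisimilar to $f(\tau,q_n)$, the property of having a summand bisimilar to $f(\tau,q_n)$ transfers from $\sigma(t)$ to $\sigma(u)$. Writing $t=\sum_{i\in I}t_i$ and $u=\sum_{j\in J}u_j$ (Remark~\ref{rmk:summands}), I would run the usual case analysis on the summand $t_i$ with $\sigma(t_i)\bis f(\tau,q_n)$: the variable case uses Proposition~\ref{prop:x_in_tu}; the prefix case is vacuous because $f(\tau,q_n)$ has the two distinct initials $\tau$ and $\alpha$; and in the case $t_i=f(t',t'')$ the decomposition lemma gives $\sigma(t')\bis\tau$ and $\sigma(t'')\bis q_n$, whereupon Lemma~\ref{Lem:vh-claim_gen} identifies a summand $x$ of $t''$ with $\sigma(x)\bis\sum_k\alpha\bar{\alpha}^{\le i_k}$, so $\depth(\sigma(x))\ge 3$ and $\init{\sigma(x)}=\{\alpha\}$. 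I would then amplify $x$ by the substitution $\sigma'=\sigma[x\mapsto \alpha q_n]$, which shares the initial set $\{\alpha\}$ with $\sigma(x)$ and, via $R^{\ff}_{\alpha}$, triggers the move $\sigma'(t_i)\trans[\alpha]\sigma'(t')\mathbin{\|}q_n\bis f(\tau,q_n)$. Soundness of $t\approx u$ then yields a summand $u_j$ with $\sigma'(u_j)\trans[\alpha]r\bis f(\tau,q_n)$; a depth count (the target has depth $n+2$, while $\depth(\sigma(u))=n+2$) forces $x\in\var(u_j)$, and Lemma~\ref{lem:c2o} (using that only $R^{\ff}_{\alpha}$ holds for $\alpha$) together with Lemmas~\ref{lem:variables} and~\ref{lem:that_one} forces the amplified move to be caused by $x$ itself, so $x\trt{\rr}{\alpha}u_j$ and $u_j=f(u',u'')$ with $x\trt{\rr}{\alpha}u''$; from here the conclusion $\sigma(u_j)\bis f(\tau,q_n)$ proceeds exactly as in Proposition~\ref{prop:LaRbaC_substitution}.

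Given this proposition, Theorem~\ref{thm:Lt_1} follows by induction on the depth of a closed, $\nil$-free proof of $p\approx q$ from a saturated $\E$ (Proposition~\ref{Propn:proofswithout0}), with the congruence-for-$f$ step discharged by the rigid decomposition lemma (so that $q=f(\tau,q_n)$ is its own summand), just as in the closing induction of Theorem~\ref{thm:LaRbaC}. The main obstacle I anticipate is, as in the previous cases, the innermost case analysis within the substitution proposition where $u_j=f(u',u'')$: there one must trace, via the auxiliary open-transition machinery of Section~\ref{sec:open} and unique prime decomposition, that the $\alpha$-derivative of $\sigma'(u_j)$ splits as $\tau\mathbin{\|}q_n$ with $\sigma'(u')\bis\tau$ and residual $\bis q_n$, ruling out the swapped split by a depth argument resting on $\depth(\sigma(x))\ge 3$, and then argue that $u''$ literally contains $x$ as a summand rather than a nested $f$-term. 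This bookkeeping, though structurally identical to the $L^{\ff}_{\alpha},R^{\ff}_{\bar{\alpha}}$ case, is the most delicate and error-prone part of the proof.
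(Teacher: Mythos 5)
Your overall architecture matches the paper's: the two case-specific structural lemmas ($f(\tau,q_n)\bis\tau\mathbin{\|}q_n$, and the rigid decomposition of any $f(p,q)\bis f(\tau,q_n)$ with $p,q\nbis\nil$ into $p\bis\tau$, $q\bis q_n$), a substitution proposition built around the amplifying substitution $\sigma'=\sigma[x\mapsto\alpha q_n]$, and a closing induction on the depth of equational proofs are exactly Lemmas~\ref{lem:f_vs_par_t} and~\ref{lem:f_vs_f_t}, Proposition~\ref{prop:Lt_substitution_1}, and the proof scheme of Theorems~\ref{thm:LaRa} and~\ref{thm:LaRbaC}. Your proof of the decomposition lemma takes a slightly different but sound route: you pin down $p\bis\tau$ first from the matching of $\alpha$-moves (the $i=0$ summand of $q_n$ together with primality already suffices), whereas the paper argues directly that the $\tau$-move of $f(\tau,q_n)$ can only be mimicked by a $\tau$-move of $p$, explicitly excluding the synchronisation rules and $R^{\ff}_{\tau}$ as alternative sources; both arguments are correct.

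There is, however, one genuine flaw: the closing induction cannot be run ``from a saturated $\E$'' via Proposition~\ref{Propn:proofswithout0}. That proposition, and the whole saturation apparatus of Section~\ref{sec:axioms}, is available only when $f$ distributes over $+$ in one argument, because it rests on the $\nil$-absorption axioms F0--F2 of Table~\ref{tab:axioms} and on the fact that $t\bis t/\nil$ (Lemma~\ref{Lem:-nil}(\ref{nil1})). For the rule set of Theorem~\ref{thm:Lt_1} these facts are false: $f(\alpha,\nil)$ has no transitions at all (there is no $L^{\ff}_{\alpha}$ rule, and neither a synchronisation rule nor any $R$-rule can fire when the second argument is $\nil$), so $f(\alpha,\nil)\bis\nil$ while $f(\alpha,\nil)/\nil=\alpha$; dually, $f(\nil,\alpha)\trans[\alpha]\nil\mathbin{\|}\nil$, so $f(\nil,\alpha)\nbis\nil=f(\nil,\alpha)/\nil$. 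Hence F1 and F2 are unsound here, $\cl(\E)$ need not be sound, and the ``$\nil$-free proof'' that Proposition~\ref{Propn:proofswithout0} would deliver is not a proof from a sound axiom system --- and soundness of the axioms is precisely what your substitution proposition consumes. The paper states this obstruction explicitly in Section~\ref{app:nil_absorption}: one cannot rely on saturation for an operator $f$ that distributes over $+$ in neither of its arguments. The repair is what the paper actually does in the proofs of Theorems~\ref{thm:LaRa} and~\ref{thm:LaRbaC}, which Theorem~\ref{thm:Lt_1} inherits: drop saturation entirely, add the hypothesis that $p$ and $q$ contain no $\nil$ summands or factors to the statement proved by induction on proof depth, and verify it case by case (it holds for the two sides of the witness equations, and in the congruence step for $f$ it is what licenses the application of Lemma~\ref{lem:f_vs_f_t}). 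With the closing induction restructured in this way, the remainder of your argument goes through.
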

 
As the left-hand side of equation $e_n$, viz.~the term $f(\tau,q_n)$, has a summand bisimilar to $f(\tau,q_n)$, whilst the right-hand side, viz.~the term $\tau q_n + \sum_{i=0}^{n} \alpha(\tau \| \bar{\alpha}^{\scriptstyle \le i})$, does not, we can conclude that the collection of infinitely many equations $e_n$ ($n \ge 0$) is the desired witness family.
This concludes the proof of Theorem~\ref{Thm:nonfin-en} for this case and our proof of Theorem~\ref{Thm:f(x,y)+f(y,x)}.

%============================================================

\subsection{Case specific properties of $f(\alpha,q_n)$}

First of all, we remark that the witness processes $f(\tau,q_n)$ enjoy the properties formalised in Lemmas~\ref{lem:f_vs_par_t} and~\ref{lem:f_vs_f_t} below.

\begin{lemma}
\label{lem:f_vs_par_t}
For each $n \ge 0$ it holds that $f(\tau,q_n) \bis \tau \| q_n$.
\end{lemma}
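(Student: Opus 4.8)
The plan is to prove $f(\tau,q_n) \bis \tau \| q_n$ directly, by exhibiting an explicit bisimulation rather than appealing to any of the earlier decomposition lemmas. I would take
\[
\mathcal{R} = \{(f(\tau,q_n),\, \tau \| q_n) \mid n \ge 0\} \cup {\bistext}
\]
and show that its symmetric closure is a bisimulation. The pairs coming from $\bistext$ satisfy the transfer conditions automatically, so the entire argument reduces to comparing the \emph{initial} transitions of $f(\tau,q_n)$ with those of $\tau \| q_n$ and checking that the reached states can be matched inside $\mathcal{R}$.

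First I would enumerate the moves of $f(\tau,q_n)$ using the rules admitted for $f$ in this case. Since $\tau \trans[\tau] \nil$ and $L^{\ff}_{\tau}$ holds, we obtain $f(\tau,q_n) \trans[\tau] \nil \| q_n$; since $q_n \trans[\alpha] \bar{\alpha}^{\le i}$ for each $i \in \{0,\dots,n\}$ and $R^{\ff}_{\alpha}$ holds, we obtain $f(\tau,q_n) \trans[\alpha] \tau \| \bar{\alpha}^{\le i}$. The crucial observation — and the reason the statement is independent of which rule of type~\eqref{syncrule} is available and of the validity of $R^{\ff}_{\tau}$ — is that no other rule can fire: $R^{\ff}_{\bar{\alpha}}$ and $R^{\ff}_{\tau}$ would require $q_n \trans[\bar{\alpha}]$ or $q_n \trans[\tau]$, which $q_n$ never affords, and any synchronisation rule would require $\tau$ to perform an $\alpha$- or $\bar{\alpha}$-move, which $\tau$ cannot do. Next I would enumerate the moves of $\tau \| q_n$: the left component yields $\tau \| q_n \trans[\tau] \nil \| q_n$, the right component yields $\tau \| q_n \trans[\alpha] \tau \| \bar{\alpha}^{\le i}$, and the synchronisation rule for $\|$ cannot fire because the only initial action of $q_n$ is $\alpha$, whose complement is not among the initials of $\tau$. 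Comparing the two lists, the transitions coincide and, crucially, the \emph{targets are syntactically identical}; since $f$ occurs only at the head of $f(\tau,q_n)$, each reached state is a pure parallel-composition term containing no occurrence of $f$, so every resulting pair is of the form $(r,r)$ and hence lies in $\bistext \subseteq \mathcal{R}$. This closes the transfer conditions in both directions and yields $f(\tau,q_n) \bis \tau \| q_n$.

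I do not anticipate a genuine obstacle here: the lemma holds precisely because, for the specific arguments $\tau$ and $q_n$, the left interleaving rule on $\tau$ and the right interleaving rules on $q_n$ reproduce exactly the interleaving behaviour of $\|$, while the absence of a complementary pair of initial actions simultaneously neutralises the (unspecified) synchronisation rule of $f$ and the synchronisation rule of $\|$. The only point requiring a little care is making the enumeration of the transitions of $f(\tau,q_n)$ exhaustive over every admissible rule configuration — that is, explicitly ruling out any contribution from $R^{\ff}_{\tau}$ and from whichever synchronisation rule $f$ may possess — and this is the step I would state most carefully.
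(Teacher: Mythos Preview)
Your proposal is correct. The paper actually omits the proof of this lemma entirely (as it does for the analogous Lemmas~\ref{lem:f_vs_par} and~\ref{lem:f_vs_par_aba}), treating it as immediate; your direct bisimulation argument, based on an exhaustive comparison of the initial transitions of $f(\tau,q_n)$ and $\tau \| q_n$ and the observation that all targets are $f$-free and syntactically identical, is exactly the kind of routine verification the paper is tacitly relying on.
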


\begin{lemma}
\label{lem:f_vs_f_t}
Let $n \ge 1$.
Assume that $f(p,q) \bis f(\tau,q_n)$ for $p,q \nbis \nil$.
Then $p \bis \tau$ and $q \bis q_n$.
\end{lemma}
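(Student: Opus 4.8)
The plan is to exploit the pronounced asymmetry of the rules available in this case: for the visible actions $\alpha,\bar{\alpha}$ the operator $f$ can move only through its \emph{second} argument (only $R^{\ff}_{\alpha}$ and $R^{\ff}_{\bar{\alpha}}$ hold, and neither $L^{\ff}_{\alpha}$ nor $L^{\ff}_{\bar{\alpha}}$), whereas a $\tau$ can always be produced from the \emph{first} argument via $L^{\ff}_{\tau}$; every synchronisation rule of type~\eqref{syncrule} and the possibly-present $R^{\ff}_{\tau}$ are $\tau$-labelled and hence irrelevant to the $\alpha$-behaviour. First I would record the transitions of $f(\tau,q_n)$: since $\init{\tau}=\{\tau\}$ and $\init{q_n}=\{\alpha\}$, the subterm $\tau$ can never trigger a synchronisation, so the outgoing moves are exactly $f(\tau,q_n)\mv{\tau}\nil\|q_n\bis q_n$ (from $L^{\ff}_{\tau}$) and $f(\tau,q_n)\mv{\alpha}\tau\|\bar{\alpha}^{\scriptstyle\le i}$ for $i=0,\dots,n$ (from $R^{\ff}_{\alpha}$). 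In particular, every $\tau$-transition of $f(\tau,q_n)$ leads, up to $\bis$, to $q_n$.

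To establish $p\bis\tau$, I would match the $\alpha$-transition stemming from the $i=0$ summand $\alpha\bar{\alpha}^{\scriptstyle\le 0}=\alpha$ of $q_n$, namely $f(\tau,q_n)\mv{\alpha}\tau\|\nil\bis\tau$. As $f(p,q)\bis f(\tau,q_n)$, there is a matching move $f(p,q)\mv{\alpha}r$ with $r\bis\tau$. Because $L^{\ff}_{\alpha}$ fails and every synchronisation rule is $\tau$-labelled, the only way $f(p,q)$ can perform an $\alpha$-move is through $R^{\ff}_{\alpha}$, so $r=p\|q'$ for some $q'$ with $q\mv{\alpha}q'$, whence $p\|q'\bis\tau$. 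Since $\tau$ has depth $1$ it is prime (the remark following Definition~\ref{Def:prime}), so its unique prime decomposition (Proposition~\ref{prop:unique_decomposition}) is the singleton $\tau$; as $p\nbis\nil$, the factor $p$ must account for this whole decomposition, forcing $p\bis\tau$ (and $q'\bis\nil$).

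Having fixed $p\bis\tau$, congruence of $\bis$ over $\FCCS$ gives $f(\tau,q)\bis f(p,q)\bis f(\tau,q_n)$, which restores the clean $L^{\ff}_{\tau}$-structure on both sides. I would then match the move $f(\tau,q)\mv{\tau}\nil\|q\bis q$ (available since $\tau\mv{\tau}\nil$ and $L^{\ff}_{\tau}$ holds): bisimilarity yields $f(\tau,q_n)\mv{\tau}s$ with $q\bis s$, and since every $\tau$-transition of $f(\tau,q_n)$ leads up to $\bis$ to $q_n$ by the first step, we obtain $q\bis q_n$, completing the proof. The only delicate point — and the step I would treat most carefully — is the bookkeeping of transition origins: because we do not know which rule of type~\eqref{syncrule} is present, nor whether $R^{\ff}_{\tau}$ holds, matching the $\tau$-move of $f(p,q)$ \emph{directly} would branch into several ambiguous subcases (from $L^{\ff}_{\tau}$, from $R^{\ff}_{\tau}$, or from a synchronisation). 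The route above sidesteps this entirely by matching only transitions whose label forces a unique originating rule, namely the $\alpha$-move first, and the $\tau$-move only after the source has been rewritten as $f(\tau,q)$.
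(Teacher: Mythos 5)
Your proof is correct, and it takes a genuinely different route from the paper's. The paper matches the $\tau$-transition $f(\tau,q_n)\mv{\tau}\nil\|q_n\bis q_n$ against a move of $f(p,q)$ and then runs a case analysis over every possible origin of that matching $\tau$-move --- a synchronisation of type~\eqref{syncrule} in either orientation, or a possible $R^{\ff}_{\tau}$-move of $q$ --- ruling out each alternative by exhibiting a transition that the other side cannot mimic; only then does it conclude that the match stems from a $\tau$-move of $p$ via $L^{\ff}_{\tau}$, so that primality of $q_n$ yields $q\bis q_n$, with $p\bis\tau$ obtained afterwards along the lines of Lemma~\ref{Lem:decomposing}. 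You invert the order of the two conclusions and, as you note yourself, choose the transitions to be matched so that their originating rule is forced: the $\alpha$-move of $f(\tau,q_n)$ to $\tau\|\nil\bis\tau$ can only be answered by $f(p,q)$ through $R^{\ff}_{\alpha}$ (every rule of type~\eqref{syncrule} is $\tau$-labelled and $L^{\ff}_{\alpha}$ fails), giving $p\bis\tau$ by primality of the depth-one process $\tau$; then, after the congruence rewriting $f(p,q)\bis f(\tau,q)$, the move $f(\tau,q)\mv{\tau}\nil\|q$ is matched against the fully characterised $\tau$-moves of $f(\tau,q_n)$, all of which land in the $\bis$-class of $q_n$, giving $q\bis q_n$. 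What the paper's route buys is uniformity with the template of Lemma~\ref{Lem:decomposing} and an explicit record of why each synchronisation/$R^{\ff}_{\tau}$ alternative fails, which is the content its proof chooses to emphasise; what your route buys is brevity and robustness --- the argument is literally insensitive to which rule of type~\eqref{syncrule} is present and to whether $R^{\ff}_{\tau}$ holds, because those rules never fire on the transitions you match, so the three-way case analysis disappears. Both arguments rest on the same toolbox: the primality facts, unique decomposition, and the congruence property of $\bis$.
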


%-----------------------------------------------------------
\begin{proof}
The proof is analogous to that of Lemma~\ref{Lem:decomposing}.
We remark that the $\tau$-transition by $f(\tau,q_n)$ can be mimicked only by a $\tau$-move by $p$.
To see this, we show that any other case would lead to a contradiction with the proviso of the lemma $f(p,q) \bis f(\tau,q_n)$.
In particular, we distinguish three cases, according to which rule of type (\ref{syncrule}) is available for $f$ and whether the predicates $R^{\ff}_{\tau}$ holds or not.
\begin{itemize}
\item Assume $p \trans[\alpha] p'$ and $q \trans[\bar{\alpha}] q'$ with $p' \| q' \bis q_n$.
This would contradict $f(\tau,q_n) \bis f(p,q)$ since $f(p,q) \trans[\bar{\alpha}] p \| q'$, whereas $f(\tau,q_n) \ntrans[\bar{\alpha}]$.
\item Assume $p \trans[\bar{\alpha}] p'$ and $q \trans[\alpha] q'$ with $p' \| q' \bis q_n$.
Notice that since $q_n$ is prime, then we have that either $p' \bis \nil$ and $q'\bis q_n$, or $p' \bis q_n$ and $q' \bis \nil$.
The latter case contradicts $f(p,q) \bis f(\tau,q_n)$ since the transition $f(p,q) \trans[\alpha] p \| q' \bis p \| q_n$ cannot be mimicked by $f(\tau,q_n)$.
The former case also contradicts the proviso of the lemma, since we would have $f(p,q) \trans[\alpha] p \| q' \bis p \trans[\bar{\alpha}] p' \bis q_n$, whereas $f(\tau,q_n) \trans[\alpha] \tau \| \bar{\alpha}^{\scriptsize \le i}$, for some $i \in \{1,\dots,n\}$, and there is no $r$ such that $\tau \| \bar{\alpha}^{\scriptsize \le i} \trans[\bar{\alpha}] r$ and $r \bis q_n$, for any $i \in \{1,\dots,n\}$.
\item Finally, assume that the predicate $R^{\ff}_{\mu}$ holds, and thus that $f$ has a rule of type (\ref{asyncruleright}) with label $\tau$.
Hence, assume $q \trans[\tau] q'$, for some $q'$, so that $f(p,q) \trans[\tau] p \| q' \bis q_n$.
Since $q_n$ is prime and $p \nbis \nil$, we have that $p \bis q_n$ and $q' \bis \nil$.
So, by congruence closure, we get 
\[
f(p,q) \bis f(q_n,q) \bis f(\tau,q_n).
\]
Since $f(\tau,q_n) \trans[\alpha] \tau \| \bar{\alpha}^{\scriptsize \le n}$ and only $R^{\ff}_{\alpha}$ holds, we have that $q \trans[\alpha] q_1$ for some $q_1$ such that $q_n \| q_1 \bis \tau \| \bar{\alpha}^{\scriptsize \le n}$, which is a contradiction as $q_n \trans[\alpha]$ implies $q_n \| q_1 \trans[\alpha]$, whereas $\tau \|\bar{\alpha}^{\scriptsize \le n} \ntrans[\alpha]$.
\end{itemize}
\end{proof}

%=============================================

\subsection{Proving Theorem~\ref{thm:Lt_1}}

The same reasoning used in the proof of Theorem~\ref{thm:LaRbaC} allows us to prove Theorem~\ref{thm:Lt_1}, by exploiting Proposition~\ref{prop:Lt_substitution_1} in place of Proposition~\ref{prop:LaRbaC_substitution}.

\begin{proposition}
\label{prop:Lt_substitution_1}
Assume an operator $f$ such that only $R^{\ff}_{\alpha}$ and $R^{\ff}_{\bar{\alpha}}$ hold for $\alpha,\bar{\alpha}$, and $L^{\ff}_{\tau}$ holds.

Let $t \approx u$ be an equation over $\FCCS^-$ that is sound modulo $\bistext$.
Let $\sigma$ be a closed substitution with $p = \sigma(t)$ and $q = \sigma(u)$.
Suppose that $p$ and $q$ have neither $\nil$ summands nor factors, and $p,q \bis f(\tau,q_n)$ for some $n$ larger than the size of $t$.
If $p$ has a summand bisimilar to $f(\tau,q_n)$, then so does $q$.
\end{proposition}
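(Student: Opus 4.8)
The plan is to follow the blueprint of Proposition~\ref{prop:LaRbaC_substitution} almost verbatim, since Lemma~\ref{lem:f_vs_par_t} tells us that $f(\tau,q_n)\bis\tau\mathbin{\|}q_n$ is \emph{decomposable} rather than prime, placing us in exactly the regime treated there (as opposed to the prime regime of Proposition~\ref{prop:LaRba_substitution}). First I would invoke Remark~\ref{rmk:summands}: since $\sigma(t)$ and $\sigma(u)$ have no $\nil$ summands or factors, neither do $t$ and $u$, so I may write $t=\sum_{i\in I}t_i$ and $u=\sum_{j\in J}u_j$ with each $t_i,u_j$ free of $+$ as head operator and of $\nil$ summands and factors. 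The hypothesis yields an index $i\in I$ with $\sigma(t_i)\bis f(\tau,q_n)$, and the whole argument is a case analysis on the shape of $t_i$ aimed at producing a $u_j$ with $\sigma(u_j)\bis f(\tau,q_n)$.

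The two outer cases are disposed of as before: if $t_i=x$ then $t$ has summand $x$, so by Proposition~\ref{prop:x_in_tu} $u$ has summand $x$ too and $\sigma(u)$ has the required summand; if $t_i=\mu.t'$ the case is vacuous because a prefix term affords a single initial transition whereas $f(\tau,q_n)$ affords several. The substance lies in $t_i=f(t',t'')$. Here $\sigma(t'),\sigma(t'')\nbis\nil$ (no $\nil$ factors), so Lemma~\ref{lem:f_vs_f_t} gives $\sigma(t')\bis\tau$ and $\sigma(t'')\bis q_n$. Decomposing $t''=\sum_{h\in H}v_h$ with $|H|<n$, some $\sigma(v_h)$ must carry at least two of the $\alpha\bar{\alpha}^{\scriptstyle\le i}$-branches of $q_n$; Lemma~\ref{Lem:vh-claim_gen} then forces $v_h=x$ for a variable $x$ with $\sigma(x)\bis\sum_{k=1}^{m}\alpha\bar{\alpha}^{\scriptstyle\le i_k}$ and $\init{\sigma(x)}=\{\alpha\}$. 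The only structural change from Proposition~\ref{prop:LaRbaC_substitution} is that the \emph{triggering action} here is $\alpha$ (fed to the second argument through $R^{\ff}_{\alpha}$), and the fixed first argument is $\tau$ rather than a visible action.

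Next I would introduce the pumping substitution $\sigma'=\sigma[x\mapsto\alpha q_n]$. Arguing, as in the model proof, that $x$ cannot contribute to $t'$ (since $L^{\ff}_{\alpha}$ fails, a left-slot occurrence is blocked, while $x\trt{\rr}{\alpha}t'$ would give $\sigma(t')\trans[\alpha]$, contradicting $\init{\sigma(t')}=\{\tau\}$), I get $\sigma'(t')\bis\sigma(t')\bis\tau$, and using $R^{\ff}_{\alpha}$ together with Lemma~\ref{lem:f_vs_par_t} I obtain $\sigma'(t_i)\trans[\alpha]p'\bis\tau\mathbin{\|}q_n\bis f(\tau,q_n)$. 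Soundness of $t\approx u$ delivers a summand $u_j$ with $\sigma'(u_j)\trans[\alpha]r\bis f(\tau,q_n)$, and comparing depths against $\sigma(u)\bis f(\tau,q_n)$ forces $x\in\var(u_j)$. Since only $R^{\ff}_{\alpha}$ holds for $\alpha$, Lemma~\ref{lem:c2o} splits the transition into: a syntactic move $u_j\trans[\alpha]u'$ with $\sigma'(u')\bis f(\tau,q_n)$, which I refute by expanding $\tau\mathbin{\|}q_n$ and applying Lemma~\ref{lem:variables} to extract from $u'$ a variable summand $y\neq x$ forcing a second $\alpha$-transition (which $f(\tau,q_n)$ cannot match); or a variable-induced move via some $y$ with $u_j\trans[y_{\rr}]_{\alpha}c$. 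I then show $y=x$ and $r'=q_n$ (hence $x\trt{\rr}{\alpha}u_j$) by the same occurrence-analysis of $c$ as in the model, and finish with a structural split on $u_j$: $u_j=x$ is vacuous ($q_n\nbis f(\tau,q_n)$), and $u_j=f(u',u'')$ is driven by $x\trt{\rr}{\alpha}u''$, forcing (via Lemma~\ref{lem:that_one}, Proposition~\ref{prop:unique_decomposition} and the blocking behaviour of $f$) that $u''$ has summand $x$ and $\sigma(u')\bis\tau$, whence $\sigma(u_j)\bis f(\tau,q_n)$.

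The hard part will be the configuration bookkeeping in the variable-induced case, namely proving $y=x$ and then analysing $u_j=f(u',u'')$: one must repeatedly decide which occurrences of $x$ inside $c$ (or inside $u',u''$) are \emph{blocked} by the rules of $f$, using that for $\mu\in\{\alpha,\bar{\alpha}\}$ only $R^{\ff}_{\mu}$ propagates through the second argument while the first argument propagates only $\tau$ via $L^{\ff}_{\tau}$. The $\tau$ first argument is a mild extra wrinkle, but it is already absorbed by Lemma~\ref{lem:f_vs_f_t} (whose proof covers the possible presence of $R^{\ff}_{\tau}$ and of either rule of type~\eqref{syncrule}), so no separate synchronisation subcase is needed here, matching the remark that this case is independent of the rules of type~\eqref{syncrule} and of $R^{\ff}_{\tau}$. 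Apart from this, every depth and primality estimate transfers by replacing $p_n$ with $q_n$, the fixed first argument $\alpha$ with $\tau$, and the triggering action $\bar{\alpha}$ with $\alpha$.
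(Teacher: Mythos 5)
Your proposal is correct and takes exactly the approach of the paper: the paper's own proof consists precisely of the instruction to repeat the argument of Proposition~\ref{prop:LaRbaC_substitution} with the pumping substitution $\sigma'=\sigma[x\mapsto\alpha q_n]$, which is the substitution you chose, and your adaptations (triggering action $\alpha$ via $R^{\ff}_{\alpha}$, first argument $\tau$, Lemmas~\ref{lem:f_vs_par_t} and~\ref{lem:f_vs_f_t} in place of their analogues, and the observation that no separate synchronisation subcase is needed) are the right ones.
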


%----------------------------------------------------------------------
\begin{proof}
The claim follows by the same arguments used in the proof of Proposition~\ref{prop:LaRbaC_substitution} and by considering the substitution 
\[
\sigma'=\sigma[x \mapsto \alpha q_n].
\]
\end{proof}

%===============================================================
% sec - conclusions
%=================================================================

\section{Conclusions}
\label{sec:conclusion}

In this paper, we have shown that, under a number of simplifying assumptions, we cannot use a single binary auxiliary operator $f$, whose semantics is defined via inference rules in the de Simone format, to obtain a finite axiomatisation of bisimilarity over the recursion-, restriction-, and relabelling-free fragment of CCS. 
Our result constitutes a first step towards a definitive justification of the canonical standing of the left and communication merge operators by Bergstra and Klop. 

We envisage the following ways in which we might generalise the contribution presented in this study.  
Firstly, we will try to relax Assumption~\ref{Ass:deSimone} by considering the GSOS format~\cite{BloomIM1995} in place of the de Simone format. 
However, as shown by the heavy amount of technical results necessary to prove our main result even in our simplified setting, we believe that this generalisation cannot be obtained in a straightforward manner and that it will require the introduction of new techniques.
It would also be very interesting to explore whether some version of problem (\ref{eq:problem}) can be solved using existing results from equational logic and universal algebra. 

Recentily, in~\cite{AACIL21} the negative result by Moller on the non-finite axiomatisability of bisimilarity~\cite{Mo90a} has been extended to a family of weak congruences.
Formally, it has been proved that all congruences that coincide with strong bisimilarity on processes without silent moves, impose the root condition on initial silent moves, and satisfy a particular family of equations introduced by Moller in~\cite{Mo90a}, \emph{have no finite, complete axiomatisation over (recursion-, restriction-, and relabelling-free) CCS}.
These include \emph{rooted weak bisimilarity} (also known as \emph{obsevational congruence}~\cite{HM85}), \emph{rooted branching bisimilarity}, \emph{rooted delay bisimilarity}, and \emph{rooted $\eta$-bisimilarity}~\cite{vG93}.
It is still an open question whether the use of auxiliary operators can be of help to obtain a finite axiomatisability result, as in the case of strong bisimilarity.
Hence, a generalisation of our results to weak semantics would help to solve this problem.

Yet, this generalisation and other possible ones of our work, e.g. to other process algebras and/or to other semantics, are related to some general open questions in equational logic:
\begin{equation*}
\parbox{\dimexpr\linewidth-4em}{
\strut
\emph{Are there general techniques for lifting negative results across process algebras? And from strong to weak congruences? And from qualitative to quantitative semantics?}
\strut
}
\end{equation*}
Understanding whether it is possible to lift non-finite axiomatisability results among different algebras and semantics, and under which constraints this can be done, is an interesting research avenue and we aim to investigate it in future work. 
A methodology for transferring non-finite-axiomatisability results across languages was presented in~\cite{AFIM10}, where a reduction-based approach was proposed. 
However, that method has some limitations and thus further studies are needed.
Similarly, the ever increasing interest in probabilistic systems has inspired a number of studies on the axiomatisation of probabilistic congruences (see~\cite{De05} for a survey).
We can find studies on strong probabilistic semantics~\cite{AEI02,BBS95,vGSST90,GV19,LS92,Mi20,SS00,TG20}, weak probabilistic semantics~\cite{ABW06,AG09,vGGV19}, as well as on metric semantics~\cite{dAGL14}.
Further studies in this direction are encouraged by recent achievements on probabilistic branching semantics~\cite{CT20a,CT20b} and behavioural metrics~\cite{CLT19,CLT20}.

\subsection*{Acknowledgements}
This work has been supported by the project `\emph{Open Problems in the Equational Logic of Processes}' (OPEL) of the Icelandic Research Fund (grant No.~196050-051).

\bibliographystyle{splncs04}
\bibliography{nobinary_tocl}

\end{document}